\documentclass[11pt]{article}
\usepackage[letterpaper,top=1in,bottom=1in,left=1in,right=1in,marginparwidth=1.75cm]{geometry}
\usepackage{graphicx}
\usepackage{mathtools,amsmath,amsfonts,amssymb,tikz,amsthm}
\usepackage{braket}
\usepackage{ascmac}
\usepackage{fancybox}
\usepackage[normalem]{ulem}
\usepackage{multirow}
\usepackage{cancel}
\usepackage{authblk}
\usepackage{quantikz}
\usepackage{tcolorbox}
\usepackage{xcolor}
\usepackage{array}
\usepackage{bm}
\usepackage[colorlinks=true, linkcolor=blue, urlcolor=cyan, citecolor=blue]{hyperref}
\usepackage{typed-checklist}

\theoremstyle{plain}
\newtheorem{theorem}{Theorem}
\newtheorem{definition}{Definition}

\newtheorem{conjecture}{Conjecture}
\newtheorem{lemma}{Lemma}
\newtheorem{remark}{Remark}

\newtheorem{problem}{Problem}
\newtheorem*{informaltheorem}{Theorem}
\newtheorem*{informalconjecture}{Conjecture}
\newtheorem*{informallemma}{Lemma}

\newcommand{\BQP}{\textsf{BQP}}
\newcommand{\NP}{\textup{NP}}
\newcommand{\QMA}{\textsf{QMA}}
\newcommand{\DQC}{\textsf{DQC$_1$}}
\newcommand{\SDQC}{\textsf{SDQC$_1$}}
\newcommand{\boxlink}[2]{{\hypersetup{linkcolor=black}\hyperref[#1]{#2}}}

\usepackage{xcolor}

\newcommand{\norm}[1]{\left \lVert #1 \right \rVert}

\title{Complexity of Normalized Persistence Problems \\ for Topological Data Analysis and Local Hamiltonians}

\author[1]{Dominic Lowe}
\author[1]{M. S. Kim}
\author[2]{Roberto Bondesan}
\author[3]{Ryu Hayakawa}
\affil[1]{Blackett Laboratory, Imperial College London, SW7 2AZ, United Kingdom}
\affil[2]{Department of Computing, Imperial College London, SW7 2AZ, United Kingdom}
\affil[3]{Yukawa Institute for Theoretical Physics \& The Hakubi Center, Kyoto University, Japan}
\date{}

\begin{document}

\noindent
\hspace{\fill} YITP-26-81
\begingroup
\let\newpage\relax
\maketitle
\endgroup

\begin{abstract}
Topological data analysis (TDA) is a machine learning technique that uses topology to extract patterns from data and has shown the potential to exhibit quantum advantage. 
A key concept in TDA is \emph{persistent homology}, which measures the robustness of topological information at different lengthscales.
In this paper, we introduce and study the problem of \emph{normalized persistence},  
a practically motivated and easily interpretable 
version of persistent homology that counts the fraction of holes that persist at different lengthscales.
We prove that a variant of \emph{normalized persistence} is $\mathsf{DQC}_1$-hard
and contained in $\mathsf{BQP}$, giving evidence of an exponential quantum speedup
for TDA under the standard assumption that $\mathsf{DQC}_1 \not\subseteq \mathsf{BPP}$. 
These are the first \DQC-hardness results that are directly 
applicable to TDA instances.
We also find a close connection between normalized persistence and the
complexity of estimating spectral quantities in the \emph{low-energy subspace} of
local Hamiltonians. We study a family of such problems, including a low-energy
normalized subtrace and spectral density.
We show that these are $\mathsf{DQC}_1$-hard for \emph{$O(1)$-local}
Hamiltonians, strengthening previous results that required log-local interactions.
We also introduce a variant of \DQC{} with perfect completeness (\SDQC) to characterize the hardness of problems normalized by an exact kernel. This includes normalized persistence for $O(1)$-local Hamiltonians, which we show is \SDQC-hard.
\end{abstract}

\tableofcontents

\section{Overview}
\label{sec:intro}

\subsection{Motivation}

Topological data analysis (TDA) is a technique that uses tools from algebraic topology to study properties of data. It is a mature field in machine learning \cite{dey2022computational} with several applications \cite{skafTopologicalDataAnalysis2022, singhTopologicalDataAnalysis2023},
and has gained attention as a task that may exhibit quantum advantage. TDA works by taking data and forming a `simplicial complex' $X$, a topological space consisting of higher-dimensional triangles (simplices). One can then use algebraic topology to calculate the number of $d$-dimensional holes in the complex, revealing information about the manifold that the data was drawn from. The number of $d$-dimensional holes in $X$ is called the $d$'th Betti number and denoted by $\beta_d^X$. The original quantum TDA algorithm \cite{lloyd2015quantum} outputs the \emph{normalized Betti number} $\beta_d^X/|X_d|$, where $|X_d|$ is the number of $d$-simplices in the complex. In practical TDA applications one is also interested in how many holes persist from one complex to another, when both are formed from the same dataset. This motivates the \emph{persistent Betti number} $\beta^{1,2}_d$, the number of $d$-dimensional holes present in a complex $X_1$ that are also present in a larger complex $X_2$. Indeed a quantum algorithm for the \emph{normalized persistent Betti number} $\beta^{1,2}_d/|X_{1,d}|$ was developed in \cite{hayakawa2022quantum}. Determining the computational complexity of these quantities has remained an open problem since these algorithms were proposed, with \DQC-hardness of the normalized Betti number being a central unresolved question \cite{gyurik2022towards}. Recall that \DQC{} is the class of problems that can be solved efficiently on a quantum computer with one clean qubit and other qubits starting in the maximally mixed state.
For example, normalized trace estimation and Jones polynomial estimation are known to be \DQC-complete~\cite{shor2007estimating}, which can be seen as evidence of classical hardness for \DQC-hard problems because we do not know efficient classical algorithms for them. Moreover, it is known that it is classically hard to sample from \DQC-circuits under plausible complexity theoretic assumptions \cite{fujiiImpossibilityClassicallySimulating2018}.
We give a more in-depth overview of the topology preliminaries in Section \ref{sec:prelim_topo}.

In this paper we consider an alternative normalized quantity: the \emph{normalized persistence}
\begin{equation}\label{eqn: Normalized persistence}
    \frac{\beta_d^{1,2}}{\beta_d^1},
\end{equation}
the fraction of $d$-dimensional holes present in a complex $X_1$ that persist into a complex $X_2$. The difference in normalization is practically motivated, since the quantity gives an easily interpretable measure of how robust the topological features are in the initial complex. Indeed, one of the key reasons for studying persistence in TDA is that holes which persist are less likely to be the result of noise in the data and more likely to be true features of the underlying manifold. In addition, since the normalized persistence is not governed by the simplex count $|X_d|$ it allows for reasonable comparison between the complexes on different datasets. The main goal of this paper is to study the computational complexity of normalized persistence and a family of related quantities.

A key insight of this work is that normalized persistence is naturally a low-energy subspace problem. It is known that Betti numbers are equivalent to the kernel dimensions of Hermitian operators, namely combinatorial Laplacians. Persistent Betti numbers can then be expressed through the overlap of kernels under an inclusion. This gives an immediate generalization of the problems we consider to low (or zero) energy subspace problems for local Hamiltonians, whose complexity we also study. These analogous Hamiltonian problems consist of estimating various spectral or overlap quantities in the low-energy subspace, with suitable normalization. In many physically relevant scenarios, the state of interest lies in the low-energy subspace of a local Hamiltonian, rather than the full Hilbert space. For example, low-energy states of quantum many body systems are typically described by a quantum field theory in the thermodynamic limit \cite{Altland_Simons_2006}.
The utility of the low-energy subspace can also be seen in quantum algorithms.
Recent work on Hamiltonian simulation shows that restricting to this subspace leads to
algorithms whose simulation error depends on an effective low-energy norm, rather than the full operator norm \cite{csahinouglu2021hamiltonian, gong2024complexity}. The relevance of the low-energy subspace can also be seen with practical implementations of quantum computers, since a physical qubit is typically obtained by isolating two energy levels from the rest of the spectrum.
This motivates a second key goal of this paper, to establish $\mathsf{DQC}_1$-hardness for $O(1)$-local Hamiltonian problems restricted to low-energy subspaces. This is a setting directly inspired by the normalized persistence problem, and a strengthening of the previously known log-local results.

Previous work \cite{gyurik2022towards} has formalized the problem of approximating $\beta_d/|X_d|$, as `Approximate Betti Number Estimation'. Whilst the complexity of this is still open, Gyurik et al.~establish a connection with \emph{Low-Lying Spectral Density} (\textsf{LLSD}). This problem estimates the number of eigenvalues of a Hamiltonian below a given threshold, normalized by the full Hilbert space dimension (see Problem \ref{prob:LLSD}), and is only known to be \DQC-hard for log-local Hamiltonians.
In~\cite{cade2024complexity} Cade and Crichigno established $\mathsf{DQC}_1$-hardness of \emph{normalized quasi-Betti number estimation}. This is estimation of the dimension of the low-energy subspace of a combinatorial Laplacian divided by the dimension of the chain space. However, their result only holds for log-local co-boundary operators and does not hold for clique complexes, which are the complexes arising in TDA. When studying \emph{normalized persistence}, the more relevant underlying problem replaces the $2^n$ normalization of \textsf{LLSD} with the dimension of the low-energy subspace; we call this variant the \emph{low-energy spectral density} ($\textsf{LESD}$). We prove that it is \DQC-hard and contained in \BQP. In contrast to these previous results, this \DQC-hardness result holds for Laplacians of clique complexes. On the Hamiltonian side we are able to show that \textsf{LESD} is \DQC-hard for $O(1)$-local Hamiltonians, strengthening the previous log-local results for \textsf{LLSD}. We also establish \DQC-hardness for a variant of normalized persistence, again applying to clique complexes. These results form the first \DQC-hardness results for the practical structures arising in TDA, and therefore serve as a step towards resolving whether `Approximate Betti Number Estimation' is \DQC-hard.

We show that all the problems considered in this paper are contained in \BQP, under state preparation and overlap assumptions. For the problems that are \DQC-hard, we are also able to show that this hardness holds with the assumptions for containment in \BQP. Thus under the standard assumption that $\DQC \not\subseteq \textsf{BPP}$, this work establishes several exponential quantum advantages for problems in TDA and for local Hamiltonians in the worst-case. In order to study problems involving the exact kernel, we introduce a variant of \DQC{} with perfect completeness which we call \SDQC. Similarly, if this class is classically hard to simulate, our results will imply an exponential quantum advantage for these problems also.

We next review the related literature in more detail. Readers primarily interested in our contributions can safely skip to the summary of our main results in Section~\ref{subsec:our_results}.

\subsection{Existing Literature}
\subsubsection{Complexity results for TDA}

In recent years a plethora of results on the complexity of various tasks within TDA have emerged.
Schmidhuber and Lloyd showed that exact computation of Betti numbers is $\#\mathsf{P}$-hard and approximating them to multiplicative error is \NP-hard \cite{schmidhuber2023complexity}. Crichigno and Kohler \cite{crichignoCliqueHomology} showed that the problem of deciding whether $\beta_d>0$ (referred to as clique homology) is $\QMA_1$-hard. Note this also implies that approximating $\beta_d$ to multiplicative error is $\QMA_1$-hard, strengthening the results of Schmidhuber and Lloyd. The authors also show containment in \QMA, but for a gapped version of the problem with a promised lower bound on the smallest eigenvalue of the Laplacian in NO instances. King and Kohler \cite{king:qma} show that this gapped version of clique homology is $\QMA_1$-hard and contained in $\QMA$, but for weighted clique complexes. Rudolph \cite{rudolph2025universala} showed that gapped clique homology on weighted clique complexes is $\QMA
_1$-complete (with a suitable gateset), improving on the results of King and Kohler. Moreover, the author shows that standard clique homology (unweighted and without promise) is \textsf{PSPACE}-complete, strengthening the results of Crichigno and Kohler further. It has also been shown that estimating the minimum eigenvalue of a weighted combinatorial Laplacian is \QMA-hard \cite{rayudu2024fermionic}.
Finally, Hayakawa et~al. show that there exist restricted versions of clique homology that are \textsf{MA}-complete \cite{hayakawa2025computationalMA}.

In related works, it was shown that deciding whether a given hole persists from one complex to another is $\BQP_1$-hard and contained in \BQP~\cite{gyurik2024quantum}. Although this result provides evidence of exponential quantum advantage for deciding persistence of a hole, it does not tell the complexity of learning the persistence of all the holes in the initial simplicial complex.

\subsubsection{Quantum algorithms for TDA}
A parallel line of work has developed quantum algorithms for tasks in TDA.
The first quantum algorithm was given by Lloyd, Garnerone, and Zanardi~\cite{lloyd2015quantum} and gives a way to estimate the normalized Betti number $\beta_d/|X_d|$.
Subsequent works refined this approach by improving the resource requirements of the original algorithm, for example through shorter circuit-depth constructions~\cite{
     ubaruQuantumTopologicalData2021}.
The framework has also been extended from Betti numbers of a single complex to persistent Betti numbers~\cite{hayakawa2022quantum,mcardle2022streamlined}.
There has also been work on related topological tasks, such as harmonic persistence~\cite{gyurik2024quantum}, high-dimensional Dirichlet problems via quantum walks~\cite{hayakawa2024quantum}, and torsion detection~\cite{nghiem2025quantum}.

At the same time, the regimes in which these algorithms can yield substantial quantum speedups are subtle. A detailed resource analysis was given by Berry et al.~\cite{berry2024analyzinga}, and dequantized algorithms have been developed for estimating Betti numbers~\cite{apers2023simple,berry2024analyzinga, mcardle2022streamlined}.
These results narrow the parameter regimes in which a quantum advantage is expected.

Moreover, although there has been substantial effort on the analysis of the quantum complexity landscape in TDA, a full understanding of the complexity of normalized Betti number estimation remains open.
This motivates the study of the computational complexity of normalized quantities in TDA themselves.

\subsection{Main Results}
\label{subsec:our_results}
Here we outline the key results of this paper. This is not exhaustive, as we also prove hardness and containment results for various other problems that are used in our chain of reductions. A summary of our intermediate results is outlined in Section \ref{subsec: Techniques} and can also be seen in Table \ref{tab:problems}. A more detailed view of how the hardness of these problems relate to each other, as well as connections to problems from previous works, is given in Figure \ref{fig:overview}.

\begin{table}[thbp]
\centering
\small\setlength{\tabcolsep}{4pt}\renewcommand{\arraystretch}{1.25}
\begin{tabular}{>{\raggedright\arraybackslash}p{7cm}p{3.5cm}c}
\hline
Problem & Hardness & Containment \\
\hline
\multicolumn{3}{l}{\textit{Prior works}}\\
\hline
\textsf{Normalized Subtrace}, Prob.~\ref{prob:NST} & $\mathsf{DQC}_1$~\cite{brandao2008entanglement} & $\mathsf{DQC}_1$~\cite{cade2024complexity} \\
\textsf{Low-lying Spectral Density} (\textsf{LLSD}), Prob.~\ref{prob:LLSD} & $\mathsf{DQC}_1$~\cite{gyurik2022towards} & $\mathsf{DQC}_1$~\cite{gyurik2022towards} \\
\hline
\multicolumn{3}{l}{\textit{This work}}\\
\hline
\textsf{Low-energy Normalized Subtrace} (\textsf{LENS}), Prob.~\ref{prob:LENS} & $\mathsf{DQC}_1$ (Thm~\ref{thm: low energy subtrace}) & $\mathsf{BQP}$ (Lem~\ref{lemma:containment_LENS}) \\
\textsf{Low-energy Spectral Density} (\textsf{LESD}), Prob.~\ref{prob:LESD} & $\mathsf{DQC}_1$ (Thm~\ref{thm: LESD}) & $\mathsf{BQP}$ (Lem~\ref{lemma:containment_LESD}) \\
\textsf{Low-energy Kernel Density} (\textsf{LEKD}), Def.~\ref{prob:LEKD} & $\mathsf{SDQC}_1$ (Thm~\ref{thm:LEKD_hardness}) & $\mathsf{BQP}$ (Lem~\ref{lemma:containment_LEKD}) \\
\textsf{Normalized Quasi-Persistence} (\textsf{NQP}), Prob.~\ref{prob:normalized_quasi_persistence} & $\mathsf{DQC}_1$ (Thm~\ref{thm:normalized_quasi_persistence}) & $\mathsf{BQP}$ (Lem~\ref{lemma:containment_NQP}) \\
\textsf{Normalized Persistence} (\textsf{NP}), Prob.~\ref{prob:normalized_persistence} & $\mathsf{SDQC}_1$ (Thm~\ref{thm:NP_SDQC1}) & $\mathsf{BQP}$ (Lem~\ref{lemma:containment_NP}) \\
\textsf{Normalized Quasi-Harmonic Persistence} (\textsf{NQHP}), Prob.~\ref{prob:normalized_quasi_harmonic_persistence} & $\mathsf{DQC}_1$ (Thm~\ref{thm: normalized quasi-harmonic persistence}) & $\mathsf{BQP}$ (Lem~\ref{lemma:containment_NQHP}) \\
\textsf{Normalized Harmonic Persistence} (\textsf{NHP}), Prob.~\ref{prob:normalized_harmonic_persistence} & \emph{Conjectured Hardness}: $\mathsf{SDQC}_1$ (Conj.~\ref{conj:NHP_SDQC1},~\ref{conj:NHP_realization}) & $\mathsf{BQP}$ (Lem~\ref{lemma:containment_NHP}) \\
\hline
\multicolumn{3}{l}{\textit{Pure-state specializations}}\\
\hline
\textsf{Guided Local Hamiltonian} & $\mathsf{BQP}$~\cite{gharibian2022dequantizing} & $\mathsf{BQP}$ \\
\textsf{Low-energy Overlap}, Prob.~\ref{prob:low_energy_overlap} & $\mathsf{BQP}$~\cite{gyurik2024quantum} & $\mathsf{BQP}$ \\
\textsf{Kernel Overlap}, Prob.~\ref{prob:kernel_overlap} & $\mathsf{BQP}_1$~\cite{gyurik2024quantum} & $\mathsf{BQP}$ \\
\textsf{Harmonic Persistence}, Prob.~\ref{prob:harmonic_persistence} & $\mathsf{BQP}_1$~\cite{gyurik2024quantum} & $\mathsf{BQP}$ \\
\hline
\end{tabular}
\caption{Overview of the problems and results studied in this paper. We also include prior works for log-local Hamiltonians and pure-state version problems for comparison. }
\label{tab:problems}
\end{table}

\subsubsection{Results for TDA}
Motivated by practical TDA applications and the open question as to whether estimating $\beta_d/|X_d|$ is \DQC-hard, we consider the \emph{normalized persistence} (equation \ref{eqn: Normalized persistence}).
One of the main goals of this paper is to characterize the hardness of the problem of estimating this quantity to inverse polynomial error (which we call \nameref{prob:normalized_harmonic_persistence}). As an initial step towards this goal, we consider a `more general' problem, which we call \textsf{Normalized Quasi-Harmonic Persistence}. This is motivated by the analogous idea of `quasi Betti number estimation' found in \cite{crichignoCliqueHomology}. Given a graph $G_k$, its clique complex $\text{Cl}(G_k)$ is the simplicial complex obtained by identifying a complete subgraph of $d+1$ vertices with a $d$-dimensional triangle ($d$-simplex). The subscript $k$ just denotes which graph we are considering, as persistence problems require two graphs $G_1 \subseteq G_2$. We denote the $d$'th combinatorial Laplacian of $\text{Cl}(G_k)$ by $\Delta_{k,d}$. Recall that the combinatorial Laplacian is a hermitian operator such that $\beta_d = \dim\ker(\Delta_d)$ (for more details see Section \ref{sec:prelim_topo}). Then an informal definition of \textsf{Normalized Quasi-Harmonic Persistence} is: given input dimension $d$ and two weighted graphs $G_1\subseteq G_2$, output an estimate $\chi$ such that
\[
\frac{\dim\mathrm{Im}\,\mathcal{P}_{2,b}|_{\mathrm{Im}\,\mathcal{P}_{1,\eta}}}{\tilde{\beta}_{d,\eta}^1} - \epsilon \leq \chi \leq \frac{\dim\mathrm{Im}\,\mathcal{P}_{2,b+\xi}|_{\mathrm{Im}\,\mathcal{P}_{1,\eta}}}{\tilde{\beta}_{d,\eta}^1} + \epsilon,
\]
where $\mathcal{P}_{k,b}$ denotes the orthogonal projector onto the eigenvectors of the $d$'th combinatorial Laplacian of $\text{Cl}(G_k)$ with energy less than $b$ and $\tilde{\beta}^1_{d,\eta} = \mathrm{rank}\mathcal{P}_{1,\eta}$ is the $d$'th quasi-Betti number of $\text{Cl}(G_1)$. All threshold and error parameters above are taken to be $\Omega(\frac{1}{\mathrm{poly}(n)})$ where $n$ is the number of nodes on $G_1$ and $G_2$. A formal definition of this problem is given as Problem \ref{prob:normalized_quasi_harmonic_persistence}. When $b,\eta,\xi$ are sufficiently small, and we have suitable promises on the spectral gaps of $\Delta_{1,d}$ and $\Delta_{2,d}$ then the output $\chi$ will be within $\epsilon$ of the \emph{normalized persistence}. In this paper we prove the following theorem.

\begin{informaltheorem}[Informal statement of Theorem \ref{thm: normalized quasi-harmonic persistence} and Lemma \ref{lemma:containment_NQHP}]
    \textsf{Normalized Quasi-Harmonic Persistence} is \DQC-hard and contained in \BQP, with suitable assumptions.
\end{informaltheorem}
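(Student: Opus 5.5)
The plan has two parts: a chain of reductions for hardness, and a block-encoding algorithm for containment in \BQP.

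\emph{Hardness.} Start from \textsf{Normalized Subtrace}, which is \DQC-hard by \cite{brandao2008entanglement}. Reduce it to a low-energy quantity for an $O(1)$-local Hamiltonian, and then embed that Hamiltonian into the Laplacians of weighted clique complexes.

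\begin{enumerate}
\item Take a \DQC{} circuit $U$ on $n$ qubits and build an $O(1)$-local Feynman--Kitaev-type Hamiltonian $H_1$. Its low-energy subspace, below a threshold $\eta$, should be spanned by history states $|\psi_x\rangle$, one for each input string $x$ of the $n-1$ maximally mixed qubits. This subspace has dimension exactly $2^{n-1}$, so normalizing by its rank plays the role of the $2^{-(n-1)}$ normalization in the \DQC{} trace.
\item Take a second local Hamiltonian $H_2$ that adds an output penalty on the final clock step, namely $|T\rangle\langle T|\otimes|1\rangle\langle1|_{\mathrm{out}}$ (or a projector adapted to the normalized subtrace). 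Then $\langle\psi_x|\mathcal{P}_{2,b}|\psi_x\rangle$ approximately records whether the clean qubit accepts on input $x$.
\item Conclude that the normalized quantity $\dim\mathrm{Im}\,\mathcal{P}_{2,b}|_{\mathrm{Im}\,\mathcal{P}_{1,\eta}}/\tilde\beta_{d,\eta}^1$ approximates the acceptance probability, up to $1/\mathrm{poly}$ corrections. These come from spectral gaps of order $1/T^2$ and from the sandwich between the thresholds $b$ and $b+\xi$. This gives \DQC-hardness of the Hamiltonian version; presumably this is the route through \textsf{LENS}/\textsf{LESD} and \textsf{NQP} in the paper's chain, so I would cite Theorem~\ref{thm:normalized_quasi_persistence} for this step.
\item Realize $H_1$ and $H_2$ as Laplacians $\Delta_{1,d}$ and $\Delta_{2,d}$ of weighted clique complexes $\mathrm{Cl}(G_1)\subseteq\mathrm{Cl}(G_2)$. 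I would use the gadget constructions of \cite{crichignoCliqueHomology,king:qma,rudolph2025universala}: qubits become copies of small graphs such as the 8-vertex octahedron-like complex with $\beta=2$, and weights implement the local terms, so that the low-energy spectrum of $\Delta_{k,d}$ reproduces that of $H_k$ up to inverse-polynomial perturbative error. The constraint $G_1\subseteq G_2$ fits naturally: the penalty in $H_2$ is implemented by adding vertices and edges that fill in holes, which only lifts energies.
\end{enumerate}

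\emph{Containment in \BQP.} Under the state-preparation and overlap assumptions, the algorithm runs as follows.

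\begin{enumerate}
\item Block-encode $\Delta_{k,d}$ via the boundary operators.
\item Prepare the maximally mixed state on the $d$-simplices of $X_1$.
\item Apply QSVT to obtain smooth approximations of the threshold projectors $\mathcal{P}_{1,\eta}$ and $\mathcal{P}_{2,b}$; the window $\xi$ absorbs the smoothing error.
\item Estimate the numerator $\mathrm{tr}(\mathcal{P}_{1,\eta}\mathcal{P}_{2,b}\mathcal{P}_{1,\eta})/|X_{1,d}|$ and the denominator $\tilde\beta^1_{d,\eta}/|X_{1,d}|$ separately. The assumption needed is that $\tilde\beta^1_{d,\eta}/|X_{1,d}|\ge 1/\mathrm{poly}$, so that the ratio retains additive error $\epsilon$. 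Taking the ratio is then a routine error-propagation bound.
\end{enumerate}

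\emph{Main obstacle.} The hard part is step 4 of the hardness argument, the clique-complex embedding. It must preserve the low-energy subspace simultaneously for both complexes, with $G_1\subseteq G_2$, while keeping the rank $\tilde\beta^1_{d,\eta}$ exactly controlled. I would try a perturbation-theory (Schrieffer--Wolff) argument on each Laplacian separately, and then bound the overlap error by a Davis--Kahan estimate.
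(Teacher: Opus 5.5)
Both halves have a genuine gap.

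\emph{Hardness.} The step you flag as the main obstacle is where the argument breaks, and your heuristic for it is false. \textsf{NQHP} fixes a common vertex set, so no vertices can be added. Adding edges adds new $d$-cliques, so $C_d(X_2)\supsetneq C_d(X_1)$ and $\Delta_{2,d}$ can acquire new low-energy vectors; only its compression to $C_d(X_1)$ dominates $\Delta_{1,d}$. Existing simulation results do not give a filtration of clique complexes simulating two different Hamiltonians with inclusion-compatible encodings; the paper leaves even the exact-kernel version as Conjecture~\ref{conj:NHP_realization}.

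The missing idea is that this is unnecessary. The paper takes $G_1=G_2=G$, where $\Delta_{m-1}(G)$ simulates the whole $H_{\DQC}$, output term included (CMP universality to $XX+ZZ$ composed with Rayudu, Lemma~\ref{thm: Universality of Combinatorial Laplacians}). It then puts $\hat\eta$ inside the spectral gap, so that $\mathrm{Im}\,\mathcal{P}_{2,b}|_{\mathrm{Im}\,\mathcal{P}_{1,\hat\eta}}=S_{\le b}$. Every \textsf{NQHP} query is then literally an \textsf{LESD for TDA} query, with $a_{\min}=1$.

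Your steps 2--3 are also wrong as stated. On the history space, the low-lying eigenvalues of $H_2$ are $O(\varepsilon)$-close to the eigenvalues of the rejection operator $A$. For a \DQC{} circuit these are spread over $[0,1]$, since there is no per-input bounded-error promise. A normalized rank at one threshold $b$ is therefore a single value of their cumulative distribution, whereas $1-\mu_{yes}$ is their mean. Recovering the mean takes polynomially many non-adaptive queries at $b=x_j$ and the quadrature of Theorem~\ref{thm: LESD}, i.e.\ a truth-table reduction through \textsf{LENS}.

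That in turn needs the simulator to reproduce the low-lying spectrum eigenvalue by eigenvalue; the paper gets this from Weyl's inequality applied to the simulation bound. The homology gadgets you cite are built to control the kernel and gap, not the full low-lying spectrum. The chain must also start from the \DQC{} circuit itself rather than from Brand\~ao's log-local Normalized Subtrace, because an $O(1)$-local clock dilutes the full-Hilbert-space normalization.

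\emph{Containment.} Your numerator $\mathrm{tr}(\mathcal{P}_{1,\eta}\mathcal{P}_{2,b}\mathcal{P}_{1,\eta})$ equals $\sum_j a_j$, where the $a_j$ are the eigenvalues of $\Pi_S\mathcal{P}_{2,b}\Pi_S$ on $S=\mathrm{Im}\,\mathcal{P}_{1,\eta}$. But \textsf{NQHP} asks for the rank $\#\{j:a_j>0\}$. Under the large-overlap condition with $a_{\min}=1/\mathrm{poly}(n)$, these can differ by a factor of $a_{\min}$, so you get no additive-$\epsilon$ guarantee; your algorithm never actually uses the overlap condition.

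The paper turns overlap into a count. Starting from the uniform mixture over $S$, it runs fixed-point amplitude amplification on the phase-estimation cutoff block-encoding $U_Q$, with lower bound $\sqrt{a_{\min}}$. Every direction with $a_j\ge a_{\min}$ then succeeds with probability at least $1-2\delta$. Directions with $a_j=0$ are fixed by the Grover iterate and never succeed. The success probability therefore lies in $[m_-(1-2\delta)/D,\,m_+/D]$, with $m_\pm$ the ranks at $b$ and $b+\xi$ and $D=\dim S$. A QSVT threshold on the spectrum of $\Pi_S Q\Pi_S$ would serve equally well.

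Finally, your assumption $\tilde\beta^1_{d,\eta}/|X_{1,d}|\ge 1/\mathrm{poly}(n)$ fails exponentially on exactly the hard instances. There the clock register and simulation gadgets make the encoded low-energy space an exponentially small fraction of $C_d$. This is why the paper instead assumes direct preparation of the uniform mixture over $\mathrm{Im}\,\mathcal{P}_{1,\eta}$, which the hard instances satisfy via Lemma~\ref{lem:prep_hist} and the explicit encodings.
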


Proofs of these results can be found in Sections \ref{sec:NQHP} and \ref{sec:containment_normalized_persistence} respectively. In order to study the original problem, \textsf{Normalized Harmonic Persistence} (Problem \ref{prob:normalized_harmonic_persistence}) we introduce a new complexity class. This is motivated by the fact that previous hardness results in TDA have usually required complexity classes with perfect completeness, such as $\BQP_1$ or $\QMA_1$. Thus, we introduce a variant of $\DQC$ with perfect completeness, which we call \textsf{Subspace \DQC} (\SDQC). Informally, this class can be seen as \DQC{} with the additional constraint that there exists an unknown subspace of the maximally mixed register, such that any state in that subspace is accepted with probability 1. A formal definition is given in Section \ref{sec:sdqc1}. This complexity class captures the complexity of TDA and local Hamiltonian problems that are normalized by an exact kernel. We prove \SDQC-hardness for several such problems.
For the original \textsf{Normalized Harmonic Persistence} problem, defined in terms of Betti numbers and persistent Betti numbers, we propose the following conjecture.
\begin{informalconjecture}[Repeated statement of Conjecture \ref{conj:NHP_SDQC1}]
    \textsf{Normalized Harmonic Persistence} is \SDQC-hard.
\end{informalconjecture}
We still obtain containment in \BQP~under reasonable conditions (Lemma \ref{lemma:containment_NHP}).
Although we are not able to prove the above conjecture, we identify conditions under which it would hold.
\begin{informallemma}[Informal statement of Lemma \ref{lemma:conditional_NHP_clean}]
If there exists an efficient construction by which the Laplacians $\Delta_{1,d}$ and $\Delta_{2,d}$ simulate the exact kernel structure of a class of Hamiltonians $H_1$ and of perturbed Hamiltonians $H_2 = H_1+H_{12}$, then \textsf{Normalized Harmonic Persistence} is \SDQC-hard.
\end{informallemma}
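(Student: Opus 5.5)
The plan is to prove the statement by composing reductions. The starting point is the \SDQC-hardness of \textsf{Normalized Persistence} for $O(1)$-local Hamiltonians, stated in the table as Theorem~\ref{thm:NP_SDQC1}. In that problem we are given $H_1$ and $H_2 = H_1 + H_{12}$, and we must estimate the fraction of $\ker H_1$ that persists into $\ker H_2$:
\[
\frac{\dim\mathrm{Im}\,\Pi_{\ker H_2}|_{\ker H_1}}{\dim\ker H_1}.
\]
The hypothesis of the lemma says that the Laplacians $\Delta_{1,d}$ and $\Delta_{2,d}$ simulate the exact kernel structure of $H_1$ and $H_2$. I would make this precise as follows. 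There are efficiently computable graphs $G_1 \subseteq G_2$ and an isometry $V$ such that $\ker \Delta_{k,d} = V \ker H_k$ for $k=1,2$. The same $V$ must serve for both $k$, and the inclusion of chain spaces must be compatible with $V$. Then any \SDQC{} instance mapped to an \textsf{NP} instance can be composed with this construction to give an \textsf{NHP} instance.

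The key steps, in order, are as follows.

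\textbf{Step 1.} Recall the hard family from the proof of Theorem~\ref{thm:NP_SDQC1}: the Hamiltonians $H_1$, $H_{12}$, and the promise gap on the output fraction.

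\textbf{Step 2.} Apply the assumed construction to obtain $G_1 \subseteq G_2$ and $d$. This gives
\[
\beta^1_d = \dim\ker H_1 .
\]
It also gives that the persistent Betti number $\beta^{1,2}_d$ equals the rank of $\Pi_{\ker\Delta_{2,d}}$ restricted to $\ker\Delta_{1,d}$. Under $V$, and using the compatibility of the inclusion map, this rank equals the rank of $\Pi_{\ker H_2}$ restricted to $\ker H_1$.

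\textbf{Step 3.} Conclude that
\[
\frac{\beta^{1,2}_d}{\beta^1_d}
\]
equals the \textsf{NP} quantity exactly. Hence an $\epsilon$-estimate for \textsf{NHP} decides the \SDQC{} instance with the same thresholds. The reduction is polynomial time because the construction is efficient.

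\textbf{Step 4.} Check that the instance sizes and the error $\epsilon = 1/\mathrm{poly}$ stay within the parameter regime of Problem~\ref{prob:normalized_harmonic_persistence}. Also check that any spectral-gap promises required by the problem are inherited from the source instance, or are not needed because the quantities involved are exact kernels.

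I expect the main obstacle to be Step 2: the harmonic persistence is defined through the inclusion of chain complexes, not through an abstract isometry. One must verify that the simulation preserves the \emph{overlap} of the two kernels, not merely their dimensions. If kernels are preserved only up to an isomorphism that differs between $k=1$ and $k=2$, the persistent rank could change. I would handle this by stating the simulation hypothesis so that one encoding map is used for both complexes, with $V \ker H_1 \subseteq C_d(\mathrm{Cl}(G_1)) \subseteq C_d(\mathrm{Cl}(G_2))$. Persistence then transfers verbatim.
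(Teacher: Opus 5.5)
Your reduction is the paper's. You compose the $\SDQC$-hardness of \textsf{Normalized Persistence} (Theorem~\ref{thm:NP_SDQC1}) with a kernel realization that uses one encoding compatible with the chain inclusion. That is exactly the functoriality clause of Conjecture~\ref{conj:NHP_realization}. You then read off that $\beta_d^{1,2}/\beta_d^1$ equals the \textsf{NP} ratio exactly.

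The one genuine hole is Step~4. \textsf{Normalized Harmonic Persistence} (Problem~\ref{prob:normalized_harmonic_persistence}) is a promise problem. The reduction must write down numbers $\delta_1,\delta_2\geq 1/\mathrm{poly}(n)$ as part of the instance. The promise that $\Delta_{k,d}$ has no spectrum in $(0,\delta_k]$ must also actually hold, or the oracle's answer is unconstrained and your single query decides nothing. Neither of your two proposed resolutions works.
\begin{itemize}
\item The gap promises are not ``unnecessary because the quantities are exact kernels.'' They are part of the problem's promise regardless of how the target ratio is defined.
\item They are not inherited from the source instance under your formalization either. The hypothesis $\ker\Delta_{k,d}=V\ker H_k$ constrains only the kernels and says nothing about the nonzero spectrum of $\Delta_{k,d}$. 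So the gaps established for $H_1$ and $H_2$ in the proof of Theorem~\ref{thm:NP_SDQC1} need not transfer to the Laplacians. A kernel-exact construction could perfectly well leave exponentially small nonzero eigenvalues.
\end{itemize}

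The paper closes this by building it into the hypothesis. Item~1 of Conjecture~\ref{conj:NHP_realization} requires the procedure to also output efficiently computable lower bounds $\widehat{\delta}_1,\widehat{\delta}_2\geq 1/\mathrm{poly}(n)$ such that $\Delta_{i,d}$ has no spectrum in $(0,\widehat{\delta}_i]$. The proof of Lemma~\ref{lemma:conditional_NHP_clean} then passes these as the $\delta_i$ of the \textsf{NHP} instance, which is what guarantees the promise is met. If you add this clause to your formal hypothesis, the rest of your argument goes through as written, including the persistence-rank transfer via the common isometry.
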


The formal definition of this required condition is given as Conjecture \ref{conj:NHP_realization}. The main reason we are forced to leave this as a conjecture is that the Hamiltonian simulation results \cite{cubittUniversalQuantumHamiltonians2018} we rely on for the other proofs preserve the spectrum up to an inverse-polynomial error, but do not in general preserve the dimension of the kernel. Establishing the required notion of kernel-preserving simulation with combinatorial Laplacians would also be a valuable tool in studying the complexity of related problems in TDA and so it is an important open question. A more detailed discussion on this can be found in Section \ref{sec:NHP}.
\subsubsection{Results for Local Hamiltonians}\label{sec: main results (Hamiltonians)}
Since the TDA problems we consider can be thought of as various spectral estimations on combinatorial Laplacians, this gives a natural generalisation to related low-energy problems for Hamiltonians.  In particular, we are able to show (either \DQC{} or \SDQC) hardness for various problems involving constant local Hamiltonians. We also find that all of these problems are contained in \BQP, under suitable assumptions. Here we outline two key problems, and intermediate problems used in the reductions are discussed in Section \ref{subsec: Techniques}. First we consider the analogue of \textsf{Normalized Harmonic Persistence}, which we simply call \textsf{Normalized Persistence} (Problem \ref{prob:normalized_persistence}). Given two positive semi-definite (PSD), constant local, Hamiltonians $H_1, H_2 = H_1 + H_{12}$, $\textsf{Normalized Persistence}$ estimates
\[
\frac{\dim\mathrm{Im}\,\mathcal{P}_{\ker H_2}|_{\ker H_1}}{\dim\ker H_1},
\]
the fraction of $\ker H_1$ that survives into $\ker H_2$. Here $\mathcal{P}_{\ker H_2}$ denotes the orthogonal projector onto the kernel of $H_2$. We are able to prove the following:

\begin{informaltheorem}[Informal statement of Theorem \ref{thm:NP_SDQC1} and Lemma \ref{lemma:containment_NP}]
    \textsf{Normalized Persistence} is \SDQC-hard and contained in \BQP{} under suitable assumptions.
\end{informaltheorem}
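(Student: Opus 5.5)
We prove the two parts separately: \SDQC-hardness by reduction from the canonical \SDQC-complete problem, and \BQP{} containment by phase estimation on a guided mixed input.

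\textbf{Hardness.} By the definition in Section~\ref{sec:sdqc1}, an \SDQC{} instance is a poly-size circuit $U$ on one clean qubit plus $n$ maximally mixed qubits. It carries the promise that some unknown subspace $S$ of the mixed register is accepted with probability $1$. The question is whether the acceptance probability averaged over $S$, or the fraction of $S$ satisfying a second accepting condition, is above $a$ or below $b$.

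\begin{enumerate}
\item Encode the circuit in a Feynman--Kitaev history-state Hamiltonian $H_1 = H_{\rm in} + H_{\rm prop} + H_{\rm clock} + H_{\rm out}$. Use a unary clock so that every term is $O(1)$-local. Put an input penalty only on the clean qubit, leaving the $n$ mixed qubits unconstrained.
\item Check that $\ker H_1$ is spanned exactly by history states $|\eta_\psi\rangle$ with $\psi \in S$. Perfect completeness makes the output penalty vanish exactly on $S$. This is why \SDQC{} rather than \DQC{} is the right class: the kernel has to be exact, not approximate.
\item Define $H_2 = H_1 + H_{12}$, where $H_{12}$ is an $O(1)$-local PSD projector at the final clock step. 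It penalizes the second output condition, e.g. a measurement on an ancilla that $U$ writes to.
\item Since $H_2 \succeq H_1$, we have $\ker H_2 \subseteq \ker H_1$. Moreover $\ker H_2$ consists of the history states whose final state passes the extra test, again exactly, by perfect completeness of the test on the accepted part.
\item Conclude that $\dim\mathrm{Im}\,\mathcal{P}_{\ker H_2}|_{\ker H_1}/\dim\ker H_1 = \dim\ker H_2/\dim\ker H_1$. This equals the \SDQC{} acceptance fraction, and the thresholds $a,b$ transfer directly with an inverse-polynomial gap.
\end{enumerate}

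The main obstacle is this exactness. We must show that no spurious kernel vectors appear from non-history states. We also need the inverse-polynomial spectral gaps required by the \BQP{} containment assumptions, so that the hardness holds in that regime too. For the gaps, I would bound the gap of $H_{\rm prop}$ by $\Omega(1/T^2)$ and apply the projection lemma / Kitaev geometric lemma to the restricted penalties.

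\textbf{Containment.} Assume access to a preparable mixed state $\rho$ whose overlap with $\ker H_1$ is inverse-polynomial, and assume inverse-polynomial gaps above both kernels.

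\begin{enumerate}
\item Run phase estimation on $H_1$, implemented via block-encoding or Hamiltonian simulation of $O(1)$-local terms. Postselect on eigenphase $0$. This projects onto $\ker H_1$.
\item For the normalized estimate, the state used must be proportional to the maximally mixed state on $\ker H_1$. One way is to start from the maximally mixed state on all qubits and postselect; this succeeds with probability $\dim\ker H_1/2^n$, so the overlap assumption is what makes the postselection efficient.
\item Conditioned on success, run phase estimation on $H_2$ and record whether the energy is $0$.
\item The conditional acceptance probability is $\mathrm{tr}(\mathcal{P}_{\ker H_2}\mathcal{P}_{\ker H_1})/\dim\ker H_1$. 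Because $\ker H_2 \subseteq \ker H_1$, this equals the normalized persistence.
\item Estimate this probability by Chernoff bounds with polynomially many repetitions. Gap promises control the phase-estimation leakage, so errors stay at $1/\mathrm{poly}$.
\end{enumerate}
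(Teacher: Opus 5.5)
\textbf{Gap in the hardness reduction.} Your reduction does not encode the quantity that an \SDQC{} instance decides. In Definition~\ref{def:SDQC1}, condition (iii) thresholds the acceptance probability on $\ket{0}\bra{0}\otimes I/2^{q(n)-1}$, i.e.\ averaged over the \emph{entire} mixed register. Averaged over $\mathcal{S}_x$ this probability is identically $1$. There is also no ``second accepting condition'', and you never use the soundness condition (ii) on $\mathcal{S}_x^\perp$.

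Because you put $H_{\rm out}$ inside $H_1$, your $\ker H_1$ is the span of history states over $\mathcal{S}_x$, so the denominator is $\dim\mathcal{S}_x$. Whatever $H_{12}$ you add, the ratio is then a fraction of $\mathcal{S}_x$. It carries no information about $\dim\mathcal{S}_x/2^{N-1}$ (with $2^{N-1}$ the mixed-register dimension), which is what the acceptance probability measures. The invented second test also has no completeness or soundness guarantee, so the exactness of $\ker H_2$ in your step 4, and any gap above it, are unsupported.

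The fix, which is the paper's proof, has three parts. First, leave the output term out of $H_1=J_{in}H_{in}+J_{prop}H_{prop}+J_{clock}H_{clock}$, so that $\ker H_1$ is the full $2^{N-1}$-dimensional space of history states with clean qubit $\ket{0}$; this space is preparable via Lemma~\ref{lem:prep_hist}. Second, use $H_{12}=(T+1)H_{\rm out}$ as the perturbation. Perfect completeness together with soundness then gives $\ker H_2=\mathrm{Span}\{\ket{\eta(\psi)}:\psi\in\mathcal{S}_x\}$ exactly, and soundness also gives the inverse-polynomial gap of $H_2$ above zero. Third, the thresholds still do not ``transfer directly''. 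With soundness only $1/3$, the acceptance probability can exceed $\dim\mathcal{S}_x/2^{N-1}$ by up to $\tfrac{1}{3}\bigl(1-\dim\mathcal{S}_x/2^{N-1}\bigr)$. So you must first amplify (Marriott--Watrous with $O(\log n)$ clean ancillas) to soundness $p(n)\le 1/\mathrm{poly}(n)$ and then invoke eq.~\eqref{eq:power_sdqc}.

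\textbf{Containment.} Your containment argument takes a different and in one respect cleaner route than the paper. Since $H_{12}\succeq 0$ forces $\ker H_2\subseteq\ker H_1$, you have $\mathrm{Tr}(\Pi_{\ker H_2}\Pi_{\ker H_1})=\dim\ker H_2$. Measuring $\Pi_{\ker H_2}$ (via phase estimation using the $\delta_2$ gap) on $\rho_{\ker H_1}$ therefore already yields the normalized persistence. The paper's fixed-point amplitude amplification and its large-overlap assumption are unnecessary here, since that condition holds automatically with $a_{\min}=1$.

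Your state-preparation step, however, is muddled. Postselecting an arbitrary $\rho$ onto $\ker H_1$ gives $\Pi\rho\Pi/\mathrm{Tr}(\Pi\rho)$, not the uniform mixture. Postselecting $I/2^n$ requires $\dim\ker H_1/2^n\ge 1/\mathrm{poly}(n)$, which fails on the hard instances, where the ratio is $2^{-(T+1)}$. So hardness would not hold inside your containment regime. Assuming direct preparation of $\rho_{\ker H_1}$ repairs this; that is the paper's assumption, and Lemma~\ref{lem:prep_hist} supplies it for the hard instances.
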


We also consider the low-energy relaxation of this problem, which is analogous to \textsf{Normalized Quasi-Harmonic Persistence}. This problem replaces the kernel of $H_2$ with its low-energy subspace. We call this problem \textsf{Normalized Quasi-Persistence} and it can be informally stated as follows. Given two PSD, constant local, Hamiltonians $H_1$ and $H_2 = H_1 + H_{12}$, output an estimate $\chi$ such that
\begin{equation}
    \frac{\dim\mathrm{Im}\,\mathcal{P}_{2,b}|_{\ker H_1}}{\dim \ker H_1}
    - \epsilon \leq \chi \leq
    \frac{\dim\mathrm{Im}\,\mathcal{P}_{2,b+\xi}|_{\ker H_1}}{\dim \ker H_1}
    + \epsilon,
\end{equation}
where $\mathcal{P}_{2,b}$ denotes the orthogonal projector onto the span of eigenvectors of $H_2$ with eigenvalue less than $b$.

{
This quantity can be viewed informally as a spectral-flow analogue of persistence, i.e., as tracking how a low-energy band evolves under a perturbation. Starting from the zero-energy space of $H_1$, we add the positive semidefinite perturbation $H_{12}$ and ask what fraction of the initial subspace remains within a prescribed low-energy band of $H_2$. Thus \textsf{Normalized Quasi-Persistence} measures the survival of an initially protected subspace under a controlled deformation of the Hamiltonian. The low-energy window $\mathcal{P}_{2,b}$ is important because exact kernels may not be stable under perturbative Hamiltonian simulations, making the low-energy relaxation the natural robust version of the problem.
}

In this paper we prove the following:
\begin{informaltheorem}[Informal statement of Theorem \ref{thm:normalized_quasi_persistence} and Lemma \ref{lemma:containment_NQP}]
    \textsf{Normalized Quasi-Persistence} is \DQC-hard and contained in \BQP{} under suitable assumptions.
\end{informaltheorem}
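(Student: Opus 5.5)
We prove the two halves separately: \DQC-hardness by a chain of reductions from \textsf{Normalized Subtrace} (Prob.~\ref{prob:NST}), and containment in \BQP{} by combining state preparation with phase estimation.

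\textbf{Hardness.} We pass through \textsf{LENS} and \textsf{LESD} (Thm.~\ref{thm: low energy subtrace}, Thm.~\ref{thm: LESD}).

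First, take a \DQC{} circuit $U$ on $n$ qubits. Build a constant-local, positive semidefinite Hamiltonian $H_1$ whose kernel encodes the maximally mixed input register. Concretely, $H_1$ acts trivially (with zero energy) on an $n$-qubit data register and penalizes everything outside a fixed product state on the ancilla and clock registers. Then $\ker H_1 = \mathbb{C}^{2^n}\otimes|\text{anc}\rangle$, and $\dim\ker H_1 = 2^n$ is known. This is what makes the normalization by $\dim\ker H_1$ behave like the $2^n$ normalization of \DQC.

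Second, let $H_{12}$ be a Feynman--Kitaev-type history-state Hamiltonian. Its propagation and initialization terms are consistent with $\ker H_1$, and it contains an output penalty term $\Pi_{\text{out}}=|0\rangle\langle 0|_{\text{out}}$ acting at the final clock step. We use a unary clock so that every term is $O(1)$-local. For each computational-basis input $x$, the history state has low energy in $H_2=H_1+H_{12}$ exactly when $U$ accepts $x$ with high probability. Standard Kitaev spectral-gap bounds then separate the eigenvalues below $b$ from those above $b+\xi$.

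Third, amplify the acceptance probability per input, so that inputs split into mostly-accept and mostly-reject up to inverse-polynomial leakage. Then $\dim\mathrm{Im}\,\mathcal{P}_{2,b}|_{\ker H_1}/2^n$ approximates the acceptance fraction, i.e.\ the normalized trace. A cleaner route is to reuse the reduction behind Thm.~\ref{thm: LESD}. The counting of eigenvalues below $b$ restricted to the kernel is then inherited directly, and \textsf{NQP} contains \textsf{LESD} as the special case in which $H_1$ projects onto a known subspace.

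\textbf{Containment.} Prepare the maximally mixed state on $\ker H_1$. Under the stated state-preparation assumption this is possible, or it can be obtained by sampling random inputs and projecting with phase estimation on $H_1$, with postselection succeeding with probability $\dim\ker H_1/2^n$, which is assumed to be at least $1/\mathrm{poly}$. Then run phase estimation of $H_2$ with precision $\xi$ and accept if the estimated energy is below $b+\xi/2$.

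The acceptance probability is
\[
\frac{\operatorname{tr}(\mathcal{P}_{\ker H_1}\,\tilde{\mathcal{P}})}{\dim\ker H_1},
\]
where $\tilde{\mathcal{P}}$ is the smoothed spectral filter of $H_2$. This quantity lies between the two normalized quantities in the definition of \textsf{NQP}. One subtlety is that $\operatorname{tr}(\mathcal{P}_{\ker H_1}\mathcal{P}_{2,b})$ is not literally $\dim\mathrm{Im}\,\mathcal{P}_{2,b}|_{\ker H_1}$. The overlap assumption, namely that the singular values of $\mathcal{P}_{2,b}\mathcal{P}_{\ker H_1}$ lie near $0$ or $1$, closes this gap. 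Estimating the acceptance probability to additive error $\epsilon$ takes polynomially many repetitions.

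\textbf{Main obstacle.} The hard part of the hardness direction is keeping locality $O(1)$ while keeping $\ker H_1$ exact. Perturbative gadgets do not preserve kernels, so $H_1$ must be built directly as a sum of commuting projectors; only $H_2$'s low-energy band is allowed to be approximate. A second obstacle is ensuring that the history-state construction does not introduce spurious low-energy states in $\ker H_1$ that have no accepting-input origin. To control these, I would use Kitaev's geometric lemma together with a counting of invalid clock configurations.
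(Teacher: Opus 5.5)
Your hardness argument has a genuine gap. A \DQC{} instance carries no per-input promise. On the history space, the output term is diagonalized by the eigenbasis of the acceptance operator $A$, not by computational-basis inputs. Its eigenvalues can lie anywhere in $[0,1]$, and only the average $\frac{1}{2^{N-1}}\mathrm{Tr}\,A=1-\mu_{yes}$ separates YES from NO.

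Per-input amplification therefore destroys the quantity you need. It pushes each eigenvalue of $A$ towards $0$ or $1$ according to a threshold, so a single count $\dim\mathrm{Im}\,\mathcal{P}_{2,b}|_{\ker H_1}$ only reports the fraction of eigenvalues on one side of it. That fraction does not determine the trace: with threshold $1/2$, the spectra $\{1,0.2\}$ and $\{0.7,0.1\}$ (equal multiplicities) have averages $0.6$ and $0.4$ but the same fraction above $1/2$. A per-input dichotomy is exactly what the soundness condition of \SDQC{} supplies, which is why the paper uses Marriott--Watrous amplification only for \textsf{LEKD} and \textsf{Normalized Persistence}. For \textsf{NQP} the paper instead reduces \DQC{} to \textsf{LENS}, whose low-band average for $H_{\DQC}$ equals $1-\mu_{yes}$ up to $O(\varepsilon)$. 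It then reduces \textsf{LENS} to \textsf{LESD} by quadrature over $M=\lceil 3\eta/\epsilon\rceil$ thresholds $b=x_j$.

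Your ``cleaner route'' gestures at this, but the step that turns each \textsf{NQP} query into an \textsf{LESD} query is missing. There is also no generic way to view an \textsf{LESD} instance as an \textsf{NQP} instance. The step needs the paper's specific splitting:
\begin{itemize}
\item $H_1=J_{in}H_{in}+J_{prop}H_{prop}+J_{clock}H_{clock}$ and $H_{12}=(T+1)H_{out}$, so that $H_2=H_{\DQC}$ and $\ker H_1$ is exactly the history subspace $S$.
\item The perturbative bound $\|\mathcal{P}_{2,\eta}-\Pi_{\ker H_1}\|\leq\alpha$ then gives $\dim\ker H_1=\dim\mathcal{S}_{\leq\eta}(H_2)=2^{N-1}$.
\item The same bound gives injectivity of $\Pi_{\ker H_1}$ on $\mathrm{Im}\,\mathcal{P}_{2,b}$, hence $\dim\mathrm{Im}\,\mathcal{P}_{2,b}|_{\ker H_1}=\#\{i:\lambda_{2,i}\leq b\}$.
\end{itemize}
Your $H_1$ does not do this. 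Since $H_2\geq H_1$, low-energy states of $H_2$ must have small $H_1$-energy. A penalty on every clock configuration other than a fixed product state lifts each history state, which has weight $T/(T+1)$ off time $0$, by an amount of order the penalty scale. So the propagation terms are not ``consistent with $\ker H_1$''. A penalty that is weak compared with $J_{prop}$, and depends only on the clock and the time-$0$ ancillas, acts as a uniform shift on the history space. So a slice-type $H_1$ might be rescued, but you would have to fix the scales and redo the analysis.

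The containment half also falls short of the paper's hypothesis. Write $m_-=\dim\mathrm{Im}\,\mathcal{P}_{2,b}|_{\ker H_1}$, $m_+=\dim\mathrm{Im}\,\mathcal{P}_{2,b+\xi}|_{\ker H_1}$ and $D=\dim\ker H_1$. The large-overlap condition only says that the nonzero eigenvalues of $\mathcal{P}_{2,b}\Pi_{\ker H_1}\mathcal{P}_{2,b}$ are at least $a_{\min}\geq 1/\mathrm{poly}(n)$. Your acceptance probability $\mathrm{tr}(\mathcal{P}_{\ker H_1}\tilde{\mathcal{P}})/D$ can therefore be as small as roughly $a_{\min}m_-/D$. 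The claim that it lies between the two target quantities is false in general.

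Replacing the hypothesis by ``singular values near $0$ or $1$'' is a strictly stronger promise. It would even have to read ``exactly $0$ or near $1$'', because tiny nonzero singular values count fully in the rank but not in the trace. The paper keeps the $1/\mathrm{poly}(n)$ promise by wrapping the block-encoded phase-estimation filter in fixed-point amplitude amplification:
\begin{itemize}
\item Directions with eigenvalue at least $a_{\min}$ reach success probability at least $1-2\delta$ with $O(a_{\min}^{-1/2}\log(1/\delta))$ queries.
\item Zero-eigenvalue directions are fixed points of the Grover iterate and stay at zero.
\item So the success probability lies in $[m_-(1-2\delta)/D,\,m_+/D]$.
\end{itemize}
Your stronger promise does hold on the paper's hard instances, where $a_{\min}\geq 1-1/\mathrm{poly}(n)$, so it yields a weaker but consistent containment statement. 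Your postselection alternative, however, succeeds there only with probability $2^{N-1}/2^{N+T}$.
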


So far we have not discussed the actual assumptions necessary for containment in \BQP.
The key assumption that all problems require is efficient preparation of a state close to the uniform mixture over the normalizing subspace. The persistence problems we described above also require the initial and final subspaces to have a `large' overlap. Precise definitions of these assumptions are given in Section \ref{sec:main}.

It is important to note that, for all \DQC-hard problems we consider, \DQC-hardness continues to hold under the assumptions necessary for containment in \BQP. This means that these problems exhibit an exponential quantum advantage, under the assumption that $\DQC\not\subseteq\mathsf{BPP}$. Analogously, the problems for which we prove \SDQC-hardness also remain \SDQC-hard under the assumptions for containment in \BQP. Thus, unless \SDQC{} is classically simulable, these problems will also demonstrate an exponential quantum advantage. Whilst it seems intuitive that \SDQC{} should be classically hard to simulate, proving this is a key open question resulting from our work.

\subsection{Complexity Landscape}
A unifying theme of our results is that they connect a number of previously isolated problems into a single complexity landscape. This can be clearly seen in Figure~\ref{fig:overview}. In this diagram solid arrows point to harder problems, so that the chain of reductions we consider can be seen clearly. We also include problems from prior works that are related to the ones we study. Formal statements of these related problems are collected in Appendix~\ref{app: Add problems}.
{
The dotted arrows in Figure~\ref{fig:overview} should be read as pure-state specializations of the subspace-normalized problems studied here:
\begin{itemize}
    \item If we consider the pure-state version of the \textsf{Normalized Quasi-Persistence} problem, namely the case $\dim\ker H_1=1$, then it recovers the \textsf{Low-energy Overlap} problem (Problem~\ref{prob:low_energy_overlap}) in the unique initial-ground-state regime. In this specialization, the large overlap condition should be replaced by the corresponding YES/NO overlap promise.
    \item Similarly, when $\dim\ker H_1=1$, \textsf{Normalized Persistence} recovers \textsf{Kernel Overlap} (Problem~\ref{prob:kernel_overlap}) in the unique-ground-state regime.
    \item The pure-state version of
    \textsf{Low-energy Normalized Subtrace} naturally leads to
    the guided local Hamiltonian problem: in the unique-low-energy-state limit, \textsf{LENS} becomes an energy-estimation problem for a single guided ground state.
    \item On the TDA side, when the initial harmonic space is replaced by a specified harmonic representative or cycle state, \textsf{Normalized Harmonic Persistence} specializes to \textsf{Harmonic Persistence} (Problem~\ref{prob:harmonic_persistence}).
\end{itemize}
Thus, this work provides a unified framework and complexity landscape for studying energy-estimation and spectral-density-estimation problems for local Hamiltonians and TDA.
}


\begin{figure}
    \centering
    \begin{tikzpicture}[
  >=stealth,
  every node/.style={font=\footnotesize},
  box/.style={
    rectangle,
    rounded corners=7pt,
    draw=black,
    line width=0.7pt,
    align=center,
    minimum height=1.05cm,
    text width=3.45cm,
    fill=white
  },
  dqcbox/.style={box, draw=red},
  sdqcbox/.style={box, draw=blue},
  conjecturebox/.style={box, draw=blue, dashed},
  problem/.style={font=\small\sffamily},
  formula/.style={font=\scriptsize},
  arrow/.style={->, line width=0.8pt},
  relation/.style={<->, dashed, line width=0.7pt},
  pure/.style={->, dotted, line width=1.2pt},
  note/.style={font=\scriptsize, align=center, inner sep=1pt}
]

\node[box] (nst) at (0,0) {
  {\sffamily Normalized subtrace}\\[3pt]
  {\scriptsize $\displaystyle \frac{1}{2^n}\sum_{0\leq\lambda_i\leq\eta}\lambda_i$}
};
\node[box] (llsd) at (4.35,0) {
  {\sffamily Low-lying spectral density}\\[3pt]
  {\scriptsize $\displaystyle \frac{1}{2^n}\sum_{0\leq\lambda_i\leq b}1$}
};

\node[dqcbox] (lens) at (0,-2.2) {
  {\sffamily Low-energy\\normalized subtrace}\\[3pt]
  {\scriptsize $\displaystyle \frac{1}{\dim\mathcal{S}_{\leq\eta}}
  \sum_{0\leq\lambda_i\leq\eta}\lambda_i$}
};
\node[dqcbox] (lesd) at (4.35,-2.2) {
  {\sffamily Low-energy\\spectral density}\\[3pt]
  {\scriptsize $\displaystyle \frac{1}{\dim\mathcal{S}_{\leq\eta}}
  \sum_{0\leq\lambda_i\leq b}1$}
};
\node[sdqcbox] (lekd) at (8.7,-2.2) {
  {\sffamily Low-energy\\kernel density}\\[3pt]
  {\scriptsize $\displaystyle \frac{1}{\dim\mathcal{S}_{\leq\eta}}
  \sum_{\lambda_i=0}1$}
};

\node[dqcbox] (nqp) at (4.35,-4.35) {
  {\sffamily Normalized\\quasi-persistence}\\[3pt]
  {\scriptsize $\displaystyle
  \frac{\dim\mathrm{Im}\,\mathcal{P}_{2,b}|_{\ker H_1}}
       {\dim\ker H_1}$}
};
\node[sdqcbox] (np) at (8.7,-4.35) {
  {\sffamily Normalized persistence}\\[3pt]
  {\scriptsize $\displaystyle
  \frac{\dim\mathrm{Im}\,\mathcal{P}_{2,0}|_{\ker H_1}}
       {\dim\ker H_1}$}
};

\node[dqcbox] (lensTDA) at (0,-6.55) {
  {\sffamily LENS for TDA}\\[3pt]
  {\scriptsize $\displaystyle
  \frac{1}{\dim\mathcal{S}_{\leq\eta}}
  \sum_{0\leq\lambda_i\leq\eta}\lambda_i(\Delta_d)$}
};
\node[dqcbox] (lesdTDA) at (4.35,-6.55) {
  {\sffamily LESD for TDA}\\[3pt]
  {\scriptsize $\displaystyle
  \frac{1}{\dim\mathcal{S}_{\leq\eta}}
  \sum_{0\leq\lambda_i\leq b}1$}
};
\node[dqcbox] (nqhp) at (4.35,-8.75) {
  {\sffamily Normalized quasi-\\harmonic persistence}\\[3pt]
  {\scriptsize $\displaystyle
  \frac{\tilde{\beta}_{d,\eta}^{1,2}}{\tilde{\beta}_{d,\eta}^{1}}$}
};
\node[conjecturebox] (nhp) at (8.7,-8.75) {
  {\sffamily Normalized harmonic\\persistence}\\[3pt]
  {\scriptsize $\displaystyle
  \frac{\beta_d^{1,2}}{\beta_d^{1}}$}
};

\node[note] (glh) at (-2.25,-3.8) {\textsf{Guided LH}};
\node[note] (leo) at (1.15,-5.35) {\textsf{Low-energy}\\\textsf{overlap}};
\node[note] (ko) at (10.65,-5.95) {\textsf{Kernel}\\\textsf{overlap}};

\draw[arrow] (nst) -- (llsd);
\draw[arrow] (lens) -- (lesd);
\draw[arrow] (lesd) -- (nqp);
\draw[arrow] (lekd) -- (np);
\draw[arrow] (lens) -- (lensTDA);
\draw[arrow] (lensTDA) -- (lesdTDA);
\draw[arrow] (lesdTDA) -- (nqhp);
\draw[arrow, dashed] (np) -- (nhp);

\draw[relation] (nst) -- (lens);
\draw[relation] (llsd) -- (lesd);
\draw[arrow] (lekd) -- (lesd);

\draw[pure] (lens) -- (glh);
\draw[pure] (nqp) -- (leo);
\draw[pure] (np) -- (ko);

\node[note, anchor=west] at (6.45,0.05) {$\mathsf{DQC}_1$-hard\\for log-LH};
\node[note, anchor=east] at (-1.85,-2.2) {$\mathsf{DQC}_1$-hard\\for const-LH};
\node[note, anchor=west] at (10.85,-2.25) {$\mathsf{SDQC}_1$-hard\\for const-LH};

\end{tikzpicture}
    \caption{Overview of the problems studied in this paper and the reductions between them. Red boxes denote $\mathsf{DQC}_1$-hard problems and blue boxes denote $\mathsf{SDQC}_1$-hard problems; the dashed blue box denotes the conjectural/conditional TDA exact-kernel problem. Dotted arrows indicate pure-state specializations and the corresponding overlap problems. Black boxes indicate problems from previous works, with Normalized subtrace (Problem \ref{prob:NST}) coming from \cite{brandao2008entanglement} and Low-lying spectral density (Problem \ref{prob:LLSD}) coming from \cite{gyurik2022towards}.}
    \label{fig:overview}
\end{figure}

\subsection{Techniques}\label{subsec: Techniques}

The technical contribution of the paper has three parts. First, we isolate a family of subspace-normalized spectral and persistence problems that capture the quantities naturally arising from normalized persistence, and we introduce the complexity class \SDQC{} for exact-kernel variants.
These definitions are not merely auxiliary notation. Rather, they form the complexity landscape summarized in Table~\ref{tab:problems} and Figure~\ref{fig:overview}. Second, we adapt circuit-to-Hamiltonian hardness arguments to this subspace-normalized setting by controlling the relevant low-energy spaces, rather than the full Hilbert space. Third, we prove containment results under state-preparation and overlap assumptions, which identify when the same quantities can be efficiently estimated in \BQP.

\paragraph{Reduction landscape.}
In order to prove \DQC-hardness and \SDQC-hardness of the main problems we consider, we gradually prove hardness for several intermediate problems. These results will also be of interest in their own right, as spectral density and overlap problems. In particular, they offer strengthening of some known hardness results to $O(1)$-local Hamiltonians. A visual summary of the overall chain of reductions can be seen in Figure \ref{fig:overview}.

A natural starting point is Brand\~ao's \cite{brandao2008entanglement} proof that \textsf{Normalized Subtrace} is \DQC-hard for $O(\log(n))$-local Hamiltonians. That problem estimates the trace below a threshold $\eta$, normalized by the full Hilbert-space dimension:
$$\frac{1}{2^N}\sum_{0\leq\lambda_i\leq\eta}\lambda_i.$$
For a precise formulation see Problem~\ref{prob:NST}. Brand\~ao's argument uses a circuit-to-Hamiltonian construction in which the average of the first $2^{N-1}$ eigenvalues encodes the rejection probability of a \DQC{} circuit. However, this argument does not directly give the constant-local hardness statements needed here. The log-local construction keeps the relevant normalization matched to the encoded circuit space, whereas a standard $O(1)$-local clock construction introduces a much larger Hilbert space. With the full Hilbert-space normalization, the spectral signal is then diluted and no longer determines the desired eigenvalue average to inverse-polynomial precision.

\paragraph{Low-energy normalization.}
The starting point for our hardness proofs is a low-energy variant of the above, which we call \emph{Low-Energy Normalized Subtrace} \textsf{LENS} (Problem \ref{prob:LENS}). This is the problem of estimating the trace below a threshold $\eta$, but instead now normalized by the dimension of the span of eigenvectors with eigenvalue at most $\eta$. This quantity is therefore given by
$$\frac{1}{\dim S_{\leq \eta}}\sum_{0\leq\lambda_i\leq\eta}\lambda_i.$$
This change in normalization allows us to prove that \textsf{LENS} is \DQC-hard for $O(1)$-local Hamiltonians (Section \ref{sec:LENS}), a strengthening of the hardness for \textsf{Normalized Subtrace} with log-local Hamiltonians.

\paragraph{Low-energy spectral density and exact kernels.}
From here, inspired by \textsf{LLSD} \cite{gyurik2022towards}, we consider two spectral density problems: \emph{Low-Energy Spectral Density} (\textsf{LESD}) and the \emph{Low-Energy Kernel Density} (\textsf{LEKD}). Consider a PSD Hamiltonian $H$, and let $S_{\leq \eta
}$ be the space spanned by its eigenstates with eigenvalue at most $\eta$. Then \textsf{LESD} is the problem of estimating $\dim S_{\leq b}/\dim S_{\leq \eta}$, for $b\leq\eta$ (see Problem \ref{prob:LESD}) and \textsf{LEKD} is the problem of estimating $\dim \ker (H)/\dim S_{\leq \eta}$ (see Problem \ref{prob:LEKD}). We are able to show \DQC-hardness of \textsf{LESD} for $O(1)$-local Hamiltonians via a polynomial time truth-table reduction from \textsf{LENS}. This proof (Section \ref{sec:LESD}) uses a quadrature argument that is closely related to the analogous proof for \textsf{LLSD}. We also show that \textsf{LEKD} is \SDQC-hard for $O(1)$-local Hamiltonians (Section \ref{sec:LEKD}). This proof uses the same circuit to Hamiltonian construction as the \textsf{LENS} proof, and we call the resulting $N$-qubit Hamiltonian $H_\SDQC$. However, the perfect completeness of the \SDQC{} instance now implies that
$$\frac{\dim\ker H_{\SDQC}}{\dim S_{\leq \eta}} = \frac{\dim \mathcal{S}}{2^{N-1}}$$
for a suitably chosen $\eta$, where $\mathcal{S}$ denotes the subspace that is accepted perfectly. By amplifying the circuit, the acceptance probability can then be made sufficiently close to the quantity above.

\paragraph{Subspace persistence.}
From here we are able to prove \DQC-hardness of \textsf{Normalized Quasi-Persistence} and \SDQC-hardness of \textsf{Normalized Persistence}, both for $O(1)$-local Hamiltonians. These are the problems discussed in Section \ref{sec: main results (Hamiltonians)}. For these proofs we divide the circuit Hamiltonian into two parts, such that one part can be treated as the initial Hamiltonian $H_1$ and the other can be treated as a perturbation $H_{12}$ (where $H_2 = H_1 + H_{12}$). Both of these proofs make use of the fact that $\ker H_2 \subset \ker H_1$, which follows from our Hamiltonian construction. For the \textsf{Normalized Quasi-Persistence} proof, we perform a reduction from \textsf{LESD}. This follows since the kernel property implies that
$$\dim \text{Im} \mathcal{P}_{2,b}|_{\ker H_1} = \sum_{0\leq\lambda_{2,i}\leq b}1$$
for suitable choices $b\leq \eta$. For this same choice of $\eta$ we also have $\dim S_{\leq \eta} = \dim \ker H_1$, and so we see that a suitable instance of \textsf{Normalized Quasi-Persistence} is sufficient to solve the instance of \textsf{LESD} with the \DQC{} Hamiltonian. The proof of \SDQC-hardness of \textsf{Normalized Persistence} proceeds similarly, with the main difference being that we instead reduce from \textsf{LEKD}. Now, again due to the perfect completeness of \SDQC, we can form an instance of \textsf{Normalized Persistence} such that
$$\frac{\dim \text{Im}\mathcal{P}_{\ker H_2}|_{\ker H_1}}{\dim \ker H_1} = \frac{\dim \mathcal{S}}{2^{N-1}}.$$
This is then sufficient to solve the instance of \textsf{LEKD} with the \SDQC{} Hamiltonian, which completes the proof.

\paragraph{Passing to clique-complex Laplacians.}
For the \DQC-hardness of TDA problems we follow a similar chain of reductions. We begin by considering \textsf{LENS for TDA} (see Problem \ref{prob:LENS_TDA}). This is simply the same problem as \textsf{LENS}, where now the input is restricted to the Laplacian of a clique complex. We prove that this is \DQC-hard using a reduction from \textsf{LENS}. This proof makes use of the fact that the \DQC{} Hamiltonian we consider is $O(1)$-local, together with existing results which show that such a Hamiltonian can be simulated by a combinatorial Laplacian. Note that by simulation here, we mean that given a Hamiltonian there exists an efficient encoding so that the spectrum of the Hamiltonian is approximated in the low-energy subspace of a combinatorial Laplacian acting on a larger space. The construction necessary for this follows from \cite{cubittUniversalQuantumHamiltonians2018} and \cite{rayudu2024fermionic}. We give an overview of the latter in Appendix \ref{appendix: Ray24}. Once this notion of simulation is established, the low energy normalized subtrace of $H_\DQC$ can be approximated by a suitable combinatorial Laplacian. From here we can prove that \textsf{LESD for TDA} (Problem \ref{prob:LESD_TDA}) is \DQC-hard, using the same argument as the reduction from \textsf{LENS} to \textsf{LESD}. We can then prove that \textsf{Normalized Quasi-Harmonic Persistence} (Problem \ref{prob:normalized_quasi_harmonic_persistence}) is \DQC-hard by reducing from \textsf{LESD for TDA}. This is achieved simply by considering an instance of \textsf{NQHP} with the two graphs equal. Then for $b\leq \eta$ we have
$$\mathrm{Im}\mathcal{P}_{2,b}|_{\mathrm{Im}\mathcal{P}_{1,\eta}} = S_{\leq b}.$$
This immediately allows a suitable instance of \textsf{NQHP} to solve the hard instance of \textsf{LESD for TDA}, completing the reduction.
The exact-kernel version, \textsf{NHP}, requires a stronger simulation preserving kernel dimensions, which we formulate separately as Conjecture~\ref{conj:NHP_realization}.


\paragraph{Containment.}
Finally, the containment results identify the algorithmic assumptions under which these subspace-normalized quantities are efficiently estimable. For the low-energy spectral problems, the main assumption is efficient preparation of the uniform mixture over the relevant low-energy subspace. For the persistence problems, one also needs a large-overlap condition ensuring that the projected subspace can be sampled or amplified without losing inverse-polynomial signal. 
These containment statements clarify that the newly defined problems are not merely hardness gadgets, but genuine \BQP{} estimation problems under natural state-preparation and overlap assumptions.

\subsection{Discussions and open questions}
\label{subsec:discussion}

Our results suggest several open directions concerning both the complexity classes introduced here and the normalized spectral quantities that motivate them. We discuss these questions from four perspectives: containment in \DQC{} and \SDQC{}, possible refinements through intermediate classes such as $\tfrac{1}{2}\mathsf{BQP}$, trade-offs between locality and dimension in circuit-to-Hamiltonian constructions, and the possibility of dequantization in weaker precision regimes.

\paragraph{Containment in $\DQC$}
None of our results consider containment in $\DQC$ for $\DQC$-hard instances. This is due to the limitation of state preparation in Lemma~\ref{lem:prep_hist} for $\DQC$-hard instances. It is an interesting open problem if one can show $\DQC$-completeness (or $\SDQC$-completeness) for the low-energy and normalized persistence problems.

\paragraph{$\tfrac{1}{2}\mathsf{BQP}$-hardness}
Recently, a complexity class called
$\tfrac{1}{2}\mathsf{BQP}$ (half-BQP), has been introduced in~\cite{jacobs2024space}.
This complexity class is known to sit between $\mathsf{DQC}_1$ and $\mathsf{BQP}$.
Therefore, a natural way to further refine the hardness and containment results would be to show $\tfrac{1}{2}\mathsf{BQP}$-hardness or containment in $\tfrac{1}{2}\mathsf{BQP}$.
Moreover, it would be interesting to introduce the notion of perfect-completeness to $\tfrac{1}{2}\mathsf{BQP}$ (one may name it by ``$\tfrac{1}{2}\mathsf{BQP}_1$'') and study the relations with the class.

\paragraph{Trade-offs between locality and dimension in circuit-to-Hamiltonian}

The locality of the circuit-to-Hamiltonian construction and the normalization factor of persistence problems are closely related.
We have seen that the current $O(1)$-local clock Hamiltonian construction leads to exponential blow-up for the dimension of the clock-register.
Recently, more fine-grained analysis for the trade-offs between locality and size has been initiated~\cite{chia2026finegrained}.
It would be an interesting future direction to study a fine-grained tension between the locality and dimension in the context of normalized persistence and low-energy problems.

\paragraph{Dequantization}
In this paper, we studied a number of low-energy problems and normalized persistence problems. Our $\DQC$ and $\SDQC$ hardness results suggest that these problems are not likely to be dequantized in the inverse-polynomial precision regime. It is an interesting open problem whether the problems introduced in this paper can be efficiently solved classically in the constant precision regime using similar approaches to those in prior works ~\cite{apers2023simple, edenhofer2025dequantization}.

\paragraph{The complexity class $\mathsf{SDQC}_1$}
We introduced $\mathsf{SDQC}_1$ to capture the hardness of exact-kernel persistence problems.
It seems that the newly introduced $\mathsf{SDQC}_1$ and $\DQC$ are likely incomparable due to the spectral gap condition on the acceptance operator in $\mathsf{SDQC}_1$.
Establishing more detailed relationships between $\mathsf{SDQC}_1$ and $\DQC$ is an important future direction.

\paragraph{Discharging the state-preparation assumption.}
All of our containment results assume access to a polynomial-size circuit preparing the uniform mixture over the low-energy subspace.
It is an important future direction to understand the instances of local Hamiltonians or homology that allow efficient low-energy subspace state preparation.
Recently low-energy state preparation algorithms that beat native Grover search have been discovered~\cite{buhrman2025beating, mataraarachchi2026entropygoverned}.
It is interesting to fit this framework for the preparation of mixtures over low-energy subspace.
Moreover, the Khovanov homology algorithm of Schmidhuber et al.~\cite{schmidhuber2025khovanov} utilizes a \emph{pre-thermalization} procedure for the combinatorial Laplacian. As the low-temperature Gibbs state seems to have a good connection to the uniform mixture over the low-energy subspace, it is an interesting open direction to understand the efficiency of Gibbs state preparation for combinatorial Laplacians and its application to normalized persistence problems.

\subsection{Organization}

In Section~\ref{sec:prelim}, we recall the necessary preliminaries on topology that are used in this paper.
In Section~\ref{sec:sdqc1}, we introduce the new complexity class $\mathsf{SDQC}_1$ and discuss its basic properties.
In Section~\ref{sec:main}, we formally introduce the low-energy and normalized persistence problems and state the corresponding results.
Section~\ref{sec:hardness_proof} proves the hardness results for the local-Hamiltonian problems.
Section~\ref{sec:TDA_hardness} develops the corresponding TDA hardness results, and Section~\ref{sec:NHP} discusses the conjectured exact-kernel version for normalized harmonic persistence.
In Section~\ref{sec:containment_normalized_persistence}, we provide proofs of containments that are postponed in Section~\ref{sec:main}.

\section{Preliminaries}\label{sec:prelim}\label{sec:prelim_topo}

Here we give an overview of the general TDA pipeline as well as the necessary topology results to understand our problem statements and proofs. An abstract simplicial complex $X$ is a collection of sets of vertices, which are closed under taking subsets \cite{munkres1984elements}. A set in $X$ consisting of $d+1$ points is called a $d$-simplex. We note there are differing conventions on whether to include the empty set in $X$. For this discussion we assume that the sets in $X$ are non-empty. Geometrically, a $d$-simplex can be thought of as a $d$-dimensional triangle. For example, a $0$-simplex is a point, a $1$-simplex is a line segment, a $2$-simplex is a triangle and a $3$-simplex is a tetrahedron.

Next we outline how a simplicial complex can be formed from data, for the purpose of TDA.
Consider some data $x_i\in \chi$ with $(\chi, \rho)$ a metric space. In typical applications this will be Euclidean space, but any metric space will suffice. Then we fix a lengthscale $\epsilon$ and form a graph $G$ with nodes $\{x_i\}$ and edges between any two points such that $\rho(x_i,x_j)\leq \epsilon$. We then consider the clique complex of $G$, $X=\mathrm{Cl}(G)$. That is, the collection of complete subgraphs in $G$, where each subgraph of $d+1$ points is identified as a $d$-simplex. One can indeed check that $\mathrm{Cl}(G)$ is a simplicial complex. A complex defined in the way above is called a \emph{Vietoris--Rips complex}. This is not the only way to construct a simplicial complex from data; for example, one can instead construct a \v{C}ech complex.

A simplex $\sigma$ is denoted by a collection of its vertices. For example, $\sigma = [v_1,v_2,v_3]$ denotes the 2-simplex with vertices $v_1,v_2,v_3$. If two simplices have the same vertices, we identify them as equal if they are related by an even permutation and introduce a negative sign if they are related by an odd permutation. For example, $[v_1,v_2,v_3] = [v_3,v_1,v_2] = -[v_1,v_3,v_2]$. We fix some (arbitrary) ordering of the vertices and denote the set of $d$-simplices, with vertices in that order, by $X_d$. We then consider the complex vector spaces
$$C_d(X) = \mathrm{Span}(X_d).$$
We refer to the spaces $C_d(X)$ as chain spaces. We can also form the whole chain space
$$C(X) = \bigoplus_{d=0}^{n-1}C_d(X),$$
where $n$ is the number of vertices present in $X$ (or number of data points in the TDA context). It is clear that $C(X)$ is then a complex vector space spanned by all of the simplices (of all possible dimensions) present in $X$.
Taking these spaces to be over the complex numbers is natural for quantum TDA, since it gives a natural way to represent $C(X)$ as a suitable Hilbert space. These spaces can be defined generally as free abelian groups, but this generality is unnecessary for our discussion. In this paper we take the graph $G$, from which we form the simplicial complex, to have vertex weights $w(v_i)>0$. In practical TDA applications, these weights can be used to encode additional scalar information about the data. This is a strict generalization, since if the weighting is not required one can take all the weights equal to one. We can extend the weighting to higher dimensional simplices by taking the weight of a simplex to be the product of its vertex weights. That is, for $\sigma\in X$ we define
$$w(\sigma) := \prod_{v\in\sigma}w(v).$$
This then allows us to define the following inner product on $C(X)$, for any $\tau,\sigma\in X$,
\[
    \braket{\tau|\sigma} =
    \begin{cases}
        (w(\sigma))^2 &\text{if }\sigma=\tau\\
        0 & \text{otherwise.}
    \end{cases}
\]
We note that this inner product definition matches that of \cite{king:qma} and \cite{gyurik2024quantum}. Then relative to this inner product, it is clear that we can form an orthonormal basis of $C_d(X)$ by normalizing the simplices in $X_d$ by their weight. We define $\mathcal{B}_d =\{\sigma' = \frac{1}{w(\sigma)}\sigma|\sigma\in X_d\}$ to be this orthonormal basis of $C_d(X)$. From here we use $\sigma'$ to denote the normalized basis element corresponding to $\sigma$. Then we can define the weighted boundary maps $\partial_d:C_d(X)\to C_{d-1}(X)$ by their action on these basis elements and extending them linearly. Consider $\sigma = [v_0,v_1,\ldots,v_d]\in X_d$, with $\sigma' \in \mathcal{B}_d$ as above, then we define
$$\partial_d \sigma' = \sum_{i=0}^d(-1)^iw(v_i)(\sigma\setminus \{v_i\})'.$$
That is, the weighted alternating sum of the simplices obtained from $\sigma$ when each vertex is removed. Note that if all $w(v_i) = 1$ then the usual, unweighted, boundary map definition is recovered. One can then check that for all $d\geq 1$, $\partial_{d-1}\circ\partial_{d} =0$. This means that $C_d(X)$ together with the corresponding boundary maps form a chain complex:
\[
\cdots \xrightarrow{\partial_{d+2}}
C_{d+1}(X) \xrightarrow{\partial_{d+1}}
C_{d}(X) \xrightarrow{\partial_{d}}
C_{d-1}(X) \xrightarrow{\partial_{d-1}}
\cdots
\]
We will also be interested in two subspaces related to the boundary maps $\partial_d$. These are the space of $d$-dimensional cycles $Z_d(X)$ and $d$-dimensional boundaries $B_d(X)$, where the following are given by:
\begin{align}
    Z_d(X) &= \ker(\partial_d)\\
    B_d(X) &= \text{Im}(\partial_{d+1}).
\end{align}
Note that since $\partial_{d}\circ\partial_{d+1}=0$ we have that $B_d(X)$ is a subspace of $Z_d(X)$. This allows us to define the $d$'th homology `group' as the quotient
$$H_d(X) = Z_d(X)/B_d(X).$$
We remark that calling $H_d(X)$ a group is motivated by its more general definition, where $C_d(X)$ are defined as free abelian groups. However, with our choice to work over the complex numbers, $H_d(X)$ is just a complex vector space spanned by cosets. The main difference this definition leads to is that $H_d(X)$ will be isomorphic to a direct product of copies of $\mathbb{C}$, whereas in the more general case $H_d(X)$ can have torsion groups in its decomposition. This means that the latter case captures additional topological information than our definition, such as whether the space contains a twist. It is also important to note that any choice of weights $w(v_i)>0$ will lead to isomorphic homology groups, and hence the weights do not affect the Betti numbers.

Intuitively, cycles are the chains of simplices that form a closed loop (or higher dimensional analogue). Boundaries are then those cycles that also enclose a simplex, i.e. form its boundary. Hence the homology $H_d(X)$ captures the cycles which are not boundaries of something, meaning they correspond to `holes' in $X$. This leads us to the definition of the $d$'th Betti number
$$\beta_d = \dim (H_d(X)),$$
which counts the number of $d$-dimensional holes present in $X$. These are one of the main objects of study in TDA. The Betti numbers $\beta_d$ can also be reformulated as the kernel dimension of (weighted) combinatorial Laplacians $\Delta_d$. These are PSD Hermitian operators $\Delta_d:C_d(X)\to C_d(X)$ given by
$$\Delta_d = \partial_d^\dagger \partial_d + \partial_{d+1}\partial_{d+1}^\dagger$$
where the adjoint $\partial^\dagger$ is defined relative to the inner product specified above. Then we have
$$\beta_d = \dim\ker(\Delta_d).$$
The combinatorial Laplacians are of central importance in quantum TDA, since they allow Betti number estimation to be reframed as estimating the ground state dimension of a suitable Hamiltonian. Indeed, the combinatorial Laplacian of $\mathrm{Cl}(G)$ is equal to the `fermion hardcore model' on the complement graph $\bar{G}$ \cite{gyurik2022towards}.  This representation of $\Delta_d$ is crucial in one of the constructions we use, more details can be found in Appendix \ref{appendix: Ray24}. Since $\beta_d = \dim\ker(\Delta_d)$ the dimension of the kernel of $\Delta_d$ is not affected by the choice of weights $w(v_i)$. However, altering the vertex weights does change the non-zero spectrum. This is an important detail as it is what allows a combinatorial Laplacian to simulate a $k$-local Hamiltonian in its low-energy subspace \cite{rayudu2024fermionic}. This is necessary for our hardness proofs and therefore the main reason why our results are stated for weighted complexes.

In TDA we are not just interested in the clique homology at one lengthscale $\epsilon$, but rather this process is performed at multiple lengthscales. In particular, one will be interested in seeing how many of the $d$-dimensional holes present at lengthscale $\epsilon_i$ are still present at $\epsilon_j>\epsilon_i$. This notion is captured by the persistent Betti number $\beta_d^{i,j}$. Computing this for multiple lengthscales allows us to track which topological features are robust, and therefore less likely to be a result of noise in the data. Intuitively, one can visualize that when $\epsilon=0$, there will be no holes in the complex. Then as the lengthscale is increased, structure emerges and holes start to appear. Then as $\epsilon\to \infty$ the graph is fully connected and all the holes have died.

Let $X_i$ and $X_j$ be the simplicial complexes obtained from the data at lengthscales $\epsilon_i$ and $\epsilon_j$ respectively. Then $X_i \subset X_j$ and so complexes formed in this way form a filtration, a chain of subcomplexes where each is contained in the next. We can then define the inclusion map
$$\iota^{i,j}_d:C_d(X_i) \xhookrightarrow{}C_d(X_j).$$
This induces a homomorphism between $H_d(X_i)$ and $H_d(X_j)$
\begin{align}
    (\iota^{i,j}_d)_*:H_d(X_i)&\to H_d(X_j)\\
    z_i+B_d(X_i) &\mapsto \iota^{i,j}(z_i)+B_d(X_j), \text{  } z_i\in Z_d(X_i)
\end{align}
The $d$'th persistent Betti number is then given by
$$\beta_d^{i,j} = \dim \text{Im}(\iota^{i,j}_d)_*.$$
The image Im$(\iota^{i,j}_d)_*$ is called the persistent homology group. It consists of the non-trivial homology elements in $H_d(X_i)$ that are also non-trivial in $H_d(X_j)$. The persistent Betti numbers are of key interest for this paper, as the main quantity we study is given by the normalized persistence
$$\frac{\beta_{d}^{i,j}}{\beta_d^i}$$
as in equation \ref{eqn: Normalized persistence}. The inclusion also gives us a way to express persistent Betti numbers in terms of projectors and the combinatorial Laplacians. If we denote the $d$'th Laplacians of $X_i$ and $X_j$ by $\Delta_{i,d}$ and $\Delta_{j,d}$ respectively, then
$$\beta_d^{i,j} = \mathrm{rank}(\mathcal{P}_{\ker \Delta_{j,d}} \circ \iota^{i,j}_d|_{\ker \Delta_{i,d}})$$
where $\mathcal{P}_{\ker \Delta_{j,d}}$ is the orthogonal projector onto $\ker(\Delta_{j,d})$. We use this definition of the persistent Betti number throughout this paper, although we often suppress the inclusion and just write
$$\mathcal{P}_{\ker \Delta_{j,d}}|_{\ker \Delta_{i,d}}.$$

There is one more way to describe the $d$'th homology group that we require, namely the Harmonic subspace $\mathcal{H}_d(X_i)$. This is given by
$$\mathcal{H}_d(X_i):= Z_d(X_i) \cap B_d(X_i)^\perp.$$
One can then show \cite{basu2024harmonic} that
$$\ker(\Delta_d) = \mathcal{H}_d(X_i)$$
which implies we also have $\beta_d^i = \dim(\mathcal{H}_d(X_i))$. We can also see that the $d$'th Harmonic subspace and $d$'th homology group are isomorphic via the following map
\begin{align}
    \varphi_d:H_d(X_i)&\xrightarrow{\sim}\mathcal{H}_d(X_i)\\
    z+B_d(X_i) &\mapsto \mathrm{Proj}_{B_d(X_i)^\perp}(z),\text{  }z\in Z_d(X_i)
\end{align}
This isomorphism gives a way to choose a representative of elements in the homology, since the actual homology elements are cosets and so two cycles can represent the same element if they differ by a boundary. Analogous to the induced map $(\iota^{i,j}_d)_*:H_d(X_i)\to H_d(X_j)$, we can also construct a linear map between the corresponding harmonic spaces. We denote this by $\mathbf{i}^{i,j}_d$ and it is given by
$$
    \mathbf{i}^{i,j}_d = \mathrm{Proj}_{B_d(X_j)^\perp}|_{\mathcal{H}_d(X_i)}: \mathcal{H}_d(X_i)\to\mathcal{H}_d(X_j).
$$
Given a harmonic $\zeta = \varphi_d(z+B_d(X_i))\in\mathcal{H}_d(X_i)$, the map above determines whether the corresponding homology element is still present in $H_d(X_2)$. That is, $\mathbf{i}_d^{i,j}(\zeta)=0$ if the hole has died and otherwise it has persisted. This property was used in \cite{gyurik2024quantum} to study the problem of whether
a classical description of a hole in $X_i$ (a harmonic representative)
persists into $X_j$, in the sense that its projection onto the harmonic
space $\mathcal{H}_d(X_j)$ is large. We include their results in Table \ref{tab:problems} and a formal definition of the problem is given as Problem \ref{prob:harmonic_persistence}.

\section{Subspace $\DQC$ with Perfect Completeness}
\label{sec:sdqc1}

In this section, we introduce a variant of $\DQC$ for studying the complexity of normalized persistence. First, we recall the definition of $\DQC$. We denote the identity on $\mathbb{C}^{2^n}$ by $I_n$. Then the complexity class \DQC{} (Deterministic Quantum Computation with one clean qubit) is defined as follows.
\begin{definition}[\DQC \cite{knill1998power,brandao2008entanglement}]
Let $L=(L_{\mathrm{yes}},L_{\mathrm{no}})$
be a promise problem and let $a,b: \mathbb{N}\rightarrow [0,1]$ be functions satisfying $a(n)-b(n)\geq 1/\text{poly}(n)$. Then,
$L \in \DQC$ iff for every $x \in L$ of size $n$, there exists a uniformly generated quantum circuit $U_x$ on $q(n)=\mathrm{poly}(n)$ qubits s.t.
\begin{itemize}
    \item If $x \in L_{\mathrm{yes}}$,
    $$
    \mathrm{Tr} \left(U_x  \left(\ket{0}\bra{0}\otimes \frac{I_{q(n)-1}}{2^{q(n)-1}} \right)U_x^\dagger
    \left(\ket{1}\bra{1}\otimes {I_{q(n)-1}} \right)
    \right)\geq a(n)
    $$
        \item If $x \in L_{\mathrm{no}}$,
    $$
    \mathrm{Tr} \left(U_x  \left(\ket{0}\bra{0}\otimes \frac{I_{q(n)-1}}{2^{q(n)-1}} \right)U_x^\dagger
    \left(\ket{1}\bra{1}\otimes {I_{q(n)-1}} \right)
    \right)\leq b(n).
    $$
\end{itemize}
\end{definition}

As proven in previous works, complexity classes with \emph{perfect completeness}, such as $\BQP_1$ and $\QMA_1$, play central roles in the identification of the complexity of problems in TDA. Motivated by these established roles of perfect completeness in TDA, we define a variant of $\DQC$ that we call $\SDQC$ ($\textsf{Subspace } \DQC$) as follows.

\begin{definition}[$\SDQC$]
\label{def:SDQC1}
Let $L = (L_{\mathrm{yes}}, L_{\mathrm{no}})$ be a promise problem and let $a, b : \mathbb{N} \to [0,1]$ be functions satisfying $a(n)-b(n)\geq 1/\mathrm{poly}(n)$. Then $L \in \SDQC$ iff for every $x \in L_{\mathrm{yes}} \cup L_{\mathrm{no}}$ of size $n$, there exist
\begin{itemize}
    \item a uniformly generated quantum circuit $U_x$ {from a fixed universal gate set $\mathcal{G}$} on $q(n) = \mathrm{poly}(n)$ qubits, and
    \item a subspace $\mathcal{S}_x$ of the $(q(n)-1)$-qubit space $\mathcal{H}_{q(n)-1}$,
\end{itemize}
such that the following three conditions hold:

\textbf{(i) Perfect completeness on $\mathcal{S}_x$}: For any normalized $\ket{\psi} \in \mathcal{S}_x$,
$$
\mathrm{Tr}\left( U_x \left( \ket{0}\bra{0} \otimes \ket{\psi}\bra{\psi} \right) U_x^\dagger \left( \ket{1}\bra{1} \otimes I_{q(n)-1} \right) \right) = 1.
$$

\textbf{(ii) Soundness on $\mathcal{S}_x^\perp$}: For any normalized $\ket{\psi} \in \mathcal{S}_x^\perp$,
$$
\mathrm{Tr}\left( U_x \left( \ket{0}\bra{0} \otimes \ket{\psi}\bra{\psi} \right) U_x^\dagger \left( \ket{1}\bra{1} \otimes I_{q(n)-1} \right) \right) \leq 1/3.
$$

\textbf{(iii) Decision condition}:
\begin{itemize}
    \item If $x \in L_{\mathrm{yes}}$,
    $$
    \mathrm{Tr}\left( U_x \left( \ket{0}\bra{0} \otimes \frac{I_{q(n)-1}}{2^{q(n)-1}} \right) U_x^\dagger \left( \ket{1}\bra{1} \otimes I_{q(n)-1} \right) \right) \geq a(n).
    $$
    \item If $x \in L_{\mathrm{no}}$,
    $$
    \mathrm{Tr}\left( U_x \left( \ket{0}\bra{0} \otimes \frac{I_{q(n)-1}}{2^{q(n)-1}} \right) U_x^\dagger \left( \ket{1}\bra{1} \otimes I_{q(n)-1} \right) \right) \leq b(n).
    $$
\end{itemize}
\end{definition}

As can be seen from the definition, $\SDQC$ is the set of decision problems that can be solved with $\DQC$-type computing under the additional condition that $U_x$ accepts any state in the non-clean qubit register perfectly if it lies in a certain (unknown) subspace $\mathcal{S}_x$.

It should be noted that the definition of $\SDQC$ depends on the choice of universal gate set $\mathcal{G}$ because there is no known general perfect-completeness preserving reduction among different gate sets. When we want to specify the gate set, we will explicitly write $\mathsf{SDQC}_1^\mathcal{G}$.

\subsection{Power of $\SDQC$}
\label{sec:power_sdqc}

To characterize the power of $\SDQC$, we show that $\SDQC$ can estimate the dimension of the subspace $\mathcal{S}_x$ normalized by the dimension of the non-clean qubit register.

By \cite{shor2007estimating}, $\DQC$ computation can be performed with $O(\log n)$ clean ancilla qubits without leaving the class. Using these clean ancillas, we apply Marriott--Watrous amplification \cite{marriott2005quantum} to $U_x$ to construct $U_x'$ such that
\begin{itemize}
    \item $U_x'$ outputs $1$ with probability $1$ for any $\ket{\psi} \in \mathcal{S}_x$,
    \item $U_x'$ outputs $1$ with probability at most $p(n) \leq 1/\mathrm{poly}(n)$ for any $\ket{\psi} \in \mathcal{S}_x^\perp$.
\end{itemize}

Then we observe
\begin{equation}
    \label{eq:power_sdqc}
  \frac{\dim\mathcal{S}_x}{2^{q(n)-1}}
\leq
\mathrm{Tr}\!\left( U_x' \!\left( \ket{0}\bra{0} \otimes \frac{I_{q(n)-1}}{2^{q(n)-1}} \right)\! U_x'^\dagger \!\left( \ket{1}\bra{1} \otimes I_{q(n)-1} \right) \right)
\leq
\frac{\dim\mathcal{S}_x}{2^{q(n)-1}} + p(n).
\end{equation}
This shows that we can estimate $\frac{\dim\mathcal{S}_x}{2^{q(n)-1}}$ up to arbitrary $1/\mathrm{poly}(n)$ additive error with a polynomial number of uses of $U_x'$.

We remark that \SDQC{} and \DQC{} are likely incomparable, due to the imposed spectral gap condition on the acceptance operator for \SDQC. It is therefore an interesting open question to determine further relationships between the two classes.

\section{Problem Definitions and Main Results}
\label{sec:main}

In this section, we formally introduce problems that we study and state our main results.
Throughout this section, when we say that a state can be prepared by a polynomial-size quantum circuit, we mean the state is described with $\mathrm{poly}(n)$ bits and we can efficiently compute the $\mathrm{poly}(n)$-size circuit description.

\subsection{Low-Energy Spectral Problems}

We first introduce an energy density estimation problem of local Hamiltonians within low-energy subspaces.

\begin{problem}[\textsf{Low-energy normalized subtrace (LENS)}]\label{prob:LENS}
\mbox{}\vspace{0.5em}\\
\textbf{Input:}
\begin{enumerate}
    \item Real numbers $\eta \geq\frac{1}{\mathrm{poly}(n)}$, $\epsilon \geq  \frac{1}{\mathrm{poly}(n)}$, $\delta \geq  \frac{1}{\mathrm{poly}(n)}$, $1> \mu>1/2$ and $k\in \mathbb{N}$.
    \item A $k$-local PSD Hamiltonian on $n$ qubits $H = \sum_{i=1}^mH_i$.
\end{enumerate}
\textbf{Promise:}
    $H$ does not have an eigenvalue in the interval $(\eta, \eta+\delta]$ and {$\dim \mathcal{S}_{\leq \eta}\neq 0$}, where
$\mathcal{S}_{\leq \eta}:= \mathrm{Span}(\{\ket{\psi_i}:\lambda_i \leq \eta\}).$
    \vspace{0.5em}\\ \noindent
\textbf{Output:}
An estimate $\chi\in \mathbb{R}$ with probability at least $\mu$ that satisfies
$$
\frac{\sum_{0\leq \lambda_i \leq \eta}\lambda_i}{\dim \mathcal{S}_{\leq \eta}} -\epsilon \leq \chi \leq
\frac{\sum_{0\leq \lambda_i \leq \eta}\lambda_i}{\dim \mathcal{S}_{\leq \eta}} +\epsilon.
$$
\end{problem}

We first show that this problem can be solved in quantum polynomial time under a state-preparation assumption for the low-energy subspace.
\begin{lemma}
    \label{lemma:containment_LENS}
    Assume that we have access to a polynomial-size quantum circuit that prepares a state $\tilde{\rho}$ satisfying
    \[
    \|\tilde{\rho}-\rho_{\mathcal{S}_{\leq\eta}}\|_1
    \leq \frac{\epsilon}{3\max\{1,\|H\|\}},
    \]
    where $\rho_{\mathcal{S}_{\leq\eta}}$ is the uniform mixture over $\mathcal{S}_{\leq\eta}$.
    Then \textsf{LENS} can be solved in quantum polynomial time.
\end{lemma}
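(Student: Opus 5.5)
The target quantity is itself an expectation value: writing $H=\sum_i \lambda_i \ket{\psi_i}\bra{\psi_i}$, we have
\[
\mathrm{Tr}\bigl(H\rho_{\mathcal{S}_{\leq\eta}}\bigr)=\frac{1}{\dim\mathcal{S}_{\leq\eta}}\sum_{0\leq\lambda_i\leq\eta}\lambda_i ,
\]
because $\rho_{\mathcal{S}_{\leq\eta}}=\mathcal{P}_{\leq\eta}/\dim\mathcal{S}_{\leq\eta}$ commutes with $H$. The plan is therefore to estimate $\mathrm{Tr}(H\tilde\rho)$ by sampling, and to control three errors separately: the state-preparation error, the statistical error, and the error from rescaling $H$. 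Each will be kept at most $\epsilon/3$.

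\textbf{Step 1 (state error).} By Hölder's inequality,
\[
\bigl|\mathrm{Tr}(H\tilde\rho)-\mathrm{Tr}(H\rho_{\mathcal{S}_{\leq\eta}})\bigr|\leq\|H\|\,\|\tilde\rho-\rho_{\mathcal{S}_{\leq\eta}}\|_1\leq \epsilon/3,
\]
using the assumption of the lemma.

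\textbf{Step 2 (estimating the energy).} Write $H=\sum_{i=1}^m H_i$ with $m=\mathrm{poly}(n)$ and each $H_i$ being $k$-local, and set $\Lambda=\sum_i\|H_i\|$. I would estimate each $\mathrm{Tr}(H_i\tilde\rho)$ separately. For $k=O(1)$, one can diagonalize $H_i$ classically and measure $\tilde\rho$ in its eigenbasis using an $O(k)$-qubit unitary. For more general $k$, one instead expands $H_i$ in the Pauli basis and measures the Pauli terms, or equivalently uses a Hadamard test on a block encoding of $H/\Lambda$. Each single-shot outcome is then a bounded random variable in $[-\Lambda,\Lambda]$ whose mean is $\mathrm{Tr}(H\tilde\rho)$.

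\textbf{Step 3 (statistical error and success probability).} By Hoeffding's inequality, $O(\Lambda^2\epsilon^{-2}\log(1/(1-\mu)))$ independent preparations of $\tilde\rho$ suffice to get additive error $\epsilon/3$ with probability at least $\mu$. Since $\mu<1$ is a constant and $\epsilon\geq 1/\mathrm{poly}(n)$, this is polynomially many samples. The sampling and state-preparation errors together are at most $2\epsilon/3\leq\epsilon$, which gives the required output $\chi$.

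The main obstacle is making sure that $\Lambda$, and hence $\|H\|$, is polynomially bounded. This requires the standard convention that each local term has $\|H_i\|\leq\mathrm{poly}(n)$, which I would assume as part of the input encoding. It is also the reason the factor $\max\{1,\|H\|\}$ appears in the precision demanded of the state preparation. A minor additional point is that the promise on the interval $(\eta,\eta+\delta]$ is not needed here: it only matters for whether $\tilde\rho$ can be prepared, and that is assumed by the lemma.
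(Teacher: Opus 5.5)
Your proof is correct and has the same skeleton as the paper's. Both identify the target as $\mathrm{Tr}(H\rho_{\mathcal{S}_{\leq\eta}})$, both bound the state-preparation error by $\|H\|\,\|\tilde\rho-\rho_{\mathcal{S}_{\leq\eta}}\|_1\leq\epsilon/3$, and both finish with Hoeffding. What differs is the measurement primitive.

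The paper adapts the eigenvalue-sampling (SUES) algorithm of Gyurik et al. It runs phase estimation on $e^{iH}$ applied to $\tilde\rho$, allocates $\epsilon/3$ to phase-estimation precision plus Hamiltonian-simulation error, and averages the sampled energies. You instead measure the local terms directly.

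Your route gives an exactly unbiased estimator of $\mathrm{Tr}(H\tilde\rho)$. It needs no Hamiltonian simulation and no control of phase-estimation bias or tails, and only $O(k)$-qubit measurements. As a result, the third $\epsilon/3$ you budget for rescaling is never actually spent.

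What your route gives up is generality. It works only because the \textsf{LENS} quantity is linear in $H$. The paper's phase-estimation sampler is the common tool it reuses for \textsf{LESD} and \textsf{LEKD}, where the target is a spectral indicator such as $\mathbf{1}[\lambda\leq b]$ that cannot be read off from measurements of local terms.

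Two small points. First, you should say explicitly what a \emph{single shot} in $[-\Lambda,\Lambda]$ with mean $\mathrm{Tr}(H\tilde\rho)$ is. One option is the sum of one measurement of each $H_i$ on $m$ fresh copies of $\tilde\rho$. The other is to importance-sample a term with probability $\|H_i\|/\Lambda$ and rescale its outcome by $\Lambda/\|H_i\|$. Either works. Second, your assumption that the local-term norms are polynomially bounded is also implicitly needed by the paper, both for Hoeffding and for simulating $e^{iH}$.
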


A proof of this lemma is provided in Section~\ref{sec:containment_normalized_persistence}.
Next, we state the hardness of \textsf{LENS} as follows.

\begin{theorem}\label{thm: low energy subtrace}
   \textsf{Low-energy normalized subtrace} is \DQC-hard for $k=O(1)$ local Hamiltonians even when the uniform mixture over $\mathcal{S}_{\leq\eta}$ can be approximated within $\epsilon/(3\max\{1,\|H\|\})$ error in the trace distance by a state which can be prepared by a polynomial-size quantum circuit.
\end{theorem}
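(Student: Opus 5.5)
Given a \DQC{} instance $x$ with circuit $U_x=U_T\cdots U_1$ on $q$ qubits (one clean qubit, $q-1$ maximally mixed), I will build an $O(1)$-local Feynman--Kitaev Hamiltonian whose low-energy subspace is exactly the span of history states. The initial register is left unconstrained except for the clean qubit, and the output penalty term acts on the output qubit. Concretely, use a unary clock of $T$ qubits and take
\[
H = H_{\mathrm{clock}} + H_{\mathrm{in}} + H_{\mathrm{prop}} + \gamma H_{\mathrm{out}}.
\]
Here $H_{\mathrm{in}} = \ket{1}\bra{1}_{1}\otimes\ket{0}\bra{0}_{\mathrm{clock}}$ acts only on the clean qubit, and $H_{\mathrm{out}} = \ket{0}\bra{0}_{1}\otimes\ket{T}\bra{T}_{\mathrm{clock}}$. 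The clock, input and propagation terms are scaled by a large polynomial $J$. All terms are $5$-local and PSD.

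The unperturbed part $H_0 = J(H_{\mathrm{clock}}+H_{\mathrm{in}}+H_{\mathrm{prop}})$ has kernel spanned by the history states $\ket{\eta_\psi}$ with $\ket{0}\otimes\ket{\psi}$, $\psi\in\mathbb{C}^{2^{q-1}}$. This kernel has dimension exactly $2^{q-1}$. By the standard Kitaev/Kempe--Kitaev--Regev spectral gap bound, the next eigenvalue is $\Omega(J/T^2)$. Restricted to this kernel, the operator $\gamma H_{\mathrm{out}}$ acts as
\[
\frac{\gamma}{T+1}\,\Pi_{\mathrm{rej}},\qquad \Pi_{\mathrm{rej}}=(\bra{0}\otimes I)U_x^\dagger(\ket{0}\bra{0}\otimes I)U_x(\ket{0}\otimes I).
\]
Its normalized trace over the $2^{q-1}$-dimensional space is $\frac{\gamma}{T+1}(1-p_{\mathrm{acc}})$.

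Next I apply first-order perturbation theory, for example via the projection lemma or a Schrieffer--Wolff/Kato bound. This gives that the lowest $2^{q-1}$ eigenvalues of $H$ lie within $O(\gamma^2 T^2/J)$ of the eigenvalues of $\frac{\gamma}{T+1}\Pi_{\mathrm{rej}}$, which lie in $[0,\gamma/(T+1)]$. All remaining eigenvalues are at least $\Omega(J/T^2)-\gamma$. I choose $\gamma=1$ and $\eta = 2/(T+1)$. I then take $J=\mathrm{poly}$ large enough that the perturbative error is at most $\epsilon/(T+1)$ and that the interval $(\eta,\eta+\delta]$ is empty for some $\delta=1/\mathrm{poly}$. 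Then $\dim\mathcal{S}_{\le\eta}=2^{q-1}$ exactly, since the low cluster has exactly that many eigenvalues and the gap separates it. Consequently
\[
\frac{1}{\dim\mathcal{S}_{\le\eta}}\sum_{\lambda_i\le\eta}\lambda_i = \frac{1-p_{\mathrm{acc}}}{T+1}\pm\frac{\epsilon'}{T+1}.
\]
An LENS oracle with precision $\epsilon = (a-b)/(4(T+1))$, which is $1/\mathrm{poly}$, then distinguishes $p_{\mathrm{acc}}\ge a$ from $p_{\mathrm{acc}}\le b$. This is the key point versus Brand\~ao's log-local argument: the normalization is $\dim\mathcal{S}_{\le\eta}$ rather than $2^{N}$, so the exponentially large clock space does not dilute the signal. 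The success probability $\mu>1/2$ can be boosted by median-of-repetitions, or used directly since the decision is a single query.

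For the state-preparation clause, I will exhibit a poly-size circuit preparing the uniform mixture over history states $\frac{1}{2^{q-1}}\sum_\psi \ket{\eta_\psi}\bra{\eta_\psi}$. To do this, prepare the clock in uniform superposition over $t$ in unary, the clean qubit in $\ket{0}$, and the other $q-1$ qubits maximally mixed, for instance via halves of EPR pairs with discarded ancillas. Then apply $U_t\cdots U_1$ controlled on the clock. This mixture is the uniform mixture over $\ker H_0$. The true $\mathcal{S}_{\le\eta}$ differs from $\ker H_0$ by a rotation of norm $O(\gamma T^2/J)$ by Davis--Kahan. Hence the trace distance is $O(T^2/J)$, which is below $\epsilon/(3\max\{1,\|H\|\})$ once $J$ is a large enough polynomial. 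Since $\|H\|=O(JT)$, this requires $J^2 \gtrsim T^3/\epsilon$, which is still polynomial.

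The main obstacle is the perturbation step. I must ensure that the low cluster contains exactly $2^{q-1}$ eigenvalues, with the promised gap $(\eta,\eta+\delta]$, and that the summed eigenvalue error is $o(\epsilon)$ uniformly. I would try the Kempe--Kitaev--Regev projection lemma for the eigenvalue-by-eigenvalue bound, combined with Weyl-type interlacing on the restricted operator. I would also double-check that the $\|H\|$-dependent state-preparation tolerance is compatible with polynomial $J$, as computed above.
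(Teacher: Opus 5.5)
Your eigenvalue analysis is sound, and it is a reasonable streamlining of the paper's argument. You put $H_{\mathrm{in}}$ into the unperturbed part, whose kernel is already the $2^{q-1}$-dimensional clean-$0$ history space, and use an $O(1)$ output weight. A single application of the self-energy bound (Lemma~\ref{lemma:Perturbation Theory}) then places all $2^{q-1}$ low eigenvalues within $O(\gamma^2T^2/J)$; note that the KKR projection lemma alone only controls the lowest eigenvalue. The paper instead uses two perturbative stages with $J_{\mathrm{prop}}\gg J_{\mathrm{in}}\gg T+1$.

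\textbf{The gap is the state-preparation clause.} Your own estimates do not give it. Let $\Pi$ be the projector onto $\ker H_0$. With $\|\mathcal{P}_{\le\eta}-\Pi\|=O(\gamma T^2/J)$ and $\|H\|=\Theta(JT)$, the product $\|H\|\cdot O(\gamma T^2/J)=O(\gamma T^3)$ does not depend on $J$. The requirement therefore becomes $T^3\lesssim\epsilon$, not $J^2\gtrsim T^3/\epsilon$.

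This is not an artifact of a loose Davis--Kahan bound:
\begin{itemize}
\item For equal-rank projectors, the trace distance between $\Pi/D$ and $\mathcal{P}_{\le\eta}/D$ is the average of $\sin\theta_i$ over the principal angles. This is first order in the perturbation.
\item The vector $(I-\Pi)H_{\mathrm{out}}\ket{\eta_\psi}$ is nonzero whenever $\psi$ is rejected with positive probability $r_\psi$, because it is supported only at clock time $T$. The resolvent of $H_0$ turns it into an angle of order $\gamma T\sqrt{r_\psi}/J$.
\item Meanwhile $\|H\|\ge J\|H_{\mathrm{clock}}\|\gtrsim JT/2$.
\end{itemize}
Hence $\|H\|\,\tau=\Omega(\gamma T^2(1-p_{\mathrm{acc}}))$, whereas you need $\|H\|\,\tau\le\epsilon/3=O(\gamma(a-b)/T)$. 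On NO instances $1-p_{\mathrm{acc}}\ge a-b$, so the requirement fails by a factor $\Omega(T^3)$ for every choice of $J$ and $\gamma$. Rescaling $\gamma$ does not help, because the precision you need scales with $\gamma$ as well.

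The paper's proof asserts the analogous operator-norm bound ``by taking the input and propagation penalties sufficiently large''. It runs into the same cancellation, since $\|H_{\mathsf{DQC}_1}\|$ grows like $J_{\mathrm{prop}}$. So following the paper does not supply the missing step.

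\textbf{What is needed is a preparation whose error is second order in the rotation.} For example, prepare $\rho_{hist}=\Pi/D$ and then apply an energy filter. Use phase estimation for $e^{iH}$ with its threshold inside the promised gap $(\eta,\eta+\delta]$. Amplify it so that it implements the two-outcome measurement $\{P,I-P\}$, with $P:=\mathcal{P}_{\le\eta}$, up to error $\alpha\ll\epsilon/\|H\|$. Both $P(I-\Pi)P$ and $(I-P)\Pi(I-P)$ are positive with trace $\sum_i\sin^2\theta_i$, so
\[
\Bigl\|P\rho_{hist}P+(I-P)\rho_{hist}(I-P)-\tfrac{1}{D}P\Bigr\|_1\le\frac{2}{D}\mathrm{Tr}\bigl[P(I-\Pi)\bigr]\le 2\|P-\Pi\|^2=O\!\left(\frac{\gamma^2T^4}{J^2}\right).
\]
Multiplying by $\|H\|=O(JT)$ gives $O(\gamma^2T^5/J)$, which does go to zero as $J$ grows. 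A polynomially large $J$ then meets the $\epsilon/(3\max\{1,\|H\|\})$ tolerance, and the filter is a polynomial-size circuit.
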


A proof of this theorem is provided in Section~\ref{sec:LENS}.
We also consider a variant of \textsf{Low-energy Normalized Subtrace} for combinatorial Laplacians.

\begin{problem}[\textsf{LENS for TDA}]\label{prob:LENS_TDA}
\mbox{}\vspace{0.5em}\\
\textbf{Input:}
\begin{enumerate}
    \item Real numbers $\eta \geq\frac{1}{\mathrm{poly}(n)}$, $\epsilon \geq  \frac{1}{\mathrm{poly}(n)}$, $\delta \geq  \frac{1}{\mathrm{poly}(n)}$, $1> \mu>1/2$ and target dimension $d\in \mathbb{N}$.
    \item A graph $G = (V,E)$ with $|V|=n$ and vertex weights $\{w(v)\}_{v\in V} \subset \mathrm{poly}(n)$ such that $d\leq |V|-1$.
\end{enumerate}
\textbf{Promise:}
The combinatorial Laplacian $\Delta_d$ has no eigenvalue in $(\eta,\eta+\delta]$ and $\dim \mathcal{S}_{\leq \eta}\neq 0$ where
$\mathcal{S}_{\leq \eta}:= \mathrm{Span}(\{\ket{\psi_i}:\lambda_i \leq \eta\}).$
    \vspace{0.5em}\\ \noindent
\textbf{Output:}
An estimate $\chi\in [0,\|\Delta_d\|]$ with probability at least $\mu$ that satisfies
$$
\frac{\sum_{0\leq \lambda_i \leq \eta}\lambda_i}{\dim \mathcal{S}_{\leq \eta}} -\epsilon \leq \chi \leq
\frac{\sum_{0\leq \lambda_i \leq \eta}\lambda_i}{\dim \mathcal{S}_{\leq \eta}} +\epsilon,
$$
where $\lambda_i$ denote the eigenvalues of $\Delta_d$.
\end{problem}
\begin{theorem}\label{thm: LENS for TDA Hardness}
    \textsf{LENS for TDA} is \DQC-hard even when the uniform mixture over $\mathcal{S}_{\leq\eta}$ can be approximated within $\epsilon/(3\max\{1,\|\Delta_d\|\})$ error in trace distance by a state prepared by a polynomial-size quantum circuit.
\end{theorem}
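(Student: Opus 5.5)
We reduce from \textsf{LENS} on the $O(1)$-local Hamiltonian $H_{\DQC}$ produced in the proof of Theorem~\ref{thm: low energy subtrace}. That instance comes with thresholds $\eta,\delta$ and a gap promise. Its low-energy normalized subtrace encodes the rejection probability of the \DQC{} circuit to inverse-polynomial precision. The uniform mixture over its $\mathcal{S}_{\leq\eta}$ is approximately preparable, being close to a mixture of history states.

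The plan has three steps.
\begin{enumerate}
\item Apply the perturbative simulation framework of \cite{cubittUniversalQuantumHamiltonians2018}, with the fermion hardcore/clique-complex gadgets of \cite{rayudu2024fermionic} (Appendix~\ref{appendix: Ray24}). These give a polynomial-size vertex-weighted graph $G$ and a dimension $d$ such that $\Delta_d$ is a $(\Delta,\eta',\epsilon')$-simulation of $c\,H_{\DQC}$ for some scaling $c$. Concretely, there is an isometry (encoding) $V$ and a cutoff $E_{\mathrm{low}}$ such that
\[
\|\mathcal{P}_{\leq E_{\mathrm{low}}}\Delta_d\mathcal{P}_{\leq E_{\mathrm{low}}} - V(cH_{\DQC})V^\dagger\|\leq\epsilon'
\quad\text{and}\quad
\|V - \tilde V\|\leq\eta',
\]
where $\tilde V$ maps onto the low-energy space of $\Delta_d$. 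All parameters are inverse-polynomial, and the weights are polynomially bounded.
\item By Weyl's inequality, the eigenvalues of $\Delta_d$ below $E_{\mathrm{low}}$ match those of $cH_{\DQC}$ one-to-one, each within $\epsilon'$. Choose $\epsilon'\ll c\delta$. Then the gap $(\eta,\eta+\delta]$ of $H_{\DQC}$ maps to a gap $(c\eta+\epsilon',\,c\eta+c\delta-\epsilon']$ for $\Delta_d$. Taking the new threshold $\eta_\Delta=c\eta+\epsilon'$, we get $\dim\mathcal{S}^{\Delta}_{\leq\eta_\Delta}=\dim\mathcal{S}_{\leq\eta}$. The promise of Problem~\ref{prob:LENS_TDA} then holds with $\delta_\Delta=c\delta-2\epsilon'$.
\item Since the eigenvalues match pairwise and the normalizing dimensions agree, the normalized subtraces differ by at most $\epsilon'$:
\[
\left|\frac{\sum_{\lambda_i(\Delta_d)\leq\eta_\Delta}\lambda_i}{\dim\mathcal{S}^{\Delta}}-\frac{c\sum_{\lambda_i(H)\leq\eta}\lambda_i}{\dim\mathcal{S}}\right|\leq\epsilon'.
\]
An $\epsilon_\Delta$-estimate from \textsf{LENS for TDA} therefore gives a $(\epsilon_\Delta+\epsilon')/c$-estimate for \textsf{LENS} on $H_{\DQC}$, which decides the \DQC{} instance.
\end{enumerate}

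For the state-preparation clause, take the preparable state $\tilde\rho$ approximating $\rho_{\mathcal{S}_{\leq\eta}}$ for $H_{\DQC}$ and apply an efficient circuit implementing $V$. The gadget encodings of \cite{cubittUniversalQuantumHamiltonians2018,rayudu2024fermionic} are local isometries onto fermionic/simplex basis states, so this step is efficient. Because the low-energy eigenspace of $\Delta_d$ is $\eta'$-close to $V\mathcal{S}_{\leq\eta}$, we have
\[
\|V\tilde\rho V^\dagger-\rho_{\mathcal{S}^\Delta}\|_1\leq\|\tilde\rho-\rho_{\mathcal{S}}\|_1+O(\eta').
\]
Both terms can be made at most $\epsilon_\Delta/(3\max\{1,\|\Delta_d\|\})$. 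This uses the fact that $\|\Delta_d\|$ is polynomial, given polynomial weights, and that the precision in Theorem~\ref{thm: low energy subtrace} can be chosen freely by polynomial amplification.

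The main obstacle is the parameter bookkeeping. The simulation scale $c$ and error $\epsilon'$ must be small enough, relative to $\delta$ and the \DQC{} signal, to preserve the gap and the precision. At the same time, $\|\Delta_d\|$ grows polynomially with the gadget energy scale, which tightens the state-preparation tolerance. I would resolve this by first fixing the \DQC{} instance precision and $\delta$, then choosing the gadget energy scale as a sufficiently large polynomial, so that every error stays inverse-polynomial with room to spare.
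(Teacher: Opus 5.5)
Your reduction follows the paper's route. Both use \textsf{LENS} on $H_{\DQC}$, the CMP chain to an $XX+ZZ$ model followed by Rayudu's clique-complex gadgets, Weyl matching, a threshold inside the image of the gap, and the explicitly encoded history mixture for the state-preparation clause. However, step~1 misstates what this chain delivers, in two ways the paper handles explicitly.

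\emph{Identity shift.} The chain simulates $H_{\DQC}+C\,I$ for an efficiently computable shift $C$, not $cH_{\DQC}$; Rayudu's step alone contributes $\tfrac34\sum_{(i,j)}\mu_{i,j}>0$. The simulated band of $\Delta_d$ therefore sits near $C$, and your threshold $c\eta+\epsilon'$ does not isolate the image of $\mathcal{S}_{\leq\eta}$. Once $C>c\eta+2\epsilon'$ nothing lies below it, and the promise $\dim\mathcal{S}_{\leq\eta}\neq0$ fails. The paper instead uses the threshold $\eta+C+\epsilon_{\mathrm{sim}}$ and outputs $\chi-C$.

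\emph{Multiplicity doubling.} $H_{\DQC}$ is complex in general, since the gates $U_t$ enter $H_{prop}$, whereas $XX+ZZ$ Hamiltonians and clique-complex Laplacians are real. The chain therefore uses the complex-to-real encoding $H\otimes P+\bar H\otimes Q$, which doubles every multiplicity. There is no one-to-one matching, and $\dim\mathcal{S}^{\Delta}=2\dim\mathcal{S}$. The normalized subtrace survives only because numerator and denominator both double. Your preparation step does not survive as written: for a single isometry $V$, the state $V\tilde\rho V^\dagger$ lies in one branch, i.e.\ in half of the band, so it is at trace distance roughly $1/2$ from $\rho_{\mathcal{S}^\Delta}$. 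You must also prepare the conjugate branch (the history mixture for $\bar U_x$), as the paper's $\mathcal{F}(\rho_{hist})=V\mathcal{E}(\rho_{hist})V^\dagger$ does. Alternatively, compile the $\DQC$ circuit over a real gate set first; this costs one extra maximally mixed qubit and leaves the acceptance probability unchanged.

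The substantive gap is the tolerance $\epsilon/(3\max\{1,\|\Delta_d\|\})$, and enlarging the gadget scale makes it worse, not better. Write Rayudu's Laplacian as $\Delta_{\mathrm{gad}}H_0+\Delta_{\mathrm{gad}}^{1/2}V_{main}+V_{extra}$.
\begin{itemize}
\item Every dressed low-energy state carries a first-order admixture $-\Delta_{\mathrm{gad}}^{-1/2}H_0^{-1}(V_{main})_{+-}$ on mediator-occupied configurations. This cannot be tuned away, because the same block produces the effective coupling: on the code space, $(V_{main})_{-+}H_0^{-1}(V_{main})_{+-}\geq\tfrac16\bar\mu$ with $\bar\mu:=\sum_{(i,j)}\mu_{i,j}$.
\item Testing with the projector onto the raw code space gives $\|\sigma-\rho_{\mathcal{S}^\Delta}\|_1=\Omega(\bar\mu/\Delta_{\mathrm{gad}})$ for every $\sigma$ supported there, including $V\tilde\rho V^\dagger$.
\item Meanwhile $\|\Delta_d\|\geq 3\Delta_{\mathrm{gad}}-O(\sqrt{\Delta_{\mathrm{gad}}})$, since a single triangle in its excited one-fermion state already costs $3\Delta_{\mathrm{gad}}$.
\end{itemize}
Hence $\|\Delta_d\|\,\|\sigma-\rho_{\mathcal{S}^\Delta}\|_1=\Omega(\bar\mu)$ at every scale. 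This is not small compared with $\epsilon$, since the $XX+ZZ$ couplings must generate the whole spectrum of $H_{\DQC}+C$. The actual distance is of order $\sqrt{\bar\mu/\Delta_{\mathrm{gad}}}$, so the weighted error in fact grows with the scale. Put differently, a raw code state has $\Delta_d$-energy exactly $\langle V_{extra}\rangle$ and misses the second-order term that carries the simulated couplings.

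Your first error term is not controlled either. Theorem~\ref{thm: low energy subtrace} measures its tolerance against $\|H_{\DQC}\|\ll\|\Delta_d\|$, and re-choosing its precision changes $H_{\DQC}$ and hence $\|\Delta_d\|$.

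The paper's proof prepares the same raw-encoded mixture and asserts closeness only qualitatively, so it does not supply this estimate either. Obtaining the clause at the stated tolerance would need a preparation of the dressed band itself rather than the explicit encoding, for example adiabatic transport of the gapped low-energy band.
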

A proof of this theorem is provided in Section~\ref{sec:LENS_TDA}.

\subsection{Spectral density problems}

In this section, we formally introduce and show results for the spectral density problem of local Hamiltonians restricted to low-energy subspace.

\begin{problem}[\textsf{Low-energy Spectral Density (LESD)}]\label{prob:LESD}
\mbox{}\vspace{0.5em}\\
\textbf{Input:}
\begin{enumerate}
    \item Real numbers $\eta \geq\frac{1}{\mathrm{poly}(n)}$, $\epsilon \geq  \frac{1}{\mathrm{poly}(n)}$, $\delta \geq  \frac{1}{\mathrm{poly}(n)}$, $\delta>\xi \geq \frac{1}{\text{poly}(n)}$, $1> \mu>1/2$, $0\leq b\leq \eta$, and $k\in \mathbb{N}$.
    \item A $k$-local PSD Hamiltonian on $n$ qubits $H = \sum_{i=1}^mH_i$.
\end{enumerate}
\textbf{Promise:}
    $H$ does not have an eigenvalue in the interval $(\eta, \eta+\delta]$ and $\dim \mathcal{S}_{\leq \eta}\neq 0$.
    \vspace{0.5em}\\ \noindent
\textbf{Output:}
An estimate $\chi\in [0,1]$ with probability at least $\mu$ that satisfies
$$
\frac{\sum_{0\leq \lambda_i \leq b}1}{\dim \mathcal{S}_{\leq \eta}} -\epsilon \leq \chi \leq
\frac{\sum_{0\leq \lambda_i \leq b+\xi}1}{\dim \mathcal{S}_{\leq \eta}}+\epsilon.
$$
\end{problem}

This is a natural generalization of \textsf{LLSD}. Clearly, taking $\eta > \norm{H}$ recovers the original \textsf{LLSD} problem.
We first show the containment in \BQP~under an efficient state preparation assumption.

\begin{lemma}
    \label{lemma:containment_LESD}
    Assume that we have access to a polynomial-size quantum circuit that prepares a state $\tilde{\rho}$ satisfying
    \[
    \|\tilde{\rho}-\rho_{\mathcal{S}_{\leq\eta}}\|_1\leq\epsilon/3,
    \]
    where $\rho_{\mathcal{S}_{\leq\eta}}$ is the uniform mixture over $\mathcal{S}_{\leq\eta}$.
    Then \textsf{LESD} can be solved in quantum polynomial time.
\end{lemma}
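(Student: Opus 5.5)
Given the prepared state $\tilde\rho\approx\rho_{\mathcal{S}_{\leq\eta}}$, we would estimate the quantity
\[
f(\rho_{\mathcal{S}_{\leq\eta}}) \;:=\; \mathrm{Tr}\big(\Pi_{\leq b'}\,\rho_{\mathcal{S}_{\leq\eta}}\big)=\frac{\#\{i:\lambda_i\leq b'\}}{\dim\mathcal{S}_{\leq\eta}},
\]
where $b'\in[b,b+\xi]$ is a fuzzy threshold. Since $\rho_{\mathcal{S}_{\leq\eta}}$ is diagonal in the eigenbasis of $H$, with weight $1/\dim\mathcal{S}_{\leq\eta}$ on each eigenvector with $\lambda_i\leq\eta$, measuring $H$'s energy on this state and recording whether it lies below the threshold yields a Bernoulli variable. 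Its mean is exactly the ratio in Problem~\ref{prob:LESD}, up to the fuzziness of the threshold.

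\textbf{Step 1: energy discrimination.} We implement an energy test that decides whether a given eigenvector has eigenvalue $\leq b$ (accept) or $>b+\xi$ (reject), with failure probability $\epsilon'$. We would do this with quantum phase estimation on $e^{-iHt/\|H\|'}$, using standard Trotter or qubitization simulation of the $k$-local $H$, where $\|H\|\leq\sum_i\|H_i\|=\mathrm{poly}(n)$. The precision is $\xi/3$, which costs $\mathrm{poly}(n)/\xi$. We boost the success probability to $1-\epsilon/6$ via median-of-$O(\log(1/\epsilon))$ repetitions, coherently or simply by repeated measurement, since the state collapses onto eigenspaces. Alternatively, one can use QSVT with a polynomial approximating a step function between $b$ and $b+\xi$ of degree $\tilde O(\|H\|/\xi\cdot\log(1/\epsilon))$, implemented via a block-encoding of $H$.

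\textbf{Step 2: bound the expectation.} On an exact eigenvector the test accepts with probability $p_i$, where $p_i\geq1-\epsilon/6$ if $\lambda_i\leq b$ and $p_i\leq\epsilon/6$ if $\lambda_i>b+\xi$. Eigenvalues in $(b,b+\xi]$ give arbitrary $p_i\in[0,1]$. Because $b+\xi\leq\eta+\delta$ is not guaranteed to be $\leq\eta$, we rely on the promise of no eigenvalues in $(\eta,\eta+\delta]$ together with $\xi<\delta$: every eigenvector in the support with $\lambda_i\leq\eta$ and $\lambda_i>b+\xi$ is rejected with high probability. The acceptance probability on $\rho_{\mathcal{S}_{\leq\eta}}$ therefore satisfies
\[
\frac{\#\{\lambda_i\leq b\}}{\dim\mathcal{S}_{\leq\eta}}-\frac{\epsilon}{6}\;\leq\;\bar p\;\leq\;\frac{\#\{\lambda_i\leq b+\xi\}}{\dim\mathcal{S}_{\leq\eta}}+\frac{\epsilon}{6}.
\]
Replacing $\rho_{\mathcal{S}_{\leq\eta}}$ by $\tilde\rho$ changes the acceptance probability of any two-outcome measurement by at most $\tfrac12\|\tilde\rho-\rho_{\mathcal{S}_{\leq\eta}}\|_1\leq\epsilon/6$.

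\textbf{Step 3: sampling.} We repeat the preparation and test $O(\log(1/(1-\mu))/\epsilon^2)$ times and output the empirical mean $\chi$, clipped to $[0,1]$. By Hoeffding's inequality the sampling error is at most $\epsilon/3$ with probability at least $\mu$. The three contributions sum to at most $\epsilon$.

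The only delicate point is Step 1: making the threshold test robust despite the non-ideal spectral resolution of phase estimation. This is handled by allowing the fuzzy window $(b,b+\xi]$ and by using amplified (median) phase estimation or a QSVT step-function polynomial, whose cost is polynomial because $\xi$, $\epsilon\geq1/\mathrm{poly}(n)$ and $\|H\|\leq\mathrm{poly}(n)$.
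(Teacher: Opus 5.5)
Your proposal is correct and follows essentially the same route as the paper: prepare $\tilde\rho$, run phase estimation on (a rescaled) $H$ with resolution finer than the window $\xi$, record whether the estimate falls below a threshold inside $(b,b+\xi)$, absorb the trace-distance error, and finish with Hoeffding. Your explicit accounting for the phase-estimation failure probability and for the rescaling of $H$ makes precise what the paper leaves implicit. Your appeal to the gap $(\eta,\eta+\delta]$ in Step~2 is harmless but unnecessary: eigenvalues $\leq b\leq\eta$ all lie in the support, and support eigenvectors above $b+\xi$ are rejected regardless, so the upper bound holds without the promise.
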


A proof of this lemma is provided in Section~\ref{sec:containment_normalized_persistence}.

It was proven that \textsf{LLSD} is \DQC-hard for log-local Hamiltonians by Gyurik et al.~\cite{gyurik2022towards}. Using Theorem~\ref{thm: low energy subtrace}, we are able to prove an analogous result for \textsf{LESD} in the case of $O(1)$-local Hamiltonians.
\begin{theorem}\label{thm: LESD}
    \textsf{LESD} is \DQC-hard for $k=O(1)$ local Hamiltonians, under a polynomial-time truth-table reduction even when the uniform mixture over $\mathcal{S}_{\leq\eta}$ can be prepared to trace-distance error $\epsilon/3$ by a polynomial-size quantum circuit.
\end{theorem}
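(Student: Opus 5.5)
We reduce from \textsf{LENS} via Theorem~\ref{thm: low energy subtrace}, using a truth-table reduction modelled on the \textsf{LLSD} argument of Gyurik et al. Take the $O(1)$-local hard instance $H$ of \textsf{LENS} from Theorem~\ref{thm: low energy subtrace}, with threshold $\eta$, gap $(\eta,\eta+\delta]$, and target precision $\epsilon_0$. The key identity is the layer-cake formula
\[
\frac{1}{\dim\mathcal{S}_{\leq\eta}}\sum_{0\leq\lambda_i\leq\eta}\lambda_i
=\int_0^{\eta}\bigl(1-F(t)\bigr)\,dt,\qquad F(t):=\frac{\#\{i:\lambda_i\leq t\}}{\dim\mathcal{S}_{\leq\eta}} .
\]
This holds because each $\lambda_i\le\eta$ contributes $\int_0^\eta \mathbf 1[\lambda_i>t]\,dt=\lambda_i$. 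The function $F$ is exactly what an \textsf{LESD} oracle approximates at thresholds $b=t$, and the normalization $\dim\mathcal{S}_{\leq\eta}$ is the same in both problems. This matching normalization is the reason \textsf{LESD} (and not \textsf{LLSD}) is the right target for constant-local Hamiltonians.

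The reduction proceeds in three steps. First, discretize $[0,\eta]$ into $M=\mathrm{poly}(n)$ points $t_j=j\eta/M$ for $j=0,\dots,M-1$. Second, query the \textsf{LESD} oracle on the same $H$, $\eta$, $\delta$, with $b=t_j$, $\xi=\eta/M$ (we need $\xi<\delta$, so take $M$ large enough), and precision $\epsilon'$. Third, output $\chi=\sum_j \frac{\eta}{M}(1-\chi_j)$. Each answer satisfies $F(t_j)-\epsilon'\le\chi_j\le F(t_j+\xi)+\epsilon'=F(t_{j+1})+\epsilon'$. Since $F$ is monotone, the Riemann sum built from these values is sandwiched between the upper and lower sums of $\int_0^\eta(1-F)$. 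The spread between those sums telescopes to at most $\frac{\eta}{M}(F(\eta)-F(0))\le \eta/M$, and the oracle errors add at most $\eta\epsilon'$. Choosing $M\ge 2\eta/\epsilon_0$ and $\epsilon'\le\epsilon_0/(2\eta)$ gives total error at most $\epsilon_0$. Because $\eta\le\|H\|=\mathrm{poly}(n)$, all parameters remain inverse-polynomial, and the queries are non-adaptive, so this is a polynomial-time truth-table reduction. The promise of \textsf{LESD} is inherited verbatim from the \textsf{LENS} instance.

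The remaining bookkeeping concerns success probability and the state-preparation clause. Each oracle call succeeds only with probability $\mu>1/2$. We therefore repeat each query $O(\log M)$ times and take the median, then apply a union bound over the $M$ queries so that the overall success probability exceeds the \textsf{LENS} requirement. For the state-preparation clause, Theorem~\ref{thm: low energy subtrace} already provides a poly-size circuit that approximates $\rho_{\mathcal{S}_{\leq\eta}}$ within $\epsilon_0/(3\max\{1,\|H\|\})$. Since $\mathcal{S}_{\leq\eta}$ is the same subspace in the \textsf{LESD} instances, we need this bound to be at most $\epsilon'/3$, i.e.\ the trace-distance accuracy must scale with the \textsf{LESD} precision $\epsilon'\approx\epsilon_0/\eta$.

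I expect this precision mismatch to be the main obstacle. We handle it by invoking Theorem~\ref{thm: low energy subtrace} with a smaller target precision $\epsilon_0$. This costs nothing, since \DQC-hardness holds for every inverse-polynomial precision. Concretely, we rely on the preparation error in the hardness construction (presumably coming from history-state preparation) being tunable to any inverse polynomial.
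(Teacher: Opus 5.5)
Your reduction is the paper's: a Gyurik-style quadrature that recovers the \textsf{LENS} value of the hard instance from Theorem~\ref{thm: low energy subtrace} using polynomially many non-adaptive \textsf{LESD} queries on a grid in $[0,\eta]$. Your layer-cake bookkeeping is equivalent, by summation by parts, to the paper's histogram estimate $\Lambda=\sum_j\chi_jx_j$, but it is cleaner in three ways. First, taking $\xi$ equal to the mesh puts the $\xi$-slack inside the upper/lower Riemann-sum gap. Second, you never query $b=\eta$, so you do not need the gap above $\eta$ to force $\tilde\gamma_{M-1}=0$. Third, your error accounting does not pass through differenced outputs, so per-query precision $\epsilon_0/(2\eta)$ suffices; the paper's termwise bound $\sum_j 2\tilde\epsilon x_j$ instead forces $\tilde\epsilon=\epsilon/(6M\eta)$. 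You also treat the success probability, which the paper leaves implicit.

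The state-preparation paragraph needs correcting. Shrinking $\epsilon_0$ does nothing to the mismatch you describe. The \textsf{LENS} guarantee $\epsilon_0/(3\max\{1,\|H\|\})$ and your requirement $\epsilon'/3=\epsilon_0/(6\eta)$ are both linear in $\epsilon_0$, so their ratio does not change.

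Taken at face value, the bound in Theorem~\ref{thm: low energy subtrace} already suffices, because the hard instance has $\|H_{\DQC}\|\geq J_{in}/(T+1)=2\eta$. However, two claims rest on the same tunability assumption: that bound itself, and the paper's own argument here. The paper re-opens the construction and asserts that the penalties can be raised until the history mixture is within $\tilde\epsilon/3$ of $\rho_{\mathcal{S}_{\leq\eta}}$, which is exactly your final sentence. That assumption fails for $\eta=J_{in}/(2(T+1))$, for the following reason.
\begin{itemize}
\item Inside the history space, the output term couples $S$ to the clean-qubit-$1$ history states across a gap of $J_{in}/(T+1)=2\eta$.
\item This tilts the low-energy space away from $S$ by angles of order $1/\eta$, with a circuit-dependent and generically $\Theta(1)$ prefactor.
\item So the preparation error shrinks only as fast as $\epsilon'$, and more slowly than the paper's $\tilde\epsilon\sim\epsilon^2/\eta^2$. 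Raising $J_{in}$ cannot fix this.
\end{itemize}

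The repair is simple. There is no spectrum between $1+O(\varepsilon)$ and roughly $J_{in}/(T+1)$, so run the whole reduction with an $O(1)$ threshold such as $\eta=2$. Then $\epsilon'=\Theta(\epsilon_0)$. The preparation error is of order $(T+1)/J_{in}$ plus a term suppressed by $J_{prop}$, so it can be pushed below any fixed inverse polynomial by enlarging $J_{in}$ and $J_{prop}$.
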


A proof of this theorem is provided in Section~\ref{sec:LESD}.
We also give an analogous problem and result for combinatorial Laplacians.

\begin{problem}[\textsf{LESD for TDA}]\label{prob:LESD_TDA}
\mbox{}\vspace{0.5em}\\
\textbf{Input:}
\begin{enumerate}
    \item Real numbers $\eta \geq\frac{1}{\mathrm{poly}(n)}$, $\epsilon \geq  \frac{1}{\mathrm{poly}(n)}$, $\delta \geq  \frac{1}{\mathrm{poly}(n)}$, $\delta>\xi \geq \frac{1}{\text{poly}(n)}$, $1> \mu>1/2$, $0\leq b\leq \eta$, and target dimension $d \in \mathbb{N}$.
    \item A graph $G = (V,E)$ with $|V|=n$ and vertex weights $\{w(v)\}_{v\in V} \subset \mathrm{poly}(n)$ such that $d\leq |V|-1$.
\end{enumerate}
\textbf{Promise:}
The combinatorial Laplacian $\Delta_d$ has no eigenvalue in $(\eta,\eta+\delta]$ and $\dim \mathcal{S}_{\leq \eta}\neq 0$.
    \vspace{0.5em}\\ \noindent
\textbf{Output:}
An estimate $\chi\in [0,1]$ with probability at least $\mu$ that satisfies
$$
\frac{\sum_{0\leq \lambda_i \leq b}1}{\dim \mathcal{S}_{\leq \eta}} -\epsilon \leq \chi \leq
\frac{\sum_{0\leq \lambda_i \leq b+\xi}1}{\dim \mathcal{S}_{\leq \eta}}+\epsilon,
$$
where $\lambda_i$ denote the eigenvalues of $\Delta_d$ and $\mathcal{S}_{\leq \eta} = \mathrm{Span}(\{\ket{\psi_i} : \lambda_i \leq \eta\})$.
\end{problem}
\begin{theorem}\label{thm: LESD for TDA hardness}
    \textsf{LESD for TDA} is \DQC-hard {even when the uniform mixture over $\mathcal{S}_{\leq\eta}$ can be prepared to trace-distance error $\epsilon/3$ by a polynomial-size quantum circuit.}
\end{theorem}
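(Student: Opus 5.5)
The plan is to mirror the reduction \textsf{LENS} $\to$ \textsf{LESD} (Theorem~\ref{thm: LESD}), starting from the hard instances of \textsf{LENS for TDA} given by Theorem~\ref{thm: LENS for TDA Hardness}. Fix a \DQC{} instance. That theorem supplies a weighted graph $G$ and a dimension $d$ such that estimating $\frac{1}{\dim\mathcal{S}_{\leq\eta}}\sum_{0\leq\lambda_i\leq\eta}\lambda_i(\Delta_d)$ to accuracy $\epsilon'$ decides the instance. The promised gap $(\eta,\eta+\delta]$ holds, and a poly-size circuit prepares an approximation to $\rho_{\mathcal{S}_{\leq\eta}}$. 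We will write this normalized subtrace in terms of the normalized counting function
\[
N(b) := \frac{\#\{i: \lambda_i\leq b\}}{\dim \mathcal{S}_{\leq\eta}},
\]
using the layer-cake identity
\[
\frac{1}{\dim\mathcal{S}_{\leq\eta}}\sum_{\lambda_i\leq\eta}\lambda_i=\int_0^\eta\bigl(1-N(b)\bigr)\,db .
\]

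The key steps are as follows.

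\emph{Step 1: discretize the integral.} We split $[0,\eta]$ into $M=\mathrm{poly}(n)$ points $b_j=j\eta/M$ with spacing $\eta/M\geq\xi$. We then query an \textsf{LESD for TDA} oracle on the same graph $G$, the same $d$, $\eta$ and $\delta$, threshold $b_j$, and error $\epsilon''$. Each answer $\chi_j$ satisfies $N(b_j)-\epsilon''\leq\chi_j\leq N(b_j+\xi)+\epsilon''$.

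\emph{Step 2: bound the Riemann sum.} Since $N$ is monotone and lies in $[0,1]$, the sum $\frac{\eta}{M}\sum_j(1-\chi_j)$ is sandwiched between left- and right-Riemann sums of $1-N$, up to $\eta\epsilon''$. The sums of a monotone function differ from the integral by at most $\eta/M$, and the $\xi$-shift is absorbed by taking $\xi\leq\eta/M$, so that $b_j+\xi\leq b_{j+1}$. Choosing $M$ and $\epsilon''$ inverse polynomial therefore recovers the subtrace to accuracy $\epsilon'$, which gives the truth-table reduction.

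\emph{Step 3: the extra conditions.} The promise of \textsf{LESD for TDA} is identical to that of \textsf{LENS for TDA}, namely the gap and $\dim\mathcal{S}_{\leq\eta}\neq0$, so every query is a valid instance. The state-preparation condition carries over because the required accuracy $\epsilon/3$ for LESD is weaker than or comparable to $\epsilon'/(3\max\{1,\|\Delta_d\|\})$. I would take $\epsilon''$ no smaller than what the available preparation accuracy permits, rescaling by $\|\Delta_d\|\leq\mathrm{poly}(n)$, which holds since the weights are polynomial.

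The main obstacle is the bookkeeping of the error parameters: we need $\xi<\delta$ and $\xi\leq\eta/M$, and we need the LESD precision $\epsilon''$ to be compatible both with the target $\epsilon'$ and with the fixed preparation error inherited from Theorem~\ref{thm: LENS for TDA Hardness}. The latter requires tracking how $\epsilon'$ in the LENS-for-TDA hardness can be chosen as a free inverse polynomial. If that hardness fixes the preparation error too tightly, I would go back into its proof, which builds a history-state-style mixture through the Laplacian simulation, and argue that the mixture can be prepared to any inverse-polynomial accuracy.
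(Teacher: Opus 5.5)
Your proposal is essentially the paper's proof. The paper likewise takes the \textsf{LENS for TDA} hard instance (the simulator Laplacian $\Delta_{m-1}$ with shifted threshold $\eta'$), reruns the \textsf{LENS}$\to$\textsf{LESD} quadrature with $H$ replaced by $\Delta_{m-1}$, and inherits the state-preparation condition from the encoded history-state mixture; your layer-cake sum is the summation-by-parts form of the paper's $\sum_j\chi_jx_j$, and it avoids the paper's factor-$M$ loss in per-query precision.

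One point your fallback should make explicit, and which the paper's one-line treatment also leaves implicit, is that the per-query precision $\epsilon''\approx\epsilon'/\eta'$ itself shrinks as the penalty scale grows. This is because $\eta'$ contains the term $J_{in}/(2(T+1))$, while the history mixture is in general only $\Theta((T+1)/J_{in})$-close to $\rho_{\mathcal{S}_{\leq\eta'}}$. So instead of discretizing all of $[0,\eta']$, discretize only the $O(1)$-wide window around $[C,\,C+1]$ that by construction contains every eigenvalue below $\eta'$, setting $N(b)=0$ below it and $N(b)=1$ above it. That makes $\epsilon''$ independent of $J_{in}$ and the gadget scales, so the preparation error can genuinely be pushed below $\epsilon''/3$.
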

A proof of this theorem is provided in Section~\ref{sec:NQHP}.

\subsection{Low-Energy Kernel Density}

Next, we consider a variant of \textsf{Low-energy Spectral Density} by focusing on
the low-energy density of the kernel.

\begin{problem}[\textsf{Low-energy Kernel Density (LEKD)}]
\label{prob:LEKD}
\mbox{}\vspace{0.5em}\\
\textbf{Input:}
\begin{enumerate}
  \item Real numbers $\eta \geq \frac{1}{\mathrm{poly}(n)}$,
        $\epsilon \geq \frac{1}{\mathrm{poly}(n)}$,
        $\delta \geq \frac{1}{\mathrm{poly}(n)}$,
        $1 \geq \mu > 1/2$, and $k \in \mathbb{N}$.
    \item A $k$-local PSD Hamiltonian on $n$ qubits $H = \sum_{i=1}^mH_i$.
\end{enumerate}
\textbf{Promise:}
    $H$ does not have an eigenvalue in the interval $(0, \delta]$ and $\dim \mathcal{S}_{\leq \eta}\neq 0$.  \vspace{0.5em}\\ \noindent
\textbf{Output:} An estimate $\chi \in [0,1]$ with probability at least $\mu$ satisfying
\[
  \frac{\sum_{\lambda_i = 0} 1}{\dim\mathcal{S}_{\leq\eta}} - \epsilon
  \;\leq\; \chi \;\leq\;
  \frac{\sum_{\lambda_i = 0} 1}{\dim\mathcal{S}_{\leq\eta}} + \epsilon.
\]
\end{problem}

We first state the containment in $\BQP$ under the state-preparation assumption.

\begin{lemma}[Containment of \textsf{Low-energy Kernel Density}]
\label{lemma:containment_LEKD}
Assume that we have access to a polynomial-size quantum circuit that prepares a state $\tilde{\rho}$ satisfying
\[
\|\tilde{\rho}-\rho_{\mathcal{S}_{\leq\eta}}\|_1\leq\epsilon/3,
\]
where $\rho_{\mathcal{S}_{\leq\eta}}$ is the uniform mixture over $\mathcal{S}_{\leq\eta}$.
Then \textsf{Low-energy Kernel Density} can be solved in quantum polynomial time.
\end{lemma}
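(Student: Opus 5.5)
The plan is to estimate $\mathrm{Tr}(\Pi_0\,\rho_{\mathcal{S}_{\leq\eta}})$, where $\Pi_0$ is the projector onto $\ker H$. Since $\ker H\subseteq\mathcal{S}_{\leq\eta}$, this quantity equals $\dim\ker H/\dim\mathcal{S}_{\leq\eta}$ exactly, which is the target value. Replacing $\rho_{\mathcal{S}_{\leq\eta}}$ by $\tilde\rho$ changes the expectation of any operator $0\preceq M\preceq I$ by at most $\|\tilde\rho-\rho_{\mathcal{S}_{\leq\eta}}\|_1\leq\epsilon/3$, using the convention that the trace-distance bound controls $|\mathrm{Tr}(M(\tilde\rho-\rho))|$. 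It therefore suffices to implement a two-outcome measurement whose ``accept'' POVM element $M$ is $\epsilon/3$-close to $\Pi_0$ when restricted to the relevant spectrum, and then to estimate its acceptance probability to within $\epsilon/3$ by repetition.

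\textbf{Step 1: approximate the kernel projector.} The promise gives no eigenvalue in $(0,\delta]$, with $\delta\geq 1/\mathrm{poly}(n)$. Rescale $H$ by $\Lambda=\sum_i\|H_i\|\leq\mathrm{poly}(n)$, so the rescaled operator has spectrum in $\{0\}\cup[\delta/\Lambda,1]$ with an inverse-polynomial gap. Build a block-encoding of $H/\Lambda$ from the local terms via LCU. Then apply QSVT with an even polynomial $p$ satisfying $|p(0)-1|\leq\epsilon'$, $|p(x)|\leq\epsilon'$ on $[\delta/\Lambda,1]$, and $|p|\leq1$ on $[-1,1]$. Such a polynomial, a rectangle or threshold approximation, has degree $O((\Lambda/\delta)\log(1/\epsilon'))=\mathrm{poly}(n)$. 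Alternatively, one can run phase estimation on $e^{-iHt}$ at precision $\delta/2$ and accept when the estimate is below $\delta/2$; amplification to success error $\epsilon'$ costs a $\log(1/\epsilon')$ overhead. Either way we obtain a circuit whose acceptance operator $M$ satisfies $\|M-\Pi_0\|\leq\epsilon'$ on all eigenvectors of $H$. For QSVT, take $M=p(H/\Lambda)^\dagger p(H/\Lambda)$ realised through the postselection probability.

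\textbf{Step 2: estimate.} Prepare $\tilde\rho$, apply the circuit, and record acceptance. The acceptance probability is $q=\mathrm{Tr}(M\tilde\rho)$, and
\[
|q-\mathrm{Tr}(\Pi_0\rho_{\mathcal{S}_{\leq\eta}})|\leq \|M-\Pi_0\|+\|\tilde\rho-\rho_{\mathcal{S}_{\leq\eta}}\|_1\leq\epsilon'+\epsilon/3 .
\]
Choose $\epsilon'=\epsilon/3$. Take $O(\epsilon^{-2}\log(1/(1-\mu)))$ samples; by Hoeffding the empirical mean is within $\epsilon/3$ of $q$ with probability at least $\mu$. Clip the output to $[0,1]$; clipping cannot increase the error. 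The total error is at most $\epsilon$.

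\textbf{Main obstacle.} The delicate point is Step 1: the filter must be sharp at the gap $\delta$ relative to $\|H\|$, which is only polynomially bounded. This is fine because every parameter is inverse-polynomial, so the degree stays polynomial. We also need to be careful that the threshold $\eta$ plays no role in the measurement; it enters only through the normalization carried by $\tilde\rho$. One remaining detail is the case $\mu=1$, which the problem permits. Exact success probability $1$ cannot be met by finite sampling, so I would interpret the requirement as any constant $\mu<1$, or note that boosting gives success $1-2^{-\mathrm{poly}(n)}$.
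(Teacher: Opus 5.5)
Your proposal is correct and follows essentially the same route as the paper: prepare $\tilde\rho$, apply a filter that separates $\ker H$ from the spectrum above $\delta$ (the paper uses exactly your phase-estimation alternative, thresholding at $\delta/2$), and combine the $\epsilon/3$ trace-distance error with the filter and Hoeffding sampling errors. Your explicit accounting of the filter's imperfection (which the paper idealizes away) is a welcome extra; just note that with $M=p(H/\Lambda)^2$ you get $\|M-\Pi_0\|\le 2\epsilon'$ rather than $\epsilon'$, so take $\epsilon'=\epsilon/6$.
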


A proof of this lemma is provided in Section~\ref{sec:containment_normalized_persistence}.

As this problem requires the exact kernel, we show the following hardness result with the new complexity class introduced in this paper.

\begin{theorem}\label{thm:LEKD_hardness}
    $\mathsf{LEKD}$ is $\SDQC$-hard and remains $\SDQC$-hard even when the uniform mixture over $\mathcal{S}_{\leq\eta}$ can be prepared to trace-distance error $\epsilon/3$ by a polynomial-size quantum circuit.
\end{theorem}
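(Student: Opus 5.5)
The plan is a circuit-to-Hamiltonian reduction from an arbitrary $\SDQC$ instance, reusing the $O(1)$-local construction behind Theorem~\ref{thm: low energy subtrace}.

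\textbf{Step 1: amplify the circuit.} Given $x$ with circuit $U_x$ on $q=q(n)$ qubits and subspace $\mathcal{S}_x$, first replace $U_x$ by the amplified $U_x'$ of Section~\ref{sec:power_sdqc}. It accepts every $\ket{\psi}\in\mathcal{S}_x$ with probability exactly $1$ and every $\ket{\psi}\in\mathcal{S}_x^\perp$ with probability at most $p(n)$, where $p(n)$ is a small inverse polynomial. By \eqref{eq:power_sdqc}, the acceptance probability on the maximally mixed input is within $p(n)$ of $\dim\mathcal{S}_x/2^{q-1}$. Since $a(n)-b(n)\geq 1/\mathrm{poly}(n)$, it therefore suffices to estimate $\dim\mathcal{S}_x/2^{q-1}$ to additive error $(a-b)/3$.

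\textbf{Step 2: build the Hamiltonian.} Apply a Kitaev/Feynman unary-clock construction to $U_x'$, with $T$ gates and $O(1)$-local terms:
\[
H_{\SDQC}=H_{\mathrm{clock}}+H_{\mathrm{init}}+H_{\mathrm{prop}}+H_{\mathrm{out}}.
\]
Here $H_{\mathrm{init}}$ penalizes only the clean qubit at time $0$, leaving the $q-1$ mixed qubits free, and $H_{\mathrm{out}}=\ket{0}\bra{0}_{\mathrm{out}}\otimes\ket{T}\bra{T}$ penalizes rejection. The first three terms have a kernel spanned by the $2^{q-1}$ history states $\ket{\mathrm{hist}_\psi}$, one for each $\ket{\psi}$ in the mixed register. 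They have a spectral gap $\Omega(1/T^2)$ above this kernel, and all clock-illegal and other states sit higher still. Restricted to the history space, $H_{\mathrm{out}}$ acts as
\[
\frac{1}{T+1}\,\Pi_{\mathrm{rej}},\qquad \Pi_{\mathrm{rej}}=\bra{0}U_x'^\dagger(\ket{0}\bra{0}\otimes I)U_x'\ket{0}.
\]
This is the rejection operator, which vanishes exactly on $\mathcal{S}_x$ and is at least $1-p(n)$ on $\mathcal{S}_x^\perp$.

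\textbf{Step 3: kernel and low-energy space.} By perfect completeness, the history states of $\ket{\psi}\in\mathcal{S}_x$ are annihilated by every term, so $\dim\ker H_{\SDQC}\geq\dim\mathcal{S}_x$. For the reverse inclusion, any kernel vector must lie in the kernel of the first three terms, i.e.\ be a history state, and then have zero rejection, forcing $\ket{\psi}\in\mathcal{S}_x$. Thus $\dim\ker H_{\SDQC}=\dim\mathcal{S}_x$ exactly. Next choose $\eta$ between the perturbed rejection scale and the history-space gap, for example by scaling $H_{\mathrm{out}}$ by a small $1/\mathrm{poly}$ factor $\kappa$. Projection lemma / Weyl-type arguments (as in the \textsf{LENS} proof) then give three facts. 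First, $\mathcal{S}_{\leq\eta}$ has dimension exactly $2^{q-1}$. Second, its nonzero eigenvalues are at least roughly $\kappa(1-p)/(T+1)-O(\kappa^2)$, which yields the promise gap $(0,\delta]$ with $\delta=1/\mathrm{poly}$. Third, there is a gap above $\eta$. Hence
\[
\frac{\dim\ker H_{\SDQC}}{\dim\mathcal{S}_{\leq\eta}}=\frac{\dim\mathcal{S}_x}{2^{q-1}}.
\]
An \textsf{LEKD} oracle with $\epsilon<(a-b)/3$ then decides $x$.

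\textbf{Step 4: state preparation and main obstacle.} For the state-preparation clause, use the fact that $\rho_{\mathcal{S}_{\leq\eta}}$ is close to the uniform mixture of history states. That mixture is prepared efficiently: take a maximally mixed register, add a uniform superposition over clock states, and apply the controlled gates; this is the mechanism of Lemma~\ref{lem:prep_hist}. The trace-distance error is controlled by the ratio of the perturbation to the gap, which can be made below $\epsilon/3$ by choosing $\kappa$ small.

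The main obstacle is the spectral gap $(0,\delta]$. Nonzero eigenvalues must be bounded away from zero uniformly, which needs the soundness gap on $\mathcal{S}_x^\perp$, not merely an average bound. We also need an exact, not just approximate, identification of the kernel. Both come from the structure of $\SDQC$ combined with careful perturbation bounds that keep every constant polynomial.
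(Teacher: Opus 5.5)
Your route is the paper's proof essentially step for step. The shared ingredients are:
\begin{itemize}
\item Marriott--Watrous amplification to $U_x'$;
\item the same unary-clock history Hamiltonian, with the output term as the weak perturbation;
\item the exact identification of $\ker H_{\mathrm{SDQC}}$ with the history states of $\mathcal{S}_x$, using that the rejection operator vanishes on $\mathcal{S}_x$ and is at least $1-p(n)$ on $\mathcal{S}_x^\perp$;
\item perturbative control giving $\dim\mathcal{S}_{\leq\eta}=2^{q-1}$ and the $(0,\delta]$ gap;
\item a single \textsf{LEKD} query;
\item the history-mixture preparation of Lemma~\ref{lem:prep_hist} plus a projector-closeness bound.
\end{itemize}
Putting a small factor $\kappa$ on $H_{\mathrm{out}}$ instead of making $J_{in},J_{prop},J_{clock}$ large is only a rescaling. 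The Weyl-type bounds you invoke also suffice: Weyl plus $H_{\mathrm{out}}\geq 0$ fixes the count below $\eta$, and a direct energy estimate for eigenvectors orthogonal to $\ker H_{\mathrm{SDQC}}$ gives the gap. So the self-energy expansion used in the paper is not needed.

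The one step that does not follow as written is the end of Step 1, and the paper's proof makes exactly the same move. Condition (iii) of Definition~\ref{def:SDQC1} places the thresholds $a(n),b(n)$ on the acceptance probability of $U_x$, not of $U_x'$. For $U_x$ that probability is $\dim\mathcal{S}_x/2^{q-1}$ plus a contribution from $\mathcal{S}_x^\perp$ that can be as large as $1/3$. This is exactly the part that amplification suppresses to $p(n)$. Since $a-b$ is only inverse-polynomial, knowing $\dim\mathcal{S}_x/2^{q-1}$ to precision $(a-b)/3$ does not decide $x$.

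Concretely, $\mathcal{S}_x=\{0\}$ is permitted by the definition. A one-clean-qubit Hadamard test with tilted preparation and measurement bases has acceptance operator $c\,(I+\mathrm{Re}\,W)$ with $2c\leq 1/3$. This satisfies (i)--(iii) for a \DQC-hard trace-estimation problem, yet your \textsf{LEKD} quantity is identically $0$ on every such instance.

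The reduction is sound once (iii) is understood as a condition on the amplified acceptance probability, equivalently on $\dim\mathcal{S}_x/2^{q-1}$ up to $p(n)$. This is the convention the paper tacitly adopts. You should state it as an explicit assumption on the class, or amend the definition, rather than derive it from \eqref{eq:power_sdqc}.
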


A proof of this theorem is provided in Section~\ref{sec:LEKD}.

\subsection{Normalized Quasi-Persistence}

Next, we motivate and define the normalized persistence problems.
We first consider the normalized persistence of the kernel of an initial Hamiltonian into the low-energy subspace of the final Hamiltonian.

Inspired by the harmonic persistence \cite{basu2024harmonic} in persistent homology, we define the persistence of a subspace as follows.
Let $\mathcal{S}$ be a subspace for which we consider a persistence and let $\mathcal{P}$ be a projector.
Then, we define $\mathrm{Im}\mathcal{P}|_{\mathcal{S}}$ to be the persistence of $\mathcal{S}$ with respect to $\mathcal{P}$.
Then, we define the quantity that we call a normalized persistence as
\begin{equation}
    \frac{\dim \mathrm{Im}\mathcal{P}|_{\mathcal{S}}}{\dim \mathcal{S}},
\end{equation}
that is the portion of the dimension of the image of the projector $\mathcal{P}$ compared to its original dimension.
The normalized persistence problem is to estimate this quantity up to inverse-polynomial additive error.
Note that this is a generalization of the \emph{normalized persistence} quantity for Betti numbers we defined in equation \ref{eqn: Normalized persistence}. This can be seen by taking $\mathcal{S} = \ker(\Delta_{1,d})$ and $\mathcal{P}$ the orthogonal projector onto $\ker(\Delta_{2,d})$, for combinatorial Laplacians $\Delta_{1,d},\Delta_{2,d}$. We define this problem as \textsf{Normalized Harmonic Persistence} later as Problem \ref{prob:normalized_harmonic_persistence}.

We first formalize a quasi version of the problem for two PSD local Hamiltonians $H_1, H_2$ where $H_2-H_1$ is also positive semidefinite, and we choose
$
\mathcal{S}= \ker H_1
$
and
$
\mathcal{P}= \mathcal{P}_{\leq b} (H_2).
$

\begin{problem}[\textsf{Normalized quasi-persistence}]
\label{prob:normalized_quasi_persistence}
\mbox{}\vspace{0.5em}\\
\textbf{Input:}
\begin{enumerate}
    \item Real numbers $\epsilon \geq \frac{1}{\mathrm{poly}(n)}$,
          $\xi \geq \frac{1}{\mathrm{poly}(n)}$,
          {$\eta \geq \frac{1}{\mathrm{poly}(n)}$,
          $\delta > \xi$, $0\leq b\leq \eta$,}
          $1 > \mu > 1/2$, and $k \in \mathbb{N}$.
    \item PSD $k$-local Hamiltonians $H_1$ and $H_2 = H_1 + H_{12}$,
          where $H_{12}$ is also a PSD $k$-local Hamiltonian.
\end{enumerate}
\textbf{Promise:}
$\dim \ker H_1\neq 0$, {and $H_2$ has no eigenvalue in $(\eta,\eta+\delta]$}.
\vspace{0.2em}\\ \noindent
\textbf{Output:}
An estimate $\chi \in [0,1]$ with probability at least $\mu$ satisfying
\begin{equation}
    \frac{\dim\mathrm{Im}\,\mathcal{P}_{2,b}|_{\ker H_1}}{\dim \ker H_1}
    - \epsilon \leq \chi \leq
    \frac{\dim\mathrm{Im}\,\mathcal{P}_{2,b+\xi}|_{\ker H_1}}{\dim \ker H_1}
    + \epsilon,
\end{equation}
where $\mathcal{P}_{2,b} := \mathrm{Proj}(\mathrm{Span}(\{\ket{\psi_{2,i}} :
\lambda_{2,i} \leq b\}))$ with $\{(\lambda_{2,i}, \ket{\psi_{2,i}})\}$ being
the eigenpairs of $H_2$.
\end{problem}

\textsf{Normalized quasi-persistence} may not be efficiently solved even if we have access to the uniform mixture for $\ker H_1$. In order to prove containment for this problem we introduce the following additional condition.

\begin{definition}[Large Overlap Condition]\label{def:large_overlap}
Let $\mathcal{S}$ be a subspace and $\mathcal{P}$ a projector on
$\mathbb{C}^{2^n}$.
The pair $(\mathcal{P}, \mathcal{S})$ satisfies the
\emph{large overlap condition} with parameter $a_{\min}$ if every nonzero eigenvalue of $\mathcal{P}\Pi_{\mathcal{S}}\mathcal{P}$
is at least $a_{\min}\geq 1/\mathrm{poly}(n)$.
\end{definition}

We use ``large'' in the sense that it is lower-bounded by $1/\mathrm{poly}(n)$. Although we focus on this $1/\mathrm{poly}(n)$-overlap condition, one may also consider constant-overlap or $1/\exp(n)$-overlap conditions for further specifications of the complexity.

This allows us to state the following containment result.

\begin{lemma}[Containment of \textsf{Normalized quasi-persistence}]\label{lemma:containment_NQP}
Suppose
\begin{enumerate}
  \item We have access to a polynomial-size quantum circuit that prepares
        $\rho_{\ker H_1}$, the uniform mixture over $\ker H_1$.
  \item The large overlap condition is promised to hold for
        $(\mathcal{P}_{2,b},\,\ker H_1)$ with parameter
        $a_{\min} \geq  1/\mathrm{poly}(n)$.
\end{enumerate}
Then \textsf{Normalized Quasi-Persistence} can be solved in quantum
polynomial time.
\end{lemma}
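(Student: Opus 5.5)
The approach is to read the rank off the spectrum of the compressed operator $A:=\Pi\,\mathcal{P}_{2,b}\,\Pi$, where $\Pi:=\Pi_{\ker H_1}$, and to sample that spectrum using the uniform mixture $\rho_{\ker H_1}=\Pi/\dim\ker H_1$. Two linear-algebra facts drive it.

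First, $\dim\mathrm{Im}\,\mathcal{P}_{2,b}|_{\ker H_1}=\mathrm{rank}(\mathcal{P}_{2,b}\Pi)=\mathrm{rank}(A)$. Moreover, the nonzero eigenvalues of $A$ coincide with those of $\mathcal{P}_{2,b}\Pi\mathcal{P}_{2,b}$. By the large overlap condition (Definition~\ref{def:large_overlap}) they are therefore all at least $a_{\min}\geq 1/\mathrm{poly}(n)$.

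Second, let $\{\ket{s_j}\}$ be an eigenbasis of $A$ restricted to $\ker H_1$. Then $\rho_{\ker H_1}=\frac{1}{\dim\ker H_1}\sum_j\ket{s_j}\bra{s_j}$. Hence if we can perform a measurement that, on $\ket{s_j}$, answers ``nonzero'' when the eigenvalue is $\geq a_{\min}$ and ``zero'' when it is $0$, the acceptance probability on $\rho_{\ker H_1}$ is exactly $\mathrm{rank}(A)/\dim\ker H_1$. The promised gap $[0,a_{\min})$ means this measurement only needs precision $a_{\min}/2$.

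To implement the measurement I will use Jordan's lemma for the pair of projectors $(\Pi,\mathcal{P}_{2,b})$. Each $\ket{s_j}$ with eigenvalue $\cos^2\theta_j$ lies in a two-dimensional (or one-dimensional) invariant block of the walk operator $W=(2\mathcal{P}_{2,b}-I)(2\Pi-I)$, whose eigenphases there are $\pm2\theta_j$. Running phase estimation on $W$ with $O(1/a_{\min})$ precision, starting from a sample of $\rho_{\ker H_1}$, separates $\theta_j=\pi/2$ (eigenvalue $0$) from $\cos^2\theta_j\geq a_{\min}$. Equivalently, one can apply a QSVT threshold polynomial of degree $\tilde O(1/\sqrt{a_{\min}})$ to the singular values of $\mathcal{P}_{2,b}\Pi$.

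The reflection $2\mathcal{P}_{2,b}-I$ is approximated by phase estimation (or QSVT eigenvalue thresholding) on a block-encoding of $H_2$, with the cut placed in $[b,b+\xi]$ at precision $\xi$. This only yields a smeared projector $\tilde{\mathcal{P}}$ with $\mathcal{P}_{2,b}\preceq\tilde{\mathcal{P}}\preceq\mathcal{P}_{2,b+\xi}$ up to $1/\mathrm{poly}$ error. By monotonicity of $\mathrm{rank}(\cdot\,\Pi)$, an estimator for $\tilde{\mathcal{P}}$ lands in the asymmetric window $[\mathrm{rank}(\mathcal{P}_{2,b}\Pi),\mathrm{rank}(\mathcal{P}_{2,b+\xi}\Pi)]/\dim\ker H_1$, which is exactly what the output condition permits. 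Two subtleties must be handled here: the overlap promise is only for $\mathcal{P}_{2,b}$, and the smear creates small nonzero overlaps. I plan to handle both by thresholding at $a_{\min}/2$ and bounding the weight that can cross the threshold using the monotone sandwich.

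Finally, I will repeat the procedure $O(1/\epsilon^2)$ times and apply Hoeffding's inequality to get additive error $\epsilon/2$. Taking a median of $O(\log(1/(1-\mu)))$ such runs boosts the success probability to $\mu$. The remaining $\epsilon/2$ of slack absorbs phase-estimation failure probabilities and the smearing errors.

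The step I expect to be the main obstacle is implementing the reflection $2\Pi-I$ about $\ker H_1$, since no spectral gap above zero is promised for $H_1$. My first attempt is to exploit the fact that only the action on states in the support of $\rho_{\ker H_1}$ matters. Concretely, I would implement $\Pi$ through the preparation circuit: take a purification $\ket{\Phi}=V\ket{0}$ of $\rho_{\ker H_1}$, and use the fact that a single Jordan alternation starting inside $\ker H_1$ stays inside the block. This would let me replace $\Pi$-measurements by projective measurements of $H_1$ with threshold $0$ on states already guaranteed to lie in $\ker H_1$ or in the known block. If that fails, the fallback is to assume, as in the hardness construction, a $1/\mathrm{poly}$ gap of $H_1$ above its kernel, so that $\Pi$ is implementable by eigenvalue thresholding of $H_1$.
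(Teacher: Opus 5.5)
Your architecture matches the paper's. You sample an eigenvector $\ket{\varphi_j}$ of $\Pi Q\Pi$ (with $\Pi=\Pi_{\ker H_1}$ and $Q$ the filter satisfying $\mathcal{P}_{2,b}\preceq Q\preceq\mathcal{P}_{2,b+\xi}$) through $\rho_{\ker H_1}$. You push eigenvalues $\geq a_{\min}$ to acceptance $\approx 1$ while keeping $0$ at $0$; the paper does this with fixed-point amplitude amplification at $\sin\theta_{\min}=\sqrt{a_{\min}}$, and your Jordan-walk phase estimation or QSVT threshold is the same primitive. You then use the sandwich, as the paper does, to get at least $m_-=\mathrm{rank}(\mathcal{P}_{2,b}\Pi)$ eigenvalues $\geq a_{\min}$ and at most $m_+=\mathrm{rank}(\mathcal{P}_{2,b+\xi}\Pi)$ nonzero ones. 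Eigenvalues in $(0,a_{\min})$ may therefore be counted either way, and no further weight bound is needed.

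The genuine gap is the one you flag yourself. Each of these primitives (Jordan walk, QSVT, FPAA) uses the input reflection $2\Pi_{\ker H_1}-I$, and neither the lemma nor the problem promises any gap of $H_1$ above $0$. Your first attempt does not supply this reflection:
\begin{itemize}
\item A purification $\ket{\Phi}=V\ket{0}$ of $\rho_{\ker H_1}$ only lets you reflect about the single vector $\ket{\Phi}$ on system plus purifying register, not about $\ker H_1\otimes I$.
\item The Jordan block of $\ket{\varphi_j}$ contains $(I-\Pi)\mathcal{P}_{2,b}\ket{\varphi_j}$. This vector lies in $(\ker H_1)^\perp$ but can have arbitrarily small $H_1$-energy, so no polynomial-precision ``threshold at $0$'' measurement of $H_1$ separates it from the kernel.
\end{itemize}
As it stands, you prove the lemma only under your fallback, i.e.\ with an added promise that $H_1$ has no eigenvalue in $(0,\delta_1]$ for some $\delta_1\geq 1/\mathrm{poly}(n)$. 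The paper's hardness instances do satisfy this.

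You should not expect to close this by following the paper. Its proof avoids reflecting about $\ker H_1$ by running FPAA with the oblivious iterate $G=-U_Q(I-2P_0)U_Q^\dagger(I-2P_0)$, where $P_0=I\otimes\ket{0^a}\bra{0^a}$. But $P_0U_QP_0=Q\otimes\ket{0^a}\bra{0^a}$ has singular values only in $\{0,1\}$. Concretely, $U_Q$ fixes $Q\ket{\varphi_j}\ket{0^a}$ and maps $(I-Q)\ket{\varphi_j}\ket{0^a}$ into the range of $I-P_0$. Every phase rotation in the FPAA sequence acts on these two vectors by scalars, so the success probability stays exactly $\lambda_j(Q)$.

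The paper's estimator therefore returns the trace overlap $\mathrm{Tr}(Q\rho_{\ker H_1})\in[a_{\min}m_-/D,\,m_+/D]$, with $D=\dim\ker H_1$. This is guaranteed to lie in the required window only when $1-a_{\min}\lesssim\epsilon$, which is the regime of the hardness instances, not for general $a_{\min}\geq 1/\mathrm{poly}(n)$. The amplification that actually separates $\lambda_j\geq a_{\min}$ from $\lambda_j=0$ needs the input projector $\Pi_{\ker H_1}\otimes\ket{0^a}\bra{0^a}$, which is exactly your obstacle. Your diagnosis is correct, and the honest repair is to add the gap promise on $H_1$, or to restrict to $a_{\min}\geq 1-O(\epsilon)$.
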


A proof of this lemma is provided in Section~\ref{sec:containment_normalized_persistence}.

The containment result above shows that normalized quasi-persistence can be estimated efficiently when the relevant projected subspace has inverse-polynomial overlap and the uniform mixture over $\ker H_1$ is efficiently preparable.
We now turn to the complementary lower bound.
The next theorem shows that the dimension-normalized persistence quantity is $\DQC$-hard, and remains so in the regime where it is well approximated by the corresponding trace-overlap quantity
\[
\frac{\mathrm{Tr}[\mathcal{P}_{2,b}\Pi_{\ker H_1}\mathcal{P}_{2,b}]}{\dim\ker H_1}.
\]
Thus the hardness is not merely an artifact of small singular values in the map $\mathcal{P}_{2,b}|_{\ker H_1}$.

\begin{theorem}
\label{thm:normalized_quasi_persistence}
	\textsf{Normalized Quasi-Persistence} is \DQC-hard under a polynomial-time truth-table reduction, even when the large overlap condition holds for $(\mathcal{P}_{2,b},\ker H_1)$ with $a_{\min}\geq 1-1/\mathrm{poly}(n)$ for every oracle query made by the reduction.
	{Moreover, the hard instances can be chosen so that the uniform mixture over $\ker H_1$ is prepared by a polynomial-size quantum circuit.}
	\end{theorem}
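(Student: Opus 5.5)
The plan is to reduce from \textsf{LESD} (Theorem~\ref{thm: LESD}), using the hard instances from that proof. Those instances come from the $O(1)$-local circuit-to-Hamiltonian construction behind Theorem~\ref{thm: low energy subtrace}, with the \DQC{} circuit's history states in the low-energy space.

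\textbf{Splitting the Hamiltonian.} We write the \DQC{} Hamiltonian as $H_{\DQC}=H_{\mathrm{in}}+H_{\mathrm{clock}}+H_{\mathrm{prop}}+H_{\mathrm{out}}$, with each term PSD and $O(1)$-local. The clean-qubit input penalty acts only on the clean qubit; the maximally mixed register is unconstrained. We set
\[
H_1 := \lambda\,(H_{\mathrm{in}}+H_{\mathrm{clock}}+H_{\mathrm{prop}}),\qquad H_{12}:=H_{\mathrm{out}},\qquad H_2 := H_1+H_{12}.
\]
Here $\lambda=\mathrm{poly}(n)$ is a scaling factor.

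\begin{itemize}
\item $\ker H_1$ is exactly the span of history states $\ket{\eta_\psi}$, one for each input $\ket{0}\otimes\ket{\psi}$ with $\ket{\psi}$ in the $2^{N-1}$-dimensional mixed register. So $\dim\ker H_1=2^{N-1}$.
\item By the standard Kitaev spectral-gap bound, the smallest nonzero eigenvalue of $H_{\mathrm{in}}+H_{\mathrm{clock}}+H_{\mathrm{prop}}$ is at least $1/\mathrm{poly}(T)$. With $\lambda$ chosen large, $H_1$ has gap $\gg\|H_{\mathrm{out}}\|$ above its kernel.
\item Since $H_{12}\succeq 0$, any $v\in\ker H_2$ has $\langle v|H_1|v\rangle = 0$, so $\ker H_2\subseteq\ker H_1$. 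More generally, the low part of the spectrum of $H_2$ lies near $\ker H_1$.
\end{itemize}

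\textbf{Identifying the persistence quantity.} We take $\eta$ and $\delta$ so that $\mathcal S_{\le\eta}(H_2)$ is the perturbed copy of $\ker H_1$: the $2^{N-1}$ lowest eigenvalues of $H_2$ lie in $[0,\|H_{\mathrm{out}}\|]$, and everything else lies above $\lambda/\mathrm{poly}$.

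By the projection lemma, $\mathcal P_{2,\eta}$ is within $O(\|H_{\mathrm{out}}\|/\lambda)$ in operator norm of $\Pi_{\ker H_1}$. Consequently, for any $b\le\eta$:
\begin{itemize}
\item $\mathrm{Im}\,\mathcal P_{2,b}\subseteq\mathrm{Im}\,\mathcal P_{2,\eta}$;
\item $\mathcal P_{2,\eta}$ restricted to $\ker H_1$ is injective, with all singular values at least $1-1/\mathrm{poly}(n)$.
\end{itemize}
Together these give
\[
\dim\mathrm{Im}\,\mathcal P_{2,b}|_{\ker H_1}=\operatorname{rank}\mathcal P_{2,b}=\#\{\lambda_{2,i}\le b\},
\qquad \dim\ker H_1=\dim\mathcal S_{\le\eta}.
\]
The large-overlap parameter for $(\mathcal P_{2,b},\ker H_1)$ is then $a_{\min}\ge 1-O(\|H_{\mathrm{out}}\|/\lambda)^2=1-1/\mathrm{poly}(n)$. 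This holds uniformly in $b\le\eta$, since every nonzero eigenvalue of $\mathcal P_{2,b}\Pi\mathcal P_{2,b}$ is at least $\|\Pi v\|^2$ for unit $v$ in $\mathrm{Im}\,\mathcal P_{2,\eta}$.

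So the \textsf{NQP} quantity, with the $b$ and $b+\xi$ windows, coincides exactly with the \textsf{LESD} quantity for $H_2$. Each \textsf{LESD} query made by the truth-table reduction of Theorem~\ref{thm: LESD} (over its quadrature grid of thresholds $b$) is therefore answered by one \textsf{NQP} query on $(H_1,H_{12})$ with the same $b,\xi,\epsilon$. This yields \DQC-hardness under truth-table reductions.

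\textbf{State preparation.} The uniform mixture over $\ker H_1$ is the uniform mixture of history states over a maximally mixed input. It is prepared by feeding a maximally mixed register into the standard history-state preparation circuit: a uniform clock superposition, followed by controlled application of $U_1\cdots U_t$. This works because the map $\ket{\psi}\mapsto\ket{\eta_\psi}$ is an isometry, so it carries the maximally mixed input to the uniform mixture over $\ker H_1$. I expect Lemma~\ref{lem:prep_hist} to supply exactly this. It also must hold for the \emph{exact} kernel (as required by Lemma~\ref{lemma:containment_NQP}), which is true here since it is produced by an isometry.

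\textbf{Main obstacle.} The delicate step is parameter bookkeeping. The \textsf{LESD} hard instance must be the spectrum of $H_2=H_1+H_{\mathrm{out}}$, with the prescribed gap $(\eta,\eta+\delta]$ and thresholds. Meanwhile, the rescaling by $\lambda$ must leave the eigenvalues of $H_2$ below $\eta$ within $1/\mathrm{poly}$ of those of $\Pi H_{\mathrm{out}}\Pi$. First-order perturbation theory gives error $O(\|H_{\mathrm{out}}\|^2/\lambda)$, so the error parameters of Theorem~\ref{thm: LESD} must absorb this. If the \textsf{LESD} proof used a different split, I would instead re-derive the hard instance from \textsf{LENS} directly with this $H_2$, rerunning the quadrature argument.
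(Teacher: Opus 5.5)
Your proposal is correct and follows essentially the paper's route: split the \DQC{} history Hamiltonian into a penalty part $H_1$ and an output perturbation $H_{12}$; use $\|\mathcal{P}_{2,\eta}-\Pi_{\ker H_1}\|\leq\alpha$ to get injectivity, so the \textsf{NQP} quantity equals the \textsf{LESD} count with $\dim\ker H_1=\dim\mathcal{S}_{\leq\eta}=2^{N-1}$ and $a_{\min}\geq 1-\alpha^2$ for every $b\leq\eta$; then invoke Theorem~\ref{thm: LESD} and Lemma~\ref{lem:prep_hist}. The bookkeeping worry you flag does not arise in the paper, because it takes $H_1=J_{in}H_{in}+J_{prop}H_{prop}+J_{clock}H_{clock}$ and $H_{12}=(T+1)H_{out}$ with the coefficients of Theorem~\ref{thm: low energy subtrace}, so that $H_2$ is literally the \textsf{LENS}/\textsf{LESD} hard instance; your single-scale $\lambda$ variant also works via one application of Lemma~\ref{lemma:Perturbation Theory}, which recovers the encoded eigenvalues $\tilde{\lambda}_i/(T+1)$ to $O(1/\lambda)$ precision.
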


\begin{remark}
    In the hardness proof, we make queries for $0\leq b \leq \eta$ where $\eta$ is a parameter above which the Hamiltonian has a spectral gap.
\end{remark}

A proof of this theorem is provided in Section~\ref{sec:NQP}.

\subsection{Exact Normalized Persistence}

The exact version of the normalized persistence problem asks how much of $\ker H_1$
persists into $\ker H_2$, rather than into a low-energy subspace of $H_2$. We state this formally below.

\begin{problem}[\textsf{Normalized Persistence}]
\label{prob:normalized_persistence}
\mbox{}\vspace{0.5em}\\
\textbf{Input:}
\begin{enumerate}
  \item Real numbers $\epsilon \geq \frac{1}{\mathrm{poly}(n)}$,
        $\delta_1 \geq \frac{1}{\mathrm{poly}(n)}$,
        $\delta_2 \geq \frac{1}{\mathrm{poly}(n)}$,
        $1 \geq \mu > 1/2$, and $k \in \mathbb{N}$.
  \item PSD $k$-local Hamiltonians $H_1$ and $H_2 = H_1 + H_{12}$,
        where $H_{12}$ is also a PSD $k$-local Hamiltonian.
\end{enumerate}
\textbf{Promise:}
$H_1$ has no eigenvalue in $(0, \delta_1]$, and $H_2$ has no eigenvalue in $(0, \delta_2]$ and $\dim \ker H_1\neq 0$.
\vspace{0.1em}\\ \noindent
\textbf{Output:} An estimate $\chi \in [0,1]$ with probability at least $\mu$ satisfying
\[
  \frac{\dim\mathrm{Im}\,\mathcal{P}_{2,0}|_{\ker H_1}}{\dim \ker H_1} - \epsilon
  \;\leq\; \chi \;\leq\;
  \frac{\dim\mathrm{Im}\,\mathcal{P}_{2,0}|_{\ker H_1}}{\dim \ker H_1} + \epsilon,
\]
where $\mathcal{P}_{2,0} := \Pi_{\ker H_2}
= \mathrm{Proj}(\mathrm{Span}(\{|\psi_{2,i}\rangle : \lambda_{2,i} = 0\}))$
with $\{(\lambda_{2,i}, |\psi_{2,i}\rangle)\}$ the eigenpairs of $H_2$.
\end{problem}

\begin{lemma}[Containment of \textsf{Normalized Persistence}]
\label{lemma:containment_NP}
Assume that the following conditions hold:
\begin{enumerate}
  \item We have access to a polynomial-size quantum circuit that prepares
        $\rho_{\ker H_1}$, the uniform mixture over $\ker H_1$.
  \item The large overlap condition holds for $(\mathcal{P}_{2,0},\, \ker H_1)$
        with parameter $a_{\min} = 1/\mathrm{poly}(n)$.
\end{enumerate}
Then \textsf{Normalized Persistence} can be solved in quantum polynomial time.
\end{lemma}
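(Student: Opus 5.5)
We aim to estimate $r := \mathrm{rank}(\mathcal{P}_{2,0}\Pi_{\ker H_1})/\dim\ker H_1$ by sampling from $\rho_{\ker H_1}$ and applying a kernel test for $H_2$. The large overlap condition, combined with the spectral gaps, will turn that test into a near-projective measurement of the rank. Write $A := \Pi_{\ker H_1}\mathcal{P}_{2,0}\Pi_{\ker H_1}$, viewed as an operator on $\ker H_1$. Its nonzero eigenvalues coincide with those of $\mathcal{P}_{2,0}\Pi_{\ker H_1}\mathcal{P}_{2,0}$, so by hypothesis they all lie in $[a_{\min},1]$. The number of nonzero eigenvalues is exactly $\dim\mathrm{Im}\,\mathcal{P}_{2,0}|_{\ker H_1}$. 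Hence
\[
r=\frac{1}{\dim\ker H_1}\mathrm{Tr}\bigl[\mathbf{1}_{[a_{\min},1]}(A)\bigr],
\]
and the task reduces to estimating the expectation of a step function of $A$ in the state $\rho_{\ker H_1}$. The raw overlap $\mathrm{Tr}[\rho_{\ker H_1}\mathcal{P}_{2,0}] = \mathrm{Tr}A/\dim\ker H_1$ is not enough, because it weights each direction by its eigenvalue rather than counting it.

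First, we implement an approximate reflection about $\ker H_2$. Since $H_2$ is a sum of polynomially many $k$-local terms, it has a block encoding with normalization $\alpha=\mathrm{poly}(n)$. Because $H_2$ has no eigenvalues in $(0,\delta_2]$, quantum singular value transformation (QSVT) with a polynomial approximating $\mathbf{1}[x=0]$ on $\{0\}\cup[\delta_2/\alpha,1]$ has degree $O((\alpha/\delta_2)\log(1/\epsilon'))$. This yields a block encoding of $\tilde{\mathcal{P}}_{2,0}$ with $\|\tilde{\mathcal{P}}_{2,0}-\mathcal{P}_{2,0}\|\le\epsilon'$. The reflection $R_2 = 2\mathcal{P}_{2,0}-I$ follows to the same accuracy.

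For the reflection $R_1$ about $\ker H_1$, I would use the gap $\delta_1$ of $H_1$ in the same way, together with the preparation circuit. A purification of $\rho_{\ker H_1}$ gives a maximally entangled state $|\Phi\rangle$ between $\ker H_1$ and a reference register. In the qubitization picture, the product $R_2R_1$ restricted to $\ker H_1$ has eigenphases $\pm 2\arccos\sqrt{a_j}$, where $a_j$ are the eigenvalues of $A$. This is the Marriott--Watrous/Jordan-lemma structure.

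Next, we run phase estimation of $W=R_2R_1$ on the system half of $|\Phi\rangle$ (equivalently, on $\rho_{\ker H_1}$). The resolution must separate eigenvalue $a_j=0$ (phase $\pi$) from $a_j\ge a_{\min}$ (phase at most $\pi-\Omega(\sqrt{a_{\min}})$). Each Jordan block of $A$ is hit with weight $1/\dim\ker H_1$ per eigenvector, so reporting "nonzero" whenever the estimated phase is below threshold gives a Bernoulli variable with mean $r\pm\epsilon/3$. This requires $O(1/\sqrt{a_{\min}}\cdot\log(1/\epsilon))$ uses of $W$. Alternatively, QSVT applied to the block encoding $\Pi_{\ker H_1}\mathcal{P}_{2,0}\Pi_{\ker H_1}$ with a threshold polynomial can replace phase estimation. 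Finally, $O(1/\epsilon^2\cdot\log(1/(1-\mu)))$ repetitions and a Hoeffding bound give $\chi$ within $\epsilon$ with probability at least $\mu$.

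The main obstacle is error accounting: the QSVT-approximate projectors perturb $W$, and the true eigenvalue-zero directions could leak across the threshold. I would first choose $\epsilon'\le\epsilon\sqrt{a_{\min}}/\mathrm{poly}$ so that the operator-norm error of the whole circuit, of polynomial length, stays below $\epsilon/3$ in trace distance. I would then verify that the gaps $\delta_1,\delta_2$ keep both approximations exact up to $\epsilon'$ on the relevant subspaces. A secondary concern is obtaining $R_1$ from the preparation circuit alone. If that proves awkward, I would instead use only the $H_1$ gap and block encoding, which suffices because the input state is already supported on $\ker H_1$.
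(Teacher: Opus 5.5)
Your argument is correct, but it takes a different route from the paper. The paper reruns its \textsf{Normalized Quasi-Persistence} containment proof with $Q=\Pi_{\ker H_2}$. It builds an approximate block-encoding $U_Q$ of $\Pi_{\ker H_2}$ by phase estimation on $H_2$, using only $\delta_2$. It then applies fixed-point amplitude amplification with lower bound $\sqrt{a_{\min}}$, using the oblivious iterate $-U_Q(I-2P_0)U_Q^\dagger(I-2P_0)$ with $P_0=I\otimes\ket{0^a}\bra{0^a}$, and averages the success flag. The gap $\delta_1$ is never used, which is what lets the same text also cover NQP, where no gap of $H_1$ is promised. You instead use both gaps to build reflections about $\ker H_1$ and $\ker H_2$. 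You then read off the Jordan angles $\arccos\sqrt{a_j}$ by phase estimation on $R_2R_1$, or by QSVT thresholding of the singular values of $\Pi_{\ker H_2}\Pi_{\ker H_1}$, at the same $O(a_{\min}^{-1/2}\log(1/\epsilon))$ cost.

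Your version buys something real. The paper's iterate reflects only about the ancilla flag, so it transforms the singular values of $Q$ itself, which are $0$ and $1$, rather than the principal angles between $\ker H_1$ and $\mathrm{Im}\,Q$. Take an eigenvector $\ket{\varphi_j}$ of $\Pi_{\ker H_1}Q\Pi_{\ker H_1}$ with $0<\lambda_j<1$. After that procedure its success probability stays essentially $\lambda_j$, up to the amplification polynomial's value at $1$, instead of being pushed to $1-2\delta$. Read literally, the paper's procedure therefore estimates the trace overlap $\mathrm{Tr}[\rho_{\ker H_1}\Pi_{\ker H_2}]$ rather than the rank. The two agree only when every nonzero $\lambda_j$ equals $1$, as in the paper's hardness instances where $\ker H_2\subseteq\ker H_1$. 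The reflection about $\ker H_1$ is what makes per-Jordan-block amplification or thresholding legitimate. \textsf{Normalized Persistence} makes it available through its $\delta_1$ promise, and your proof supplies it.

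One clarification to your own remarks. $R_1$ must be $(2\Pi_{\ker H_1}-I)\otimes I_{\mathrm{ref}}$, not the reflection about the purification $\ket{\Phi}$. Reflecting about $\ket{\Phi}$ would produce a single Jordan block whose angle encodes the aggregate overlap $\frac{1}{\dim\ker H_1}\sum_j a_j$, which is exactly the quantity you warned is insufficient. So the $\delta_1$-based construction of $R_1$ is not a fallback; it is the step to commit to.
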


A proof of this lemma is provided in Section~\ref{sec:containment_normalized_persistence}.

\begin{theorem}\label{thm:NP_SDQC1}
    \textsf{Normalized Persistence} is $\mathsf{SDQC}_1$-hard {even when the uniform mixture over $\ker H_1$ can be prepared by a polynomial-size quantum circuit.}
\end{theorem}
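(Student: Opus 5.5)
The proof will be a reduction from an arbitrary $\SDQC$ promise problem $L$. Concretely, we build an $O(1)$-local instance $(H_1,H_2=H_1+H_{12})$ whose normalized persistence equals $\dim\mathcal{S}_x/2^{q-1}$ up to inverse-polynomial error. We then invoke the estimate \eqref{eq:power_sdqc}.

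\textbf{Step 1: amplification.} Given $x$ with circuit $U_x$ and subspace $\mathcal{S}_x$, first apply the Marriott--Watrous amplification of Section~\ref{sec:power_sdqc}. This gives $U_x'$ on $N$ qubits, with $O(\log n)$ clean ancillas. It accepts every state of $\mathcal{S}_x$ with probability exactly $1$ and every state of $\mathcal{S}_x^\perp$ with probability at most $p(n)\le 1/\mathrm{poly}(n)$. Write $U_x'=V_T\cdots V_1$ with gates from the fixed gate set $\mathcal{G}$.

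\textbf{Step 2: the Hamiltonians.} Use the same Feynman--Kitaev construction (with a unary clock) as in the $H_{\SDQC}$ construction of the \textsf{LEKD} proof, but split into two parts. Set
\[
H_1=H_{\mathrm{clock}}+H_{\mathrm{prop}}+H_{\mathrm{in}},
\]
where $H_{\mathrm{in}}$ penalizes only the clean qubit and the clean ancillas at time $0$, and leaves the $(q-1)$ maximally mixed qubits free. Set $H_{12}=H_{\mathrm{out}}=\ket{0}\bra{0}_{\mathrm{out}}\otimes\ket{T}\bra{T}_{\mathrm{clock}}$. All terms are PSD and $O(1)$-local. By construction, $\ker H_1$ is exactly the span of the history states $\ket{\eta_\psi}=\frac{1}{\sqrt{T+1}}\sum_t\ket{t}\otimes V_t\cdots V_1\ket{0}\ket{\psi}$, with $\psi$ ranging over $\mathcal{H}_{q-1}$. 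Hence $\dim\ker H_1=2^{q-1}$.

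We need the gap $\delta_1$ of $H_1$ above zero. We obtain it from the standard Kitaev/Projection-lemma bound: $H_1$ restricted to the legal-clock subspace is unitarily equivalent to a path-graph Laplacian plus an input penalty, so its smallest nonzero eigenvalue is $\Omega(1/T^3)$. The map $\psi\mapsto\ket{\eta_\psi}$ is an isometry, so the uniform mixture over $\ker H_1$ is prepared efficiently. Prepare a uniform clock superposition and the maximally mixed register, then apply the controlled circuit; this is the same preparation as Lemma~\ref{lem:prep_hist}, and it gives the "even when" clause.

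\textbf{Step 3: computing the kernel of $H_2$ and the persistence.} Since $H_{12}\succeq 0$, we have $\ker H_2=\ker H_1\cap\ker H_{12}$. A history state $\ket{\eta_\psi}$ is annihilated by $H_{\mathrm{out}}$ iff $U_x'\ket{0}\ket{\psi}$ has zero amplitude on output $0$, i.e. iff $\psi$ is accepted with probability $1$. By perfect completeness and the soundness bound $p<1$, this happens iff $\psi\in\mathcal{S}_x$. Therefore
\[
\ker H_2=\{\ket{\eta_\psi}:\psi\in\mathcal{S}_x\}\subseteq\ker H_1,
\]
so $\mathcal{P}_{2,0}$ acts as the identity on this subspace and annihilates its complement inside $\ker H_1$. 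This gives
\[
\frac{\dim\mathrm{Im}\,\mathcal{P}_{2,0}|_{\ker H_1}}{\dim\ker H_1}=\frac{\dim\mathcal{S}_x}{2^{q-1}}.
\]
By \eqref{eq:power_sdqc}, this ratio is within $p(n)$ of the acceptance probability of $U_x'$ on the mixed input. Choosing $\epsilon$ and $p(n)$ smaller than $(a-b)/3$, for the amplified thresholds, lets an $\epsilon$-estimate decide $x$.

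\textbf{Main obstacle: the gap promise $\delta_2\ge 1/\mathrm{poly}(n)$ for $H_2$.} On $\ker H_1$, the operator $H_{12}$ compresses to $\frac{1}{T+1}$ times the rejection operator $\Pi_{\mathrm{rej}}$ of $U_x'$. The spectral gap condition in the $\SDQC$ definition (soundness $\le p$ on $\mathcal{S}_x^\perp$) gives $\Pi_{\mathrm{rej}}\succeq(1-p)$ on $\mathcal{S}_x^\perp$. Combining this with the $\Omega(1/T^3)$ gap of $H_1$ via Kitaev's geometric lemma (or the projection lemma) should give a smallest nonzero eigenvalue of $H_2$ of $\Omega(1/\mathrm{poly}(T))$. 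The delicate part is verifying that the angle between $\ker H_{12}$ and $\ker H_1$ is bounded away from zero uniformly; this is exactly where perfect completeness plus the $1/3$ soundness gap is used. If needed, rescale $H_{12}$ by a polynomial factor to keep the constants clean.
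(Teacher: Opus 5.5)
Your reduction is the paper's. You use the same split $H_1=H_{\mathrm{clock}}+H_{\mathrm{prop}}+H_{\mathrm{in}}$, $H_{12}=H_{\mathrm{out}}$. You make the same observation that $\ker H_2=\ker H_1\cap\ker H_{12}$ is exactly the span of the histories of $\mathcal{S}_x$, so the persistence ratio is exactly $\dim\mathcal{S}_x/2^{q-1}$. You use Lemma~\ref{lem:prep_hist} for the state-preparation clause and make the same final appeal to \eqref{eq:power_sdqc}.

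Where you genuinely differ is in certifying the gap promises. The paper keeps the weighted Hamiltonian of the \textsf{LEKD} proof (large $J_{\mathrm{in}},J_{\mathrm{prop}},J_{\mathrm{clock}}$ and a $(T+1)H_{\mathrm{out}}$ term). It reads $\delta_1\approx J_{\mathrm{in}}/(2(T+1))$ and $\delta_2\approx(1-p-O(\varepsilon))/2$ off the self-energy perturbation argument of Section~\ref{sec:LENS}. That gives essentially constant gaps and makes explicit that the ratio is the \textsf{LEKD} quantity of $H_2$, matching the scale hierarchy reused in Conjecture~\ref{conj:NHP_realization}.

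You use unit coefficients and an exact argument instead. For $\delta_1$, the projection lemma would need large coefficients, so what you actually need is the exact block structure. $H_{\mathrm{prop}}$ preserves the legal-clock subspace and $H_{\mathrm{clock}}\geq 1$ off it. On that subspace, conjugating by $\sum_t V_t\cdots V_1\otimes\ket{t}\bra{t}$ turns $H_1$ into path Laplacians with or without an endpoint penalty, so $\delta_1=\Omega(1/T^2)$.

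For $\delta_2$, both $H_1$ and $H_{\mathrm{out}}$ annihilate $\ker H_2$ and hence preserve $(\ker H_2)^\perp$. The angle you flagged as delicate follows directly from your compression. For a unit history $\ket{\eta_\psi}$ with $\psi\in\mathcal{S}_x^\perp$, $\|\Pi_{\ker H_{\mathrm{out}}}\ket{\eta_\psi}\|^2=1-\bra{\psi}\Pi_{\mathrm{rej}}\ket{\psi}/(T+1)\leq 1-(1-p)/(T+1)$. So $\sin^2\theta\geq(1-p)/(T+1)$, and the geometric lemma gives $\delta_2\geq\delta_1(1-p)/(2(T+1))=\Omega(1/T^3)$. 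This is inverse-polynomial rather than constant, which is all the promise requires, and no rescaling of $H_{12}$ is needed.

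Your route is non-perturbative and self-contained; the paper only asserts the gap of $H_1$ approximately. The price is smaller gaps. Your explicit penalization of the Marriott--Watrous clean ancillas in $H_{\mathrm{in}}$ is also a detail the paper leaves implicit. It is what makes $\dim\ker H_1=2^{q-1}$ match the normalization in \eqref{eq:power_sdqc}.

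One caveat applies equally to your last step and to the paper's. Condition (iii) of Definition~\ref{def:SDQC1} places $a,b$ on the mixed-state acceptance probability of the unamplified $U_x$. That probability equals $\dim\mathcal{S}_x/2^{q-1}$ plus a contribution from $\mathcal{S}_x^\perp$ of size up to $1/3$. Amplification shrinks this contribution to at most $p$, but it does not carry the separation $a-b$ over to $U_x'$, so in general there are no ``amplified thresholds''. For example, take $\mathcal{S}_x=\{0\}$ with acceptance operator $\tfrac13 I$ (YES for $a=\tfrac13$) versus acceptance operator $0$ (NO for $b=0$). Both give persistence ratio $0$. The argument is valid once (iii) is read as a condition on $U_x'$, equivalently on $\dim\mathcal{S}_x/2^{q-1}$. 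The paper adopts that reading tacitly; you should state it explicitly.
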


A proof of this theorem is provided in Section~\ref{sec:NP}.

\subsection{Normalized Quasi-Harmonic Persistence}

We now turn to the TDA setting.
Consider an inclusion of clique complexes $X_1 \hookrightarrow X_2$ induced by
$G_1 \subseteq G_2$, with $d$-dimensional combinatorial Laplacians
$\Delta_{1,d}$ and $\Delta_{2,d}$.
The following problem is a TDA specialization of \textsf{Normalized Quasi-Persistence},
where the Hamiltonian is replaced by the combinatorial Laplacian
and the kernel by the harmonic subspace.

\begin{problem}[\textsf{Normalized Quasi-Harmonic Persistence}]
\label{prob:normalized_quasi_harmonic_persistence}
\mbox{}\\
\textbf{Input:}
\begin{enumerate}
    \item Real numbers $b,\eta \geq \frac{1}{\mathrm{poly}(n)}$, $\epsilon \geq \frac{1}{\mathrm{poly}(n)}$, $\delta \geq \frac{1}{\mathrm{poly}(n)}$, $\xi \geq \frac{1}{\mathrm{poly}(n)}$, $1 \geq \mu > 1/2$, and target dimension $d \in \mathbb{N}$ s.t. $b +\xi \leq \eta$ and $\xi \leq \delta$.
    \item Graphs $G_1 = (V_1, E_1)$ and $G_2 = (V_1, E_2)$ with $|V_1|=n$, $G_1 \subseteq G_2$, such that $d\leq |V_1|-1$ and vertex weights $\{w(v)\}_{v \in V_1} \subset \mathrm{poly}(n)$.
\end{enumerate}
\textbf{Promise:}
The $d$-dimensional combinatorial Laplacian $\Delta_{1,d}$ of $\mathrm{Cl}(G_1)$ has no eigenvalue in $(\eta, \eta + \delta]$, $\Delta_{2,d}$ of $\mathrm{Cl}(G_2)$ has no eigenvalue in $(\eta, \eta + \delta]$, and $\tilde{\beta}_{d,\eta}^1\neq 0$.\vspace{0.5em}\\ \noindent
\textbf{Output:} An estimate $\chi \in [0,1]$ with probability at least $\mu$ satisfying
\[
\frac{\dim\mathrm{Im}\,\mathcal{P}_{2,b}|_{\mathrm{Im}\,\mathcal{P}_{1,\eta}}}{\tilde{\beta}_{d,\eta}^1} - \epsilon \leq \chi \leq \frac{\dim\mathrm{Im}\,\mathcal{P}_{2,b+\xi}|_{\mathrm{Im}\,\mathcal{P}_{1,\eta}}}{\tilde{\beta}_{d,\eta}^1} + \epsilon,
\]
where $\mathcal{P}_{k,\theta} := \mathrm{Proj}(\mathrm{Span}(\{|\psi_{k,i}\rangle : \lambda_{k,i} \leq \theta\}))$ with $\{(\lambda_{k,i}, |\psi_{k,i}\rangle)\}$ the eigenpairs of $\Delta_{k,d}$, and $\tilde{\beta}_{d,\eta}^1 := \mathrm{rank}\,\mathcal{P}_{1,\eta}$ is the quasi-Betti number of $\mathrm{Cl}(G_1)$.
\end{problem}

In this problem, we specify $G_1, G_2$ to have the same vertex set. This is a common situation in TDA instances, for example with a Vietoris-Rips filtration (Section \ref{sec:prelim}).
Now we state the containment result.

\begin{lemma}[Containment of \textsf{Normalized Quasi-Harmonic Persistence}]
\label{lemma:containment_NQHP}
Assume the following conditions hold:
\begin{enumerate}
  \item We have access to a polynomial-size quantum circuit that prepares
        $\rho_{\mathrm{Im}\,\mathcal{P}_{1,\eta}}$, the uniform mixture over
        $\mathrm{Im}\,\mathcal{P}_{1,\eta}$.
  \item The large overlap condition holds for
        $(\mathcal{P}_{2,b},\, \mathrm{Im}\,\mathcal{P}_{1,\eta})$
        with parameter $a_{\min} \geq 1/\mathrm{poly}(n)$.
\end{enumerate}
Then \textsf{Normalized Quasi-Harmonic Persistence} can be solved in
quantum polynomial time.
\end{lemma}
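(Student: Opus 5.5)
The plan is to estimate the rank of $M:=\mathcal{P}_{2,b}\Pi_1\mathcal{P}_{2,b}$, where $\Pi_1:=\mathcal{P}_{1,\eta}$, by sampling from the uniform mixture $\rho_1:=\rho_{\mathrm{Im}\,\mathcal{P}_{1,\eta}}=\Pi_1/\tilde\beta^1_{d,\eta}$. First I rewrite the target quantity. We have $\dim\mathrm{Im}\,\mathcal{P}_{2,b}|_{\mathrm{Im}\Pi_1}=\mathrm{rank}(\mathcal{P}_{2,b}\Pi_1)=\mathrm{rank}(\Pi_1\mathcal{P}_{2,b}\Pi_1)$. The operator $A:=\Pi_1\mathcal{P}_{2,b}\Pi_1$ has the same nonzero spectrum as $M$. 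By the large overlap condition every nonzero eigenvalue of $A$ therefore lies in $[a_{\min},1]$.

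Define the step function $g$ with $g(x)=0$ for $x<a_{\min}/2$ and $g(x)=1$ for $x\geq a_{\min}$. Then
$$\mathrm{Tr}[g(A)\rho_1]=\frac{\mathrm{rank}A}{\tilde\beta^1_{d,\eta}}.$$
This holds because $\rho_1$ is supported on $\mathrm{Im}\,\Pi_1$ and $A$ acts within that space. So it suffices to estimate the expectation of a function of $A$ on $\rho_1$ to additive error $\epsilon/2$.

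Next I would build block encodings.
\begin{itemize}
\item $\Delta_{1,d}$ and $\Delta_{2,d}$ are sparse, with row-computable entries determined by the graphs and weights. Their norms are $\mathrm{poly}(n)$, since the weights are polynomial and there are polynomially many vertices. Hence they admit efficient block encodings with normalization $\mathrm{poly}(n)$.
\item Apply QSVT with a polynomial approximation of the threshold function to get a block encoding of an approximate projector $\tilde{\mathcal{P}}_2$. It equals $1$ on $[0,b]$ and $0$ above $b+\xi$, with error $\epsilon'$. The degree is $\mathrm{poly}(n)\cdot\log(1/\epsilon')/\xi$.
\item Build $\tilde\Pi_1$ similarly. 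The promise that $\Delta_{1,d}$ has no eigenvalue in $(\eta,\eta+\delta]$ makes this threshold clean.
\end{itemize}
Eigenvalues of $\Delta_{2,d}$ in $(b,b+\xi]$ are the reason for the two-sided output window. I would handle them by bounding the computed quantity between the ranks obtained with $\mathcal{P}_{2,b}$ and with $\mathcal{P}_{2,b+\xi}$. This uses monotonicity of rank of $\mathcal{P}_{2,\theta}$ restricted to $\mathrm{Im}\,\Pi_1$ as $\theta$ increases.

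The estimation step goes as follows. Multiply the block encodings to get $\tilde A=\tilde\Pi_1\tilde{\mathcal{P}}_2\tilde\Pi_1$, then apply a second QSVT polynomial approximating $g$ on $[0,1]$. The transition region $[a_{\min}/2,a_{\min}]$ gives degree $O(\log(1/\epsilon')/a_{\min})$, which is polynomial. Prepare $\rho_1$ with the given circuit, apply the resulting circuit, and use a Hadamard test or amplitude measurement to estimate $\mathrm{Tr}[\tilde g(\tilde A)\rho_1]$. Repeat $O(1/\epsilon^2)$ times and take a median to reach confidence $\mu$.

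The main obstacle is the error analysis near the window $(b,b+\xi]$. There the smoothed projector $\tilde{\mathcal{P}}_2$ is a general contraction rather than a projector, so eigenvalues of $\tilde A$ can fall in the gap $(0,a_{\min})$, and the large overlap condition only controls $\mathcal{P}_{2,b}$. I would attempt a sandwich. Operator monotonicity gives
$$\mathcal{P}_{2,b}\preceq \tilde{\mathcal{P}}_2+\epsilon'\preceq \mathcal{P}_{2,b+\xi}+2\epsilon'.$$
Together with the gap, this should ensure that the count of eigenvalues of $\tilde A$ above $a_{\min}/2$ lies between $\mathrm{rank}(\Pi_1\mathcal{P}_{2,b}\Pi_1)$ and $\mathrm{rank}(\Pi_1\mathcal{P}_{2,b+\xi}\Pi_1)$. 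The lower bound follows from Weyl-type monotonicity on $\mathrm{Im}\,\Pi_1$, where the $\mathcal{P}_{2,b}$ eigenvalues are at least $a_{\min}$. The upper bound holds since any eigenvector of $\tilde A$ above $a_{\min}/2$ cannot be orthogonal to $\mathrm{Im}\,\mathcal{P}_{2,b+\xi}$ beyond $O(\epsilon')$. Eigenvalues lying inside the transition region of $g$ contribute something between the two counts. Dividing by $\tilde\beta^1_{d,\eta}$ and adding the sampling error gives the required bounds.

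If the sandwich fails, the fallback is to randomize the threshold $b'\in[b,b+\xi]$, following the usual LLSD argument, so that few eigenvalues lie near it on average.
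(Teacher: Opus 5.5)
Your argument is correct. Its counting core is the same as the paper's. Both sandwich the filter between $\mathcal P_{2,b}$ and $\mathcal P_{2,b+\xi}$, note that at least $m_-$ eigenvalues of the compressed filter on $\mathrm{Im}\,\mathcal P_{1,\eta}$ stay near or above $a_{\min}$ while at most $m_+$ are nonzero, and average a step function over $\rho_1$.

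The implementation differs, and the difference matters. The paper reruns its \textsf{NQP} containment proof: a phase-estimation cutoff block-encoding $U_Q$ built from $\Delta_{2,d}$, fixed-point amplitude amplification with lower bound $\sqrt{a_{\min}}$, and the empirical success frequency on $\rho_{\mathcal S}$. $\Delta_{1,d}$ is never used beyond the given state. You instead use the gap of $\Delta_{1,d}$ to build $\tilde\Pi_1$ and apply a QSVT step to $\tilde\Pi_1\tilde{\mathcal P}_2\tilde\Pi_1$. This makes the threshold act on the principal angles between $\mathrm{Im}\,\mathcal P_{1,\eta}$ and $\mathrm{Im}\,\mathcal P_{2,b}$.

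That ingredient is genuinely needed. The oblivious iterate $G=-U_Q(I-2P_0)U_Q^\dagger(I-2P_0)$ written in the paper only sees the singular values of the block $Q$ itself, which are $0$ or $1$ away from the window. So it leaves the success probability of $|\varphi_j\rangle$ at $\lambda_j(Q)$. As literally stated, it therefore estimates $\mathrm{Tr}[\Pi_{\mathcal S}Q]/D$ rather than the rank. The paper's FPAA analysis becomes valid once its input reflection is taken about $\mathrm{Im}\,\mathcal P_{1,\eta}\otimes|0^a\rangle$. That reflection is implementable by phase estimation on $\Delta_{1,d}$ thanks to the gap promise, and it is exactly your $\tilde\Pi_1$. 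The corrected FPAA version gains over yours only an $a_{\min}^{-1/2}$ rather than $a_{\min}^{-1}$ outer degree. You recover this by thresholding the singular values of $\tilde{\mathcal P}_2\tilde\Pi_1$ at $\sqrt{a_{\min}}$.

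Three small repairs:
\begin{itemize}
\item Courant--Fischer on $\mathrm{Im}\,\Pi_1$, applied to your operator sandwich, gives $\lambda^{\downarrow}_k(\Pi_1\tilde{\mathcal P}_2\Pi_1)\geq a_{\min}-\epsilon'$ for $k\leq m_-$ and $\lambda^{\downarrow}_k(\Pi_1\tilde{\mathcal P}_2\Pi_1)\leq\epsilon'$ for $k>m_+$. This makes the sandwich rigorous, so no randomized fallback is needed. But it means the transition of $g$ should sit in, say, $[a_{\min}/4,a_{\min}/2]$ rather than $[a_{\min}/2,a_{\min}]$.
\item When $C_d(X_1)$ is embedded in qubit strings, give non-simplices of $X_1$ a penalty above $\eta+\delta$ in the block-encoded $\Delta_{1,d}$. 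Otherwise $\tilde\Pi_1$ contains that whole complement, and $\tilde{\mathcal P}_2$ couples $\mathrm{Im}\,\Pi_1$ into it through simplices of $X_2\setminus X_1$.
\item Take $\epsilon'$ exponentially small, which costs only $\log(1/\epsilon')$ in degree. Then replacing $\Pi_1$ by $\tilde\Pi_1$ is absorbed by the robustness of the outer polynomial.
\end{itemize}
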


A proof of this lemma is provided in Section~\ref{sec:containment_normalized_persistence}. We also state the corresponding hardness theorem.

\begin{theorem}\label{thm: normalized quasi-harmonic persistence}
    \textsf{Normalized Quasi-Harmonic Persistence} is $\DQC$-hard {even when the large overlap condition holds for $(\mathcal{P}_{2,b},\,\mathrm{Im}\,\mathcal{P}_{1,\eta})$ with $a_{\min}=1$ for every oracle query made by the reduction and the uniform mixture over $\mathrm{Im}\,\mathcal{P}_{1,\eta}$ can be prepared by a polynomial-size quantum circuit.}
\end{theorem}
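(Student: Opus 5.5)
The proof will be a reduction from \textsf{LESD for TDA} (Problem~\ref{prob:LESD_TDA}), which is \DQC-hard by Theorem~\ref{thm: LESD for TDA hardness}, even when the uniform mixture over $\mathcal{S}_{\leq\eta}$ is efficiently preparable. Since the hardness of \textsf{LESD for TDA} is itself obtained via a polynomial-time truth-table reduction from \textsf{LENS for TDA}, it suffices to show how to answer each \textsf{LESD for TDA} query with a single \textsf{NQHP} query. Composing the two reductions then gives the theorem.

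Given an \textsf{LESD for TDA} instance $(G,d,\eta,\epsilon,\delta,\xi,\mu,b)$ with weights $w$, we build the \textsf{NQHP} instance with $G_1=G_2=G$, the same weights, and the same $d,\eta,\delta,\epsilon,\mu,b,\xi$. Then $\Delta_{1,d}=\Delta_{2,d}=\Delta_d$, so both gap promises reduce to the \textsf{LESD} promise, and $\tilde\beta^1_{d,\eta}=\dim\mathcal{S}_{\leq\eta}\neq 0$. Both projectors are spectral projectors of the same operator, so they commute, and for any $\theta\le\eta$ we have $\mathcal{P}_{2,\theta}\mathcal{P}_{1,\eta}=\mathcal{P}_{\theta}$. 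Hence
\[
\mathrm{Im}\,\mathcal{P}_{2,\theta}|_{\mathrm{Im}\,\mathcal{P}_{1,\eta}}=\mathcal{S}_{\leq\theta},\qquad \dim = \sum_{0\le\lambda_i\le\theta}1 .
\]
Applying this with $\theta=b$ and $\theta=b+\xi$ shows that the \textsf{NQHP} output window coincides exactly with the \textsf{LESD} window, so any valid $\chi$ for \textsf{NQHP} is a valid answer to \textsf{LESD}. The uniform mixture over $\mathrm{Im}\,\mathcal{P}_{1,\eta}=\mathcal{S}_{\leq\eta}$ is the same state that Theorem~\ref{thm: LESD for TDA hardness} allows us to prepare. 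For the large overlap condition, $\mathcal{P}_{2,b}\Pi_{\mathcal{S}_{\le\eta}}\mathcal{P}_{2,b}=\mathcal{P}_{b}$ is a projector, so every nonzero eigenvalue equals $1$ and $a_{\min}=1$.

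The main obstacle is reconciling parameter constraints. \textsf{NQHP} requires $b\ge 1/\mathrm{poly}(n)$, $b+\xi\le\eta$ and $\xi\le\delta$, whereas \textsf{LESD} allows $0\le b\le\eta$ and $\xi<\delta$. I would handle this by inspecting the query points used in the \textsf{LENS}$\to$\textsf{LESD} quadrature argument, and adjusting that argument if needed, so that every queried $b$ lies in a grid with $b\ge 1/\mathrm{poly}(n)$ and $b+\xi\le\eta$. The lowest query point can be shifted by an inverse-polynomial amount, since thresholds near $0$ contribute negligibly to the subtrace, and $\xi$ can be shrunk so that it does not exceed the step size.

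A second issue is the preparation assumption. \textsf{NQHP} as stated asks for the exact mixture, whereas Theorem~\ref{thm: LESD for TDA hardness} provides only an $\epsilon/3$-approximation. This can be fixed by taking the state from the construction in Theorem~\ref{thm: LENS for TDA Hardness} with its error made inverse-polynomially small. It is enough that the hard instance's state is efficiently preparable to within the accuracy the problem demands.
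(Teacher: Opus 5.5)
Your proposal is correct and follows essentially the paper's route. You take $G_1=G_2=G$ from the \textsf{LESD for TDA} hard instance and use the fact that spectral projectors of a single Laplacian commute, which gives $\mathrm{Im}\,\mathcal{P}_{2,b}|_{\mathrm{Im}\,\mathcal{P}_{1,\eta}}=\mathcal{S}_{\le b}$, denominator $\dim\mathcal{S}_{\le\eta}$, and $a_{\min}=1$. The state is then inherited from the \textsf{LESD for TDA} instance; the paper also only gets it to that instance's accuracy, as you do.

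The one bookkeeping difference is how the constraint $b+\xi\le\eta$ is met. The paper leaves the quadrature untouched and instead sets the \textsf{NQHP} cutoff to $\hat\eta=\eta+\xi$ with gap parameter $\hat\delta=\delta-\xi$, taking $\xi\le\delta/2$. This cutoff lies inside the spectral gap, so $\mathcal{S}_{\le\hat\eta}=\mathcal{S}_{\le\eta}$, and every queried $b\le\eta$ satisfies $b+\xi\le\hat\eta$. Your grid adjustment also works, since only the top point $b=\eta$ is problematic and its value is known to be exactly $1$. Your concern about $b\ge 1/\mathrm{poly}(n)$ is unnecessary, because the smallest grid point is already $\Delta=\eta/M\ge 1/\mathrm{poly}(n)$.
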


A proof of this theorem is provided in Section~\ref{sec:NQHP}.

\subsection{Normalized Harmonic Persistence in TDA}

The non-quasi version estimates the normalized persistent Betti number
$\beta_d^{1,2}/\beta_d^1$, i.e., the fraction of $d$-dimensional holes in $X_1$
that persist into $X_2$.

\begin{problem}[\textsf{Normalized Harmonic Persistence}]
\label{prob:normalized_harmonic_persistence}\mbox{}\vspace{0.5em}\\
\textbf{Input:}
\begin{enumerate}
  \item Real numbers $\epsilon \geq \frac{1}{\mathrm{poly}(n)}$,
        $\delta_1 \geq \frac{1}{\mathrm{poly}(n)}$,
        $\delta_2 \geq \frac{1}{\mathrm{poly}(n)}$,
        $1 \geq \mu > 1/2$, and target dimension $d \in \mathbb{N}$.
  \item Graphs $G_1 = (V_1, E_1)$ and $G_2 = (V_1, E_2)$ with $|V_1|=n$, $G_1 \subseteq G_2$,
        and vertex weights $\{w(v)\}_{v \in V_1} \subset \mathrm{poly}(n)$.
\end{enumerate}
\textbf{Promise:}         The Laplacian $\Delta_{1,d}$ has no eigenvalue in $(0, \delta_1]$,
        $\Delta_{2,d}$ has no eigenvalue in $(0, \delta_2]$, and $\beta_d^1\neq 0$.
        \vspace{0.5em} \\ \noindent
\textbf{Output:} An estimate $\chi \in [0,1]$ with probability at least $\mu$ satisfying
\[
  \frac{\beta_d^{1,2}}{\beta_d^1} - \epsilon
  \;\leq\; \chi \;\leq\;
  \frac{\beta_d^{1,2}}{\beta_d^1} + \epsilon,
\]
where $\beta_d^1 := \dim\mathcal{H}_d^1 = \dim\ker\Delta_{1,d}$
and $\beta_d^{1,2} := \dim\mathrm{Im}\,\Pi_{\mathcal{H}_d^2}|_{\mathcal{H}_d^1}$.
\end{problem}

\begin{lemma}[Containment of \textsf{Normalized Harmonic Persistence}]
\label{lemma:containment_NHP}
Assume that the following conditions hold:
\begin{enumerate}
  \item We have access to a polynomial-size quantum circuit that prepares
        $\rho_{\mathcal{H}_d^1}$, the uniform mixture over
        $\mathcal{H}_d^1 = \ker\Delta_{1,d}$.
  \item The large overlap condition holds for
        $(\Pi_{\mathcal{H}_d^2},\, \mathcal{H}_d^1)$
        with parameter $a_{\min} = 1/\mathrm{poly}(n)$.
\end{enumerate}
Then \textsf{Normalized Harmonic Persistence} can be solved in quantum polynomial time.
\end{lemma}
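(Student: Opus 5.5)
We will estimate $\beta_d^{1,2}/\beta_d^1$ by estimating the trace-overlap quantity $\frac{1}{\beta_d^1}\mathrm{Tr}[\Pi_{\mathcal{H}_d^2}\Pi_{\mathcal{H}_d^1}]$ using the prepared mixture $\rho_{\mathcal{H}_d^1}$ and an approximate projector onto $\ker\Delta_{2,d}$. The large overlap condition will then convert this trace quantity into the rank.

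\textbf{Step 1: the rank bounds.} Let $A=\Pi_{\mathcal{H}_d^2}\Pi_{\mathcal{H}_d^1}\Pi_{\mathcal{H}_d^2}$. Its rank is $\dim\mathrm{Im}\,\Pi_{\mathcal{H}_d^2}|_{\mathcal{H}_d^1}=\beta_d^{1,2}$, and its eigenvalues lie in $[0,1]$. By the large overlap condition, every nonzero eigenvalue is at least $a_{\min}$. Hence
\[
a_{\min}\,\beta_d^{1,2}\leq \mathrm{Tr}A\leq \beta_d^{1,2}.
\]
The trace alone therefore only determines the rank up to a factor $1/a_{\min}$. To get additive error $\epsilon$, I would instead estimate $\mathrm{Tr}\,f(A)$, where $f$ is a polynomial approximating the step function: $f(0)=0$ and $f\approx 1$ on $[a_{\min},1]$. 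Since $\mathrm{Tr}\,f(A)=\mathrm{Tr}\,f(\Pi_{\mathcal{H}_d^1}\Pi_{\mathcal{H}_d^2}\Pi_{\mathcal{H}_d^1})$ by the singular value correspondence, we get $\mathrm{Tr}\,f(A)/\beta_d^1=\mathrm{Tr}[\rho_{\mathcal{H}_d^1}f(\Pi_1\Pi_2\Pi_1)]$, which is exactly an expectation value in the prepared state.

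\textbf{Step 2: implementing the projector.} Block-encode $\Delta_{2,d}$ via the standard sparse-access or boundary-operator block encoding, with norm $\mathrm{poly}(n)$ because the weights are polynomial. Use QSVT or phase estimation to build a block encoding of $\Pi_{\ker\Delta_{2,d}}$. The promise that $\Delta_{2,d}$ has no eigenvalue in $(0,\delta_2]$ makes this possible with error $\epsilon'$ at cost $\mathrm{poly}(n,1/\delta_2,\log 1/\epsilon')$. The projector $\Pi_{\mathcal{H}_d^1}$ acts trivially on the support of $\rho_{\mathcal{H}_d^1}$. Where it is needed inside $f(\Pi_1\Pi_2\Pi_1)$, implement it the same way from $\Delta_{1,d}$ using the gap $\delta_1$. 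An alternative route is singular value transformation of the block-encoded product $\Pi_2\Pi_1$.

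\textbf{Step 3: the polynomial and error budget.} Apply QSVT with an even polynomial threshold of degree $O(a_{\min}^{-1/2}\log(1/\epsilon))$ to the singular values of $\Pi_2\Pi_1$, which lie in $\{0\}\cup[\sqrt{a_{\min}},1]$. Then estimate the acceptance probability on $\rho_{\mathcal{H}_d^1}$ by Hadamard-test sampling with $O(1/\epsilon^2)$ repetitions, boosting success to $\mu$ by a median trick.

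The main obstacle is accumulated error. Imperfect projectors make the zero singular values slightly nonzero, and the polynomial must keep these near $0$. I would handle this by taking the projector precision $\epsilon'\ll a_{\min}$, so the approximate singular values stay separated into $[0,O(\epsilon')]$ and $[\sqrt{a_{\min}}-O(\epsilon'),1]$. The threshold polynomial can then be placed at $\sqrt{a_{\min}}/2$. Summing the sampling, polynomial and projector errors keeps the total estimate within $\epsilon$ of $\beta_d^{1,2}/\beta_d^1$.
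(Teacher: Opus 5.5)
Your proof is correct, and its skeleton matches the paper's. The paper reruns its \textsf{Normalized Quasi-Persistence} containment argument with $\mathcal{S}=\mathcal{H}_d^1$. It obtains an approximate block-encoding $U_Q$ of $\Pi_{\ker\Delta_{2,d}}$ by phase estimation on $\Delta_{2,d}$ at precision below $\delta_2/4$. It then applies fixed-point amplitude amplification with $\sin\theta_{\min}=\sqrt{a_{\min}}$ to samples of $\rho_{\mathcal{H}_d^1}$. FPAA is itself a threshold polynomial of degree $O(a_{\min}^{-1/2}\log(1/\delta))$, so your QSVT threshold, its degree, and the sampling/median estimation are the same ingredients.

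The one substantive difference is the input-side reflection, and it works in your favour.

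\begin{itemize}
\item You build $\Pi_{\mathcal{H}_d^1}$ from the $\delta_1$ gap and transform the singular values of $\Pi_2\Pi_1$, which lie in $\{0\}\cup[\sqrt{a_{\min}},1]$.
\item The paper's write-up never uses $\delta_1$. Its Grover iterate is built from $U_Q$ and the ancilla reflection $I-2P_0$ alone.
\item With only those reflections, the block being transformed is $P_0U_QP_0=\Pi_{\ker\Delta_{2,d}}\otimes\ket{0^a}\bra{0^a}$, whose singular values are only $0$ and $1$.
\item Hence, for any such phase sequence, the success probability on $\ket{\varphi_j}$ is affine in $\|\Pi_{\ker\Delta_{2,d}}\ket{\varphi_j}\|^2$. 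Forcing it to be $0$ on the kernel makes it at most $\|\Pi_{\ker\Delta_{2,d}}\ket{\varphi_j}\|^2$. So there is no amplification, and one recovers only the trace-overlap quantity, which determines $\beta_d^{1,2}$ only up to a factor $1/a_{\min}$.
\end{itemize}

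The per-eigenvector amplification really requires a reflection about $\mathcal{H}_d^1\otimes\ket{0^a}$, i.e.\ your $\Pi_1$. The NHP promise on $\Delta_{1,d}$ is exactly what makes that reflection implementable, so your route is the one under which the argument closes.

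Two minor points. An even threshold polynomial gives only $f(0)\approx 0$. This is harmless, because you estimate $\mathrm{Tr}[\rho_{\mathcal{H}_d^1}f(\Pi_1\Pi_2\Pi_1)]$ directly, which is within $\max_{\{0\}\cup[a_{\min},1]}|f-\mathbf{1}_{>0}|$ of $\beta_d^{1,2}/\beta_d^1$. For robustness to imperfect projectors, it is cleaner to bound the whole QSVT circuit error by (degree)$\times\epsilon'$ than to track singular values alone, since the singular vectors move as well.
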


A proof of this lemma is provided in Section~\ref{sec:containment_normalized_persistence}.

Unlike the other persistence problems considered in this paper, we do not establish $\mathsf{SDQC}_1$-hardness of $\textsf{Normalized Harmonic Persistence}$ directly. Instead, we record it as a conjecture.

\begin{conjecture}\label{conj:NHP_SDQC1}
$\textsf{Normalized Harmonic Persistence}$ is $\mathsf{SDQC}_1$-hard.
\end{conjecture}

In Section~\ref{sec:NHP}, we outline a condition that would imply the above conjecture.

\section{Hardness for Local-Hamiltonian Subspace Problems}
\label{sec:hardness_proof}

\subsection{Preparing Uniform Mixtures of History States}

For the circuit-to-Hamiltonian construction with $\DQC$-type computation, a key subspace is the subspace spanned by the history states with all possible initial states on the mixed register. We show an efficient state preparation for that state.

\begin{lemma}
\label{lem:prep_hist}
Let $U=U_T\cdots U_1$ be an $n$-qubit $\DQC$ circuit whose first input qubit is initialized to $\ket{0}$.
For each $i\in\{0,1\}^{n-1}$, define the history state
$$
\ket{\eta_i}
= \frac{1}{\sqrt{T+1}}\sum_{t=0}^T
U_t\cdots U_1(\ket{0}\otimes\ket{i})\otimes\ket{t}^c,
$$
where the product is the identity for $t=0$.
Then we can prepare
$$
\rho_{hist}= \frac{1}{2^{n-1}}
\sum_{i\in\{0,1\}^{n-1}} \ket{\eta_i}\bra{\eta_i},
$$
with $\mathrm{poly}(n)$-size quantum circuits.
\end{lemma}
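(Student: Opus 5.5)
The plan is to prepare the mixture by purification: correlate the mixed register with a fresh copy, build the history state coherently, then discard the copy. Concretely, prepare $n-1$ Bell pairs, giving $\frac{1}{\sqrt{2^{n-1}}}\sum_{i}\ket{i}_R\ket{i}_A$. Tracing out the reference register $R$ leaves the data register $A$ maximally mixed. It is therefore enough to implement, coherently for every basis input $\ket{i}_A$, the map
\[
\ket{0}\ket{i}_A\ket{\text{clock init}}\mapsto\ket{\eta_i}.
\]
Since this map is linear and independent of $i$, applying it and tracing out $R$ yields exactly $\frac{1}{2^{n-1}}\sum_i\ket{\eta_i}\bra{\eta_i}=\rho_{hist}$. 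The cross terms vanish because the $\ket{i}_R$ are orthogonal.

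The coherent history-state map is built in two steps. First, prepare the clock in the uniform superposition $\frac{1}{\sqrt{T+1}}\sum_{t=0}^T\ket{t}^c$, using whatever clock encoding the circuit-to-Hamiltonian construction uses. For a binary clock with $T+1$ not a power of two, use a standard exact uniform-superposition circuit, or pad $T$ up to a power of two with identity gates. For the unary/domain-wall clock $\ket{1^t0^{T-t}}$ that the $O(1)$-local construction needs, first prepare the binary superposition. Then convert binary to unary reversibly with a poly-size classical circuit, and uncompute the binary register; this works because the unary-to-binary map is also efficiently computable.

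Second, for $s=1,\dots,T$ in order, apply $U_s$ to the computational register, controlled on the clock satisfying $t\geq s$. In the unary clock this condition is just the $s$-th clock bit being $1$, so each step is a singly-controlled gate. After all $T$ steps, the branch with clock $t$ carries $U_t\cdots U_1(\ket{0}\otimes\ket{i})$, which is precisely $\ket{\eta_i}$. Each controlled-$U_s$ decomposes into $O(1)$ gates, so the total size is $\mathrm{poly}(n,T)=\mathrm{poly}(n)$.

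The main obstacle is ensuring that the clock encoding and the history state match the specific construction used later, so the prepared state lies exactly in the intended subspace. This means handling the unary clock exactly rather than approximately. If exactness with a fixed gate set is an issue, I would pad $T$ so that $T+1=2^m$, which makes the uniform clock superposition exactly Hadamards. Alternatively, I would accept exponentially small error, which suffices for the trace-distance requirements in Theorem~\ref{thm: low energy subtrace}.
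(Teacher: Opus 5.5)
Your proposal is correct and is essentially the paper's proof: prepare the clean qubit and a maximally mixed data register (you via Bell pairs and discarding the reference, the paper directly), rotate the clock into the uniform superposition $\frac{1}{\sqrt{T+1}}\sum_{t}\ket{t}^c$, and apply each $U_s$ controlled on $t\geq s$, which for the unary clock is a single control on the $s$-th clock bit. One caveat: padding $T$ with identity gates changes the history states themselves, so that option is legitimate only if the padding is done before the Hamiltonian is built; your other two options, an exact uniform-superposition circuit or accepting exponentially small error, avoid this.
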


\begin{proof}

This state can be prepared in the following procedure:

\begin{itemize}
    \item Prepare the uniform mixture over the mixed register, together with
    the clean input qubit and the initial clock state:
    $$
    \ket{0}\bra{0}\otimes \frac{I_{n-1}}{2^{n-1}}\otimes \ket{0}\bra{0}^c=
    \frac{1}{2^{n-1}}\sum_{x\in \{0,1\}^{n-1}} \ket{0,x}\bra{0,x}\otimes \ket{0}\bra{0}^c.
    $$
    \item
    Let $\ket{\bm{T}}:=\frac{1}{\sqrt{T+1}}\sum_{t=0}^T\ket{t}$. Let $U_{\bm{T}}$ be a unitary s.t.
    $
    U_{\bm T}\ket{0}= \ket{\bm T}.
    $
    Then apply $U_{\bm T}$ to obtain
    $$
    \ket{0}\bra{0}\otimes \frac{I_{n-1}}{2^{n-1}} \otimes \ket{\bm T}\bra{\bm T}.
    $$
    \item Let $\Pi_{\geq s}^c:=\sum_{t=s}^T\ket{t}\bra{t}^c$ and
    $
    \hat{U}_s:=
    U_s\otimes \Pi_{\geq s}^c+ I_n\otimes (I_c-\Pi_{\geq s}^c).
    $
    Then, by applying $\hat{U}_T\cdots\hat{U}_1$, each clock branch $\ket{t}^c$
    accumulates $U_t\cdots U_1$, and we obtain $\rho_{hist}$.
\end{itemize}

\end{proof}

\subsection{$\DQC$-hardness of \textsf{LENS}}
\label{sec:LENS}
To prove this result we need the following preliminary results from perturbation theory:
\begin{definition}
    Consider a perturbed Hamiltonian $\tilde{H}=H+V$, where $H$ has spectral gap $\Delta$ and $V$ is a small perturbation. Moreover assume also that the ground-state energy of $H$ is zero. We denote by $\Pi_-$ and $\Pi_+$ the projectors onto the ground space of $H$ and its orthogonal complement respectively. Then for any operator X, we let $X_{\pm\mp}:= \Pi_{\pm}X\Pi_{\mp}$. We consider also the self energy operator for $\tilde{H}$, $\Sigma_-(z)$. This is formally defined in $\cite{kempeComplexityLocalHamiltonian2005}$ but we will only require the following series expansion from \cite{brandao2008entanglement}:
   $$\Sigma_-(z) = V_{--} + \sum_{k=0}^\infty V_{-+}(G_{++}V_{++})^kG_{++}(V_{++}G_{++})^kV_{+-}$$
    where $G_{++}$ denotes the resolvent of $H_{++}$. That is, $G_{++}(z)=(z{I}_{++}-H_{++})^{-1}$
\end{definition}
This allows us to state the following result of \cite{oliveiraComplexityQuantumSpin2008, kempeComplexityLocalHamiltonian2005}
\begin{lemma}\label{lemma:Perturbation Theory}
    Let $\tilde{H}=H+V$, where $H$ has zero ground-state energy and spectral gap $\Delta$. Assume also that $\norm{V}\leq \Delta/2$. We denote by $\tilde{H}|_{<\Delta/2}$ the restriction of $\tilde{H}$ to the eigenspace corresponding to the spectrum below $\Delta/2$.

    If there exists $\varepsilon >0$ and a Hamiltonian $H_{\text{eff}}$ with spectrum in $[a,b]$, where $a<b<\Delta/2 - \varepsilon$, such that
    $$\norm{\Sigma_-(z)-H_{\text{eff}}}\leq\varepsilon$$
    for every $z\in [a-\varepsilon,b+\varepsilon]$. Then,
    $$|\lambda_j(\tilde{H}|_{<\Delta/2})-\lambda_j(H_{\text{eff}})|\leq\varepsilon$$
    for every $j\in\{0,\ldots,\dim(\tilde{H}|_{<\Delta/2})-1\}$, where $\lambda_j$ denotes the $j$'th smallest eigenvalue.
\end{lemma}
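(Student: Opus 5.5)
The plan is to follow the resolvent (Feshbach–Schur complement) argument of Kempe, Kitaev and Regev, applied to the exact self-energy. The series expansion in the preceding Definition is the Neumann expansion of the closed form
\[
\Sigma_-(z)=V_{--}+V_{-+}\bigl(z I_{++}-H_{++}-V_{++}\bigr)^{-1}V_{+-},
\]
and we work with this closed form.

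\textbf{Step 1 (well-definedness).} Since $H_{++}\geq \Delta$ and $\norm{V}\leq\Delta/2$, we have $H_{++}+V_{++}\geq \Delta/2$. Hence $\tilde G_{++}(z)=(z-H_{++}-V_{++})^{-1}$ exists for every $z<\Delta/2$. On that range $\Sigma_-(z)$ is a Hermitian, analytic family of operators on $\mathrm{Im}\,\Pi_-$.

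\textbf{Step 2 (monotonicity).} Differentiating gives
\[
\frac{d}{dz}\Sigma_-(z)=-V_{-+}\tilde G_{++}(z)^2V_{+-}\preceq 0,
\]
so $\Sigma_-(z)$ is operator non-increasing in $z$.

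\textbf{Step 3 (counting identity).} For $z<\Delta/2$, write
\[
z-\tilde H=\begin{pmatrix} z-V_{--} & -V_{-+}\\ -V_{+-} & z-H_{++}-V_{++}\end{pmatrix}
\]
in the $\Pi_-\oplus\Pi_+$ block decomposition. The lower block is negative definite. By Sylvester's law of inertia applied to the Schur complement, the inertia of $z-\tilde H$ equals that of $z-\Sigma_-(z)$ plus that of the negative-definite $++$ block. Hence, with $N(z)$ denoting the number of eigenvalues below $z$,
\[
N_{\tilde H}(z)=N_{\Sigma_-(z)}(z)\qquad \text{for all } z<\Delta/2 .
\]
This identity, together with the bookkeeping of degeneracies at $z$ itself (handled by working with strict and non-strict counts separately), is the main obstacle. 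I expect the careful inertia/Haynsworth argument to be the bulk of the proof.

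\textbf{Step 4 (comparison inside the window).} For $z\in[a-\varepsilon,b+\varepsilon]$, Weyl's inequality and the hypothesis $\norm{\Sigma_-(z)-H_{\text{eff}}}\leq\varepsilon$ give
\[
N_{H_{\text{eff}}}(z-\varepsilon)\leq N_{\tilde H}(z)\leq N_{H_{\text{eff}}}(z+\varepsilon).
\]
In particular, at $z=a-\varepsilon$ we get $N_{\tilde H}(a-\varepsilon)=0$. At $z=b+\varepsilon$ the count is $\dim H_{\text{eff}}=\mathrm{rank}\,\Pi_-$.

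\textbf{Step 5 (no eigenvalues between $b+\varepsilon$ and $\Delta/2$).} For $z\in(b+\varepsilon,\Delta/2)$, Step 2 gives
\[
\Sigma_-(z)\preceq\Sigma_-(b+\varepsilon)\preceq H_{\text{eff}}+\varepsilon\preceq b+\varepsilon<z,
\]
so $N_{\tilde H}(z)=\mathrm{rank}\,\Pi_-$ throughout this range. It follows that $\tilde H|_{<\Delta/2}$ has exactly $\dim H_{\text{eff}}$ eigenvalues, and all of them lie in $[a-\varepsilon,b+\varepsilon]$.

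\textbf{Step 6 (conclusion).} The two counting functions $N_{\tilde H}$ and $N_{H_{\text{eff}}}$ interlace up to shifts of $\pm\varepsilon$ on this window. Translating this into ordered eigenvalues (the $j$-th eigenvalue of $\tilde H$ is the infimum of $z$ with $N_{\tilde H}(z)>j$) gives
\[
\bigl|\lambda_j(\tilde H|_{<\Delta/2})-\lambda_j(H_{\text{eff}})\bigr|\leq\varepsilon
\]
for every $j\in\{0,\ldots,\dim(\tilde H|_{<\Delta/2})-1\}$, as claimed.
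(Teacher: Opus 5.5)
Your argument is correct and is essentially the standard self-energy argument behind the result that the paper imports from Kempe--Kitaev--Regev and Oliveira--Terhal; the paper gives no proof of its own. The ingredients are the Schur complement of $z-\tilde H$ with respect to the negative-definite $\Pi_+$ block, inertia additivity giving $N_{\tilde H}(z)=N_{\Sigma_-(z)}(z)$ for $z<\Delta/2$, and Weyl's inequality on $[a-\varepsilon,b+\varepsilon]$. The monotonicity step is in fact dispensable, since the inertia identity already caps the number of eigenvalues of $\tilde H$ below $\Delta/2$ at $\mathrm{rank}\,\Pi_-$.

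One small correction: the series displayed in the paper's Definition, $\sum_k V_{-+}(G_{++}V_{++})^kG_{++}(V_{++}G_{++})^kV_{+-}$, contains only the even-order terms. So it is not literally the Neumann expansion of your closed form, as you claim. Your closed form $V_{--}+V_{-+}(z-H_{++}-V_{++})^{-1}V_{+-}$ is nevertheless the correct KKR self-energy, and it is the object the lemma refers to.
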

\begin{proof}[Proof of Theorem~\ref{thm: low energy subtrace}]
    We reduce directly from $\DQC$-computing, following a similar argument to Brandao \cite{brandao2008entanglement}. In this work Brandao adapted Kitaev's circuit to Hamiltonian construction for a $\DQC$, rather than $\QMA$, reduction. We adapt Brandao's construction to use a unary encoding for the clock register. This ensures $O(1)$ locality of the Hamiltonian, at the cost of extra qubit overhead.

    Consider an $r$-bit instance $x$ of a problem $L\in \DQC$. Then $\exists$ a quantum circuit $U_x = U_T...U_1$, $T = \text{poly}(r)$, acting on $N = \text{poly}(r)$ qubits.

    We will construct an $O(1)$-local Hamiltonian such that the `low-energy subtrace' is $\frac{1}{\mathrm{poly}(n)}$ close to the acceptance probability
    $$\mu_{yes} =
    \mathrm{Tr} \left(U_x  \left(\ket{0}\bra{0}\otimes \frac{I_{N-1}}{2^{N-1}} \right)U_x^\dagger
    \left(\ket{1}\bra{1}\otimes {I_{N-1}} \right)
    \right)
    $$
    We use the following $n=N+T$-qubit Hamiltonian:
    $$H_{\DQC} = (T+1)H_{out} + J_{in}H_{in}+J_{prop}H_{prop} +J_{clock}H_{clock}$$
    where here,
    \begin{align}
        H_{out} &= \ket{0}\bra{0}_1\otimes\ket{1}\bra{1}_T^c\\
        H_{in} &= \ket{1}\bra{1}_1\otimes\ket{000}\bra{000}_{1,2,3}^c\\
        H_{clock}&= \sum_{i=1}^{T-1} I\otimes \ket{01}\bra{01}_{i,i+1}^c\\
        H_{prop} &= \sum_{i=1}^TH_{prop,i}
    \end{align}
    with
    $$H_{prop,i} = I\otimes\ket{100}\bra{100}_{i-1,i,i+1}^c - U_i\otimes\ket{110}\bra{100}_{i-1,i,i+1}^c - U_i^\dagger\otimes\ket{100}\bra{110}_{i-1,i,i+1}^c + I\otimes\ket{110}\bra{110}_{i-1,i,i+1}^c$$
    The above Hamiltonian acts on $N+T$ qubits, with the first $N$ qubits encoding the computation and the last $T$ qubits acting as the clock register. We distinguish these by using a $c$ superscript on states in the clock register.

    It will be useful to note that the following states form an orthonormal basis of the zero eigenspace of $H = J_{prop}H_{prop} + J_{clock}H_{clock}$

    $$\ket{\eta_i} = \frac{1}{\sqrt{T+1}}\sum_{t=0}^TU_t...U_1\ket{i}\otimes\ket{t}^c$$
    where above $i\in\{0,...,2^N-1\}$ and $\ket{t}^c$ denotes the value $t$ encoded in unary. That is, the state of $t$ ones followed by $T-t$ zeroes. It is easy to see that each $\ket{\eta_i}$ is a zero eigenstate. To see that they span the ground space, first note that $H_{prop}$ and $H_{clock}$ are both positive semidefinite. Hence any ground state of $H$ must be a zero eigenstate of both $H_{clock}$ and $H_{prop}$. Since $H_{clock}$ gives an energy penalty to any state that is not a valid unary clock state, this means that any ground state of $H$ must have the form:
    $$\ket{\psi} = \sum_{t=0}^T\ket{\psi_t}\otimes\ket{t}^c$$
    where $\ket{\psi_t}$ are (not necessarily normalized) $N$-qubit states. It is then useful to see that each $H_{prop,t} = A_t^{\dagger}A_t$, where
    $$A_t = I\otimes\ket{t}\bra{t}^c-U_t\otimes\ket{t}\bra{t-1}^c$$
    This then implies that, for $\ket{\psi}$ of the form above
    \begin{align}
        \bra{\psi}H_{prop,t}\ket{\psi} &= \norm{A_t\ket{\psi}}^2\\
        &= \norm{\ket{\psi_t}-U_t\ket{\psi_{t-1}}}^2
    \end{align}
    Hence $\ket{\psi}$ is a zero eigenvector of $H_{prop}$ if and only if $\ket{\psi_t} = U_t\ket{\psi_{t-1}}$ for all $t$. We then have free choice over the initial state $\ket{\psi_0}$. Taking this to be each of the $2^N$ computational basis states $\ket{i}$ yields the particular spanning set $\ket{\eta_i}$.\\

    Next we wish to apply lemma \ref{lemma:Perturbation Theory} to $\tilde{H}=H+V$, with $H = J_{prop}H_{prop}+J_{clock}H_{clock}$ and $V=(T+1)H_{out}+J_{in}H_{in}$. First, if we set $J_{prop}=J_{clock}$ then we have $\Delta \geq J_{prop}\Omega(T^{-3})$, where $\Delta$ is the spectral gap of $H$. This can be seen as our form of $H_{prop}$ and $H_{clock}$ are the same as in Brandao's outline of the QMA-hardness of 5-local Hamiltonians \cite{brandao2008entanglement}. Alternatively, one can compare this construction to that of Lemma 3.11 in \cite{aharonovAdiabaticQuantumComputation2005}. This lower bound makes it clear that taking $J_{prop} = \text{poly}(n)$ sufficiently large we can make $\Delta=\text{poly}(n)$. Now we can upper bound the resolvent of $H_{++}$ as $\norm{G_{++}(z)}=\norm{(z{I}_{++}-H_{++})^{-1}}\leq |z-\Delta|^{-1}$ for $z\leq\Delta/2$. Combining this with the lower bound on $\Delta$ gives $\norm{G_{++}(z)}\leq CT^3J_{prop}^{-1}$, for some $C>0$. We note also that since the projectors have spectral norm 1, $\norm{V_{\pm \mp}}\leq\norm{V}$. Then we can upper bound $\norm{V}$ as:
    \begin{align}
        \norm{V}&\leq (T+1)\norm{H_{out}}+J_{in}\norm{H_{in}}\\
        &= (T+1) + J_{in}
    \end{align}
    Now letting $\Pi$ denote the projector onto the ground space of $H$ we can consider the series expansion for $\Sigma_-(z)$, where again we take $z < \Delta/2$
    \begin{align}
        \norm{\Sigma_-(z)-V_{--}} &= \norm{\sum_{k=0}^\infty V_{-+}(G_{++}V_{++})^kG_{++}(V_{++}G_{++})^kV_{+-}}\\
        &\leq \norm{V}^2\norm{G_{++}}\sum_{k=0}^\infty\norm{G_{++}}^{2k}\norm{V}^{2k}\\
        &\leq (T+1+J_{in})^2CT^3J_{prop}^{-1}\sum_{k=0}^\infty (CT^3J_{prop}^{-1})^{2k}(T+1+J_{in})^{2k}
    \end{align}
    Now taking $J_{in}\geq T+1$ and $J_{prop} = C\varepsilon^{-1} T^3J_{in}^2$, where $\varepsilon$ is a chosen parameter such that $\varepsilon = 1/\text{poly}(n)$.
    We see that
    \begin{align}
        \norm{\Sigma_-(z)-V_{--}} &\leq (2J_{in})^2\varepsilon J_{in}^{-2} \sum_{k=0}^\infty \left(\frac{\varepsilon}{J_{in}^2} 2J_{in}\right)^{2k}\\
        &=4\varepsilon\sum_{k=0}^\infty\left( \frac{2\varepsilon}{J_{in}}\right)^{2k}\\
        &=O(\varepsilon)
    \end{align}
    \\
    Therefore, we can apply Lemma \ref{lemma:Perturbation Theory} and we see that, for eigenvalues below $\Delta/2$, the spectrum of $H_{\DQC}$ is $O(\varepsilon)$ close to the spectrum of $H_{eff} = (T+1)\Pi H_{out}\Pi + J_{in} \Pi H_{in}\Pi$ restricted to the ground space of $H$. Note that we can write $\Pi$ as
    $$\Pi = \sum_{i=0}^{2^N-1}\ket{\eta_i}\bra{\eta_i}$$
    which means that one can show
    \begin{equation}\label{eqn:proj_Hin}
        J_{in}\Pi H_{in}\Pi = \frac{J_{in}}{T+1}\sum_{i=2^{N-1}}^{2^N-1} \ket{\eta_i}\bra{\eta_i}.
    \end{equation}
    Note here we are using the convention that on the computational register, the state $\ket{i}$ denotes $i$ in binary with the most significant bit occurring on the first qubit. This means that the sum in equation \ref{eqn:proj_Hin} is just over all $\ket{\eta_i}$ that have $1$ on the first qubit in the computational register. It is then clear that the ground space of $J_{in}\Pi H_{in}\Pi$ is spanned by vectors of the form $\ket{\eta_i}$ for $i\in \{0,...,2^{N-1}-1\}$, i.e., the history states with $0$ on the first qubit of the computational register. Moreover the spectral gap of this term is given by
    $$\Delta_{in} = \frac{J_{in}}{T+1}$$
    Taking $J_{in} = \text{poly}(n)$ sufficiently large we can then make $\Delta_{in}=\text{poly}(n)$. Again, we wish to apply lemma \ref{lemma:Perturbation Theory}, but this time with $J_{in}\Pi H_{in} \Pi$ as the main Hamiltonian H and $(T+1)\Pi H_{out}\Pi$ as the perturbation V. Proceeding as before we can bound the resolvent of $H_{++}$ as
    $$\norm{G_{++}(z)}\leq \frac{T+1}{2J_{in}}$$
    for $z\leq \frac{J_{in}}{2(T+1)}$. Noting that $\norm{V_{\pm \mp}}\leq T+1$ also we have the following
    \begin{align}
        \norm{\Sigma_-(z)-V_{--}} &\leq (T+1)^2 \left(\frac{T+1}{2J_{in}}\right)\sum_{k=0}^\infty (T+1)^{2k}\left(\frac{T+1}{2J_{in}}\right)^{2k}\\
        &\leq O(\varepsilon)
    \end{align}
    where we have taken $J_{in} = (T+1)^3\varepsilon^{-1}$.
    Let $\tilde{\Pi}$ denote the projector onto the ground space of $\Pi H_{in}\Pi$. By lemma \ref{lemma:Perturbation Theory} it follows that for energies below $\frac{J_{in}}{2(T+1)}$ the spectrum of $(T+1)\tilde{\Pi}H_{out}\tilde{\Pi}$ is $O(\varepsilon)$ close to the spectrum of $J_{in}\Pi H_{in}\Pi + (T+1)\Pi H_{out}\Pi$ and hence also $O(\varepsilon)$ close to the spectrum of the original Hamiltonian $H_\DQC$. Now consider the ground space of $\Pi H_{in}\Pi$ which is given by
    $$S = \text{Span}\{\ket{\eta_i}\}_{i=0}^{2^{N-1}-1}$$
    Note that $\text{dim}(S) = 2^{N-1}$. Also consider the following operator $A$ acting on the last $N-1$ qubits of the computational register:
    \begin{equation}\label{eqn: A}
    A = (\bra{0}_1\otimes {I}_{N-1})U_x^{\dagger}(\ket{0}\bra{0}_1\otimes {I}_{N-1})U_x(\ket{0}_1\otimes{I}_{N-1})
    \end{equation}
    One can see that $A$ is Hermitian and hence we can pick an orthonormal eigenbasis $\ket{\psi_i}\in (\mathbb{C}^2)^{\otimes(N-1)}$ with corresponding real eigenvalues $\tilde{\lambda}_i$. Fixing this eigenbasis, we can define the corresponding history states
    \begin{equation}
        \ket{\tilde{\eta_i}} = \frac{1}{\sqrt{T+1}}\sum_{t=0}^TU_t...U_1\ket{0,\psi_i}\otimes\ket{t}
    \end{equation}
    Since the $\ket{\psi_i}$ are orthonormal, it follows that the $\ket{\tilde{\eta}_i}$ are orthonormal also. Decomposing $\ket{\psi_i}$ into the computational basis, one can also see that each $\ket{\tilde{\eta}_i}\in S$ and so the $\ket{\tilde{\eta_i}}$ form a basis of $S$. This means we can write the projector onto $S$ as
    $$\tilde{\Pi} = \sum_{i=0}^{2^{N-1}-1}\ket{\tilde{\eta_i}}\bra{\tilde{\eta_i}}$$
    and so we can see that the action of $(T+1)\tilde{\Pi}H_{out}\tilde{\Pi}$ on $\ket{\tilde{\eta}_i}$ is given by
	    \begin{align}
	        (T+1)\tilde{\Pi}H_{out}\tilde{\Pi}\ket{\tilde{\eta_i}}&=\frac{T+1}{\sqrt{T+1}}\tilde{\Pi}(\ket{0}\bra{0}_1\otimes\ket{1}\bra{1}_T^c)\sum_{t=0}^TU_T...U_1\ket{0,\psi_i}\otimes\ket{t}^c\\
	        &=\frac{T+1}{\sqrt{T+1}}\tilde{\Pi}[(\ket{0}\bra{0}_1\otimes{I}_{N-1})U_x(\ket{0}_1\otimes\ket{\psi_i})]\otimes\ket{T}^c\\
	        &=\sum_{j=0}^{2^{N-1}-1}\ket{\tilde{\eta_j}}\bra{0,\psi_j}U_x^{\dagger}(\ket{0}\bra{0}_1\otimes {I}_{N-1})U_x\ket{0,\psi_i}\\
	        &=\sum_{j=0}^{2^{N-1}-1}\ket{\tilde{\eta}_j}\bra{\psi_j}A\ket{\psi_i}\\
	        &=\tilde{\lambda}_i\ket{\tilde{\eta}_i}\label{eqn:lambda}
	    \end{align}
	    Hence, it is clear that the $\ket{\tilde{\eta}_i}$ form an orthonormal eigenbasis of $(T+1)\tilde{\Pi}H_{out}\tilde{\Pi}$ with eigenvalues $\tilde{\lambda}_i$. It is also clear that $\norm{A}\leq1$ which means $\tilde{\lambda}_i\leq 1$ for all $i\in\{0,...,2^{N-1}-1\}$, so all $2^{N-1}$ of the eigenvalues are below the $\frac{J_{in}}{2(T+1)}$ threshold. This implies that the original Hamiltonian $H_\DQC$ has $2^{N-1}$ eigenvalues below $\frac{J_{in}}{2(T+1)}$ and that they are all $O(\varepsilon)$ from the $\tilde{\lambda}_i$ in equation \ref{eqn:lambda}. Moreover, it is clear from the above that there exists $\delta=1/\text{poly}(n)$ such that $H_\DQC$ has no spectrum in the interval $[\frac{J_{in}}{2(T+1)}-\delta,\frac{J_{in}}{2(T+1)}+\delta]$. It follows also that the normalized sum of the first $2^{N-1}$ eigenvalues of $H_\DQC$ is $O(\varepsilon)$ away from the normalized sum of the $\tilde{\lambda}_i$ since
    \begin{align}
        \norm{\frac{1}{2^{N-1}}\sum_{i=0}^{2^{N-1}-1}\lambda_i(H_\DQC)-\frac{1}{2^{N-1}}\sum_{i=0}^{2^{N-1}-1}\tilde{\lambda}_i}&\leq \frac{1}{2^{N-1}}\sum_{i=0}^{2^{N-1}-1}|\lambda_i(H_\DQC)-\tilde{\lambda}_i|\\
        &\leq O(\varepsilon)
    \end{align}
    It is also clear that given the average energy of the $\tilde{\lambda}_i$ we can calculate $\mu_{yes}$ since
    \begin{align}
       \frac{1}{2^{N-1}}\sum_{i=0}^{2^{N-1}-1}\tilde{\lambda}_i &= \frac{1}{2^{N-1}}\sum_{i=0}^{2^{N-1}-1}\bra{\tilde{\eta_i}}A\ket{\tilde{\eta}_i}\\
       &=\frac{1}{2^{N-1}}\sum_{i=0}^{2^{N-1}-1}\bra{0,\psi_i}U_x^{\dagger}(\ket{0}\bra{0}\otimes{I}_{N-1})U_x\ket{0,\psi_i}\\
        &=1 - \mu_{yes}
    \end{align}
    Therefore, if we can estimate the average energy of the $\tilde{\lambda}_i$ to $1/\text{poly}(n)$ precision we can determine whether $x\in L_{yes}$ or not. To complete the reduction it remains to show that we can estimate the average energy of the $\tilde{\lambda}_i$ with an instance of \textsf{LENS}. If we take $\eta = \frac{J_{in}}{2(T+1)}$, $\epsilon = \Theta(\varepsilon)$ and $\delta$ as above then we have
$$\left| \frac{\sum_{0\leq\lambda_i\leq\eta}\lambda_i}{\dim(S_{\leq \eta})} - \frac{1}{2^{N-1}}\sum_{i=0}^{2^{N-1}-1}\tilde{\lambda}_i\right|\leq O(\varepsilon).$$
    It follows that the output estimate $\chi$ will be within $O(\varepsilon) = 1/\text{poly}(n)$ of $1-\mu_{yes}$. Hence for any instance of a $\DQC$ problem, there exist parameters for \textsf{LENS} such that $1-\chi$ is within arbitrary $1/\mathrm{poly}(n)$ additive error of $\mu_{yes}$. This is sufficient to solve the $\DQC$ instance and so it follows that \textsf{LENS} is $\DQC$-hard.

    From Lemma~\ref{lem:prep_hist}, we can efficiently prepare the uniform mixture over the unperturbed history subspace $S$. By taking the input and propagation penalties sufficiently large compared with the output term, the low-energy projector $\mathcal{P}_{\leq\eta}$ of $H_{\DQC}$ satisfies
    \[
    \|\mathcal{P}_{\leq\eta}-\Pi_S\|\leq \frac{\epsilon}{6\max\{1,\|H_{\DQC}\|\}},
    \]
    and the two subspaces have the same dimension.
    Since the two projectors have the same rank, this operator-norm bound implies that the corresponding uniform mixtures are within trace distance $\epsilon/(3\max\{1,\|H_{\DQC}\|\})$.
    Therefore the uniform mixture over $S$ is within the required trace-distance error of $\rho_{\mathcal{S}_{\leq\eta}}$. Thus the hardness instance satisfies the state-preparation condition in Theorem~\ref{thm: low energy subtrace}.
    It is also clear that $H_\DQC$ does not have eigenvalues in the interval $(\eta,\eta+\delta]$.
\end{proof}

\subsection{$\DQC$-hardness of \textsf{LESD}}
\label{sec:LESD}
Our proof of this is similar to the \DQC-hardness result of \textsf{LLSD} found in \cite{gyurik2022towards}. They show that the normalized subtrace of a Hamiltonian can be estimated using polynomially many queries to an oracle for \textsf{LLSD}. The \DQC-hardness then follows from Brandao's proof that normalized subtrace estimation is \DQC-hard for log-local Hamiltonians. Following the same lines, we show that any instance of \textsf{LENS} can be solved using polynomially many queries to an oracle for \textsf{LESD}.
\begin{proof}[Proof of Theorem~\ref{thm: LESD}]
    Consider an instance of \textsf{LENS} with parameters $\eta,\epsilon,\delta \geq 1/\text{poly}(n)$ and Hamiltonian $H$ as specified by Problem \ref{prob:LENS}. We proceed in the same way as \cite{gyurik2022towards}, by splitting the interval $[0,\eta]$ into polynomially many sub-intervals and using queries to an oracle for \textsf{LESD} to estimate the fraction of eigenvalues in each subinterval. We can then estimate the Low-energy normalized subtrace using a quadrature style approximation. To set this up we define the following:
   $$
    M=\left\lceil\frac{3\eta}{\epsilon}\right\rceil,\hspace{1em}
    \Delta=\frac{\eta}{M},\hspace{1em}
    \tilde{\epsilon}=\frac{\epsilon}{6M\eta},\hspace{1em}
    \xi < \min\{\Delta/3,\delta\}
   $$
   where $\xi$ is chosen such that $H$ has no eigenvalues in the interval
   {$(\eta,\eta+\xi]$}. Note we can do this since, from the proof of theorem \ref{thm: low energy subtrace} there exists $\delta \geq \frac{1}{\text{poly}(n)}$ such that $H$ has no eigenvalue in
   {$(\eta,\eta+\delta]$}. We also define the interval endpoints $x_j=(j+1)\Delta$ for $j\in\{-1,0,...,M-1\}$. Then we perform queries to an oracle for \textsf{LESD} with parameter values $b=x_j$ and $\tilde\epsilon,\eta,\delta,\xi$. We denote the outcomes of these queries by $\tilde{\chi_j}$. That is, the $\tilde\chi_j$ are at most $\tilde\epsilon$ far from the quasi-true values $Y_j$
    $$Y_j= \tilde{y}_j + \tilde{\gamma_j}$$
    $\tilde{y}_j$ are the true low-energy spectral densities given by
    $$\tilde{y}_j = \frac{1}{\dim S_{\leq \eta}}\sum_{0\leq \lambda_i \leq x_j}1$$
    and $\tilde{\gamma_j}$ are the errors due to $\xi$, which satisfy
    $$0\leq\tilde{\gamma}_j\leq \frac{1}{\dim S_{\leq\eta}}\sum_{
       {x_j<}\lambda_i\leq x_j +\xi} 1.$$
	    Now we define $\chi_0=\tilde\chi_0, y_0=\tilde y_0, \gamma_0 = \tilde{\gamma}_0$ and $\chi_j=\tilde\chi_j-\tilde\chi_{j-1}$, $y_j = \tilde y_j - \tilde y_{j-1}$, $\gamma_j = \tilde{\gamma}_j-\tilde{\gamma}_{j-1}$ for $1\leq j\leq M-1$. {Since $Y_j=\tilde{y}_j+\tilde{\gamma}_j$, these definitions imply $y_j+\gamma_j=Y_j-Y_{j-1}$ for $j\geq1$ and $y_0+\gamma_0=Y_0$. Hence, using $|\tilde{\chi}_j-Y_j|\leq\tilde{\epsilon}$, we have
	    \[
	    |(y_j+\gamma_j)-\chi_j|=|(Y_j-\tilde{\chi}_j)
        -(Y_{j-1}-\tilde{\chi}_{j-1})|\leq 2\tilde{\epsilon}
	    \]
	    for every $0\leq j\leq M-1$.} This allows us to define the estimate $\Lambda$ as well as a useful intermediate quantity $\Gamma$:
    $$\Lambda = \sum_{j=0}^{M-1}\chi_jx_j,\hspace{1em}\Gamma=
    {\sum_{j=0}^{M-1}(y_j+\gamma_j)x_j}.$$
    We first bound the difference between $\Lambda$ and $\Gamma$ as:
    \begin{align}
        |\Gamma-\Lambda|&\leq\sum_{j=0}^{M-1}\left|
        {y_j+\gamma_j}-\chi_j \right|x_j\\
        &\leq \sum_{j=0}^{M-1}2\tilde\epsilon x_j
        \leq 2M\tilde\epsilon\eta
        =\frac{\epsilon}{3}
    \end{align}
    Now we see the difference between $\Gamma$ and the low-energy spectral density is given by:
    \begin{align}
        \left|\Gamma - \frac{1}{\dim S_{\leq\eta}}\sum_{0\leq\lambda_i\leq\eta}\lambda_i\right|&=\left|\sum_{j=0}^{M-1}
        {(y_j+\gamma_j)}x_j - \frac{1}{\dim S_{\leq\eta}}\sum_{0\leq\lambda_i\leq\eta}\lambda_i \right|\\
        &\leq \left|\sum_{j=0}^{M-1}y_jx_j - \frac{1}{\dim S_{\leq\eta}}\sum_{0\leq \lambda_i\leq\eta}\lambda_i\right| + \left| \sum_{j=0}^{M-1}\gamma_jx_j \right|.
    \end{align}
    We bound the first term as follows:
    \begin{align}
        \Bigg|\sum_{j=0}^{M-1}y_jx_j - &\frac{1}{\dim S_{\leq\eta}}\sum_{0 \leq \lambda_i\leq\eta}\lambda_i\Bigg|
        = \left| \sum_{j=0}^{M-1}y_jx_j - \frac{1}{\dim S_{\leq \eta}}\sum_{j=0}^{M-1}\sum_{\lambda_i\in [x_{j-1}, x_{j}]}\lambda_i\right| \\
        &\leq \sum_{j=0}^{M-1} \left|\frac{x_j}{\dim S_{\leq \eta}}\sum_{\lambda_i \in [x_{j-1},x_j]}1 - \frac{1}{\dim S_{\leq\eta}}\sum_{\lambda_i \in [x_{j-1}, x_j]} \lambda_i
        \right| \\
        &= \frac{1}{\dim S_{\leq \eta}}\sum_{j=0}^{M-1} \left| \sum_{\lambda_i\in[x_{j-1},x_j]}x_j - \sum_{\lambda_i\in[x_{j-1},x_j]}\lambda_i\right|\\
        &\leq\frac{1}{\dim S_{\leq\eta}}\sum_{j=0}^{M-1}\sum_{\lambda_i\in[x_{j-1},x_j]}|x_j-\lambda_i|\\
        &\leq\frac{1}{\dim S_{\leq\eta}}\sum_{j=0}^{M-1}\sum_{\lambda_i\in[x_{j-1},x_j]}|x_j-x_{j-1}|\\
        &\leq\Delta\frac{1}{\dim S_{\leq\eta}}\sum_{j=0}^{M-1}\sum_{\lambda_i\in[x_{j-1},x_j]}1\\
        &=\Delta \frac{1}{\dim S_{\leq\eta}}\sum_{0\leq\lambda_i\leq\eta}1= \Delta \leq \frac{\epsilon}{3}
    \end{align}
    For the second term we can consider
    \begin{align}
        \sum_{j=0}^{M-1}\gamma_jx_j &= \tilde{\gamma}_0x_0 + \sum_{j=1}^{M-1}(\tilde{\gamma}_j - \tilde{\gamma}_{j-1})x_j\\
        &=\sum_{j=0}^{M-1}\tilde{\gamma}_jx_j - \sum_{j=1}^{M-1}\tilde{\gamma}_{j-1}x_{j}\\
        &= \sum_{j=0}^{M-1}\tilde{\gamma}_jx_j - \sum_{j=1}^{M-1}\tilde{\gamma}_{j-1}(x_{j-1} +\Delta)\\
        &= \tilde{\gamma}_{M-1}x_{M-1} - \Delta\sum_{j=1}^{M-1}\tilde{\gamma}_{j-1}
    \end{align}
    Note that $\tilde{\gamma}_{M-1}=0$, since $H$ has no eigenvalues in the interval {$(\eta, \eta+\xi]$}. Also note that
    $$\sum_{j=1}^{M-1}\tilde{\gamma}_{j-1}\leq 1$$
    since $\tilde{\gamma}_j\leq \frac{1}{\dim S_{\leq \eta
    }}\sum_{x_j \leq \lambda_i\leq x_j +\xi}1$. Together this allows us to conclude
   $$\left|\sum_{j=0}^{M-1}\gamma_jx_j\right|\leq \Delta = \frac{\epsilon}{3}.$$

	    Combining these three bounds we clearly have that:
	    $$\left|\Lambda - \frac{1}{\dim S_{\leq \eta}}\sum_{0\leq\lambda_i\leq\eta}\lambda_i\right|\leq \epsilon$$
	    Therefore we can solve any instance of \textsf{LENS} with a polynomial number of non-adaptive queries to an oracle for \textsf{LESD}, as well as polynomial-time postprocessing to compute the estimate $\Lambda$.

        For the \textsf{LENS} instances constructed in Theorem~\ref{thm: low energy subtrace}, we take the input and propagation penalties sufficiently large compared with the output term so that the uniform mixture over the unperturbed history subspace is close to $\rho_{\mathcal{S}_{\leq\eta}}$ within trace-distance at most $\tilde{\epsilon}/3$, where $\tilde{\epsilon}$ is the precision used in the \textsf{LESD} queries above.
        Since \textsf{LENS} is \DQC-hard for $O(1)$-local Hamiltonians by Theorem~\ref{thm: low energy subtrace}, we can conclude that \textsf{LESD} is \DQC-hard via a polynomial-time truth-table reduction.
\end{proof}

\subsection{Hardness of Normalized Quasi-Persistence}
\label{sec:NQP}

Now we prove Theorem~\ref{thm:normalized_quasi_persistence}.

\begin{proof}[Proof of Theorem~\ref{thm:normalized_quasi_persistence}]
    As above let
    $$
    H_1= J_{in}H_{in}+J_{prop}H_{prop}+J_{clock}H_{clock}
    $$
    $$
    H_2=H_1+ (T+1)H_{out}.
    $$
    {The kernel of $H_1$ is the unperturbed history subspace $S$ from Lemma~\ref{lem:prep_hist}, so the uniform mixture $\rho_{\ker H_1}$ can be prepared by a polynomial-size quantum circuit.}
	    Set $\eta = \frac{J_{in}}{2(T+1)}$. It is clear that $H_1$ and $H_{12}$ are both PSD. From the perturbative argument in Theorem~\ref{thm: low energy subtrace}, by taking the penalty terms sufficiently large compared with the output term, the low-energy projector $\mathcal{P}_{2,\eta}$ of $H_2$ satisfies
	    \[
	    \|\mathcal{P}_{2,\eta}-\Pi_{\ker H_1}\|\leq \alpha
	    \]
	    for an inverse-polynomially small $\alpha$. Since $\mathcal{P}_{2,b}\leq \mathcal{P}_{2,\eta}$ for every $b\leq\eta$, every unit vector $\ket{\psi}\in\mathrm{Im}\,\mathcal{P}_{2,b}$ satisfies
	    \[
	    \|(I-\Pi_{\ker H_1})\ket{\psi}\|\leq\alpha,
	    \qquad
	    \bra{\psi}\Pi_{\ker H_1}\ket{\psi}\geq 1-\alpha^2.
	    \]
	    Hence all eigenvalues of
	    $\mathcal{P}_{2,b}\Pi_{\ker H_1}\mathcal{P}_{2,b}$ restricted to
	    $\mathrm{Im}\,\mathcal{P}_{2,b}$ are at least $1-\alpha^2$, and the large overlap condition holds for $(\mathcal{P}_{2,b},\ker H_1)$ with $a_{\min}\geq1-1/\mathrm{poly}(n)$ for every queried $b\leq\eta$.
	    Finally it is clear from the proof of theorem \ref{thm: low energy subtrace} that we can choose $\delta\geq\frac{1}{\text{poly}(n)}$ such that $H_2$ does not have an eigenvalue in the interval $(\eta,\eta+\delta]$.

	   The same overlap bound also implies that
	    $\Pi_{\ker H_1}|_{\mathrm{Im}\,\mathcal{P}_{2,b}}$ is injective for every $b\leq\eta$. Therefore
	    \[
	    \dim\mathrm{Im}\,\mathcal{P}_{2,b}|_{\ker H_1}
	    =
	    \dim\mathrm{Im}\,\mathcal{P}_{2,b}
	    =
	    \sum_{0\leq\lambda_{2,i}\leq b}1.
	    \]
	    Thus an algorithm for \textsf{Normalized Quasi-Persistence} can output an estimate of
	    $$\frac{\sum_{0\leq\lambda_{2,i}\leq b}1}{\dim\ker H_1}$$
    for $H_1,H_2$ as above and for any $0\leq b \leq \eta$. But for our choice of $\eta$, $\dim\ker H_1 = \dim S_{\leq \eta} = 2^{N-1}$. Hence for this choice of $\eta$, an algorithm for \textsf{Normalized Quasi-Persistence} also allows us to solve \textsf{LESD}. It follows from the above, and Theorem~\ref{thm: LESD}, that polynomially many non-adaptive queries to an oracle for \textsf{Normalized Quasi-Persistence} suffice to approximate the \textsf{Low-energy Subtrace} of $H_\DQC$ to inverse-polynomial precision. Finally, Theorem~\ref{thm: low energy subtrace} implies that this is sufficient to solve an arbitrary \DQC{} instance. Hence we conclude that \textsf{Normalized Quasi-Persistence} is \DQC-hard under a polynomial-time truth-table reduction.
\end{proof}

\subsection{$\SDQC$-hardness of $\mathsf{LEKD}$}
\label{sec:LEKD}

The proof is a direct reduction from $\mathsf{SDQC}_1$, using the same circuit-to-Hamiltonian construction as the $\mathsf{DQC}_1$-hardness proof of $\textsf{LENS}$ (Section~\ref{sec:LENS}). The key difference is that perfect completeness of the $\mathsf{SDQC}_1$ circuit yields an \emph{exact} kernel structure, whereas the \textsf{LENS} proof relies on the approximate spectrum. We show that a single query to $\textsf{LEKD}$ suffices to decide the $\mathsf{SDQC}_1$ instance.

\begin{proof}[Proof of Theorem~\ref{thm:LEKD_hardness}]

\textbf{Setup.}
Let $x$ be an instance of an $\mathsf{SDQC}_1$ problem of size $n$, with circuit $U_x$ and acceptance subspace $\mathcal{S}_x \subseteq \mathcal{H}_{q(n)-1}$. As shown in Section~\ref{sec:sdqc1}, by applying Marriott--Watrous amplification \cite{marriott2005quantum} within $\mathsf{DQC}_1$ we obtain an amplified circuit $U_x'$ on $N-1 = q'(n) = \mathrm{poly}(n)$ input qubits (plus a fresh first qubit and a unary clock register of $T = \mathrm{poly}(n)$ qubits) such that:
\begin{itemize}
    \item \emph{Perfect completeness}: $U_x'$ accepts any $\ket{\psi} \in \mathcal{S}_x$ with probability exactly $1$.
    \item \emph{Soundness}: $U_x'$ accepts any $\ket{\psi} \in \mathcal{S}_x^\perp$ with probability at most $p(n) \leq 1/\mathrm{poly}(n)$ (as opposed to the original upper bound $1/3$ for $U_x$).
\end{itemize}
The amplification preserves the acceptance subspace $\mathcal{S}_x$.

\paragraph{Circuit-to-Hamiltonian.}
Apply the same construction as in Section~\ref{sec:LENS} to $U_x'$, with the same coefficients $J_{\mathrm{in}}, J_{\mathrm{prop}}, J_{\mathrm{clock}}$ chosen as polynomial functions of $n$:
\[
    H_{\mathrm{SDQC}} = (T+1)H_{\mathrm{out}} + J_{\mathrm{in}}H_{\mathrm{in}} + J_{\mathrm{prop}}H_{\mathrm{prop}} + J_{\mathrm{clock}}H_{\mathrm{clock}}.
\]
Define, for any $\ket{\psi} \in (\mathbb{C}^2)^{\otimes(N-1)}$, the history state
\[
    \ket{\eta(\psi)} := \frac{1}{\sqrt{T+1}}\sum_{t=0}^T U_t'\cdots U_1'\ket{0,\psi}\otimes\ket{t}^c.
\]
As in Section~\ref{sec:LENS}, the history states $\{\ket{\eta_i}\}_{i=0}^{2^N-1}$ (one for each computational basis state $\ket{i}$) form an orthonormal basis of $\ker(J_{\mathrm{prop}}H_{\mathrm{prop}} + J_{\mathrm{clock}}H_{\mathrm{clock}})$, and $H_{\mathrm{in}}$ restricts to the subspace $S := \mathrm{Span}\{\ket{\eta(\psi)} : \ket{\psi} \in (\mathbb{C}^2)^{\otimes(N-1)}\}$ with $\dim S = 2^{N-1}$.

As established above, the kernel of $J_{\mathrm{in}}H_{\mathrm{in}} + J_{\mathrm{prop}}H_{\mathrm{prop}} + J_{\mathrm{clock}}H_{\mathrm{clock}}$ is exactly $S$. Adding the PSD term $(T+1)H_{\mathrm{out}}$, a history state $\ket{\eta(\psi)} \in S$ lies in $\ker H_{\mathrm{SDQC}}$ iff $H_{\mathrm{out}}\ket{\eta(\psi)} = 0$, i.e.\ iff $U_x'$ accepts $\ket{\psi}$ with certainty. By perfect completeness and soundness this holds iff $\ket{\psi}$ lies in the acceptance subspace, which we denote $\mathcal{S} := \mathcal{S}_x \subseteq (\mathbb{C}^2)^{\otimes(N-1)}$. Hence
\[
    \ker H_{\mathrm{SDQC}} = \mathrm{Span}\big\{\ket{\eta(\psi)} : \ket{\psi} \in \mathcal{S}\big\},
    \qquad
    \dim\ker H_{\mathrm{SDQC}} = \dim\mathcal{S}.
\]

\paragraph{Low-energy subspace and spectral gap.}
	By the same perturbation theory argument as in Section~\ref{sec:LENS}, with $\eta = \frac{J_{\mathrm{in}}}{2(T+1)}$ chosen so that $\eta > 1$ (which holds for $J_{\mathrm{in}} > 2(T+1)$), $H_{\mathrm{SDQC}}$ has exactly $2^{N-1}$ eigenvalues below $\eta$, all $O(\varepsilon)$-close to the eigenvalues $\{\tilde{\lambda}_i\}$ of $(T+1)\tilde{\Pi}H_{\mathrm{out}}\tilde{\Pi}|_S$ for $\tilde{\Pi}$ the projector onto $S$.

By the eigenstructure computation of Section~\ref{sec:LENS}, $\tilde{\lambda}_i = 1 - \mathrm{Pr}[U_x'\text{ accepts }\ket{\psi_i}]$, where $\{\ket{\psi_i}\}$ ranges over the eigenbasis of $(\mathbb{C}^2)^{\otimes(N-1)}$ diagonalising the acceptance operator $A$. Hence:
\begin{itemize}
    \item For $\ket{\psi_i} \in \mathcal{S}$: $\tilde{\lambda}_i = 1 - 1 = 0$.
    \item For $\ket{\psi_i} \in \mathcal{S}^\perp$: $\tilde{\lambda}_i \geq 1 - p(n) \geq 1 - 1/\mathrm{poly}(n)$.
\end{itemize}
The eigenvalues of $H_{\mathrm{SDQC}}$ corresponding to $\mathcal{S}^\perp$ are therefore $\geq 1 - p(n) - O(\varepsilon)$. Setting $\varepsilon = p(n)/2 \leq 1/\mathrm{poly}(n)$, the spectral gap is
\[
    \delta := 1 - p(n) - O(\varepsilon) \geq \frac{1}{2} - O(p(n)) \geq \frac{1}{\mathrm{poly}(n)}.
\]
In particular, $H_{\mathrm{SDQC}}$ has no eigenvalue in $(0, \delta]$, satisfying the input condition of $\textsf{LEKD}$ (Problem~\ref{prob:LEKD}).

\paragraph{Reduction.}
	We have verified all input conditions for $\textsf{LEKD}$: $H_{\mathrm{SDQC}}$ is $O(1)$-local, has energy cutoff $\eta = \frac{J_{\mathrm{in}}}{2(T+1)} \geq 1/\mathrm{poly}(n)$, precision $\epsilon \geq 1/\mathrm{poly}(n)$, and spectral gap $\delta \geq 1/\mathrm{poly}(n)$. A single query to $\textsf{LEKD}$ with these parameters outputs an estimate $\chi$ satisfying
\[
    \left|\chi - \frac{\dim\ker H_{\mathrm{SDQC}}}{\dim\mathcal{S}_{\leq\eta}}\right| \leq \epsilon,
\]
i.e.,
\[
    \left|\chi - \frac{\dim\mathcal{S}}{2^{N-1}}\right| \leq \epsilon.
	\]
In eq.~\eqref{eq:power_sdqc}, we have shown that after amplification, the acceptance probability of $\mathsf{SDQC}_1$ satisfies
\[
\frac{\dim\mathcal{S}}{2^{N-1}}
\leq
\mathrm{Tr}\!\left( U_x' \!\left( \ket{0}\bra{0} \otimes \frac{I_{N-1}}{2^{N-1}} \right)\! U_x'^\dagger \!\left( \ket{1}\bra{1} \otimes I_{N-1} \right) \right)
\leq
\frac{\dim\mathcal{S}}{2^{N-1}} + p(n).
\]
Therefore, by taking $p(n) < (a(n)-b(n))/3$, estimation of $\frac{\dim\mathcal{S}}{2^{N-1}}$ within an inverse-polynomial error $\epsilon < (a(n)-b(n))/3$ is sufficient to solve the original $\mathsf{SDQC}_1$ instance.

        By Lemma~\ref{lem:prep_hist}, the uniform mixture over the unperturbed history subspace $S$ can be prepared efficiently. As in the proof of Theorem~\ref{thm: low energy subtrace}, taking the input and propagation penalties sufficiently large compared with the output term makes $\|\mathcal{P}_{\leq\eta}-\Pi_S\|\leq\epsilon/6$ while preserving the inverse-polynomial gap above $\eta$. Therefore this preparation is within trace distance $\epsilon/3$ of $\rho_{\mathcal{S}_{\leq\eta}}$, and the hardness instance satisfies the state-preparation condition of Lemma~\ref{lemma:containment_LEKD}. Hence $\textsf{LEKD}$ is $\mathsf{SDQC}_1$-hard.
\end{proof}

\subsection{Hardness of Exact Normalized Persistence}
\label{sec:NP}

The proof is analogous to the $\mathsf{DQC}_1$-hardness proof of $\textsf{Normalized Quasi-Persistence}$ (Section~\ref{sec:NQP}), with two replacements:
\begin{itemize}
    \item The underlying spectral problem changes from $\textsf{LESD}$ to $\textsf{LEKD}$ (Section~\ref{sec:LEKD}).
    \item The $b=0$ case makes $\ker H_2 \subseteq \ker H_1$ hold \emph{exactly}, so the $\textsf{Normalized Persistence}$ quantity collapses to $\dim\ker H_2/\dim\ker H_1$ without approximation.
\end{itemize}

\begin{proof}[Proof of Theorem~\ref{thm:NP_SDQC1}]

\textbf{Setup.}
Use the same $\mathsf{SDQC}_1$ instance $x$ and amplified circuit $U_x'$ as in the $\textsf{LEKD}$ hardness proof (Section~\ref{sec:LEKD}), with perfect completeness on $\mathcal{S}$ and soundness $\leq p(n) \leq 1/\mathrm{poly}(n)$ on $\mathcal{S}^\perp$.
Define
\[
    H_1 := J_{in}H_{in} + J_{prop}H_{prop} + J_{clock}H_{clock},
    \qquad
    H_2 := H_1 + (T+1)H_{out},
\]
with the same coefficients as in Sections~\ref{sec:LENS} and~\ref{sec:LEKD}. Both $H_1$ and $H_2$ are PSD, and $H_{12} := (T+1)H_{\mathrm{out}}$ is PSD, satisfying the input conditions of $\textsf{Normalized Persistence}$ (Definition~\ref{prob:normalized_persistence}).

\paragraph{Kernel of $H_1$.}
Since $H_{\mathrm{out}}$ does not appear in $H_1$, the kernel of $H_1$ is exactly
\[
    \ker H_1 = S := \mathrm{Span}\bigl\{\ket{\eta(\psi)} : \ket{\psi} \in (\mathbb{C}^2)^{\otimes(N-1)}\bigr\},
    \qquad \dim\ker H_1 = 2^{N-1}.
\]
The spectral gap of $H_1$ above $0$ is $\Delta_{\mathrm{in}} = J_{\mathrm{in}}/(T+1) \geq 1/\mathrm{poly}(n)$, so $H_1$ has no eigenvalue in $(0,\delta_1]$ for $\delta_1 = \Delta_{\mathrm{in}}/2$.
{As before, the uniform mixture $\rho_{\ker H_1}$ can be prepared by a polynomial-size quantum circuit due to Lemma~\ref{lem:prep_hist}.}

\paragraph{Kernel of $H_2$ and inclusion $\ker H_2 \subseteq \ker H_1$.}
From the $\textsf{LEKD}$ hardness proof (Section~\ref{sec:LEKD}):
\[
    \ker H_2 = \mathrm{Span}\bigl\{\ket{\eta(\psi_i)} : \ket{\psi_i} \in \mathcal{S}\bigr\},
    \qquad \dim\ker H_2 = \dim\mathcal{S}.
\]
Since every $\ket{\phi} \in \ker H_2$ must satisfy $H_{\mathrm{in}}\ket{\phi} = H_{\mathrm{prop}}\ket{\phi} = H_{\mathrm{clock}}\ket{\phi} = 0$, we have $H_1\ket{\phi} = 0$, i.e., $\ker H_2 \subseteq \ker H_1$ exactly. Furthermore, $H_2$ has no eigenvalue in $(0,\delta_2]$ for
\[
    \delta_2 := \frac{1 - p(n) - O(\varepsilon)}{2} \geq \frac{1}{\mathrm{poly}(n)},
\]
as established in Section~\ref{sec:LEKD}.

\paragraph{$\textsf{Normalized Persistence}$ quantity equals $\textsf{LEKD}$.}
Since $\ker H_2 \subseteq \ker H_1$, the restriction of $\Pi_{\ker H_2}$ to $\ker H_1$ is just $\Pi_{\ker H_2}$ itself:
\[
    \mathrm{Im}\,\mathcal{P}_{\ker H_2}\big|_{\ker H_1} = \ker H_2.
\]
Therefore the $\textsf{Normalized Persistence}$ quantity is
\[
    \frac{\dim\mathrm{Im}\,\mathcal{P}_{\ker H_2}|_{\ker H_1}}{\dim\ker H_1}
    = \frac{\dim\ker H_2}{\dim\ker H_1}
    = \frac{\dim\mathcal{S}}{2^{N-1}}.
\]
This is exactly the $\textsf{LEKD}$ quantity for $H_2$ (with $\eta = \frac{J_{\mathrm{in}}}{2(T+1)} > 1$ so that $\dim\mathcal{S}_{\leq\eta}(H_2) = 2^{N-1} = \dim\ker H_1$).

\paragraph{Reduction.}
A single query to $\textsf{Normalized Persistence}$ with parameters $H_1$, $H_2$, $\delta_1$, $\delta_2$, and $\epsilon < (a(n)-b(n))/2$ outputs an estimate of $\dim\mathcal{S}/2^{N-1}$ to precision $\epsilon$, which suffices to decide the $\mathsf{SDQC}_1$ instance. By Lemma~\ref{lem:prep_hist}, $\rho_{\ker H_1}$ can be prepared efficiently. Hence $\textsf{Normalized Persistence}$ is $\mathsf{SDQC}_1$-hard.
\end{proof}

\section{Hardness for TDA Problems}
\label{sec:TDA_hardness}
Our proof strategy for proving $\DQC$-hardness of the analogous problems in TDA is to first show that \textsf{LENS for TDA} is \DQC-hard. From there we can reduce to \textsf{LESD for TDA} and this then allows a reduction to \textsf{Normalized Quasi-Harmonic Persistence}. This allows us to conclude that all three problems are \DQC-hard. The first of these reductions will be the most technically involved and the others will follow naturally.

\subsection{\DQC-hardness of \textsf{LENS for TDA}}\label{sec:LENS_TDA}

In order to prove that \textsf{LENS for TDA} is \DQC-hard we need to establish some notions related to Hamiltonian simulation. The idea here is to simulate a given Hamiltonian by encoding it in the low-energy subspace of a Hamiltonian acting on a larger space. We first state the following lemma.

\begin{lemma}[Universal Simulation with combinatorial Laplacians]\label{thm: Universality of Combinatorial Laplacians}
Consider an $n$-qubit, $k$-local Hamiltonian $ H=\sum_{i=1}^M h_i$ acting on $\mathcal{H}_{target}= \mathbb{C}^{2^n}$
where $M\in \mathrm{poly}(n)$ and $\|H\| \in \mathrm{poly}(n)$. Then for any $\epsilon > \frac{1}{\text{poly}(n)}$
there exists a vertex weighted $G=(V,E)$ and $d = O(\mathrm{poly}(n))$ where $|V|= \text{poly}(n)$ and where the vertex weight $0<w(v)\leq \mathrm{poly}(n,1/\epsilon)$ for any $v\in V$ s.t. the $d$-dimensional combinatorial Laplacian of $\mathrm{Cl}(G)$ (the clique complex of $G$) with vertex-product weight simulates $H$ with error $\epsilon$ in the low-energy subspace below some energy threshold $\Lambda = \mathrm{poly}(n, 1/\epsilon)$. Here we take simulation to mean that there exists an encoding map $\mathcal{F}:\mathcal{B}(\mathcal{H}_{target})\to \mathcal{B}(C_d(G))$ such that
$$\|\Delta_d(G)_{\leq \Lambda} - \mathcal{F}(H+CI)\| \leq \epsilon$$
where $\Delta_d(G)_{\leq \Lambda}$ denotes $Q_\Lambda \Delta_d(G)Q_{\Lambda}$ and $Q_{\Lambda} = \mathrm{Proj}\{\ket{\phi}\in C_d(G)| \Delta_d(G)\ket{\phi} = \lambda\ket{\phi}, \lambda\leq\Lambda \}$ and $C$ is an efficiently computable identity shift arising from the simulations.
\end{lemma}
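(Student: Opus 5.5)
The plan is to obtain the lemma by composing two known simulation results, using the fact that simulation in the sense of Cubitt--Montanaro--Piddock \cite{cubittUniversalQuantumHamiltonians2018} is transitive with controlled error and energy cutoff.

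\textbf{Step 1: reduce $H$ to a standard-form universal model.} Using the perturbative gadget chain of \cite{cubittUniversalQuantumHamiltonians2018}, we first simulate the $k$-local $H$ with $\|H\|=\mathrm{poly}(n)$ by a $2$-local Hamiltonian of the type that \cite{rayudu2024fermionic} takes as input. Since $M=\mathrm{poly}(n)$ and $\|H\|=\mathrm{poly}(n)$, only $O(1)$ gadget rounds are needed, each with couplings $\mathrm{poly}(n,1/\epsilon)$. This gives a local encoding $\mathcal{E}_1$, a cutoff $\Lambda_1$ and error $\epsilon/2$, with
\[
\|H'_{\leq\Lambda_1}-\mathcal{E}_1(H+C_1I)\|\le\epsilon/2 .
\]
The shift $C_1$ is an efficiently computable sum of gadget offsets.

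\textbf{Step 2: realize that model as a clique-complex Laplacian.} Following \cite{rayudu2024fermionic} (summarized in Appendix~\ref{appendix: Ray24}), we use the identity that $\Delta_d(\mathrm{Cl}(G))$ equals a fermion hard-core Hamiltonian on $\bar G$. Vertex weights enter as on-site and hopping amplitudes. By building qubit gadgets out of small subgraphs, with weights chosen as polynomials in $n$ and $1/\epsilon$, the low-energy sector of $\Delta_d$ is a perturbative simulation of $H'$ with error $\epsilon/2$ below a cutoff $\Lambda=\mathrm{poly}(n,1/\epsilon)$. The dimension $d$ is fixed by the number of gadget particles, so $d=\mathrm{poly}(n)$. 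We then check that $|V|=\mathrm{poly}(n)$ and that the weights satisfy $0<w(v)\le\mathrm{poly}(n,1/\epsilon)$.

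\textbf{Step 3: compose.} Let $\mathcal{F}$ be the composition of $\mathcal{E}_1$ with the second encoding, and let $C=C_1+C_2$. Composition of simulations gives total error at most $\epsilon/2+\epsilon/2$, provided the second cutoff satisfies $\Lambda\ge \|\mathcal{E}_1(H'_{\leq\Lambda_1})\|+\epsilon$. This can be arranged by taking the second-stage weights large relative to $\Lambda_1$, which is still polynomial.

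\textbf{Main obstacle.} The hard part is Step 2, specifically keeping all weights polynomially bounded while matching the exact notion of simulation $\|\Delta_{\leq\Lambda}-\mathcal{F}(H+CI)\|\le\epsilon$. The self-energy bounds of Lemma~\ref{lemma:Perturbation Theory} give eigenvalue closeness, but an operator-norm simulation statement also requires a bound on the rotation of eigenvectors. I would get this from the standard Schrieffer--Wolff or first-order correction bound of \cite{cubittUniversalQuantumHamiltonians2018}, which controls $\|\tilde V-V\|$ by $O(\|V\|/\Delta)$. I would then verify that this bound, combined through both stages, yields an error of at most $\epsilon$.
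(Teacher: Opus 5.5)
Your proposal has a genuine gap in Step~3. The Rayudu stage simulates the \emph{whole} intermediate $XX+ZZ$ Hamiltonian $H'$ on $m$ qubits. Below its cutoff, the Laplacian has a $2^m$-dimensional sector that reproduces, up to the shift $\tilde C$, all of $\mathrm{spec}(H')$. This includes the unwanted gadget spectrum of $H'$ above $\Lambda_1$.

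Suppose, as you propose, the lemma's threshold $\Lambda$ is this second-stage cutoff and is made large. Write $\tilde V$ for the dressed Rayudu encoding and $P_1=P_{\le\Lambda_1}(H')$. On the complement of $\tilde V P_1\tilde V^\dagger$, the operator $Q_\Lambda\Delta_d Q_\Lambda$ has eigenvalues of at least about $\Lambda_1+\tilde C$, while $\mathcal F(H+CI)$ vanishes there. So $\|\Delta_d(G)_{\le\Lambda}-\mathcal F(H+CI)\|$ is large, not at most $\epsilon$.

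Your condition $\Lambda\ge\|\cdot\|+\epsilon$ is only a lower bound. Enlarging the second-stage weights does not help, because it does not move the unwanted part of $\mathrm{spec}(H')$. Two things are missing. First, Step~1 must produce an inverse-polynomial gap $\gamma$ of $H'$ around $\Lambda_1$. Second, the final threshold must sit inside the shifted gap, at $\tau=\Lambda_1+\tilde C$. This is strictly below an auxiliary Rayudu cutoff $\Lambda_2$, whose only role is to guarantee that the Laplacian simulates all of $H'+\tilde C I$.

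With that choice of threshold, the errors do not simply add as $\epsilon/2+\epsilon/2$. You must compare the Laplacian's spectral projector $Q_\tau$ with $\tilde V P_1\tilde V^\dagger$. The paper does this with Davis--Kahan across the gap, which gives $\|Q_\tau-\tilde VP_1\tilde V^\dagger\|\le 2\delta_R/\gamma$. This produces a cross term, and hence the requirement $\delta_R\bigl(1+4\|H'+\tilde CI\|/\gamma\bigr)\le\epsilon/2$. In other words, the Rayudu error must be of order $\epsilon\gamma/\|H'\|$, which is still inverse-polynomial.

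This gap-dependent projector bound is the eigenvector-rotation control the lemma actually needs. The raw-versus-dressed bound on $\|\tilde V-V\|$ that you flag as the main obstacle is not needed here. The reason is that $\mathcal F$ may be taken to be the dressed composite $\tilde V\circ\tilde{\mathcal E}$. A black-box appeal to CMP transitivity is also not immediate: the Rayudu stage is a Bravyi--Hastings-type fermionic encoding, not a CMP product-of-local-isometries encoding. This is why the paper carries out the composition by hand.
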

The encoding $\mathcal{F}$ will in general be a composition of encodings in the simulation style of Cubitt et al. \cite{cubittUniversalQuantumHamiltonians2018} and Bravyi-Hastings \cite{bravyiComplexityQuantumIsing2014}. This composition is sufficient to approximate the spectrum of $H$ up to a potential identity shift. In order to prove this lemma we first define both notions of simulation.

\begin{definition}[CMP Hamiltonian Simulation \cite{cubittUniversalQuantumHamiltonians2018}]\label{def: Cubitt simulation}
We say a Hamiltonian $H'$ simulates a Hamiltonian $H$ to precision $(\eta,\epsilon)$ below an energy cutoff $\Delta$ if there exists a local encoding $\mathcal{E}(H) = V(H\otimes P + \bar{H}\otimes Q)V^\dagger$, where V = $\bigotimes_i V_i$ for some isometries acting on 0 or 1 qubits of the original system. Here $P$ and $Q$ are orthogonal projectors satisfying: 
\begin{enumerate}
    \item There exists an encoding $\tilde{\mathcal{E}}(H) = \tilde{V}(H\otimes P + \bar{H}\otimes Q)\tilde{V}^\dagger$ such that $\tilde{\mathcal{E}}({I}) = P_{\leq \Delta(H')}$ and $\norm{\tilde{V} - V} \leq\eta$\\
    \item $\norm{H_{\leq \Delta}'-\tilde{\mathcal{E}}(H)}\leq \epsilon$
\end{enumerate}
    where above $H_{\leq \Delta}' = P_{\leq\Delta(H')}H'$ and $P_{\leq\Delta(H')}$ denotes the projector onto the subspace spanned by eigenvectors of $H'$ with eigenvalues below $\Delta$.
\end{definition}
We note that the above notion of simulation means that the eigenvalues of $H$ will be approximated by the eigenvalues of $H'_{\leq\Delta}$, although the corresponding eigenvalues in $H'$ may occur with a higher multiplicity. Specifically, since $H$ and $\bar{H}$ have the same eigenvalues, the overall encoded Hamiltonian will have eigenvalues that occur with an increased factor of $p+q$, where $p=\mathrm{rank}(P)$ and $q = \mathrm{rank}(Q)$. Of the simulations that we require, the only encoding for which this occurs is the complex to real encoding. In this case the multiplicities of all simulated eigenvalues are doubled, as the encoding uses a single copy of $H$ and $\bar{H}$. However as the quantities we study are normalized, doubling the multiplicities of the eigenvalues leaves them unchanged. We now consider the notion of simulation due to Bravyi and Hastings.

\begin{definition}[BH Hamiltonian Simulation \cite{bravyiComplexityQuantumIsing2014}]\label{def: Bravyi Simulation}
 Let $H_{target}$ be a Hamiltonian acting on a Hilbert space $\mathcal{H}_{target}$ of dimension $N$. A Hamiltonian $H_{sim}$, acting on $\mathcal{H}_{sim}$, with a well-separated $N$-dimensional low-energy subspace  $\mathcal{L}_N(H_{sim})$ and an isometry $\mathcal{E}:\mathcal{H}_{target}\to\mathcal{H}_{sim}$ are said to simulate $H_{target}$ with error $(\eta,\epsilon)$ if there exists an isometry $\tilde{\mathcal{E}}:\mathcal{H}_{target}\to\mathcal{H}_{sim}$ such that
 \begin{enumerate}
     \item The image of $\tilde{\mathcal{E}}$ coincides with $\mathcal{L}_N(H_{sim})$\\
     \item $\norm{H_{target} - \tilde{\mathcal{E}}^\dagger H_{sim}\tilde{\mathcal{E}}}\leq \epsilon$\\
     \item $\norm{\mathcal{E}-\tilde{\mathcal{E}}}\leq\eta$
 \end{enumerate}
\end{definition}
We remark that both of these definitions use the standard convention where $V$ or $\mathcal{E}$ denote the raw explicit encoding and $\tilde{V}$,$\tilde{\mathcal{E}}$ denote the dressed encoding that maps exactly to the required low energy subspace. It is common that the raw encoding will encode qubits in the groundspace of an unperturbed Hamiltonian. The simulator Hamiltonian will then be a perturbation of this with a low-energy subspace, which the dressed encodings map to.

These two definitions of simulation differ in the requirements of the encoding. The CMP definition allows for a more general form of encoding so that complex Hamiltonians can be simulated by real ones. However, this definition also requires $V$ to be a product of local isometries, whereas the BH definition does not. For our purposes, the simulator Hamiltonian need only approximate the spectrum of the target Hamiltonian below some threshold and both definitions suffice for this. 

It is shown by Cubitt et al.~\cite{cubittUniversalQuantumHamiltonians2018}, using results from \cite{piddockComplexityAntiferromagneticInteractions2015}, that an arbitrary $k$-local Hamiltonian can be simulated by a Hamiltonian of the form
$$H_{sim} = \sum_{i,j} \mu_{i,j}(\alpha X_iX_j + \beta Y_iY_j +\gamma Z_iZ_j)$$
with $\mu_{i,j}>0$ and $\alpha+\beta, \alpha+\gamma, \beta+\gamma >0$. That is, Hamiltonians of the form above are universal. The simulation here is that of CMP in definition \ref{def: Cubitt simulation}. It is clear from taking $\alpha=\gamma=1$ and $\beta=0$ that an XX+ZZ Hamiltonian can also simulate an arbitrary $k$-local Hamiltonian. Rayudu \cite{rayudu2024fermionic} showed how one can simulate (in the sense of BH) an $XX+ZZ$ Hamiltonian using the combinatorial Laplacian of a weighted clique complex. This result, together with the above, implies that combinatorial Laplacians can approximate the spectrum of an arbitrary $k$-local Hamiltonian. However it is important to note that the encodings needed to achieve this will be a combination of those specified by \cite{cubittUniversalQuantumHamiltonians2018} and \cite{bravyiComplexityQuantumIsing2014}. Thus the simulation itself does not completely fall under either definition. If the target Hamiltonian is real, then the encodings necessary will fall under the BH definition and so it is reasonable to say that these Hamiltonians can be simulated by a combinatorial Laplacian in the sense of definition \ref{def: Bravyi Simulation}. But we do not quite get full universality for complex Hamiltonians in the sense of \cite{cubittUniversalQuantumHamiltonians2018}.

The proof of lemma \ref{thm: Universality of Combinatorial Laplacians} follows immediately from the simulations outlined in \cite{cubittUniversalQuantumHamiltonians2018} and the construction in \cite{rayudu2024fermionic}. We give an overview of Rayudu's construction in Appendix \ref{appendix: Ray24} and outline how the result can be combined with the CMP simulations below.
\begin{proof}[Proof of Lemma \ref{thm: Universality of Combinatorial Laplacians}]
Let $\epsilon>1/\operatorname{poly}(n)$ be the desired simulation error. Consider an $n$ qubit $k$-local Hamiltonian
\[
H=\sum_{i=1}^M h_i
\]
with $M=\operatorname{poly}(n)$ and $k=O(1)$.

We first apply the construction in \cite{cubittUniversalQuantumHamiltonians2018} to construct an $XX+ZZ$ Hamiltonian that simulates $H+CI$, where $C$ is an efficiently computable identity shift arising from the simulation gadgets. In particular, we take
\[
\tilde{H}
=
\sum_{i,j\in\mathcal{I}}\mu_{i,j}(X_iX_j+Z_iZ_j),
\]
acting on $m=\operatorname{poly}(n,1/\epsilon)$ qubits, where $\mu_{i,j}>0$ and $\mathcal{I}$ is the necessary set of interactions.

We choose the parameters of this simulation so that its error is at most $\epsilon/2$. We also choose the simulator cutoff $\Lambda_1$ to lie inside the inverse-polynomial gap separating the encoded copy of $H+CI$ from the unwanted simulator spectrum. That is, by taking the gadget penalty scales sufficiently large, we may assume that for some $\gamma\geq 1/\operatorname{poly}(n,1/\epsilon)$, $\tilde{H}$ has no eigenvalue in $(\Lambda_1-\gamma, \Lambda_1+\gamma)$.
Let
\[
P_1:=P_{\leq \Lambda_1}(\tilde{H}),
\]
the orthogonal projector onto the span of eigenvectors of $\tilde{H}$ with eigenvalue less than $\Lambda_1$.
Then the simulation gives an encoding
\[
\tilde{\mathcal{E}}:\mathcal{B}(\mathbb{C}^{2^n})\to\mathcal{B}(\mathbb{C}^{2^m})
\]
satisfying
\[
\tilde{\mathcal{E}}(I)=P_1
\]
and
\[
\norm{
P_1\tilde{H}P_1-\tilde{\mathcal{E}}(H+CI)
}
\leq
\frac{\epsilon}{2}.
\]
Equivalently, since $(\tilde{H})_{\leq \Lambda_1}=P_1\tilde{H}P_1$, this can be written as
\[
\norm{
(\tilde{H})_{\leq \Lambda_1}-\tilde{\mathcal{E}}(H+CI)
}
\leq
\frac{\epsilon}{2}.
\]

Next we apply the Rayudu construction \cite{rayudu2024fermionic} to $\tilde{H}$; see Appendix \ref{appendix: Ray24}. This gives a weighted graph $G=(V,E)$ such that the $(m-1)$th combinatorial Laplacian of $\operatorname{Cl}(G)$ simulates $\tilde{H}$ in the Rayudu low-energy sector. We note that $G$ is the complement graph of the graph used in Appendix \ref{appendix: Ray24}, since there the relevant combinatorial Laplacian is that of the independence complex. Thus $\bar{G}$ is the graph of triangles and mediators appearing in Figure \ref{fig:two-qubit-interaction-gadget}.

Let
\[
\tilde{V}:\mathbb{C}^{2^m}\to\mathcal{H}_{\mathrm{sim}}(\bar{G})
\]
be the Rayudu encoding, where $\mathcal{H}_{\mathrm{sim}}(\bar{G})$ is the fermionic space spanned by $m$ fermions satisfying the hardcore constraints on $\bar{G}$. We identify this space with $C_{m-1}(G)$, the $(m-1)$th chain space of the clique complex of $G$.

Let $\tilde{C}$ be the efficiently computable identity shift arising from the Rayudu simulation, and define
\[
\tau:=\Lambda_1+\tilde{C}.
\]
Since $\Lambda_1$ lies in a gap of $\tilde{H}$, the shifted Hamiltonian $\tilde{H}+\tilde{C}I$ has a gap of size $\gamma$ around $\tau$. We choose the raw Rayudu simulation error $\delta_R$ sufficiently small so that
\[
\delta_R<\frac{\gamma}{2}
\]
and
\[
\delta_R
\left(
1+\frac{4\norm{\tilde{H}+\tilde{C}I}}{\gamma}
\right)
\leq
\frac{\epsilon}{2}.
\]
This is still an inverse-polynomial precision requirement, since $\gamma^{-1}$ and $\norm{\tilde{H}+\tilde{C}I}$ are polynomially bounded. By choosing the Rayudu perturbative scale, and hence the vertex weights, sufficiently large, we obtain
\[
\norm{
\Delta_{m-1}(G)_{\leq \Lambda_2}
-
\tilde{V}(\tilde{H}+\tilde{C}I)\tilde{V}^{\dagger}
}
\leq
\delta_R
\]
for some $\Lambda_2=\operatorname{poly}(n,1/\epsilon)$ with $\Lambda_2>\tau$. The threshold $\Lambda_2$ is only an auxiliary Rayudu cutoff used to ensure that the Laplacian simulates the whole intermediate $XX+ZZ$ Hamiltonian $\tilde{H}+\tilde{C}I$ before we restrict to the smaller spectral window corresponding to the original Hamiltonian.

We now restrict to the smaller cutoff $\tau=\Lambda_1+\tilde{C}$. Let
\[
Q_{\tau}:=P_{\leq \tau}(\Delta_{m-1}(G)),
\]
the projector onto the space spanned by eigenvectors of $\Delta_{m-1}$ with eigenvalue at most $\tau$. Since $\Lambda_2>\tau$, the spectral subspace of $\Delta_{m-1}(G)$ below $\tau$ is contained in the spectral subspace below $\Lambda_2$. Moreover, by the spectral gap around $\tau$, we can apply the Davis-Kahan theorem to compare the projectors $Q_\tau$ and $P_{\leq\tau}(\tilde{V}(\tilde{H}+\tilde{C}I)\tilde{V}^{\dagger}) = \tilde{V}P_1\tilde{V}^\dagger$. This gives the following bound
\[
\norm{
Q_{\tau}-\tilde{V}P_1\tilde{V}^{\dagger}
}
\leq
\frac{2\delta_R}{\gamma}.
\]
Then by the triangle inequality we have the following,
\[
\begin{aligned}
\norm{
\Delta_{m-1}(G)_{\leq \tau}
-
\tilde{V}P_1(\tilde{H}+\tilde{C}I)P_1\tilde{V}^{\dagger}
}
&=
\norm{
Q_{\tau}\Delta_{m-1}(G)Q_{\tau}
-
\tilde{V}P_1(\tilde{H}+\tilde{C}I)P_1\tilde{V}^{\dagger}
}
\\
&\leq
\norm{
Q_{\tau}
\left(
\Delta_{m-1}(G)_{\leq \Lambda_2}
-
\tilde{V}(\tilde{H}+\tilde{C}I)\tilde{V}^{\dagger}
\right)
Q_{\tau}
} \\
&+ \norm{Q_{\tau} \tilde{V}(\tilde{H}+\tilde{C}I) \tilde{V}^\dagger Q_{\tau} - \tilde{V}P_1(\tilde{H}+\tilde{C}I)P_1\tilde{V}^\dagger   }.
\end{aligned}
\]
The first term on the right hand side is upper bounded by $\delta_R$, due to the Rayudu simulation inequality. We can also upper bound the second term by using the triangle inequality again and inserting $\tilde{V}^\dagger\tilde{V}=I$ on the left and right of $\tilde{H}+\tilde{C}I$. This gives 
\begin{align}
    \norm{Q_{\tau} \tilde{V}(\tilde{H}+\tilde{C}I) \tilde{V}^\dagger Q_{\tau} - \tilde{V}P_1(\tilde{H}+\tilde{C}I)P_1\tilde{V}^\dagger } &\leq 2\norm{\tilde{V}(\tilde{H}+\tilde{C}I)\tilde{V}^\dagger}\norm{Q_\tau - \tilde{V}P_1 \tilde{V}^\dagger}\\
    &\leq 2 \norm{\tilde{H}+\tilde{C}I}\cdot\frac{2\delta_R}{\gamma}.
\end{align}
Putting these two together we obtain
\begin{align}
    \norm{
\Delta_{m-1}(G)_{\leq \tau}
-
\tilde{V}P_1(\tilde{H}+\tilde{C}I)P_1\tilde{V}^{\dagger}
} &\leq \delta_R\left(1+ \frac{4 \norm{\tilde{H}+\tilde{C}I}}{\gamma}  \right)\\
&\leq \epsilon/2.
\end{align}
It remains to combine this with the first simulation. Since $\tilde{\mathcal{E}}(I)=P_1$ and $C_{\mathrm{total}}:=C+\tilde{C}$, we have
\[
\tilde{\mathcal{E}}(H+C_{\mathrm{total}}I)
=
\tilde{\mathcal{E}}(H+CI)+\tilde{C}P_1.
\]
Using the shorthand
\[
(\tilde{V}\circ\tilde{\mathcal{E}})(A)
:=
\tilde{V}\tilde{\mathcal{E}}(A)\tilde{V}^{\dagger},
\]
we therefore obtain
\[
\begin{aligned}
&
\norm{
\Delta_{m-1}(G)_{\leq \tau}
-
(\tilde{V}\circ\tilde{\mathcal{E}})(H+C_{\mathrm{total}}I)
}
\\
&\leq
\norm{
\Delta_{m-1}(G)_{\leq \tau}
-
\tilde{V}P_1(\tilde{H}+\tilde{C}I)P_1\tilde{V}^{\dagger}
}
\\
&\quad+
\norm{
\tilde{V}
\left(
P_1\tilde{H}P_1-\tilde{\mathcal{E}}(H+CI)
\right)
\tilde{V}^{\dagger}
}
\\
&\leq
\frac{\epsilon}{2}+\frac{\epsilon}{2}=
\epsilon.
\end{aligned}
\]
Thus
\[
\norm{
\Delta_{m-1}(G)_{\leq \Lambda_1+\tilde{C}}
-
(\tilde{V}\circ\tilde{\mathcal{E}})(H+C_{\mathrm{total}}I)
}
\leq
\epsilon.
\]
Therefore, the combinatorial Laplacian $\Delta_{m-1}(G)$ simulates $H+C_{\mathrm{total}}I$ in its low-energy subspace below the threshold $\Lambda_1+\tilde{C}$. 
\end{proof}

This result allows us to prove \DQC-hardness of \textsf{LENS for TDA}. We proceed by using the above lemma to show that the output of \textsf{LENS for TDA} can be made sufficiently close to the output of \textsf{LENS} for the \DQC{} Hamiltonian, such that it can decide an arbitrary \DQC{} instance.
\begin{proof}[Proof of Theorem~\ref{thm: LENS for TDA Hardness}: \textsf{LENS for TDA} is \DQC-hard]
Consider an instance of \textsf{LENS} with Hamiltonian $H_{\DQC}$ with coefficients as in the proof of theorem \ref{thm: low energy subtrace}. Similarly we take $\eta = \frac{J_{in}}{2(T+1)}$ and $\delta = \frac{1}{\mathrm{poly}(n)}$ such that $H_\DQC$ has no eigenvalue in $(\eta, \eta+\delta]$. Finally we also take $\epsilon = \frac{1}{\mathrm{poly}(n)}<\delta$ such that the output of the \textsf{LENS} instance is sufficient to decide the outcome of the \DQC{} circuit. The goal then is to approximate the output of this instance with the output of a corresponding \textsf{LENS for TDA} instance. First noting that $H_\DQC$ is $k$-local, we can apply lemma \ref{thm: Universality of Combinatorial Laplacians} to obtain a weighted graph $G=(V,E)$ such that $\Delta_{m-1}(G)$ simulates $H_{\DQC}$, with $m = \mathrm{poly}(n)$. In particular there exists a threshold $\Lambda$ and an encoding map
$\mathcal{F}:\mathcal{B}(\mathbb{C}^{2^n})\to \mathcal{B}(C_{m-1}(G))$
such that
$$\|\Delta_{m-1}(G)_{\leq \Lambda} - \mathcal{F}(H_\DQC+CI)\|\leq \epsilon_{sim}$$
where $C$ is an efficiently computable identity shift. Specifically, taking $\Lambda$ (and the vertex weights) sufficiently large we choose $\epsilon_{sim}\leq \min\{\epsilon/2,\delta/10\}$.
The chain of simulations used in this construction is summarized in Figure~\ref{fig:encoding}.

\begin{figure}[t]
\centering
\begin{tikzpicture}[
  font=\small,
  box/.style={
    draw,
    rounded corners=2pt,
    align=center,
    inner xsep=7pt,
    inner ysep=6pt,
    text width=3.15cm
  },
  smallbox/.style={
    draw,
    rounded corners=2pt,
    align=center,
    inner xsep=6pt,
    inner ysep=5pt,
    text width=2.65cm
  },
  arrow/.style={->, thick},
  note/.style={align=center, font=\footnotesize, text width=3.1cm}
]

\node[box] (histtrue) at (0,0) {
  $H_\DQC$\\
  low-energy space\\
  $\mathcal{S}_{\leq\eta}$
};

\node[box] (xxzztrue) at (5.1,0) {
  $XX+ZZ$ Hamiltonian\\
};

\node[box, text width=3.7cm] (laptrue) at (10.55,0) {
  Clique-complex Laplacian\\
  low-energy space\\
  $S_{\leq\eta'}=\tilde{\mathcal{F}}(\mathcal{S}_{\leq\eta})$
};

\node[smallbox, text width=3.05cm] (histprep) at (0,-2.8) {
  $\rho_{hist}$
};

\node[smallbox, text width=3.05cm] (xxzzref) at (5.1,-2.8) {
  $\mathcal{E}(\rho_{hist})$
};

\node[smallbox, text width=3.45cm] (lapref) at (10.55,-2.8) {
  $\mathcal{F}(\rho_{hist})$\\
  $=V\mathcal{E}(\rho_{hist})V^\dagger$
};

\draw[arrow] (histtrue) -- (xxzztrue);
\draw[arrow] (xxzztrue) -- (laptrue);
\draw[arrow] (histprep) -- (xxzzref);
\draw[arrow] (xxzzref) -- (lapref);
\draw[arrow, dashed] (histprep) -- (histtrue);
\draw[arrow, dashed] (xxzzref) -- (xxzztrue);
\draw[arrow, dashed] (lapref) -- (laptrue);

\node[note, text width=2.55cm] at (2.55,0.68) {
  encoding\\
  $\tilde{\mathcal{E}}$
};

\node[note, text width=2.55cm] at (7.62,0.68) {
  encoding\\
  $\tilde{\mathcal{V}}$
};

\node[note, text width=2.45cm] at (2.55,-2.08) {
  local explicit\\
  map\\
  $\mathcal{E}$
};

\node[note, text width=2.45cm] at (7.82,-2.08) {
  local explicit\\
  map\\
  $V$
};

\node[note, text width=0.7cm] at (0.38,-1.55) {$\approx$};
\node[note, text width=0.7cm] at (5.48,-1.55) {$\approx$};
\node[note, text width=0.7cm] at (10.93,-1.55) {$\approx$};

\end{tikzpicture}
\caption{Encoding chain used in the TDA hardness reductions. The upper row records the low-energy spaces obtained from $H_\DQC$, the $XX+ZZ$ simulator, and the clique-complex Laplacian. The lower row records the explicitly preparable state obtained by applying $\mathcal{E}$ and then the Rayudu product encoding $V$, giving $\mathcal{F}(\rho_{hist})=V\mathcal{E}(\rho_{hist})V^\dagger$ in the clique-complex chain space. The vertical dashed arrows record that the low-energy isometries $\tilde{\mathcal{E}}$ and $\tilde{\mathcal{F}}$ are close to these explicit encodings. Hence the prepared mixture is close in trace distance to the uniform mixture over $S_{\leq\eta'}$.}
\label{fig:encoding}
\end{figure}

It then follows from Weyl's inequality that
$$|\lambda_i(\Delta_{m-1}) - \lambda_i(H_{\DQC}) - C| \leq \epsilon/2$$
where $\lambda_i(\Delta_{m-1})$ is the $i$'th eigenvalue of $\Delta_{m-1}(G)$ below $\Lambda$ (in ascending order
not counting multiplicities)
and $\lambda_i(H_{\DQC})$ denotes the $i$'th eigenvalue of $H_\DQC$ (also in ascending order). Then we consider an instance of \textsf{LENS for TDA} with the graph $G$ and threshold
$$\eta' = \frac{J_{in}}{2(T+1)}+C+\epsilon_{sim}.$$
We also take the precision in this instance to be $\epsilon/2$ and note that
{the slightly smaller gap $\delta':=\delta-2\epsilon_{sim}\geq1/\mathrm{poly}(n)$ satisfies} the property that $\Delta_{m-1}(G)$ has no eigenvalue in
{$(\eta', \eta'+\delta']$}. Let $\chi$ denote the output of an oracle for this \textsf{LENS for TDA} instance. Then we have
\begin{equation}\label{eqn: bound 1 LENS hardness}
\left| \frac{\sum_{0\leq\lambda_i\leq\eta'}\lambda_i(\Delta_{m-1})}{\dim S_{\leq \eta'}} - \chi\right|\leq \frac{\epsilon}{2}
\end{equation}
We note that the sum over $\lambda_i(\Delta_{m-1})$ will include the additional multiplicities occurring from the complex to real CMP simulation. However, this just doubles the multiplicities of all eigenvalues, and so the sum as well as the dimension are both doubled which leaves the whole quantity unchanged. In particular, we have that $\dim S_{\leq \eta'} = 2\dim S_{\leq \eta} = 2\cdot2^{N-1}$, from our choice of $\eta'$ and since both $\Delta_{m-1}$ and $H_\DQC$ have a spectral gap above these thresholds. Hence, adjusting for the multiplicities we have
$$\frac{1}{\dim S_{\leq\eta'}}\sum_{0\leq \lambda_i\leq\eta'}\lambda_i(\Delta_{m-1}) = \frac{1}{2^{N-1}}\sum_{\lambda_i\in E_{\leq\eta'}}\lambda_i(\Delta_{m-1})$$
where $E_{\leq \eta'}$ denotes the set of eigenvalues of $\Delta_{m-1}$ less than $\eta'$ with each multiplicity halved (so that it matches that of the original \DQC{} Hamiltonian). We can then bound the difference between the low-energy normalized subtrace of $\Delta_{m-1}$ (shifted down by $C$) and that of $H_\DQC$:
\begin{align}
    \left|\frac{1}{2^{N-1}}\sum_{\lambda_i\in E_{\leq \eta'}}\lambda_i(\Delta_{m-1}) - C -  \frac{1}{2^{N-1}}\sum_{\lambda_i\leq \eta}\lambda_i(H_\DQC)\right| &\leq \frac{1}{2^{N-1}}\sum_{i=1}^{2^{N-1}}|\lambda_i(\Delta_{m-1}) -C-\lambda_i(H_\DQC)|\\
    &\leq \frac{1}{2^{N-1}}\sum_{i=1}^{2^{N-1}}\epsilon/2\\
    &=\epsilon/2 \label{eqn: bound 2 LENS hardness}
\end{align}
Finally using the triangle inequality with equations \ref{eqn: bound 1 LENS hardness} and \ref{eqn: bound 2 LENS hardness}, we can bound the difference between $\chi-C$ and the low-energy subtrace of $H_\DQC$
$$\left| (\chi -C) - \frac{1}{\dim S_{\leq \eta}}\sum_{\lambda_i\leq\eta}\lambda_i \right|\leq \epsilon$$
It follows that an oracle for \textsf{LENS for TDA}, together with polynomial-time classical pre and postprocessing is sufficient to determine an output for \textsf{LENS} with arbitrary inverse polynomial precision. We note the polynomial-time preprocessing is needed to compute the simulating Laplacian graph $G$ and postprocessing is required to subtract the constant $C$ from the \textsf{LENS for TDA} output. By the simulation results in \cite{cubittUniversalQuantumHamiltonians2018} and \cite{rayudu2024fermionic} both of these can be done in polynomial time classically. Combining the above with the fact that \textsf{LENS} is \DQC-hard, it follows that an oracle for \textsf{LENS for TDA} is sufficient to decide an arbitrary \DQC{} instance with polynomial-time classical pre and postprocessing. Hence we conclude that \textsf{LENS for TDA} is \DQC-hard.

{The above construction summarized in Figure~\ref{fig:encoding} also preserves the state-preparation condition. The encoding $\mathcal{E}(\cdot)$  is a tensor product of local isometries, and hence is implementable by a polynomial-size circuit. In the Rayudu simulation, the reference encoding maps each computational basis state $\ket{x_1\cdots x_m}$ to $\prod_i S_{i,x_i}\ket{vac}$ in the $m$-fermion sector, where each $S_{i,x_i}$ acts only on the constant-size triangle encoding qubit $i$.
This is again a product of constant-size isometries and is efficiently implementable. Therefore, applying the composed explicit encoding to the efficiently preparable history-state mixture from Lemma~\ref{lem:prep_hist} prepares the uniform mixture over the explicitly encoded low-energy subspace.
}
\end{proof}

\subsection{\DQC-hardness of \textsf{LESD for TDA} and \textsf{NQHP}}
\label{sec:NQHP}
The remaining two hardness proofs are much simpler. In fact, the proof of \DQC-hardness for \textsf{LESD for TDA} is completely analogous to the reduction from \textsf{LENS} to \textsf{LESD} in the case of local Hamiltonians.

\begin{proof}[Proof of Theorem~\ref{thm: LESD for TDA hardness}: \textsf{LESD for TDA} is \DQC-hard]
The proof of this is identical to the proof that \textsf{LESD} is \DQC-hard for local Hamiltonians in Section~\ref{sec:LESD}. Simply replace $H$ with $\Delta_{m-1}$ from the proof that \textsf{LENS for TDA} is \DQC-hard and also replace the threshold $\eta=\frac{J_{in}}{2(T+1)}$ with the shifted threshold on the simulator Laplacian
$$\eta' = \frac{J_{in}}{2(T+1)}+C+\epsilon_{sim}$$
from the same proof.
{The state-preparation condition is inherited from the same \textsf{LENS for TDA} hard instance: the uniform mixture over the low-energy subspace $S_{\leq\eta'}$ of $\Delta_{m-1}(G)$ is prepared by applying the explicit composed encoding to the history-state mixture from Lemma~\ref{lem:prep_hist}. By choosing the simulation and encoding accuracy sufficiently small compared with the precision used in each \textsf{LESD for TDA} query, this preparation is within the required trace-distance error.}
\end{proof}
It remains to show that \textsf{Normalized Quasi-Harmonic Persistence (NQHP)} is \DQC-hard. We do this by reducing from \textsf{LESD for TDA}, which is \DQC-hard by the above.
\begin{proof}[Proof of Theorem~\ref{thm: normalized quasi-harmonic persistence}: \textsf{NQHP} is \DQC-hard]
We consider an instance of \textsf{LESD for TDA} with the simulator Laplacian $\Delta_{m-1}$ of the \DQC{} Hamiltonian $H_\DQC$ as in the proof of theorem \ref{thm: LENS for TDA Hardness}. We consider the same upper threshold of
$$\eta' = \frac{J_{in}}{2(T+1)}+C+\epsilon_{sim}$$
and let $\delta'\geq1/\mathrm{poly}(n)$ be the gap from the proof of Theorem~\ref{thm: LENS for TDA Hardness}, so that $\Delta_{m-1}$ has no eigenvalue in $(\eta', \eta'+\delta']$. Let $b\leq\eta'$ and
$\xi\leq\delta'/2$ be the other thresholds for the problem instance. Then fixing a precision $\epsilon = \frac{1}{\mathrm{poly}(n)}$, \textsf{LESD for TDA} gives an output $\chi$ satisfying:
\begin{equation}
    \label{eq:lesd}
\frac{\sum_{0\leq \lambda_i\leq b}1}{\dim S_{\leq \eta'}}-\epsilon \leq \chi\leq \frac{\sum_{0\leq\lambda_i\leq b+\xi}1}{\dim S_{\leq \eta'}}+\epsilon
\end{equation}
We will show that an oracle for \textsf{NQHP} on a suitable instance can provide an output that also satisfies the above, hence solving the \textsf{LESD for TDA} instance.

To see this take $G_1=G_2=G$, the graph such that $\Delta_{m-1}$ simulates $H_{\DQC}$ as in the proof of theorem \ref{thm: LENS for TDA Hardness} and $d=m-1$.
{This choice also preserves the state-preparation condition. Since $G_1=G_2$ and we will choose $\hat{\eta}$ inside the spectral gap above $\eta'$, the initial subspace $\mathrm{Im}\,\mathcal{P}_{1,\hat{\eta}}$ is exactly $S_{\leq\eta'}$. Hence its uniform mixture is the same state prepared in the \textsf{LESD for TDA} hard instance. Moreover, for every queried $b\leq\eta'$ we have $\mathcal{P}_{2,b}\leq \mathcal{P}_{1,\hat{\eta}}$, so $\mathcal{P}_{2,b}\Pi_{\mathrm{Im}\mathcal{P}_{1,\hat{\eta}}}\mathcal{P}_{2,b}=\mathcal{P}_{2,b}$ and every nonzero eigenvalue is $1$. Thus the large overlap condition holds with $a_{\min}=1$.}
{We want to approximate the quantity in eq.~\eqref{eq:lesd} by the output between
$$
\frac{
\dim \mathrm{Im}\mathcal{P}_{2,b}|_{\mathrm{Im}\mathcal{P}_{1,\hat{\eta}}}
}{
\tilde{\beta}^{1}_{d,\hat{\eta}}
}
\quad\text{and}\quad
\frac{
\dim \mathrm{Im}\mathcal{P}_{2,b+\xi}|_{\mathrm{Im}\mathcal{P}_{1,\hat{\eta}}}
}{
\tilde{\beta}^{1}_{d,\hat{\eta}}
}
$$ with properly chosen parameters $\hat{\delta}$ and $\hat{\eta}$. Clearly, this is achieved if
$$
\tilde{\beta}^{1}_{d,\hat{\eta}}
=
\dim S_{\leq \eta'}
$$
and
$$
\mathrm{Im}\mathcal{P}_{2,b}|_{\mathrm{Im}\mathcal{P}_{1,\hat{\eta}}}
=
S_{\leq b},
\quad \text{and}\quad
\mathrm{Im}\mathcal{P}_{2,b+\xi}|_{\mathrm{Im}\mathcal{P}_{1,\hat{\eta}}}
=
S_{\leq b+\xi}
$$
hold. (In the definition of \textsf{NQHP}, it is required that $b+\xi\leq \hat{\eta}$ holds for all allowed $b$.)
We take $\hat{\eta}:=\eta'+\xi$ and $\hat{\delta}:=\delta'-\xi$. We will see that this choice of parameters satisfies these conditions.
}
\paragraph{Matching the numerator.}
Since there is no spectrum in $(\eta',\eta'+\delta']$, we have $S_{\leq\hat{\eta}}=S_{\leq\eta'}$ and $b+\xi\leq\hat{\eta}$ for every $b\leq\eta'$. Taking $\xi$ and $\epsilon$ to be the same as in the \textsf{LESD for TDA} instance also we see the following. For any
$b+\xi\leq\hat{\eta}$ we see that $\mathrm{Im}\mathcal{P}_{2,b}\subset \mathrm{Im}\mathcal{P}_{1,\hat{\eta}}$. This is because $\mathcal{P}_{1,\hat{\eta}}$ is the orthogonal projector onto the eigenstates of $\Delta_{m-1}(G_1)$ with eigenvalues less than $\hat{\eta}$ and $\mathcal{P}_{2,b}$ is the orthogonal projector onto eigenstates of $\Delta_{m-1}(G_2)$ with eigenvalue less than $b$. But since we took $G_1=G_2$ the two combinatorial Laplacians coincide, and hence the subspace corresponding to eigenvalues below $b$ is a subset of that for eigenvalues below $\hat{\eta}$. Immediately this gives:
$$\mathrm{Im}\mathcal{P}_{2,b}|_{\mathrm{Im}\mathcal{P}_{1,\hat{\eta}}} = \mathrm{Im}\mathcal{P}_{2,b} = \mathrm{Span}(\{\ket{\psi_i}: \lambda_i\leq b\}) = S_{\leq b}$$
The same is also true for the case with $\xi$ in the numerator threshold. That is
$$\mathrm{Im}\mathcal{P}_{2,b+\xi}|_{\mathrm{Im}\mathcal{P}_{1,\hat{\eta}}} = S_{\leq b+\xi}.$$
\paragraph{Matching the denominator.}
Turning attention to the denominator, we see that by definition
$\tilde{\beta}_{m-1,\hat{\eta}} = \dim S_{\leq \hat{\eta}}=\dim S_{\leq\eta'}$. Hence overall we have
$$\frac{\dim \mathrm{Im}\mathcal{P}_{2,b}|_{\mathrm{Im}\mathcal{P}_{1,\hat{\eta}}}}{\tilde{\beta}^1_{d,\hat{\eta}}}= \frac{\dim S_{\leq b}}{\dim S_{\leq \eta'}} = \frac{\sum_{0\leq\lambda_i\leq b}1}{\dim S_{\leq \eta'}}$$
as well as the analogous result when $b$ is replaced with $b+\xi$. Hence it follows that for any
$b\leq\eta'$, the output of this \textsf{NQHP} instance satisfies the same bounds as the original \textsf{LESD for TDA} instance. Moreover, the proof of \DQC-hardness for \textsf{LESD for TDA} only required outputs for
$b\leq\eta'$ with the same choice of
$\eta'$. Hence we conclude that polynomially many queries to an oracle for \textsf{NQHP} is sufficient to decide an arbitrary \DQC{} instance, and so \textsf{NQHP} is \DQC-hard by a polynomial-time truth-table reduction.
\end{proof}
\section{Toward $\mathsf{SDQC}_1$-Hardness of \textsf{NHP}}
\label{sec:NHP}

In this section, we discuss a route toward proving Conjecture~\ref{conj:NHP_SDQC1}.
First, we explain why the TDA analogue of
\textsf{Normalized Persistence} requires a stronger ingredient than the
simulation tools used in the proof of \textsf{Normalized Quasi-Harmonic Persistence}.
The quasi-harmonic result in Section~\ref{sec:NQHP} relies on approximate low-energy simulation based on the
Rayudu construction, combined with universal Hamiltonian simulation.
By contrast, \textsf{NHP} reads the exact harmonic space.  A perturbative simulation that
moves zero eigenvalues by inverse-polynomial error can preserve all low-energy density estimates used for \textsf{NQHP}, while completely losing
the Betti number and hence the persistent Betti number.

The closest existing exact-kernel primitive is the King--Kohler construction
for gapped clique homology~\cite{king:qma}.
This construction performs a reduction from a $\mathsf{QMA}_1$ verification problem, formulated through a quantum satisfiability instance, to clique homology, while preserving the zero-energy space of the local Hamiltonian and approximately preserving the promise gap for the combinatorial Laplacian.
More concretely, they reduce from local Hamiltonians composed of projectors onto integer states, where integer states refer to states with integer amplitudes in the computational basis up to normalization, and construct a clique complex by (1) constructing a qubit graph with $2^n$ holes and (2) adding gadgets for each local projector with small vertex weights.
However, the King--Kohler construction does not give the ingredient needed here as a black box.
Our reduction from $\mathsf{SDQC}_1$ does not merely ask whether a kernel is nonempty.
It asks for the dimension of a whole perfectly accepted subspace, normalized by the dimension of the valid history subspace. It also asks that this quotient be reproduced through a filtration.
Moreover, the local-Hamiltonian proof in this work uses a hierarchy of penalty scales: the output term is treated as a smaller perturbation relative
to the clock, propagation, and input terms.
To transfer the proof to clique complexes, one would need a coefficient-sensitive refinement of the King--Kohler construction that represents this relative output weight while preserving the exact kernel and an inverse-polynomial gap.

It is useful to separate the missing proof into an exact-kernel analogue of
the route used for \textsf{NQHP}.
The natural intermediate problem is a TDA
version of \textsf{LEKD}: given a weighted clique complex and a dimension
$d$, estimate the fraction of the relevant low-energy space that is exactly
harmonic, with a promise gap above zero.
For the $\mathsf{SDQC}_1$ history
Hamiltonian, the corresponding target construction would have two stages.
A natural candidate for this step is the circuit-to-Hamiltonian construction of~\cite{bravyi2006efficient}, which is better suited to the clique-complex realization. We omit the details of this construction here.
First, one would realize the output-free history Hamiltonian
\[
  H_1 :=
  J_{in}H_{in}
  +J_{prop}H_{prop}
  +J_{clock}H_{clock}
\]
by a weighted combinatorial Laplacian $\Delta_{1,d}$ in such a way that
$\ker\Delta_{1,d}$ has the same dimension as $\ker H_1$, and the spectral
gap above this kernel is preserved up to inverse-polynomial accuracy.

Second, one would add an output gadget corresponding to
\[
  H_2(\lambda) := H_1 + \lambda H_{out},
\]
where $\lambda$ is inverse-polynomial but smaller than the scale of the
gadgets enforcing the output-free history space.
One would assign inverse-polynomially smaller vertex weights to the output gadget than to the gadgets implementing $H_1$.
With this comparatively small output weight, the desired property of the Laplacian
$\Delta_{2,d}$ is
\[
  \dim\ker\Delta_{2,d}=\dim\ker H_2(\lambda),
\]
with an inverse-polynomial gap above the harmonic space.
Establishing this
coefficient-sensitive construction along with the approximate preservation of the spectral gap appears to require new control beyond the present
King--Kohler analysis.

Finally, \textsf{NHP} needs more than a kernel count for a single simplicial complex.
The weighted complexes must form a filtration $X_1\hookrightarrow X_2$, and the encodings of the kernels must be compatible with this inclusion.
Ideally, the filtration should use a common encoding of the valid history space: the chain-level inclusion $C_d(X_1)\hookrightarrow C_d(X_2)$ should send each encoded history state in $X_1$ to the same encoded history state in $X_2$.
Under this identification, projecting onto the harmonic space of $X_2$ should correspond exactly to projecting $\ker H_1$ onto $\ker H_2(\lambda)$.  Then
\[
  \beta_d(X_1)=\dim\ker H_1,
  \qquad
  \beta_d^{1,2}=
  \dim\mathrm{Im}\,\Pi_{\ker H_2(\lambda)}\big|_{\ker H_1},
\]
and an oracle for \textsf{NHP} would decide the underlying
$\mathsf{SDQC}_1$ instance in the same way as the local-Hamiltonian
\textsf{Normalized Persistence} oracle.

We isolate this missing ingredient as the following conjecture and then show that this conjecture implies Conjecture~\ref{conj:NHP_SDQC1}.

\begin{conjecture}[Coefficient-sensitive kernel realization]\label{conj:NHP_realization}
There is a polynomial-time procedure that, given PSD history Hamiltonians that
are sums of projectors onto integer states with global coefficients
\[
  H_1=
  J_{in}H_{in}
  +J_{prop}H_{prop}
  +J_{clock}H_{clock},
  \qquad
  H_2(\lambda)=H_1+\lambda H_{out},
\]
arising from an $\mathsf{SDQC}_1$ verifier over a fixed gate set, where
$\lambda$ is inverse-polynomially smaller than the gadget scale of $H_1$,
outputs a dimension $d$ and weighted clique complexes
$X_1\hookrightarrow X_2$ with $|V(X_i)|,d=\mathrm{poly}(n)$ such that,
writing $\Delta_{i,d}$ for their $d$-dimensional combinatorial Laplacians:
\begin{enumerate}
  \item \textup{(Kernel fidelity and gap)}
  there are isometries $\mathcal{E}_i$ satisfying
  $\ker\Delta_{i,d}=\mathcal{E}_i(\ker H_i)$, where
  $H_2$ denotes $H_2(\lambda)$.  In particular,
  $\beta_d(X_i)=\dim\ker H_i$.  Moreover, the procedure outputs efficiently
  computable inverse-polynomial lower bounds
  $\widehat{\delta}_1,\widehat{\delta}_2 \geq 1/\mathrm{poly}(n)$ such that
  each $\Delta_{i,d}$ has no spectrum in $(0,\widehat{\delta}_i]$.
  \item \textup{(Functoriality)}
  the encodings are compatible with the simplicial inclusion
  $\iota:C_d(X_1)\hookrightarrow C_d(X_2)$ in the following sense:
  $\mathcal{E}:=\iota\circ\mathcal{E}_1$ is an isometry on $\ker H_1$ whose
  restriction to $\ker H_2$ agrees with $\mathcal{E}_2$.  Consequently,
  \[
    \Pi_{\ker\Delta_{2,d}}\big|_{\ker\Delta_{1,d}}
    =
    \mathcal{E}
    \big(\Pi_{\ker H_2(\lambda)}\big|_{\ker H_1}\big)
    \mathcal{E}^{\dagger},
  \]
  and hence
  \[
    \beta_d^{1,2}
    =
    \dim\mathrm{Im}\,
    \Pi_{\ker H_2(\lambda)}\big|_{\ker H_1}.
  \]
\end{enumerate}
\end{conjecture}

This conjecture is deliberately narrower than a full universal simulation
theorem.
It only asks for the class of frustration-free history Hamiltonians that arise
in the $\mathsf{SDQC}_1$ reduction, and for the preservation of the exact
kernel structure rather than the full spectrum.  At the same time, it asks for
properties that approximate Hamiltonian simulation does not provide: exact
kernel multiplicity and compatibility with a filtration.
A stronger and independently interesting direction would be to prove a general
kernel-preserving universality theorem for weighted clique Laplacians.  Such a
theorem would likely imply the realization primitive above, but the present
conjecture isolates the more specific structure needed for \textsf{NHP}.
It would also be interesting to explore kernel-preserving universality for
local Hamiltonians.

\begin{lemma}[Conditional hardness of Normalized Harmonic Persistence]\label{lemma:conditional_NHP_clean}
Conjecture~\ref{conj:NHP_realization} implies that
\textsf{Normalized Harmonic Persistence} is $\mathsf{SDQC}_1$-hard.
\end{lemma}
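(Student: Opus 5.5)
The plan is to compose the $\mathsf{SDQC}_1$-hardness reduction for \textsf{Normalized Persistence} (Theorem~\ref{thm:NP_SDQC1}) with the realization procedure of Conjecture~\ref{conj:NHP_realization}. Let $x$ be an instance of an $\mathsf{SDQC}_1$ problem with verifier $U_x$ over the fixed gate set. First, amplify it as in Section~\ref{sec:sdqc1} to $U_x'$ with perfect completeness on $\mathcal{S}$ and soundness $p(n)<(a(n)-b(n))/3$ on $\mathcal{S}^\perp$. Then form the history Hamiltonians $H_1=J_{in}H_{in}+J_{prop}H_{prop}+J_{clock}H_{clock}$ and $H_2(\lambda)=H_1+\lambda H_{out}$. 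Here I would replace the coefficient $(T+1)$ of Section~\ref{sec:NP} by an inverse-polynomially smaller $\lambda$, as the conjecture requires.

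The key step is to recheck that the kernel analysis of Sections~\ref{sec:LEKD} and~\ref{sec:NP} does not depend on the size of the output coefficient. We have $\ker H_1=S$ with $\dim S=2^{N-1}$, and this holds regardless of $\lambda$. Moreover, $\ker H_2(\lambda)=\mathrm{Span}\{\ket{\eta(\psi)}:\psi\in\mathcal{S}\}$ holds exactly for every $\lambda>0$, because $H_{out}$ is PSD and vanishes on a history state iff $U_x'$ accepts $\psi$ with certainty. Hence $\ker H_2\subseteq\ker H_1$ and
\[
\frac{\dim\mathrm{Im}\,\Pi_{\ker H_2(\lambda)}|_{\ker H_1}}{\dim\ker H_1}=\frac{\dim\mathcal{S}}{2^{N-1}}.
\]
The gaps must also be rederived for this $\lambda$. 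The gap of $H_1$ above zero is unchanged. For $H_2(\lambda)$, the perturbative argument of Theorem~\ref{thm: low energy subtrace} applied with $\lambda H_{out}$ in place of $(T+1)H_{out}$ gives low eigenvalues that are $O(\varepsilon)$-close to $\lambda\tilde\lambda_i/(T+1)$. On $\mathcal{S}^\perp$ these are at least $\lambda(1-p(n))/(T+1)-O(\varepsilon)$. Choosing $\varepsilon\ll\lambda/(T+1)$, by taking $J_{in},J_{prop}$ polynomially larger, leaves an inverse-polynomial gap. One must also check that $H_1,H_2(\lambda)$ are sums of projectors onto integer states over the fixed gate set, so that they fall in the class covered by the conjecture.

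Next, I would apply the procedure of Conjecture~\ref{conj:NHP_realization} to obtain $d$, weighted clique complexes $X_1\hookrightarrow X_2$ (the graphs $G_1\subseteq G_2$), and the gap bounds $\widehat\delta_1,\widehat\delta_2$. These serve as the $\delta_1,\delta_2$ of Problem~\ref{prob:normalized_harmonic_persistence}, and $\beta_d^1=2^{N-1}\neq0$. By functoriality,
\[
\frac{\beta_d^{1,2}}{\beta_d^1}=\frac{\dim\mathrm{Im}\,\Pi_{\ker H_2(\lambda)}|_{\ker H_1}}{\dim\ker H_1}=\frac{\dim\mathcal S}{2^{N-1}}.
\]
A single \textsf{NHP} query with $\epsilon<(a(n)-b(n))/3$ then estimates $\dim\mathcal S/2^{N-1}$. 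Combined with eq.~\eqref{eq:power_sdqc}, this decides $x$, and everything runs in polynomial time.

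I expect the main obstacle to be the consistency between the parameter regime needed for the gap of $H_2(\lambda)$ and the conjecture's hypotheses. Specifically, $\lambda$ must be inverse-polynomially smaller than the gadget scale of $H_1$, yet the perturbative gap above $\ker H_2$ must still be at least $1/\mathrm{poly}(n)$. I would resolve this by fixing $\lambda=1/\mathrm{poly}(n)$ first and then raising $J_{in},J_{prop},J_{clock}$ polynomially. Everything else is direct substitution.
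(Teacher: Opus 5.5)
Your proposal is correct and follows the same route as the paper: reuse the $\mathsf{SDQC}_1$ history Hamiltonians $H_1$ and $H_2(\lambda)=H_1+\lambda H_{out}$, apply Conjecture~\ref{conj:NHP_realization} to obtain $X_1\hookrightarrow X_2$ with $\beta_d^1=\dim\ker H_1$ and $\beta_d^{1,2}=\dim\mathrm{Im}\,\Pi_{\ker H_2(\lambda)}|_{\ker H_1}$, take $\widehat\delta_1,\widehat\delta_2$ as the gap promises, and decide $x$ with a single \textsf{NHP} query. Your explicit check that $\ker H_2(\lambda)$ does not depend on $\lambda$ fills in a step the paper leaves implicit. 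Your re-derivation of the gap of $H_2(\lambda)$ is harmless but not actually needed for the lemma, since the conjecture supplies the Laplacian gap bounds directly.
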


\begin{proof}
Let $x$ be an $\mathsf{SDQC}_1$ instance.  The local-Hamiltonian reduction
for \textsf{Normalized Persistence} constructs $H_1$ and
$H_2(\lambda)=H_1+\lambda H_{\mathrm{out}}$ so that estimating
\[
  \frac{
  \dim\mathrm{Im}\,
  \Pi_{\ker H_2(\lambda)}\big|_{\ker H_1}}
  {\dim\ker H_1}
\]
suffices to decide $x$.
Applying Conjecture~\ref{conj:NHP_realization} gives a filtration $X_1\hookrightarrow X_2$ and a dimension $d$ with $\beta_d^1=\dim\ker H_1$ and
\[
  \beta_d^{1,2}
  =
  \dim\mathrm{Im}\,
  \Pi_{\ker H_2(\lambda)}\big|_{\ker H_1}.
\]
The computable lower bounds $\widehat{\delta}_1,\widehat{\delta}_2$ ensure
that the resulting \textsf{NHP} instance satisfies the required spectral-gap
promise.
Therefore
\[
  \frac{\beta_d^{1,2}}{\beta_d^1}
  =
  \frac{
  \dim\mathrm{Im}\,
  \Pi_{\ker H_2(\lambda)}\big|_{\ker H_1}}
  {\dim\ker H_1}.
\]
A single query to \textsf{NHP} on $(X_1,X_2,d)$ therefore decides the
$\mathsf{SDQC}_1$ instance.  The reduction is polynomial-time by the size
guarantee in the conjecture.
\end{proof}

\section{Containment of Low-Energy and Normalized Persistence Problems}
\label{sec:containment_normalized_persistence}

In this section, we provide proofs of containments for the low-energy and normalized persistence problems.

\subsection{Containment for Low-Energy Problems}

\paragraph{Containment of \textsf{LENS} (Lemma~\ref{lemma:containment_LENS})}

\begin{proof}
By the state-preparation assumption, we can prepare a state
$\tilde{\rho}$ satisfying
$\|\tilde{\rho}-\rho_{\mathcal{S}_{\leq\eta}}\|_1\leq\tau$, where
$\tau\leq \epsilon/(3\max\{1,\|H\|\})$.
We then modify the Sparse Uniform Eigenvalue Sampling (SUES) algorithm
of~\cite{gyurik2022towards}.
Instead of assuming direct access to $\rho_{\mathcal{S}_{\leq\eta}}$, we proceed as follows:
\begin{enumerate}
    \item Prepare $\tilde{\rho}$.
    \item Perform phase estimation w.r.t.\ $e^{iH}$, with precision and Hamiltonian simulation error chosen so that their sum is at most $\epsilon/3$.
    \item Measure the energy register to obtain $\hat{\lambda}$.
\end{enumerate}
For the ideal input
$\rho_{\mathcal{S}_{\leq\eta}}
= \frac{1}{\dim\mathcal{S}_{\leq\eta}}\sum_{i:\,\lambda_i\leq\eta}
  \ket{\psi_i}\bra{\psi_i}$,
each eigenvalue $\lambda_i\leq\eta$ is sampled with probability
$1/\dim\mathcal{S}_{\leq\eta}$, up to the phase-estimation error.
The trace-distance error from replacing $\rho_{\mathcal{S}_{\leq\eta}}$ by
$\tilde{\rho}$ changes the expected sampled energy by at most
$\|H\|\tau\leq\epsilon/3$.
Averaging over polynomially many independent repetitions and applying
Hoeffding's inequality gives an estimate within additive error $\epsilon$ of
\[
\frac{\sum_{0\leq\lambda_i\leq\eta}\lambda_i}{\dim\mathcal{S}_{\leq\eta}}
\]
with probability at least $\mu$.
\end{proof}

\paragraph{Containment of \textsf{LESD} (Lemma~\ref{lemma:containment_LESD})}

\begin{proof}
Again use the state-preparation assumption to prepare
$\tilde{\rho}$ with
$\|\tilde{\rho}-\rho_{\mathcal{S}_{\leq\eta}}\|_1\leq\tau$, where
$\tau\leq\epsilon/3$.
We adapt the eigenvalue counting algorithm of~\cite{gyurik2022towards} for
\textsf{LLSD} to solve \textsf{LESD}.
Rather than applying phase estimation to the maximally mixed state $I/2^n$,
we proceed as follows:
\begin{enumerate}
    \item Prepare $\tilde{\rho}$.
    \item Perform phase estimation w.r.t.\ $e^{iH}$, with precision and Hamiltonian simulation error chosen so that their sum is less than $\xi/2$.
    \item Measure the energy register to obtain energy estimate $\hat{\lambda}$, and record
          the indicator $\mathbf{1}[\hat{\lambda} \leq b + \xi/2]$.
\end{enumerate}
For the ideal input $\rho_{\mathcal{S}_{\leq\eta}}$, all eigenvectors with
eigenvalue at most $b$ are counted, while all eigenvectors with eigenvalue
greater than $b+\xi$ are not counted.
Thus the ideal fraction of counted samples lies between
\[
\frac{\sum_{0\leq\lambda_i\leq b}1}{\dim\mathcal{S}_{\leq\eta}}
\quad\text{and}\quad
\frac{\sum_{0\leq\lambda_i\leq b+\xi}1}{\dim\mathcal{S}_{\leq\eta}}.
\]
The trace-distance error from replacing $\rho_{\mathcal{S}_{\leq\eta}}$ by
$\tilde{\rho}$ changes this probability by at most $\tau\leq\epsilon/3$.
By Hoeffding's inequality, polynomially many independent repetitions suffice
to ensure the sampling error is at most $2\epsilon/3$, with probability at least
$\mu$.
This yields an output $\chi$ satisfying the conditions in \textsf{LESD}.
\end{proof}

\paragraph{Containment of \textsf{LEKD} (Lemma~\ref{lemma:containment_LEKD})}

\begin{proof}
The proof is the same as the proof of Lemma~\ref{lemma:containment_LESD}, with the final counting threshold specialized to the kernel.
Specifically, we proceed as follows:
\begin{enumerate}
    \item Prepare $\tilde{\rho}$ satisfying
    $\|\tilde{\rho}-\rho_{\mathcal{S}_{\leq\eta}}\|_1\leq\tau$ with
    $\tau\leq\epsilon/3$, using the state-preparation assumption.
    \item Perform phase estimation w.r.t.\ $e^{iH}$, with precision and Hamiltonian simulation error chosen so that their sum is less than $\delta/2$.
    \item Measure the energy register to obtain energy estimate $\hat{\lambda}$, and record
          the indicator $\mathbf{1}[\hat{\lambda} \leq \delta/2]$.
\end{enumerate}
Since $H$ has no eigenvalue in $(0,\delta]$, every eigenvalue is either $0$
or greater than $\delta$.
Therefore, for the ideal input $\rho_{\mathcal{S}_{\leq\eta}}$, eigenstates
with $\lambda_i=0$ are counted and eigenstates with positive energy are not
counted.
The trace-distance error from using $\tilde{\rho}$ changes this probability by
at most $\tau\leq\epsilon/3$.
The same trace-distance and sampling bounds as above give an additive-$\epsilon$ estimate of
\[
\frac{\sum_{\lambda_i=0}1}{\dim\mathcal{S}_{\leq\eta}}
\]
with probability at least $\mu$.
\end{proof}

\subsection{Containment for Normalized Persistence Problems}

\paragraph{Containment of \textsf{Normalized Quasi-Persistence} (Lemma~\ref{lemma:containment_NQP})}

\begin{proof}
Let $\mathcal{S} = \ker H_1$, $D = \dim\mathcal{S}$,
$P_- = \mathcal{P}_{2,b}$, $P_+ = \mathcal{P}_{2,b+\xi}$, and
\[
  m_- := \dim\mathrm{Im}\,P_-|_{\mathcal{S}},
  \qquad
  m_+ := \dim\mathrm{Im}\,P_+|_{\mathcal{S}}.
\]
The goal is to output a number lying between $m_-/D-\epsilon$ and
$m_+/D+\epsilon$ with probability at least $\mu$.

Let $\{\ket{\psi_j}\}_{j=1}^{D}$ be an orthonormal eigenbasis of
$\Pi_{\mathcal{S}}P_-\Pi_{\mathcal{S}}$ within $\mathcal{S}$, with
eigenvalues $a_j \in \{0\} \cup [a_{\min},1]$.
 Then
$m_- = |\{j:a_j>0\}|$. Since $P_-^2=P_-$ and
$\Pi_{\mathcal{S}}\ket{\psi_j}=\ket{\psi_j}$,
\[
  \|P_-\ket{\psi_j}\|^2
  = \braket{\psi_j|P_-|\psi_j}
  = \braket{\psi_j|\Pi_{\mathcal{S}}P_-\Pi_{\mathcal{S}}|\psi_j}
  = a_j.
\]

Using phase estimation for $H_2$ with precision $\xi/4$, followed by
thresholding at $b+\xi/2$, we obtain an efficiently implementable cutoff filter.
For simplicity, we first consider an idealized projector given by $Q=\mathcal{P}_{2,b'}$ for some $b'\in [b,b+\xi]$.
The following analysis holds for any projector $\mathcal{P}_{2,b'}$ where $b'\in [b,b+\xi]$.
The idealized projector satisfies the operator relation
\[
  P_- \leq Q \leq P_+ .
\]
Equivalently, $Q$ accepts every eigenvector of $H_2$ with eigenvalue at most
$b$ and rejects every eigenvector with eigenvalue larger than $b+\xi$.

Let $\lambda_1(Q),\ldots,\lambda_D(Q)$ be the eigenvalues of
$\Pi_{\mathcal{S}}Q\Pi_{\mathcal{S}}$ on $\mathcal{S}$, arranged in
nonincreasing order.
Since $\Pi_{\mathcal{S}}P_-\Pi_{\mathcal{S}}
 \leq \Pi_{\mathcal{S}}Q\Pi_{\mathcal{S}}
 \leq \Pi_{\mathcal{S}}P_+\Pi_{\mathcal{S}}$, $\Pi_{\mathcal{S}}Q\Pi_{\mathcal{S}}$ on $\mathcal{S}$ has at least $m_-$ number of eigenvectors with eigenvalue $\geq a_{\mathrm{min}}$. It may have additional eigenvectors  (with arbitrary eigenvalues) but the number of non-zero eigenvalues is bounded by $m_+$.
Write the idealized cutoff as a block-encoding
$U_Q$ of $Q$ with ancilla register $a$, satisfying
$(I \otimes \bra{0^a})\,U_Q\,(I \otimes \ket{0^a}) = Q$ in the idealized
analysis. Write $P_0 := I_n \otimes \ket{0^a}\bra{0^a}$.
Consider the following algorithm:

\medskip
\noindent\textit{Algorithm.}
\begin{enumerate}
  \item Prepare $\rho_{\mathcal{S}}$.
  \item Apply fixed-point amplitude amplification
        (FPAA)~\cite{yoder2014fixedpoint} to $U_Q$ with lower
        bound $\sin\theta_{\min} := \sqrt{a_{\min}}$ and error parameter
        $\delta$; this requires
        $O(a_{\min}^{-1/2}\log(1/\delta)) = \mathrm{poly}(n)$ queries
        to $U_Q$.
  \item Measure the ancilla; record success ($= \ket{0^a}$).
  \item Repeat for $O(\epsilon^{-2}\log(1/(1-\mu)))$ independent trials and
        output fraction of success.
\end{enumerate}

\medskip
\noindent\textit{Analysis.}
Let $\{\ket{\varphi_j}\}_{j=1}^{D}$ be an orthonormal eigenbasis of
$\Pi_{\mathcal{S}}Q\Pi_{\mathcal{S}}$ on $\mathcal{S}$, with corresponding
eigenvalues $\lambda_j(Q)$. Write
$\ket{\phi_j}:=\ket{\varphi_j}\ket{0^a}$ and
$\ket{\phi_j'}:=U_Q\ket{\phi_j}$.
We analyze the three cases for each of the eigenvectors:
\begin{itemize}
  \item $\lambda_j(Q) \geq a_{\min}$: The initial success amplitude
        $\|Q\ket{\varphi_j}\| = \sqrt{\lambda_j(Q)}
        \geq \sin\theta_{\min}$.
        By the fixed-point property of FPAA~\cite{yoder2014fixedpoint},
        the post-amplification success probability satisfies
        $p_j \geq 1 - 2\delta$.
  \item $\lambda_j(Q) = 0$: Since $Q\ket{\varphi_j} = 0$,
        we have $P_0\ket{\phi_j'} = 0$, and a direct computation shows
        that $\ket{\phi_j'}$ is a fixed point of the oblivious Grover
        iterate $G := -U_Q(I-2P_0)U_Q^{\dagger}
        (I-2P_0)$. Since FPAA is a polynomial in $G$, the success
        probability remains zero.
  \item $0<\lambda_j(Q)<a_{\min}$: We make no amplification
        guarantee for this case, but the resulting success probability is
        still between $0$ and $1$.
\end{itemize}
The average success probability over
$\rho_{\mathcal{S}} = \frac{1}{D}\sum_j \ket{\varphi_j}\bra{\varphi_j}$
satisfies
\[
  p_{\mathrm{succ}}
  = \frac{1}{D}\sum_j p_j
  \;\in\; \left[\frac{m_-(1-2\delta)}{D},\;\frac{m_+}{D}\right].
\]
This is because there are at least $m_-$ eigenvalues that are amplified with success probability at least $1-2\delta$ and at most $m_+$ eigenvalues of $\Pi_{\mathcal{S}}Q\Pi_{\mathcal{S}}$ are nonzero.
We can set $\delta\leq 1/\exp(n)$ while keeping the FPAA procedure efficient.
Then, by applying Hoeffding's inequality to
$O(\epsilon^{-2}\log(1/(1-\mu)))$ trials, the empirical mean estimates
$p_{\mathrm{succ}}$ to within $\pm\epsilon/3$ with probability
$\geq \mu$.

In the actual implementation, phase estimation gives a filter whose acceptance
probability is at least $1-\alpha$ on eigenvalues $\leq b$, at most $\alpha$ on eigenvalues $>b+\xi$, and arbitrary on the interval
$(b,b+\xi]$. We can choose $\alpha$ inverse-polynomially small compared with
$\epsilon$ and the polynomial length of the FPAA circuit so that the final
success probability changes by at most $\epsilon/3$. Thus the idealized analysis above
continues to give
$$
p_{\rm succ}\in
\left[\frac{m_-(1-2\delta)}D-\epsilon/3,\,
      \frac{m_+}D+\epsilon/3\right].
$$
By taking $\delta\le \epsilon/6$ and estimating $p_{\rm succ}$ to additive
error $\epsilon/3$, we have
$$
\frac{m_-}{D}-\epsilon \le \chi \le \frac{m_+}{D}+\epsilon
$$
with probability at least $\mu$.
\end{proof}

The remaining containment proofs are similar to the above proof.

\paragraph{Containment of \textsf{Normalized Persistence} (Lemma~\ref{lemma:containment_NP})}

\begin{proof}
Apply the proof of Containment of \textsf{Normalized Quasi-Persistence}
(Lemma~\ref{lemma:containment_NQP}) with $\mathcal{P} = \Pi_{\ker H_2}$.
It remains to show that a block-encoding $U_{\mathcal{P}}$ of $\Pi_{\ker H_2}$
can be constructed in polynomial time.
By running phase estimation on $H_2$ with precision $\delta_2' < \delta_2/4$, we can implement an $\alpha$-approximate block-encoding of $\Pi_{\ker H_2}$ with $\alpha$ sufficiently small compared with the length of FPAA. Then, a similar analysis to Lemma~\ref{lemma:containment_NQP} holds.
\end{proof}

\paragraph{Containment of \textsf{Normalized Quasi-Harmonic Persistence} (Lemma~\ref{lemma:containment_NQHP})}

\begin{proof}
Apply the proof of Containment of \textsf{Normalized Quasi-Persistence}
(Lemma~\ref{lemma:containment_NQP}) with $\mathcal{S} = \mathrm{Im} \mathcal{P}_{1,\eta}$, $P_-=\mathcal{P}_{2,b}$, and $P_+=\mathcal{P}_{2,b+\xi}$.
An approximate cutoff block-encoding $U_Q$ for the window $[b,b+\xi]$,
with idealized cutoff satisfying $\mathcal{P}_{2,b}\leq Q\leq
\mathcal{P}_{2,b+\xi}$, is obtained by Hamiltonian simulation of $\Delta_{2,d}$
followed by phase estimation with threshold $b+\xi/2$ and precision $O(\xi)$.
Then, a similar analysis to Lemma~\ref{lemma:containment_NQP} holds.
\end{proof}

\paragraph{Containment of \textsf{Normalized Harmonic Persistence} (Lemma~\ref{lemma:containment_NHP})}

\begin{proof}
Apply the proof of Containment of \textsf{Normalized Persistence}
(Lemma~\ref{lemma:containment_NP}) with $\mathcal{S} = \mathcal{H}_d^1$
and $\mathcal{P} = \Pi_{\mathcal{H}_d^2}$.
The approximate block-encoding of $\Pi_{\mathcal{H}_d^2} = \Pi_{\ker\Delta_{2,d}}$
is obtained by phase estimation on $\Delta_{2,d}$ with precision
$\delta_2' < \delta_2/4$.
Then, a similar analysis to Lemma~\ref{lemma:containment_NQP} holds.
\end{proof}

\section*{Acknowledgements}
DL acknowledges funding from UK EPSRC (EP/W032643/1
and EP/Z53318X/1).
MSK acknowledges funding from the UK EPSRC through EP/Z53318X/1 and EP/W032643/1, the KIST through Open Innovation fund and the National Research Foundation of Korea grant funded by the Korean government (MSIT) (No. RS-2024-00413957).
RH was supported by JST PRESTO Grant Number JPMJPR23F9 and JST ASPIRE Grant Number JPMJAP26A4, Japan.

\section*{Author Contributions}
DL and RH conceived the idea and proved the results. RB, MSK acted as supervisors for DL and participated in discussions. All four authors participated in writing the final draft.

\bibliographystyle{alpha}
\bibliography{main}

\appendix
\section{Combinatorial-Laplacian Simulation}\label{appendix: Ray24}
Here we outline the construction used in \cite{rayudu2024fermionic} to show that estimating the minimum eigenvalue of a combinatorial Laplacian is \QMA-hard. This construction uses the representation of the combinatorial Laplacian via the fermion hardcore model. That is, a model where no fermions are allowed to occupy adjacent sites. Precisely, consider a weighted graph $G=(V,E)$ with weight $u_i$ on vertex $i$. Then the fermion hardcore model on $G$ is given by the Hamiltonian:
$$H_G = \sum_{(i,j)\in E}u_iu_j P_ia_i^\dagger a_j P_j + \sum_{i\in V}u_i^2P_i$$
where $a_i^\dagger$ and $a_i$ are fermionic creation and annihilation operators respectively. The $P_i$ are projectors that enforce the hardcore constraints of the graph and are given by:
$$P_i = \prod_{j| (i,j)\in E}({I}-a_j^\dagger a_j).$$
We see that each $P_i$ projects onto the subspace where there are no fermions adjacent to vertex $i$. It can be shown that $H_G$ is a direct sum of the combinatorial Laplacians corresponding to the independence complex of $G$. In particular the $k$'th combinatorial Laplacian of $\bar{G}$ is given by the restriction of $H_G$ to the subspace with $(k+1)$ fermions. Rayudu \cite{rayudu2024fermionic} used the fermion hardcore model to show that a combinatorial Laplacian can simulate an $XX+ZZ$ Hamiltonian. Moreover, since these are universal in the sense of \cite{cubittUniversalQuantumHamiltonians2018} it follows that weighted combinatorial Laplacians are also capable of universal simulation in some sense. First suppose that we want to simulate a Hamiltonian consisting of one term, $\frac{3}{8}\mu_{i,j}(X_iX_j +Z_iZ_j)$, with $\mu_{i,j}>0$. Hence we want to simulate a 2 qubit space $\mathcal{H}_{target}= (\mathbb{C}^2)^{\otimes2}$. Then to encode each qubit we use fermionic states that obey the hardcore restrictions on a triangle, as shown in Figure \ref{fig:one-qubit-encoding}. Note that such states must consist of at most one fermion.

\begin{figure}[htbp]
    \centering
    \begin{tikzpicture}[
        vertex/.style={circle, draw, fill=white, minimum size=8mm, inner sep=0pt},
        edge/.style={thick}
    ]
        \node[vertex] (i1) at (-1.5,0) {$i_1$};
        \node[vertex] (i2) at (1.5,0) {$i_2$};
        \node[vertex] (i3) at (0,2) {$i_3$};
        \draw[edge] (i1) -- (i2);
        \draw[edge] (i2) -- (i3);
        \draw[edge] (i3) -- (i1);
    \end{tikzpicture}

    \caption{
    One-qubit encoding gadget. The three vertices \(i_1,i_2,i_3\) represent fermionic modes used to encode logical qubit \(i\), and edges denote hardcore constraints between the sites.}
    \label{fig:one-qubit-encoding}
\end{figure}
Then consider the fermion hardcore model on this triangle (with coefficients 1 for now)
$$
    H_{0,i} = P_{i_1}a_{i_1}^\dagger a_{i_2}P_{i_2} + P_{i_2}a_{i_2}^\dagger a_{i_1}P_{i_1} + P_{i_1}a_{i_1}^\dagger a_{i_3}P_{i_3} + P_{i_3}a_{i_3}^\dagger a_{i_1}P_{i_1}+
P_{i_2}a_{i_2}^\dagger a_{i_3}P_{i_3} +
P_{i_3}a_{i_3}^\dagger a_{i_2}P_{i_2} + P_{i_1} + P_{i_2} + P_{i_3}
$$
It is clear that the subspace of the Fock space satisfying the hardcore constraints is 4 dimensional, as we cannot have more than one fermion on the graph. One can then show that $H_{0,i}$ has two $0$-energy ground states, given by
\begin{align}
    \ket{0}&=S_0\ket{vac}\\
    \ket{1}&=S_1\ket{vac}
\end{align}
where
\begin{align}
    S_0 &= \frac{1}{2\sqrt{3}}[(1-\sqrt{3})a_3^\dagger - 2a_1^\dagger + (1+\sqrt{3})a_2^\dagger]\\
    S_1 &= \frac{1}{2\sqrt{3}}[(1+\sqrt{3})a_3^\dagger -2a_1^\dagger +(1-\sqrt{3})a_2^\dagger]
\end{align}
A 2 dimensional ground space might seem strange, as the complement of this graph has 3 connected components, and so we should expect $\beta_0=3$. However, the fermion hardcore model instead captures the `reduced homology' of the independence complex. In this convention we consider the empty set (or vacuum in the fermion picture) to be a $(-1)$-simplex. This means that the $0$'th Betti number will now actually be equal to the number of connected components minus one. Note that this convention leaves all other Betti numbers unchanged.

We will use the ground states $\ket{0},\ket{1}$ to encode qubit $i$. To encode multiple qubits we use several copies of the same triangle, and the $XX+ZZ$ interactions between them are simulated using mediator modes $x,y,z$, see Figure \ref{fig:two-qubit-interaction-gadget}.
\begin{figure}[htbp]
    \centering
    \begin{tikzpicture}[
        enc/.style={circle, draw, fill=white, minimum size=8mm, inner sep=0pt},
        med/.style={circle, draw, fill=gray!20, minimum size=8mm, inner sep=0pt},
        edge/.style={thick}
    ]
    \node[enc] (i1) at (0, 1.6) {$i_1$};
    \node[enc] (i2) at (1.4, 0) {$i_2$};
    \node[enc] (i3) at (0,-1.6) {$i_3$};
    \node[enc] (j1) at (7, 1.6) {$j_1$};
    \node[enc] (j2) at (5.6, 0) {$j_2$};
    \node[enc] (j3) at (7,-1.6) {$j_3$};
    \node[med] (z) at (3.5, 1.6) {$z$};
    \node[med] (x) at (3.5, 0) {$x$};
    \node[med] (y) at (3.5,-1.6) {$y$};
    \draw[edge] (i1) -- (i2);
    \draw[edge] (i2) -- (i3);
    \draw[edge] (i3) -- (i1);
    \draw[edge] (j1) -- (j2);
    \draw[edge] (j2) -- (j3);
    \draw[edge] (j3) -- (j1);
    \draw[edge] (i1) -- (z);
    \draw[edge] (z) -- (j1);

    \draw[edge] (i2) -- (x);
    \draw[edge] (x) -- (j2);

    \draw[edge] (i3) -- (y);
    \draw[edge] (y) -- (j3);

    \end{tikzpicture}
    \caption{Two-qubit interaction gadget. The vertices \(i_1,i_2,i_3\) and \(j_1,j_2,j_3\) form the encoding triangles for logical qubits \(i\) and \(j\), while \(x,y,z\) are mediator vertices.}
    \label{fig:two-qubit-interaction-gadget}
\end{figure}
Now consider the Hamiltonian
$$H_0 = H_{0,i} + H_{0,j}$$
Note that $H_0$ is just the fermion hardcore model on two disjoint copies of the qubit triangle. It follows that, in the two fermion sector, $H_0$ has four ground states
\begin{align}
    \ket{00} &= S_{0,i}S_{0,j}\ket{vac}\\
    \ket{01} &= S_{0,i}S_{1,j}\ket{vac}\\
    \ket{10} &= S_{1,i}S_{0,j}\ket{vac}\\
    \ket{11} &= S_{1,i}S_{1,j}\ket{vac}
\end{align}
where $S_{0,i}$ and $S_{1,i}$ are just the previous $S_0$ and $S_1$ operators with all creation operators acting on the $i$-nodes. $S_{0,j}$ and $S_{1,j}$ are defined analogously. We can then encode the 2 qubit space $\mathcal{H}_{target} = (\mathbb{C}^2)^{\otimes 2}$ in these ground states. Precisely, we will take the simulator Hilbert space to be $\mathcal{H}_{sim} = \Pi_{hc}\mathcal{F}^{(2)}$, where $\mathcal{F}^{(2)}$ denotes the 2 fermion sector of the full fermionic Fock space on the 9 sites, and $\Pi_{hc}$ is the projector onto states that obey the hardcore constraints corresponding to the graph in Figure \ref{fig:two-qubit-interaction-gadget}. That is, $\mathcal{H}_{sim}$ is the space spanned by states of 2 fermions on the graph in Figure \ref{fig:two-qubit-interaction-gadget} such that no two fermions are adjacent. Importantly, $\mathcal{H}_{sim}$ does contain states with fermions on the mediators. This means, that for the 2 qubit and 1 interaction graph, $\dim(\mathcal{H}_{sim}) = 24$. Despite this, the groundspace of $H_0$ is still 4 dimensional which we use to encode the 2 qubit Hilbert space. The isometry $\mathcal{E}:\mathcal{H}_{target}\to \mathcal{H}_{sim}$ is then simply the mapping from the 4 computational basis states to the 4 fermionic states above. The goal now is to simulate an $XX+ZZ$ interaction through a perturbation to $H_0$ that acts on the mediator modes. This is achieved with the following result of Bravyi and Hastings. Here we assume $H_0$ has smallest eigenvalue 0 and next smallest eigenvalue greater than or equal to 1. Note that both of these conditions hold for the $H_0$ defined above. We use $\Pi_-$ and $\Pi_+$ to denote the projectors onto the ground space of $H_0$ and its orthogonal complement, respectively. We also use the standard shorthand $O_{\pm\pm}$ to denote $\Pi_{\pm}O\Pi_{\pm}$ for an operator $O$. With this in mind, the result is as follows:

\begin{theorem}[Second Order Simulation \cite{bravyiComplexityQuantumIsing2014}]\label{thm: second order simulation}
    Consider a target Hamiltonian $H_{target}$ acting on $\mathcal{H}_{target}$. Suppose there exists an isometry $\mathcal{E}:\mathcal{H}_{target}\to \mathcal{H}_{sim}$ along with $H_0$, $V_{main}$, $V_{extra}$ acting on $\mathcal{H}_{sim}$ such that $(V_{main})_{--} = 0$, $V_{extra}$ is block diagonal with respect to $\Pi_-$ and $\Pi_{+}$ and
    $$\norm{\mathcal{E}H_{target}\mathcal{E}^\dagger - (V_{extra})_{--} + (V_{main})_{-+}H_0^{-1}(V_{main})_{+-}}\leq \epsilon/2.$$
    Suppose further that the norms of $V_{main}$ and $V_{extra}$ are at most $\Lambda$. Then $H_{sim} = \Delta H_0 + \Delta^{1/2}V_{main} + V_{extra}$ simulates $H_{target}$ (in the sense of Bravyi and Hastings) with an error of $(\eta,\epsilon)$, provided that $\Delta \geq O(\epsilon^{-2}\Lambda^6 + \eta^{-2}\Lambda^2)$. For the relevant notion of simulation here see definition \ref{def: Bravyi Simulation}.
\end{theorem}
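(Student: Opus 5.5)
We plan to prove this with a Schrieffer--Wolff (SW) transformation rather than the self-energy expansion of Lemma~\ref{lemma:Perturbation Theory}. The SW route yields an explicit unitary linking the low-energy subspace of $H_{sim}$ to $\mathrm{Im}\,\Pi_-$, and that unitary is what produces the dressed isometry $\tilde{\mathcal{E}}$ required by Definition~\ref{def: Bravyi Simulation}.

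\textbf{Setup.} Write $H_{sim}=\Delta H_0+V$ with $V:=\Delta^{1/2}V_{main}+V_{extra}$. The unperturbed operator $\Delta H_0$ has ground energy $0$ and gap at least $\Delta$. Its perturbation satisfies
\[
\|V\|\leq \Delta^{1/2}\Lambda+\Lambda ,
\]
which is below $\Delta/2$ once $\Delta\geq c\Lambda^2$ for a suitable constant $c$. Standard SW theory (in the Bravyi--DiVincenzo--Loss form) then supplies an antihermitian, block-off-diagonal generator $S$ with the following properties.
\begin{itemize}
\item $\|S\|=O(\|V\|/\Delta)=O(\Lambda\Delta^{-1/2})$.
\item The unitary $U=e^{S}$ maps $\mathrm{Im}\,\Pi_-$ exactly onto the span $\mathcal{L}$ of the lowest $\dim\mathcal{H}_{target}$ eigenvectors of $H_{sim}$, which stay separated from the rest of the spectrum.
\item The effective Hamiltonian $H_{\mathrm{eff}}:=\Pi_-U^\dagger H_{sim}U\Pi_-$ has a convergent series expansion in $V/\Delta$.
\end{itemize}
We define the dressed isometry by $\tilde{\mathcal{E}}:=U\mathcal{E}$. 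Its image is $U(\mathrm{Im}\,\Pi_-)=\mathcal{L}$, which gives condition~1. It also satisfies
\[
\|\tilde{\mathcal{E}}-\mathcal{E}\|\leq\|U-I\|=O(\Lambda\Delta^{-1/2}),
\]
so condition~3 holds once $\Delta\geq O(\eta^{-2}\Lambda^2)$.

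\textbf{Order-by-order evaluation.} The first-order term is $V_{--}=(V_{extra})_{--}$, because $(V_{main})_{--}=0$. The second-order term is
\[
-V_{-+}(\Delta H_0)^{-1}V_{+-}.
\]
Since $V_{extra}$ is block diagonal, only $\Delta^{1/2}V_{main}$ contributes to $V_{\pm\mp}$. The factors $\Delta^{1/2}\cdot\Delta^{1/2}\cdot\Delta^{-1}$ therefore cancel, leaving exactly $-(V_{main})_{-+}H_0^{-1}(V_{main})_{+-}$. Each third- and higher-order term contains at least three factors of $V$ and at least two resolvent factors. The worst case is three copies of $\Delta^{1/2}V_{main}$, which gives $O(\Delta^{3/2}\Lambda^3/\Delta^2)=O(\Lambda^3\Delta^{-1/2})$. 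Summing the geometric tail, we aim for
\[
\Bigl\|H_{\mathrm{eff}}-(V_{extra})_{--}+(V_{main})_{-+}H_0^{-1}(V_{main})_{+-}\Bigr\|=O(\Lambda^3\Delta^{-1/2}),
\]
and this is at most $\epsilon/2$ when $\Delta\geq O(\epsilon^{-2}\Lambda^6)$. Combined with the hypothesis, this gives $\|\mathcal{E}H_{target}\mathcal{E}^\dagger-H_{\mathrm{eff}}\|\leq\epsilon$. Using $U^\dagger\tilde{\mathcal{E}}=\mathcal{E}$ and $\Pi_-\mathcal{E}=\mathcal{E}$, we have
\[
\tilde{\mathcal{E}}^\dagger H_{sim}\tilde{\mathcal{E}}=\mathcal{E}^\dagger H_{\mathrm{eff}}\mathcal{E},
\]
and since $\mathcal{E}$ is an isometry onto $\mathrm{Im}\,\Pi_-$, conditions~2 follows.

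\textbf{Main obstacle.} The hard step is controlling the remainder of the SW series uniformly, without it being inflated by the large $\Delta^{1/2}$ prefactor. The point that needs care is that every third-order and higher term is suppressed by at least $\Delta^{-1/2}$ relative to the retained terms. My first approach is to invoke the BDL norm bounds for the $k$-th order term,
\[
\|H_{\mathrm{eff}}^{(k)}\|\leq O\bigl(\Delta\,(\|V\|/\Delta)^k\bigr),
\]
and sum these for $k\geq3$. This yields $O(\Delta(\Lambda\Delta^{-1/2})^3)=O(\Lambda^3\Delta^{-1/2})$, using $\|V\|=O(\Delta^{1/2}\Lambda)$. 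If the constants in those bounds prove awkward, the fallback is to work directly with the resolvent (self-energy) expansion of Lemma~\ref{lemma:Perturbation Theory} for the spectrum, then use a Davis--Kahan argument to control the eigenprojector, and so the isometry $\tilde{\mathcal{E}}$.
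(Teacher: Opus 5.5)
The paper does not prove this itself but imports it from Bravyi--Hastings, and their argument is exactly your Schrieffer--Wolff route: take $\tilde{\mathcal{E}}=U\mathcal{E}$ with $\|U-I\|=O(\|V\|/\Delta)=O(\Lambda\Delta^{-1/2})$, and control $\tilde{\mathcal{E}}^\dagger H_{sim}\tilde{\mathcal{E}}$ through the second-order SW expansion with an $O(\|V\|^3/\Delta^2)=O(\Lambda^3\Delta^{-1/2})$ remainder, so your proposal is correct and matches the source. You should still state explicitly two hypotheses you rely on but the statement leaves implicit: that $\mathcal{E}$ maps onto $\mathrm{Im}\,\Pi_-$ (with $H_0$ having zero ground energy and gap at least $1$), and that $\|V\|<\Delta/2$ needs $\Delta\gtrsim\Lambda$ as well as $\Delta\gtrsim\Lambda^2$, which is automatic when $\Lambda\geq 1$, as in the paper's application where $\Lambda=\mathrm{poly}(n)$.
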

 The perturbations considered by Rayudu are given by $V_{main} = \sqrt{\mu_{i,j}}(V_x + V_y + V_z)$ and $V_{extra} = \mu_{i,j}(P_x + P_y + P_z)$, where
\begin{align}
    V_x &= P_xa_x^\dagger a_{i_2}P_{i_2} + P_{i_2}a_{i_2}^\dagger a_x P_x + P_xa_x^\dagger a_{j_2}P_{j_2}+ P_{j_2}a_{j_2}^\dagger a_x P_x\label{eqn:V_x}\\
    V_y &= P_ya_y^\dagger a_{i_3}P_{i_3} + P_{i_3}a_{i_3}^\dagger a_y P_y + P_ya_y^\dagger a_{j_3}P_{j_3}+ P_{j_3}a_{j_3}^\dagger a_y P_y\\
    V_z &= P_za_z^\dagger a_{i_1}P_{i_1} + P_{i_1}a_{i_1}^\dagger a_z P_z + P_za_z^\dagger a_{j_1}P_{j_1}+ P_{j_1}a_{j_1}^\dagger a_z P_z. \label{eqn: V_z}
\end{align}
Aiming to apply the theorem above, we note that $(V_{main})_{--} =0$ since all four ground states of $H_0$ do not contain a mediator fermion. Then the effect of the perturbation $V_{main}$ can be computed as
$$-(V_{main})_{-+}H_0^{-1}(V_{main})_{+-}= -\frac{7}{12}I + \frac{5}{24}(X_iX_j + Z_iZ_j)$$
and similarly the effect due to $V_{extra}$ is given by
$$(V_{extra})_{--} = \frac{4}{3}I + \frac{1}{6}(X_iX_j + Z_iZ_j).$$
Note that $V_{extra}$ is not actually block diagonal with respect to $\Pi_{-}$ and $\Pi_{+}$ however this condition in theorem \ref{thm: second order simulation} can be relaxed to $(V_{main})_{-+}H_0^{-1}(V_{extra})_{+-}=0$. This is because the block diagonal condition is just to kill three unwanted contributions in the effective Hamiltonian, but the above condition ensures two of these are zero and the third is $O(\Delta^{-1})$ which can be absorbed into the error. One can then check that this latter condition does hold for our choice of $V_{extra}$.
It remains to upper bound $\Lambda$, the bound on the operator norms of $V_{main}$ and $V_{extra}$. We note that both perturbations consist of projectors, and terms of the form $Pa^\dagger_pa_qP + Pa_q^\dagger a_pP$, both of which have norm at most one. Hence we have:
\begin{align}
    \norm{V_{main}} &\leq 6\sqrt{\mu_{i,j}}\\
    \norm{V_{extra}} &\leq 3\mu_{i,j}
\end{align}
Hence we can take $\Lambda = \max\{6\sqrt{\mu_{i,j}}, 3\mu_{i,j}\}$. Putting all of this together we can simulate
$$\mu_{i,j}(\frac{3}{4}I + \frac{3}{8}(X_iX_j+Z_iZ_j))$$
on the encoded qubits $i$ and $j$ with error $(\eta,\epsilon)$ using the simulator Hamiltonian
$$H_{sim} = \Delta H_0 + \Delta^{1/2}V_{main}+ V_{extra}$$
where we take $\Delta \geq O(\epsilon^{-2}\Lambda^6 + \eta^{-2}\Lambda^2)$. We note that the target Hamiltonian contains an identity shift from the actual desired Hamiltonian, however since this shift is known the actual spectrum of the desired Hamiltonian can be obtained by subtracting this constant. Finally we see that $H_{sim}$ is the fermion hardcore model on the graph in Figure \ref{fig:two-qubit-interaction-gadget}, with vertex weighting $u_{i_k}=u_{j_k}=\sqrt{\Delta}$ for $k=1,2,3$ and $u_x=u_y=u_z=\sqrt{\mu_{i,j}}$. Since $\mathcal{H}_{sim}$ is the subspace spanned by two independent fermions, it follows that $H_{sim}$ is equivalent to $\Delta_1(\text{ind}(G))$, where $\text{ind}(G)$ denotes the independence complex of the graph in Figure \ref{fig:two-qubit-interaction-gadget}. That is, the first combinatorial Laplacian of the complement graph $\bar{G}$.

It remains to show how this construction generalises to a full $n$-qubit Hamiltonian with XX+ZZ interactions. Let $[n]=\{1,2,\ldots,n\}$ denote the qubits the Hamiltonian acts on, where we have fixed some ordering from 1 to $n$. Then we denote the set of interactions by $\mathcal{I}$, where
$$\mathcal{I} = \{(i,j)\in [n]^2: i<j\hspace{0.5em} \text{and} \hspace{0.5em}X_iX_j+Z_iZ_j \hspace{0.5em}\text{occurs in}\hspace{0.5em}H_{target}\}.$$
Motivated by the identity shift in the two qubit case we then take
$$H_{target}=CI + \sum_{(i,j)\in \mathcal{I}}\frac{3}{8}\mu_{i,j}(X_iX_j+Z_iZ_j)$$
where $\mu_{i,j} = \mathrm{poly}(n)>0$ and $C$ is a constant to be determined later. Let $k = |\mathcal{I}|\leq \frac{n(n-1)}{2}$ also. We will simulate $H_{target}$ using the fermion hardcore model on a graph. Again we encode each qubit $i$ using a triangle on the nodes $i_1,i_2,i_3$ and for each interaction between qubits $i$ and $j$ we introduce 3 mediator nodes $x_{i,j}$, $y_{i,j}$, $z_{i,j}$. We do this such that each interaction gives a subgraph identical to the graph in Figure \ref{fig:two-qubit-interaction-gadget}. If we call the overall graph constructed in this way $G$, then it is clear that $G$ has $3n+3k$ nodes. We take $\mathcal{H}_{sim}=\Pi_{hc}\mathcal{F}^{(n)}$, the space spanned by states with $n$ fermions on $G$ that satisfy the hardcore constraints. Again, the dimension of $\mathcal{H}_{sim}$ will be the number of independent sets on $G$ of size $n$. Then if we take
$$H_0 = \sum_{i=1}^nH_{0,i}$$
it is clear that $H_0$ has a ground space of dimension $2^n$, spanned by states of the form:
\begin{align}
    \ket{00\ldots0} &=S_{1,0}S_{2,0}\cdots S_{n,0}\ket{vac}\\
    \ket{10\ldots0}&=S_{1,1}S_{2,0}\cdots S_{n,0}\ket{vac}\\
    \ket{01\ldots0}&=S_{1,0}S_{2,1}\cdots S_{n,0}\ket{vac}\\
    \ldots \notag
\end{align}
where $S_{i,0}$ and $S_{i,1}$ denote the operators $S_0$ and $S_1$ with creation operators on the triangle corresponding to $i$. This ground space is the encoding of the $n$-qubit Hilbert space that $H_{target}$ acts on. Then the perturbations we consider are given by
\begin{align}
    V_{main} &= \sum_{(i,j)\in \mathcal{I}}\sqrt{\mu_{i,j}} V_{main}^{(i,j)}\\
    V_{extra} &= \sum_{(i,j)\in \mathcal{I}}\mu_{i,j}V^{(i,j)}_{extra}
\end{align}
where $V_{main}^{(i,j)}=V_{x_{i,j}}+V_{y_{i,j}}+V_{z_{i,j}}$ with each term as in equations (\ref{eqn:V_x}-\ref{eqn: V_z}) and $V_{extra}^{(i,j)} = P_{x_{i,j}}+P_{y_{i,j}}+P_{z_{i,j}}$. That is, a sum over all interactions of the same perturbations that we had in the two qubit case. We can then upper bound the norms of the perturbations as
\begin{align}
    \norm{V_{main}} &\leq 6\sum_{(i,j)\in\mathcal{I}}\sqrt{\mu_{i,j}}\\
    \norm{V_{extra}}&\leq 3\sum_{(i,j)\in \mathcal{I}}\mu_{i,j}
\end{align}
and so we can take
$$\Lambda = \max\left\{6\sum_{(i,j)\in\mathcal{I}}\sqrt{\mu_{i,j}}, 3\sum_{(i,j)\in\mathcal{I}}\mu_{i,j}\right\}=\mathrm{poly}(n).$$
Then one can show with the same computations as before that
$$H_{sim} = \Delta H_0 + \Delta^{1/2}V_{main}+ V_{extra}$$
is a $(\eta,\epsilon)$ simulation of $H_{target}$ for $\Delta \geq O(\epsilon^{-2}\Lambda^6 + \eta^{-2}\Lambda^2)$, by theorem \ref{thm: second order simulation}. Note that after performing the same computations as the two qubit case, one can see that the identity shift is given by
$$C = \frac{3}{4}\sum_{(i,j)\in \mathcal{I}}\mu_{i,j}$$
As $H_{sim}$ acts on the space of $n$ fermions that satisfy the hardcore constraints, it follows that it is equivalent to the $\Delta_{n-1}(\text{ind}(G))$, with suitable vertex weighting as in the 2-qubit case. We conclude that an $n$-qubit $XX+ZZ$ Hamiltonian can be simulated (in the sense of Bravyi and Hastings) by the $(n-1)$'th combinatorial Laplacian of a $3n+3k$ node graph, up to an efficiently computable identity shift.
\section{Formal Statements of Related Problems}\label{app: Add problems}
Here we outline some of the related problems that fit into the complexity landscape we discuss. These problems, and their relevant complexities, also appear in Table \ref{tab:problems}.

First we outline the following problem:

\begin{problem}[\textsf{Normalized Subtrace}]
\label{prob:NST} \mbox{} \\
\textbf{Input:}
\begin{enumerate}
    \item Real numbers $\epsilon \geq  \frac{1}{\mathrm{poly}(n)}$ and  $1> \mu>1/2$.
    \item A sparse positive semidefinite Hamiltonian $H \in \mathbb{C}^{2^n\times 2^n}$ with $\|H\|\leq \mathrm{poly}(n)$.
\end{enumerate}
\textbf{Output:}
An estimate $\chi\in \mathbb{R}$ with probability at least $\mu$ that satisfies
$$
\overline{\mathrm{Tr}_\eta}(H) -\epsilon \leq \chi \leq
\overline{\mathrm{Tr}_\eta}(H) +\epsilon,
$$
where
    $$
    \overline{\mathrm{Tr}_\eta}(H):= \frac{1}{2^n}\sum_{0\leq \lambda_i \leq \eta} \lambda_i.
    $$
\end{problem}

The following complexity theoretic result was shown in \cite{brandao2008entanglement}.
\begin{theorem}[\cite{brandao2008entanglement}]\label{thm:NST_hardness}
   \textsf{Normalized Subtrace} is \DQC-hard for $O(\log (n))$-local Hamiltonians.
\end{theorem}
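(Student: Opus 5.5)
The statement to prove is Theorem~\ref{thm:NST_hardness}: \textsf{Normalized Subtrace} is \DQC-hard for $O(\log n)$-local Hamiltonians. I would reduce directly from \DQC{} computing, reusing the circuit-to-Hamiltonian analysis from the proof of Theorem~\ref{thm: low energy subtrace}. The one change is that the clock register is encoded in \emph{binary} on $\lceil\log_2(T+1)\rceil$ qubits rather than in unary. This changes two things at once.

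First, no clock-validity penalty $H_{clock}$ is needed, since every basis state of the clock register is a valid time. Each propagation term $H_{prop,t}$ then acts on $O(1)$ computational qubits together with the whole $O(\log T)=O(\log n)$-qubit clock, so the Hamiltonian is $O(\log n)$-local.

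Second, the total number of qubits becomes $N+\log(T+1)$. The full Hilbert-space dimension is therefore $2^N(T+1)$, which matches the history subspace up to a factor of order $(T+1)$. This is what allows the full-space normalization $2^{-n}$ to replace the low-energy normalization.

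Concretely, I would take $H=(T+1)H_{out}+J_{in}H_{in}+J_{prop}H_{prop}$, with the binary-clock analogues of the terms used before. I would then repeat the two-stage perturbative argument via Lemma~\ref{lemma:Perturbation Theory}. The kernel of $H_{prop}$ is spanned by the history states $\ket{\eta_i}$, with gap $\Omega(J_{prop}T^{-2})$. The $H_{in}$ term restricts this space to the $2^{N-1}$ history states with a clean first qubit. On that subspace, the effective Hamiltonian $(T+1)\tilde\Pi H_{out}\tilde\Pi$ has eigenvalues $\tilde\lambda_i$ that are the eigenvalues of the acceptance operator $A$. Exactly as in equation~\eqref{eqn:lambda}, this gives $\frac{1}{2^{N-1}}\sum_i\tilde\lambda_i=1-\mu_{yes}$. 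Choosing $J_{in},J_{prop}=\mathrm{poly}(n)$, the $2^{N-1}$ lowest eigenvalues of $H$ are $O(\varepsilon)$-close to the $\tilde\lambda_i$, and all other eigenvalues exceed $\eta:=J_{in}/(2(T+1))$.

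Next I would relate the quantity $\overline{\mathrm{Tr}_\eta}(H)=2^{-n}\sum_{\lambda_i\le\eta}\lambda_i$ to the \DQC{} acceptance probability. We have $2^{-n}=2^{-N}/(T+1)$ up to rounding $T+1$ to a power of two, which I would arrange by padding the circuit with identity gates. The low-energy sum is $2^{N-1}(1-\mu_{yes})\pm 2^{N-1}O(\varepsilon)$, so
\[
\overline{\mathrm{Tr}_\eta}(H)=\frac{1-\mu_{yes}\pm O(\varepsilon)}{2(T+1)}.
\]
An oracle estimate to additive error $\epsilon'=\varepsilon/(T+1)$, which is still $1/\mathrm{poly}(n)$, therefore recovers $\mu_{yes}$ to within $O(\varepsilon)$. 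Taking $\varepsilon<(a-b)/4$ decides the instance. One must also check the remaining hypotheses: $H$ is PSD as a sum of PSD terms, it is sparse, and $\|H\|=\mathrm{poly}(n)$.

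The main obstacle is establishing the spectral gap of the binary-clock propagation Hamiltonian and the multiplicity bookkeeping. With a binary clock the standard Kitaev analysis applies. $H_{prop}$ is unitarily equivalent to $I\otimes L$, where $L$ is the path-graph Laplacian on $T+1$ vertices, and $L$ has gap $\Theta(T^{-2})$. It is also essential that no spectrum of $H$ outside the history subspace falls below $\eta$: any such spurious eigenvalue would be counted in the full-space subtrace. The two-stage perturbation bounds show that every such state has energy at least $\Omega(\min\{J_{prop}T^{-2},J_{in}/(T+1)\})-O(\varepsilon)>\eta$, so choosing $J_{prop}\gg J_{in}$ settles this point.
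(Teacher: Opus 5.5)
Your proposal is correct and follows the argument the paper attributes to Brand\~ao. It is exactly the two-stage perturbative circuit-to-Hamiltonian analysis from the proof of Theorem~\ref{thm: low energy subtrace}, run with a binary clock instead of a unary one: this makes the Hamiltonian $O(\log n)$-local, and the full-space normalization $2^{-n}$ then differs from the $2^{-(N-1)}$ low-energy normalization only by the polynomial factor $2(T+1)$. The paper cites this result rather than reproving it, and you fill in the details correctly: padding to $T+1=2^k$ avoids spurious zero-energy invalid clock states, and the gap is $\Theta(J_{prop}T^{-2})$.
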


A related problem was studied in \cite{gyurik2022towards}.
For a positive semidefinite matrix $H\in \mathbb{C}^{2^n\times 2^n}$ and $a\leq b \in \mathbb{R}_{\geq 0}$,
we define low-lying spectral density by
$$
D_H(a,b):=\frac{1}{2^n}\sum_{k:a\leq \lambda_k\leq b} 1,
$$
where $\lambda_0 \leq \cdots \leq \lambda_{2^n-1}$ are the eigenvalues of $H$\footnote{In~\cite{gyurik2022towards}, $N_H(a,b)$ was used for this quantity. We instead denote this by $D_H(a,b)$ and use $N_H(a,b)$ for an unnormalized quantity.}.
Then, the Low-lying spectral density estimation (LLSD) problem is defined as follows:
\begin{problem}[\textsf{Low-lying Spectral Density (LLSD)}~\cite{gyurik2022towards}]
\label{prob:LLSD}
  \mbox{} \\
\textbf{Input:}
\begin{enumerate}
    \item Real numbers $\epsilon \geq  \frac{1}{\mathrm{poly}(n)}$, $\eta \geq  \frac{1}{\mathrm{poly}(n)}$, $\delta \geq  \frac{1}{\mathrm{poly}(n)}$ and  $1> \mu>1/2$.
    \item A sparse positive semidefinite Hamiltonian $H \in \mathbb{C}^{2^n\times 2^n}$ with $\|H\|\leq \mathrm{poly}(n)$.
\end{enumerate}
\textbf{Output:}
An estimate $\chi\in [0,1]$ with probability at least $\mu$ that satisfies
  $$
    D_H(0,\eta)-\epsilon \leq \chi \leq D_H(0,\eta+\delta)+\epsilon
    $$
\end{problem}

The following result is known on this problem.

\begin{theorem}[\cite{gyurik2022towards}]\label{thm:LLSD_hardness}
    \textsf{LLSD} is \DQC-hard for $O(\log (n))$-local Hamiltonians.
\end{theorem}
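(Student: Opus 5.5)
The plan is to reduce \textsf{Normalized Subtrace} to \textsf{LLSD}, using the same quadrature argument as the proof of Theorem~\ref{thm: LESD}. By Theorem~\ref{thm:NST_hardness}, \textsf{Normalized Subtrace} is \DQC-hard for $O(\log n)$-local Hamiltonians. The reduction is a polynomial-time truth-table reduction and keeps the Hamiltonian unchanged, so log-locality is preserved automatically. Concretely, I take Brand\~ao's hard instance $H$ on $n$ qubits, with threshold $\eta$ and target precision $\epsilon$. I will estimate $\frac{1}{2^n}\sum_{0\leq\lambda_i\leq\eta}\lambda_i$ from polynomially many non-adaptive \textsf{LLSD} queries on the same $H$.

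The quadrature set-up is as follows. Set $M=\lceil 3\eta/\epsilon\rceil$, $\Delta=\eta/M$ and $x_j=(j+1)\Delta$ for $j=-1,\dots,M-1$. Query \textsf{LLSD} with threshold $x_j$, slack $\xi<\Delta/3$ and precision $\tilde\epsilon=\epsilon/(6M\eta)$, obtaining $\tilde\chi_j$. Each $\tilde\chi_j$ lies within $\tilde\epsilon$ of a value $Y_j=D_H(0,x_j)+\tilde\gamma_j$, where
\[
0\leq\tilde\gamma_j\leq D_H(x_j,x_j+\xi).
\]
Form the differences $\chi_j=\tilde\chi_j-\tilde\chi_{j-1}$ and output $\Lambda=\sum_j \chi_j x_j$. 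The error then splits into three pieces, exactly as before:
\begin{itemize}
\item the query noise, at most $2M\tilde\epsilon\eta=\epsilon/3$;
\item the Riemann discretization error, at most $\Delta\cdot D_H(0,\eta)\leq\Delta\leq\epsilon/3$, now with normalization $2^n$ in place of $\dim S_{\leq\eta}$, which only helps since $D_H\leq 1$;
\item the slack error $\sum_j(\tilde\gamma_j-\tilde\gamma_{j-1})x_j$.
\end{itemize}

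The slack error telescopes to $\tilde\gamma_{M-1}x_{M-1}-\Delta\sum_{j}\tilde\gamma_{j-1}$. The second part is at most $\Delta$ in absolute value, because the intervals $(x_j,x_j+\xi]$ are disjoint when $\xi<\Delta$.

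The main obstacle is the boundary term $\tilde\gamma_{M-1}\eta$, which counts eigenvalues in $(\eta,\eta+\xi]$. I expect to handle it by showing that Brand\~ao's Hamiltonian has an inverse-polynomial spectral gap above the chosen $\eta$, and then taking $\xi$ smaller than that gap. His construction is the log-local analogue of $H_\DQC$ with a binary clock, and the perturbative analysis of Lemma~\ref{lemma:Perturbation Theory} shows its low-lying eigenvalues cluster in $[0,1+O(\varepsilon)]$, well separated from the penalty scale; $\eta$ is placed inside this gap. If such a gap were unavailable, the fallback would be to shift the quadrature grid by a random offset. The expected mass in any boundary window is then $O(\xi/\Delta)$, and repetition plus a median keeps the error at $O(\epsilon)$. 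Combining the three bounds gives $|\Lambda-\overline{\mathrm{Tr}_\eta}(H)|\leq\epsilon$. This decides the \DQC{} instance, so \textsf{LLSD} is \DQC-hard for $O(\log n)$-local Hamiltonians.
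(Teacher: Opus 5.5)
Your proposal is correct and is essentially the argument the paper attributes to Gyurik et al.\ and reproduces for \textsf{LESD} in the proof of Theorem~\ref{thm: LESD}: a non-adaptive quadrature reduction from Brand\~ao's \textsf{Normalized Subtrace} instances to \textsf{LLSD}. In it, the telescoped slack term is controlled by disjointness of the windows, and the boundary term $\tilde\gamma_{M-1}\eta$ is removed by placing $\eta$ in the inverse-polynomial spectral gap of Brand\~ao's Hamiltonian. That gap is genuinely needed, and your random-offset fallback would not replace it: the cutoff $\eta$ of the subtrace is fixed by the instance, so shifting the interior grid points never removes eigenvalue mass lying in $(\eta,\eta+\xi]$.
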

The two problems above are the inspiration for our problems \textsf{LENS} and \textsf{LESD}. Our results generalise these hardness results such that they still hold when the normalization is taken to be the dimension of a low-energy subspace.

Next, we move to the TDA literature and consider the following problem:
\begin{problem}[$\delta$-Harmonic Persistence~\cite{gyurik2024quantum}]
\label{prob:harmonic_persistence}
\mbox{}
\begin{description}
  \item[Input:]
    \begin{enumerate}
      \item Simplicial complexes $K_1 \subseteq K_2$, where
            $K_i = \mathrm{Cl}(G_i)$ and $|V(G_i)| = n$ for $i = 1, 2$.
      \item An integer $0 \leq p < n$.
      \item A $\mathrm{poly}(n)$-size classical description of a set
            $S \subseteq K_{1,p}$ of $p$-simplices in $K_1$.
    \end{enumerate}

  \item[Promise:]
    \begin{enumerate}
      \item Spectral gap
            $\gamma(\Delta_p^{K_2})
            := \min\{|\lambda| \mid \lambda \in \mathrm{Spec}(\Delta_p^{K_2}),\,
            \lambda > 0\} \geq \frac{1}{\mathrm{poly}(n)}$.
      \item For the uniform superposition
            $\ket{\sigma} := \frac{1}{\sqrt{|S|}}\sum_{\sigma_i \in S}\ket{\sigma_i}$,
            either
            \begin{enumerate}
              \item[(a)] $\|\mathrm{proj}_{\mathcal{H}_p(K_2)}(\ket{\sigma})\|
                         \geq \delta$, \quad or
              \item[(b)] $\|\mathrm{proj}_{\mathcal{H}_p(K_2)}(\ket{\sigma})\|
                         < \frac{1}{\exp(n)}$.
            \end{enumerate}
    \end{enumerate}

  \item[Output:] $1$ if~(a), $0$ if~(b).
\end{description}
\end{problem}

This problem was shown to be $\mathsf{BQP}_1$-hard and contained in $\mathsf{BQP}$.
The following two promise problems for local Hamiltonians are introduced in~\cite{gyurik2024quantum}
as underlying quantum primitives for \textsf{Harmonic Persistence}.
First, we introduce the \textsf{Kernel Overlap} problem.
\begin{problem}[\textsf{Kernel Overlap}]
\label{prob:kernel_overlap}
\mbox{}\vspace{0.5em}\\
\textbf{Input:}
\begin{enumerate}
    \item An $n$-qubit $O(\log(n))$-local Hamiltonian $H = \sum_{i=1}^m H_i$
          with $m = O(\mathrm{poly}(n))$.
    \item A succinct description of a quantum state $|\psi\rangle$.
\end{enumerate}
\textbf{Promise:}
\begin{enumerate}
    \item Spectral gap $\gamma(H) := \min\{|\lambda| \mid \lambda \in \mathrm{Spec}(H),\,
          \lambda > 0\} \geq \frac{1}{\mathrm{poly}(n)}$.
    \item Either \textnormal{(a)} $\|\mathrm{proj}_{\ker H}(|\psi\rangle)\| \geq
          \frac{1}{\mathrm{poly}(n)}$, or \textnormal{(b)}
          $\|\mathrm{proj}_{\ker H}(|\psi\rangle)\| < \frac{1}{\exp(n)}$.
\end{enumerate}
\textbf{Output:} $1$ if \textnormal{(a)}, $0$ if \textnormal{(b)}.
\end{problem}
Similarly to the \textsf{Harmonic persistence}, this problem was shown to be $\mathsf{BQP}_1$-hard and contained in $\mathsf{BQP}$ in \cite{gyurik2024quantum}.
Next, we introduce a low-energy variant of this problem.
\begin{problem}[\textsf{Low-energy Overlap}]\label{prob:low_energy_overlap}
\noindent\textbf{Input:}
\begin{enumerate}
    \item An $n$-qubit log-local Hamiltonian $H = \sum_{i=1}^m H_i$
          with $m = O(\mathrm{poly}(n))$.
    \item A threshold $\eta \in \Omega(1/\mathrm{poly}(n))$.
    \item A succinct description of a quantum state $|\psi\rangle$.
\end{enumerate}

\noindent\textbf{Promise:} Either \textnormal{(a)}
\[
\|\mathrm{proj}_{E_{\leq\eta}(H)}(|\psi\rangle)\|
\geq \frac{1}{\mathrm{poly}(n)},
\]
or \textnormal{(b)}
\[
\|\mathrm{proj}_{E_{\leq\eta}(H)}(|\psi\rangle)\|
< \frac{1}{\exp(n)}.
\]

\noindent\textbf{Output:} $1$ if \textnormal{(a)}, $0$ if \textnormal{(b)}.

Here
\[
E_{\leq\eta}(H)
:=
\mathrm{Span}_{\mathbb{C}}
\left\{
|\psi\rangle
\,\middle|\,
|\psi\rangle \text{ is an eigenvector of } H
\text{ with eigenvalue } < \eta
\right\}.
\]
\end{problem}

This problem is known to be $\mathsf{BQP}$-complete \cite{gyurik2024quantum}.
\end{document}